\keywords{team semantics, modal logic, complexity, satisfiability}
\newcolumntype{L}{>{$}l<{$}}
 \newcolumntype{C}{>{$}c<{$}}
\theoremstyle{plain}\newtheorem*{claim}{Claim}
\newcommand*{\altqed}{\hfill\small\ensuremath{\triangleleft}}
\newenvironment{cproof}
{\begin{proof}[Proof of claim.]\let\oldqed\qed\let\qed\relax}
  {\altqed\end{proof}}
\newtheorem*{rep@theorem}{\rep@title}
\newcommand{\newreptheorem}[2]{
\newenvironment{rep#1}[1]{
 \def\rep@title{#2~\ref{##1}}
 \begin{rep@theorem}}
 {\end{rep@theorem}}}
\newcommand{\ie}{i.e.\@\xspace}
\newcommand{\eg}{e.g.\@\xspace}
\newcommand{\wrt}{w.\,r.\,t.\@\xspace}
\newcommand{\wloss}{w.l.o.g.\@\xspace}
\newcommand{\Wloss}{W.l.o.g.\@\xspace}
\newcommand{\red}[1]{\color{red}#1}
\newcommand{\blue}[1]{\color{blue}#1}
\newcommand\+{\mkern2mu}
\DeclareMathOperator*{\bigovee}{\scalerel*{\ovee}{\sum}}
\newcommand{\pow}[1]{\ensuremath{\mathfrak{P}(#1)}}
\newcommand{\size}[1]{{\ensuremath{\vert\nobreak#1\nobreak\vert}}}
\newcommand{\imp}{\rightarrow}
\newcommand{\timp}{\rightarrowtriangle}
\newcommand{\qa}[1]{\forall^{1}_{#1}}
\newcommand{\qaa}[1]{\forall^{\subseteq}_{#1}}
\newcommand{\qe}[1]{\exists^1_{#1}}
\newcommand{\qee}[1]{\exists^{\subseteq}_{#1}}
\newcommand{\qeet}[2]{\exists^{\approx#2}_{#1}}
\newcommand{\xstate}[1]{#1\text{-}\mathsf{state}}
\renewcommand{\max}{\mathsf{max}}
\newcommand{\maxsf}{\mathrm{max}}
\providecommand{\dfn}{\vcentcolon=}
\providecommand{\ddfn}{\vcentcolon\vcentcolon=}
\newcommand{\N}{\ensuremath{\mathbb{N}}\xspace}
\newcommand{\E}{\ensuremath\mathsf{E}}
\newcommand{\PS}{\ensuremath\mathcal{PS}}
\newcommand{\VAL}{\ensuremath{\mathsf{VAL}}}
\newcommand{\SAT}{\ensuremath{\mathsf{SAT}}}
\newcommand{\negg}{{\sim}}
\newcommand{\dep}[1]{{=\!\!(#1)}}
\newcommand{\bis}{\rightleftharpoons}
\newcommand{\Prop}{\mathsf{Prop}}
\newcommand{\canon}{\mathsf{canon}}
\newcommand{\scopes}{\mathsf{scopes}}
\newcommand{\hook}{\hookrightarrow}
\newcommand{\md}{\mathsf{md}}
\newcommand{\logic}[1]{\ensuremath{\mathsf{#1}}\xspace}
\newcommand{\compclass}[1]{\ensuremath{\mathrm{#1}}\xspace}
\newcommand{\ML}{\logic{ML}}
\newcommand{\PTL}{\logic{PTL}}
\newcommand{\FO}{\logic{FO}}
\newcommand{\MTL}{\logic{MTL}}
\newcommand{\ATIMEALT}[1]{\ensuremath{\compclass{ATIME}\text{-}\compclass{ALT}(#1,\mathrm{poly})}}
\newcommand{\AEXPPOLY}{\ATIMEALT{\exp}}
\newcommand{\TOWERPOLY}{\ensuremath{\compclass{TOWER}(\mathrm{poly})}}
\newcommand{\TOWER}{\compclass{TOWER}}
\newcommand{\NEXPTIME}{\compclass{NEXPTIME}}
\newcommand{\PSPACE}{\compclass{PSPACE}}
\newcommand{\fraka}{\mathfrak{s}}
\newcommand{\frakt}{\mathfrak{t}}
\newcommand{\frakp}{\mathfrak{p}}
\newcommand{\frakq}{\mathfrak{q}}
\newcommand{\calF}{\ensuremath{\mathcal{F}}}
\newcommand{\calL}{\ensuremath{\mathcal{L}}}
\newcommand{\calK}{\ensuremath{\mathcal{K}}}
\newcommand{\calN}{\ensuremath{\mathcal{N}}}
\newcommand{\calR}{\ensuremath{\mathcal{R}}}
\newcommand{\sfD}{\ensuremath{\mathsf{D}}}
\newcommand{\sfL}{\ensuremath{\mathsf{L}}}
\newcommand{\bigO}[1]{\ensuremath{{\mathcal{O}(#1)}}}
\newcommand{\leqpm}{\ensuremath{\leq^\mathrm{P}_\mathrm{m}}}
\newcommand{\leqlogm}{\ensuremath{\leq^{\log}_\mathrm{m}}}
\newcommand{\leqelem}{\ensuremath{\leq^\mathrm{elem}_\mathrm{m}}}
\newcommand{\type}[1]{\llbracket{}#1\rrbracket{}}
\begin{document}

\title[Canonical Models and the Complexity of Modal Team Logic]{Canonical Models and the \texorpdfstring{\\}{}Complexity of Modal Team Logic}
\titlecomment{{\lsuper*}This article is an extension of the conference publication~\cite{mtl-csl}.}

\author[M.~Lück]{Martin Lück}
\address{Leibniz Universität Hannover, Institut für Theoretische Informatik, Appelstraße 4, 30167 Hannover}
\email{lueck@thi.uni-hannover.de}

\begin{abstract}
 \noindent We study modal team logic MTL, the team-semantical extension of modal logic ML closed under Boolean negation.
 Its fragments, such as modal dependence, independence, and inclusion logic, are well-understood.
 However, due to the unrestricted Boolean negation, the satisfiability problem of full MTL has been notoriously resistant to a complexity theoretical classification.

 In our approach, we introduce the notion of canonical models into the team-semantical setting.
 By construction of such a model, we reduce the satisfiability problem of MTL to simple model checking.
 Afterwards, we show that this approach is optimal in the sense that MTL-formulas can efficiently enforce canonicity.

 Furthermore, to capture these results in terms of complexity, we introduce a non-elementary complexity class, TOWER(poly), and prove that it contains satisfiability and validity of MTL as complete problems. 
 We also prove that the fragments of MTL with bounded modal depth are complete for the levels of the elementary hierarchy (with polynomially many alternations).
 The respective hardness results hold for both strict or lax semantics of the modal operators and the splitting disjunction, and also over the class of reflexive and transitive frames.
\end{abstract}

\maketitle

\section{Introduction}

It is well-known that non-linear quantifier dependencies, such as $w$ depending only on $z$ in the sentence $\forall x \,\exists y\, \forall z\, \exists w \, \varphi$, cannot be expressed in first-order logic.
To overcome this restriction, logics of incomplete information such as \emph{independence-friendly logic}~\cite{hintikka_informational_1989} have been studied.
Later, Hodges~\cite{Hodges97} introduced \emph{team semantics} to provide these logics with a compositional interpretation.
The fundamental idea is to not consider single assignments to free variables, but instead whole sets of assignments, called \emph{teams}.

In this vein, Väänänen~\cite{vaananen_dependence_2007} expressed non-linear quantifier dependencies by the \emph{dependence atom} $\dep{x_1,\ldots,x_n,y}$, which intuitively states that the values of $y$ in the team functionally depend on those of $x_1,\ldots,x_n$.
Logics with numerous other non-classical atoms such as \emph{independence} $\perp$~\cite{GradelV13}, \emph{inclusion} $\subseteq$ and \emph{exclusion} $\mid$~\cite{Galliani12} have been studied since, and manifold connections to scientific areas such as statistics, database theory, physics, cryptography and social choice theory have emerged (see also Abramsky et al.~\cite{dep_book}).

\smallskip

Team semantics have also been adapted to a range of propositional~\cite{pdl,HannulaKLV16}, modal~\cite{vaananen_modal_2008}, and temporal logics~\cite{teamctl,KrebsMV018}.
Besides \emph{propositional dependence logic} $\logic{PDL}$~\cite{pdl} and \emph{modal dependence logic} $\logic{MDL}$~\cite{vaananen_modal_2008}, also propositional and modal logics of independence and inclusion have been studied~\cite{mil,Hannula2015,minc,hannula_lipics}.
Unlike in the first-order setting, the atoms such as the dependence atom range over flat formulas.
For example, the instance $\dep{p_1,\ldots,p_n,\Diamond\mathsf{unsafe}}$ of a modal dependence atom
may specify that the reachability of an unsafe state is a function of $p_1\cdots p_n$, but instead of exhibiting the explicit function, the atom only stipulates its existence.

Most team logics lack the Boolean negation, and adding it as a connective $\negg$ usually increases both the expressive power and the complexity tremendously.
The respective extensions of propositional and modal logic are called \emph{propositional team logic} $\PTL$~\cite{HannulaKLV16,YANG20171406,ptl2017} and \emph{modal team logic} $\MTL$~\cite{mueller14,mtl}.
With $\negg$, these logics can express all the non-classical atoms mentioned above, and in fact are expressively complete for their respective class of models~\cite{mtl,YANG20171406}.
For these reasons, they are both interesting and natural logics.

The expressive power of $\MTL$ is well-understood~\cite{mtl}, and a complete axiomatization was presented by the author~\cite{axiom}.
Yet the complexity of the satisfiability problem has been an open question~\cite{mueller14,mtl,Durand2016,HellaKMV17}.
Recently, certain fragments of $\MTL$ with restricted negation were shown $\AEXPPOLY$-complete using the well-known filtration method~\cite{filtration}.
In the same paper, however, it was shown that no elementary upper bound for full $\MTL$ can be established by the same approach, whereas
the best known lower bound is $\AEXPPOLY$-hardness, inherited from propositional team logic~\cite{ptl2017}.

\begin{table*}\centering
 \begin{tabular}{LCCl}
  \toprule
  \text{Logic} & \text{Satisfiability} & \text{Validity}                            & References                                 \\
  \midrule
  \logic{PDL}  & \compclass{NP}        & \NEXPTIME                                  & \cite{Lohmann2013,virtema_complexity_2017} \\ 
  \logic{MDL}  & \NEXPTIME             & \NEXPTIME                                  & \cite{sevenster,hannula_lipics}            \\ 
  \logic{PIL}  & \compclass{NP}        & \NEXPTIME\text{-hard}, \text{in }\Uppi^E_2 & \cite{Hannula2015}                         \\ 
  \logic{MIL}  & \NEXPTIME             & \Uppi^E_2\text{-hard}                      & \cite{mil,hannula_arxiv}                   \\ 
  \logic{PInc} & \compclass{EXPTIME}   & \text{co-}\compclass{NP}                   & \cite{Hannula2015}                         \\ 
  \logic{MInc} & \compclass{EXPTIME}   & \text{co-}\NEXPTIME\text{-hard}            & \cite{HellaKMV15}                          \\ 
  \PTL         & \AEXPPOLY             & \AEXPPOLY                                  & \cite{HannulaKLV16,ptl2017}                \\ 
  \MTL_k       & \ATIMEALT{\exp_{k+1}}            & \ATIMEALT{\exp_{k+1}}                                 & Theorem~\ref{thm:mtl-main}                 \\
  \MTL         & \TOWERPOLY            & \TOWERPOLY                                 & Theorem~\ref{thm:mtl-main}                 \\
  \bottomrule
 \end{tabular}
 \bigskip
 \caption{Complexity landscape of propositional and modal logics of dependence ($*\sfD\sfL$), independence ($*\mathsf{IL}$), inclusion ($*\mathsf{Inc}$) and team logic ($*\mathsf{TL}$).  Entries are completeness results unless stated otherwise.\label{tab:table}}
\end{table*}

\medskip

\textbf{Contribution.}
We show that $\MTL$ is complete for a non-elementary class we call $\TOWERPOLY$, which contains the problems decidable in a runtime that is a tower of nested exponentials of polynomial height.
Likewise, we show that the fragments $\MTL_k$ of bounded modal depth $k$ are complete for classes we call $\ATIMEALT{\exp_{k+1}}$ and which corresponds to $(k+1)$-fold exponential runtime and polynomially many alternations.
These results fill a long-standing gap in the active field of propositional and modal team logics
(see Table~\ref{tab:table}).

\smallskip

In our approach, we consider so-called \emph{canonical models}.
Loosely speaking, a canonical model satisfies every satisfiable formula in some of its submodels, and such models have been long known for, \eg, many systems of modal logic~\cite{blackburn_modal_2001}.
In Section~\ref{sec:canon}, we adapt this notion for modal logics with team semantics, and prove that such models exist for $\MTL$.
This enables us to reduce the satisfiability problem to simple model checking, albeit on models that are of non-elementary size with respect to $\size{\Phi}+k$, where $\Phi$ are the available propositional variables and $k$ is a bound on the modal depth.

Nonetheless, this approach is essentially optimal:
In Section~\ref{sec:encoding} and~\ref{sec:staircase}, we show that $\MTL$ can, in a certain sense, \emph{efficiently enforce} canonical models, that is, with formulas that are of size polynomial in $\size{\Phi} + k$.
In this vein, we then obtain the matching complexity lower bounds in Section~\ref{sec:order} and~\ref{sec:reduction}, where we encode computations of non-elementary length in such large models.

Finally, in Section~\ref{sec:variants} we extend the preliminary version of this paper~\cite{mtl-csl} and consider restrictions of $\MTL$ to specific frame classes, and to so-called \emph{strict} team-semantical connectives.

\section{Preliminaries}
The length of (the encoding of) $x$ is denoted by $\size{x}$.
We assume the reader to be familiar with alternating Turing machines~\cite{alternation} and basic complexity theory.
When a problem is hard or complete for a complexity class, in this paper we are always referring to logspace reductions.

The class $\AEXPPOLY$ (also known as $\compclass{AEXPTIME}(\mathrm{poly})$) contains the problems decidable by an alternating Turing machine in time $2^{p(n)}$ with $p(n)$ alternations, where $p$ is a polynomial.
We generalize it to capture the \emph{elementary hierarchy} as follows.

Let $\exp_0(n) \dfn n$ and $\exp_{k+1}(n) \dfn 2^{\exp_{k}(n)}$.\label{p:exp}
A function $f \colon \N \to \N$ is \emph{elementary} if it is computable in time $\bigO{\exp_k(n)}$ for some fixed $k$.
In this paper, we consider the elementary hierarchy with polynomially many alternations:

\begin{defi}
 For $k \geq 0$, $\ATIMEALT{\exp_{k}}$ is the class of problems decidable by an alternating Turing machine with at most $p(n)$ alternations and runtime at most $\exp_k(p(n))$, for a polynomial $p$.
\end{defi}
Note that setting $k = 0$ or $k = 1$ yields the classes $\PSPACE$ and $\AEXPPOLY$, respectively~\cite{alternation}.
Schmitz~\cite{tower} proposed the following non-elementary class that contains $\ATIMEALT{\exp_{k}}$ for all $k$.
\begin{defiC}[\cite{tower}]\label{def:tower1}
 $\TOWER$ is the class of problems decidable by a deterministic Turing machine in time (or equivalently, space) $\exp_{f(n)}(1)$ for an elementary function $f$.
\end{defiC}

A suitable notion of reduction for this class is the following:
An \emph{elementary reduction} from $A$ to $B$ is an elementary function $f$ such that $x \in A \Leftrightarrow f(x) \in B$.
$A \leqelem B$ means that there exists an elementary reduction from $A$ to $B$.

\begin{propC}[\cite{tower}]
 $\TOWER$ is closed under $\leqelem$.
\end{propC}

The next class results from imposing a polynomial bound on the number of exponentials in the definition of $\TOWER$, which leads to a strict subclass.
\begin{defi}\label{def:towerpoly}
 $\TOWERPOLY$ is the class of problems that are decided by a deterministic Turing machine in time (or equivalently, space) $\exp_{p(n)}(1)$ for some polynomial $p$.
\end{defi}

The reader may verify that both $\ATIMEALT{\exp_{k}}$ and $\TOWERPOLY$ are closed under $\leqpm$ and $\leqlogm$.
Furthermore, by the time hierarchy theorem, $\TOWERPOLY \subsetneq \TOWER$.

To the author's best knowledge, neither has been explicitly considered before.
However, candidates for natural complete problems exist.
Although not proved complete, several problems in $\TOWERPOLY$ are provably non-elementary, such as the satisfiability problem of separated first-order logic~\cite{separated},
the equivalence problem for star-free expressions~\cite{stockmeyer}, or the first-order theory of finite trees~\cite{ComptonH90}, to only name a few.
We refer the reader also to the survey of Meyer~\cite{Meyer74}.

Another example is the two-variable fragment of first-order team logic, $\FO^2(\negg)$.
It is related to $\MTL$ in the same fashion as classical two-variable logic $\FO^2$ to $\ML$.
By reduction from $\MTL$ to $\FO^2(\negg)$, the satisfiability problem of $\FO^2(\negg)$ is $\TOWERPOLY$-complete problems as a corollary of our main result, Theorem~\ref{thm:mtl-main}, while its fragments $\FO^2_k(\negg)$ of bounded quantifier rank $k$ are $\ATIMEALT{\exp_{k+1}}$-hard~\cite{fo2-mfcs}.

Next, we justify why we use only $\leqlogm$-reductions (or polynomial time reductions in general) in this paper instead of $\leqelem$.

\begin{prop}
 Every problem that is $\leqelem$-complete for $\TOWERPOLY$ is also $\leqelem$-complete for $\TOWER$.
\end{prop}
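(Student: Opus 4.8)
The plan is a padding argument. Write $A$ for a problem that is $\leqelem$-complete for $\TOWERPOLY$. Membership $A \in \TOWER$ is immediate from $\TOWERPOLY \subseteq \TOWER$, so the whole content lies in establishing $\leqelem$-hardness for $\TOWER$: given an arbitrary $C \in \TOWER$, I must produce an elementary reduction $C \leqelem A$. By Definition~\ref{def:tower1}, $C$ is decided by a deterministic machine in time $\exp_{f(n)}(1)$ for some elementary $f$. The idea is to inflate each input artificially until the tower height $f(\size{x})$ becomes bounded by a polynomial in the padded length, thereby pushing $C$ into $\TOWERPOLY$, where the completeness of $A$ becomes available.

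Concretely, I would introduce the padded language
\[
 C' \dfn \{\, x\#1^{f(\size{x})} \mid x \in C \,\}
\]
together with the padding map $\mathsf{pad}\colon x \mapsto x\#1^{f(\size{x})}$, where $\#$ is a fresh separator symbol. Since $f$ is elementary, $\mathsf{pad}$ is computable in elementary time, and as the string $x\#1^{f(\size{x})}$ parses uniquely, $\mathsf{pad}$ is a correct elementary reduction witnessing $C \leqelem C'$. The crux of the argument, and the step I expect to be the main obstacle, is verifying $C' \in \TOWERPOLY$. On an input $y$ of length $N$, a decider first checks whether $y$ has the shape $x\#1^{m}$ with $m = f(\size{x})$; because $f$ is elementary this test runs in time at most $\exp_{j}(N)$ for a fixed $j$, and $\exp_{j}(N) \le \exp_{N+j}(1)$ since $\exp_N(1) \ge N$ gives $\exp_{j}(N) \le \exp_{j}(\exp_N(1)) = \exp_{N+j}(1)$. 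If the shape matches then $m = f(\size{x}) \le N$, so simulating the $C$-decider on $x$ costs at most $\exp_{f(\size{x})}(1) \le \exp_N(1)$ steps. The total running time is therefore bounded by $\exp_{N+j}(1) + \exp_N(1) \le \exp_{p(N)}(1)$ for the polynomial $p(N) = N + j + 1$, which places $C'$ in $\TOWERPOLY$.

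Finally, since $A$ is $\leqelem$-hard for $\TOWERPOLY$ and $C' \in \TOWERPOLY$, I obtain an elementary reduction $C' \leqelem A$. Composing it with $\mathsf{pad}$ yields $C \leqelem C' \leqelem A$. As the composition of two elementary functions is again elementary, $\leqelem$ is transitive, and hence $C \leqelem A$. This establishes $\leqelem$-hardness of $A$ for $\TOWER$, which together with $A \in \TOWER$ completes the proof.
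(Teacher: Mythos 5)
Your proof is correct and follows essentially the same route as the paper: the same padding construction $x \mapsto x\#\sigma^{f(\size{x})}$, the same argument that the padded language lies in $\TOWERPOLY$ because the tower height becomes bounded by the padded input length, and the same composition of elementary reductions. If anything, you are slightly more explicit than the paper in bounding the shape-check time $\exp_j(N)$ by $\exp_{N+j}(1)$, which is a welcome bit of extra care.
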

\begin{proof}
 Clearly, $\TOWERPOLY \subseteq \TOWER$.
 For the lower bound, let $A$ be $\leqelem$-complete for $\TOWERPOLY$, and let $B \in \TOWER$ be arbitrary.
 $B$ is decidable in time $\exp_{r(n)}(1)$ for some elementary $r$.
 Define the set
 $C \dfn \{ x\#0^{r(\size{x})} \mid x \in B \}$.
 First, we show that $C \in \TOWERPOLY$.
 Consider the algorithm that first checks if the input $z$ is of the form $x\#0^*$, computes $r(\size{x})$ in elementary time, checks whether $z = x\#0^{r(\size{x})}$, and then whether $x \in B$.
 The first two steps clearly take elementary time in $n$, where $n \dfn \size{x\#0^{r(\size{x})}}$, and the final step runs in time $\exp_{r(\size{x})}(1) \leq \exp_{n}(1)$.

 By assumption, $C \leqelem A$ via an elementary reduction $f$.
 But clearly also $B \leqelem C$ by the elementary reduction $g \colon x \mapsto x\#0^{r(\size{x})}$.
 As a consequence, the function $h \dfn f \circ g$ is a reduction from $B$ to $A$.
 $h$ is computable in time $\exp_{k_1}(\exp_{k_2}(n)) = \exp_{k_1+k_2}(n)$ for fixed $k_1,k_2 \geq 0$ depending on $f$ and $g$, and hence again elementary.
\end{proof}

\begin{cor}
 $\TOWERPOLY$ is not closed under $\leqelem$-reductions.
\end{cor}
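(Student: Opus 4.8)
The plan is to argue by contradiction, and the key observation is that the padding construction already carried out in the proof of the previous proposition does essentially all the work. First I would invoke the strict inclusion $\TOWERPOLY \subsetneq \TOWER$ (which the excerpt derives from the time hierarchy theorem) to fix a witness language $B \in \TOWER \setminus \TOWERPOLY$. By definition of $\TOWER$, such a $B$ is decided by a deterministic machine in time $\exp_{r(n)}(1)$ for some elementary function $r$.

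Next I would reuse verbatim the padded language $C \dfn \{ x\#0^{r(\size{x})} \mid x \in B\}$ from that proof. Two facts about it were already established there, and I would simply cite them: first, $C \in \TOWERPOLY$, because on an input of the form $x\#0^{r(\size{x})}$ of length $n \geq r(\size{x})$ one computes $r(\size{x})$ in elementary time and then simulates $B$ in time $\exp_{r(\size{x})}(1) \leq \exp_{n}(1)$, a tower of linear (hence polynomial) height; and second, the map $g \colon x \mapsto x\#0^{r(\size{x})}$ is an elementary reduction, so that $B \leqelem C$.

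With these in hand the corollary is immediate: if $\TOWERPOLY$ were closed under $\leqelem$, then from $B \leqelem C$ and $C \in \TOWERPOLY$ we would conclude $B \in \TOWERPOLY$, contradicting the choice of $B$. Hence $\TOWERPOLY$ is not closed under $\leqelem$.

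I expect no real obstacle here; the whole point is that the elementary reduction $g$ pads the input enough that the height of the exponential tower needed to decide $C$ drops from the possibly non-polynomial $r$ down to linear in the padded length, which is exactly the phenomenon separating the two classes. The only things to double-check are that $r$ being elementary makes $g$ elementary (so that $g$ is a legitimate $\leqelem$-reduction) and that the simulation bound genuinely lands in $\TOWERPOLY$, both of which are precisely what the preceding proof verifies. As an alternative route that leans on the proposition's statement rather than its proof, one could instead assume that $\TOWERPOLY$ has a $\leqelem$-complete problem $A$: the proposition then makes $A$ also $\leqelem$-hard for $\TOWER$, and closure under $\leqelem$ together with $A \in \TOWERPOLY$ would force $\TOWER \subseteq \TOWERPOLY$, the same contradiction. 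This variant carries the extra burden of exhibiting such a complete $A$, so I would prefer the direct padding argument.
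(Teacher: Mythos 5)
Your proof is correct, but it takes a genuinely different route from the paper's. The paper's own argument is indirect: it assumes closure under $\leqelem$, takes \emph{any} problem $A$ complete for $\TOWERPOLY$ (forward-referencing Theorem~\ref{thm:mtl-main} for the existence of such an $A$), applies the preceding proposition to conclude that $A$ is $\leqelem$-complete for $\TOWER$, and thereby derives $\TOWER \subseteq \TOWERPOLY$, contradicting $\TOWERPOLY \subsetneq \TOWER$. This is exactly the "alternative route" you sketch and set aside at the end. Your preferred argument instead bypasses completeness entirely: you fix a concrete witness $B \in \TOWER \setminus \TOWERPOLY$ and exhibit the pair $B \leqelem C$ with $C \dfn \{x\#0^{r(\size{x})} \mid x \in B\} \in \TOWERPOLY$, reusing the two facts about $C$ and $g$ that the proposition's proof already verifies; closure then puts $B$ into $\TOWERPOLY$, a contradiction. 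What each approach buys: the paper's version is a one-line corollary of the proposition \emph{as stated}, but it carries a (forward) dependency on the existence of a complete problem; your version is self-contained, needing only $\TOWERPOLY \subsetneq \TOWER$ and the padding construction, and it makes transparent that the failure of closure is precisely the padding phenomenon. The two points you flag for checking ($g$ elementary because $r$ is, and the simulation landing in tower height at most $n$) are indeed exactly what the proposition's proof establishes, so there is no gap.
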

\begin{proof}
 Suppose $\TOWERPOLY$ is closed under $\leqelem$-reductions, and let $A$ be any problem complete for $\TOWERPOLY$ (such $A$ exists; see also our main result, Theorem~\ref{thm:mtl-main}).
 By the previous proposition, then $\TOWER \subseteq \TOWERPOLY$, contradiction.
\end{proof}

\section{Modal team logic}%
\label{sec:mtl-prelim}

We fix a countably infinite set $\PS$ of propositional symbols.
\emph{Modal team logic} $\MTL$, introduced by Müller~\cite{mueller14}, extends classical modal logic $\ML$.
Formulas of classical $\ML$ are built following the grammar
\begin{align*}
 \alpha & \ddfn \neg \alpha \mid \alpha  \land \alpha  \mid \alpha  \lor \alpha \mid \Box\alpha \mid \Diamond\alpha \mid p \mid \top\text{,}
\end{align*}
where $p \in \PS$ and $\top$ is constant truth.
$\MTL$ extends $\ML$ by the grammar
\begin{align*}
 \varphi & \ddfn \negg \varphi \mid \varphi \land \varphi \mid \varphi \lor \varphi \mid  \Box \varphi \mid \Diamond \varphi \mid  \alpha\text{,}
\end{align*}
where $\alpha$ denotes an $\ML$-formula.

The set of propositional variables occurring in a formula $\varphi \in \MTL$ is $\Prop(\varphi)$.
We use the common abbreviations $\bot \dfn \neg \top$, $\alpha \imp \beta \dfn \neg \alpha \lor \beta$ and $\alpha \leftrightarrow \beta \dfn (\alpha\land\beta)\lor(\neg\alpha\land\neg\beta)$.
For easier distinction, we have classical $\ML$-formulas denoted by $\alpha,\beta,\gamma,\ldots$ and reserve $\varphi,\psi,\vartheta,\ldots$ for general $\MTL$-formulas.

The \emph{modal depth} $\md(\varphi)$ of a formula $\varphi$ is recursively defined:
\begin{alignat*}{3}
  & \md(p)                  &  & \dfn \md(\top)              &  & \dfn 0                                      \\
  & \md(\negg \varphi)      &  & \dfn \md(\neg\varphi)       &  & \dfn \md(\varphi)                           \\
  & \md(\varphi \land \psi) &  & \dfn \md(\varphi \lor \psi) &  & \dfn \maxsf\{\md(\varphi),\md(\psi)\} \\
  & \md(\Diamond \varphi)   &  & \dfn \md(\Box\varphi)       &  & \dfn \md(\varphi) + 1
\end{alignat*}
$\ML_k$ and $\MTL_k$ are the fragments of $\ML$ and $\MTL$ with modal depth  $\leq k$, respectively.
If the propositions are restricted to a fixed set $\Phi \subseteq \PS$ as well, then the fragment is denoted by $\ML_k^\Phi$, or $\MTL^\Phi_k$, respectively.

\medskip

Let $\Phi \subseteq \PS$ be finite.
A \emph{Kripke structure} (over $\Phi$) is a tuple $\calK = (W, R, V)$, where $W$ is a set of \emph{worlds} or \emph{points}, $(W,R)$ is a directed graph called \emph{frame}, and $V \colon \Phi \to \pow{W}$ is the \emph{valuation}, with $\pow{X}$ being the power set of $X$.

Occasionally, by slight abuse of notation, we use the inverse mapping $V^{-1} \colon W \to \pow{\Phi}$ defined by $V^{-1}(w) \dfn \{ p \in \Phi \mid w \in V(p)\}$ instead of $V$, \ie, the set of propositions that are true in a given world.
If $w \in W$, then $(\calK,w)$ is called \emph{pointed structure}.
$\ML$ is evaluated on pointed structures in the classical Kripke semantics.

\medskip

By contrast, $\MTL$ is evaluated on pairs $(\calK,T)$ called \emph{structures with teams}, where $\calK = (W,R,V)$ is a Kripke structure and $T \subseteq W$ is called \emph{team} (in $\calK$).
Every team $T$ has an \emph{image} $RT \dfn \{ v \mid w \in T, (w,v) \in R \}$, and for $w \in W$, we simply write $Rw$ instead of $R\{w\}$.
$R^{i}T$ is inductively defined as $R^0T \dfn T$ and $R^{i+1}T \dfn RR^{i}T$.
An $R$-\emph{successor team} (or simply successor team) of $T$ is a team $S$ such that $S \subseteq RT$ and $T \subseteq R^{-1}S$, where $R^{-1} \dfn \{(v,w)\mid(w,v) \in R\}$.
Intuitively, $S$ is formed by picking at least one $R$-successor of every world in $T$.
The semantics of $\MTL$ can now be defined as follows.
\footnote{Often, the "atoms" of $\MTL$ are restricted to literals $p,\neg p$ instead of $\ML$-formulas $\alpha$.
 However, this implies a restriction to formulas in negation normal form, and both definitions are equivalent due to the \emph{flatness} property of $\ML$ (cf.~\cite[Proposition 2.2]{mtl}).}
\begin{alignat*}{3}
  & (\calK, T) \vDash \alpha           &  & \Leftrightarrow\;\forall w \in T \colon (\calK, w) \vDash \alpha \; \text{ if } \alpha \in \ML\text{, and otherwise as}                           \\
  & (\calK, T) \vDash \negg \psi       &  & \Leftrightarrow\;(\calK,T) \nvDash \psi\text{,}                                                                                                   \\
  & (\calK,T) \vDash \psi \land \theta &  & \Leftrightarrow\;(\calK,T) \vDash \psi \text{ and }(\calK,T) \vDash \theta\text{,}                                                                \\
  & (\calK,T) \vDash \psi \lor \theta  &  & \Leftrightarrow\;\exists S, U \subseteq T \text{ such that }T = S \cup U\text{, }(\calK,S) \vDash \psi\text{, and }(\calK,U)\vDash \theta\text{,} \\
  & (\calK,T)\vDash \Diamond \psi      &  & \Leftrightarrow\;(\calK, S)\vDash \psi \text{ for some successor team }S\text{ of }T\text{,}                                                      \\
  & (\calK,T)\vDash \Box\psi           &  & \Leftrightarrow\;(\calK,RT) \vDash \psi\text{.}
\end{alignat*}
We often omit $\calK$ and write only $T \vDash \varphi$ (for team semantics) or $w \vDash \alpha$ (for Kripke semantics).

\smallskip

An $\MTL$-formula $\varphi$ is \emph{satisfiable} if it is true in some structure with team over $\Prop(\varphi)$, which is then called a \emph{model} of $\varphi$.
Analogously, $\varphi$ is \emph{valid} if it is true in every structure with team (over $\Prop(\varphi)$).
For a logic $\calL$, the sets of all satisfiable resp.\ valid formulas of $\calL$ are $\SAT(\calL)$ and $\VAL(\calL)$, respectively.

In the literature on team semantics, the empty team is usually excluded in the above definition, since most $\negg$-free logics with team semantics have the \emph{empty team property}, \ie, the empty team satisfies every formula~\cite{vaananen_modal_2008,mil,minc}.
However, this distinction is unnecessary for $\MTL$:
$\varphi$ is satisfiable iff $\top \lor \varphi$ is satisfied by some non-empty team\footnote{Note that $\top \lor \varphi$ is not a tautology in general, since $\lor$ is not the Boolean disjunction.
Rather, $\top \lor \varphi$ existentially quantifies a subteam where $\varphi$ holds.
In fact, $\top \lor \varphi$ is a tautology if and only if $\varphi$ holds in the empty team.}, and $\varphi$ is satisfied by some non-empty team iff $\negg\bot \land \varphi$ is satisfiable.

\smallskip

The modality-free fragment $\MTL_0$ syntactically coincides with \emph{propositional team logic} $\PTL$~\cite{HannulaKLV16,ptl2017,YANG20171406}.
The usual interpretations of the latter, \ie, sets of Boolean assignments, can easily be represented as teams in Kripke structures.
For this reason, we treat $\PTL$ and $\MTL_0$ as identical in this article.

Note that the connectives $\lor$, $\imp$ and $\neg$ are not the Boolean disjunction, implication and negation, except on singleton teams, which correspond to Kripke semantics.
Using $\land$ and $\negg$ however,
we can define team-wide Boolean disjunction $\varphi_1\ovee\varphi_2 \dfn \negg(\negg\varphi_1\land\negg\varphi_2)$ and material implication $\varphi_1 \timp \varphi_2 \dfn \negg \varphi_1 \ovee \varphi_2$.

The notation $\Box^i\varphi$ is defined via $\Box^0\varphi \dfn \varphi$ and $\Box^{i+1}\varphi \dfn \Box\Box^i\varphi$, and analogously for $\Diamond^i\varphi$.
To express that at least one element of a team satisfies $\alpha \in \ML$, we use $\E\alpha \dfn \negg\neg\alpha$.

\medskip

$\MTL$ can express the \emph{(extended) dependence atom} $\dep{\alpha_1,\ldots,\alpha_{n-1},\alpha_n}$ of (extended) modal dependence logic~\cite{vaananen_modal_2008,EbbingHMMVV13}, which states that the truth value of $\alpha_n$ is a function of the truth values of $\alpha_1,\ldots,\alpha_{n-1}$, where $\alpha_1,\ldots,\alpha_n \in \ML$.
It is definable in $\MTL$ as
$\negg\left[\top \lor \negg \left(\bigwedge_{i = 1 }^{n-1}\dep{\alpha_i} \timp \dep{\alpha_n}\right) \right]$,
where $\dep{\alpha} \dfn \alpha \ovee \neg \alpha$ is the \emph{constancy atom}, stating that the truth value of $\alpha \in \ML$ is constant throughout the team.

\medskip

The well-known \emph{bisimulation} relation $\bis^\Phi_k$ fundamentally characterizes the expressive power of modal logic~\cite{blackburn_modal_2001} and plays a key role in our results.

\begin{defi}\label{def:point-bisim}
 Let $\Phi \subseteq \PS$ and $k\geq 0$.
 For $i \in \{1,2\}$, let $(\calK_i,w_i)$ be a pointed structure, where $\calK_i = (W_i,R_i,V_i)$.
 Then
 $(\calK_1,w_1)$ and $(\calK_2,w_2)$ are \emph{$(\Phi,k)$-bisimilar}, in symbols $(\calK_1,w_1) \bis^\Phi_k (\calK_2,w_2)$, if
 \begin{itemize}
  \item $\forall p \in \Phi \colon w_1 \in V_1(p) \Leftrightarrow w_2 \in V_2(p)$,
  \item and if $k > 0$,
        \begin{itemize}
         \item $\forall v_1 \in R_1w_1 \colon \exists v_2 \in R_2w_2 \colon (\calK_1,v_1)\bis^\Phi_{k-1} (\calK_2,v_2)$ (\emph{forward} condition),
         \item $\forall v_2 \in R_2w_2 \colon \exists v_1 \in R_1w_1 \colon (\calK_1,v_1)\bis^\Phi_{k-1} (\calK_2,v_2)$ (\emph{backward} condition).\qedhere
        \end{itemize}
 \end{itemize}
\end{defi}

\noindent
So-called \emph{characteristic formulas} or \emph{Hintikka formulas} capture the essence of the bisimulation relation in the following sense:

\begin{propC}[{\cite[Theorem 32]{handbook}}]\label{prop:hintikka}
 Let $\Phi\subseteq \PS$ be finite, $k \geq 0$, and let $(\calK,w)$ be a pointed structure.
 Then there is a formula $\zeta \in \ML^\Phi_k$ such that for all pointed structures $(\calK',w')$ we have $(\calK',w') \vDash \zeta$ if and only if $(\calK,w) \bis^\Phi_k (\calK',w')$.
\end{propC}

\smallskip

The notion of bisimulation was lifted to team semantics by Hella et al.~\cite{HellaLSV14,mil,mtl}:

\begin{defi}\label{def:team-bisim}
 Let $\Phi \subseteq \PS$ and $k\geq 0$.
 For $i \in \{1,2\}$, let $(\calK_i,T_i)$ be a structure with team.
 Then $(\calK_1,T_1)$ and $(\calK_2,T_2)$ are \emph{$(\Phi,k)$-team-bisimilar}, written $(\calK_1,T_1) \bis^\Phi_k (\calK_2,T_2)$, if
 \begin{itemize}
  \item $\forall w_1 \in T_1 \colon \exists w_2 \in T_2 \colon (\calK_1,w_1) \bis^\Phi_k (\calK_2,w_2)$,
  \item $\forall w_2 \in T_2 \colon \exists w_1 \in T_1 \colon (\calK_1,w_1) \bis^\Phi_k (\calK_2,w_2)$.\qedhere
 \end{itemize}
\end{defi}

\noindent
If no confusion can arise, we will also refer to teams $T_1,T_2$ that are $(\Phi,k)$-team-bisimilar simply as $(\Phi,k)$-bisimilar.
Throughout the paper, we will make use of the following characterizations of bisimilarity.

\begin{prop}\label{prop:ml-bisim-types}
 Let $\Phi \subseteq \PS$ be finite, and $k \geq 0$.
 For $i \in \{1,2\}$, let $(\calK_i,w_i)$ be a pointed structure, where $\calK_i  = (W_i,R_i,V_i)$.
 The following statements are equivalent:
 \begin{enumerate}
  \item $\forall \alpha \in \ML^\Phi_k \colon (\calK_1,w_1) \vDash \alpha \Leftrightarrow (\calK_2,w_2) \vDash \alpha$,
  \item $(\calK_1,w_1) \bis^\Phi_k (\calK_2,w_2)$,
  \item $(\calK_1,\{w_1\})\bis^\Phi_k (\calK_2,\{w_2\})$,
 \end{enumerate}
 \quad\, and if $k > 0$,
 \begin{enumerate}\setcounter{enumi}{3}
  \item $(\calK_1,w_1)\bis^\Phi_0(\calK_2,w_2)$ and $(\calK_1,R_1w_1) \bis^\Phi_{k-1} (\calK_2,R_2w_2)$.\label{enum:image-bisim}
 \end{enumerate}
\end{prop}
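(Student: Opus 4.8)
The plan is to prove the equivalences via $(2)\Leftrightarrow(3)$, $(2)\Leftrightarrow(4)$, and $(1)\Leftrightarrow(2)$, the first two of which are purely definitional. For $(2)\Leftrightarrow(3)$ I would unfold Definition~\ref{def:team-bisim} for the singleton teams $\{w_1\}$ and $\{w_2\}$: both the forward and backward clauses collapse to the single demand $(\calK_1,w_1)\bis^\Phi_k(\calK_2,w_2)$, which is exactly $(2)$. For $(2)\Leftrightarrow(4)$, valid when $k>0$, I would compare Definition~\ref{def:point-bisim} with Definition~\ref{def:team-bisim}: the level-$0$ clause of $(\calK_1,w_1)\bis^\Phi_k(\calK_2,w_2)$ is literally $(\calK_1,w_1)\bis^\Phi_0(\calK_2,w_2)$, while its forward and backward conditions are verbatim the two clauses defining the team-bisimilarity $(\calK_1,R_1w_1)\bis^\Phi_{k-1}(\calK_2,R_2w_2)$. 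Hence $(2)$ holds if and only if both conjuncts of $(4)$ do.

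The only substantive equivalence is $(1)\Leftrightarrow(2)$. For $(2)\Rightarrow(1)$ I would show by induction on $k$ that bisimilar points satisfy the same $\ML^\Phi_k$-formulas. The base case $k=0$ holds because $\bis^\Phi_0$ forces agreement on every $p\in\Phi$, which an inner induction on Boolean structure lifts to all of $\ML^\Phi_0$. In the step, assuming the claim for $k-1$, I would treat an arbitrary $\alpha\in\ML^\Phi_k$ by structural induction, where only the modal cases matter. If $(\calK_1,w_1)\vDash\Diamond\beta$ with $\beta\in\ML^\Phi_{k-1}$, I pick a witness $v_1\in R_1w_1$ satisfying $\beta$, use the forward condition of Definition~\ref{def:point-bisim} to obtain $v_2\in R_2w_2$ with $(\calK_1,v_1)\bis^\Phi_{k-1}(\calK_2,v_2)$, and apply the outer hypothesis to get $(\calK_2,v_2)\vDash\beta$, whence $(\calK_2,w_2)\vDash\Diamond\beta$. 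The converse uses the backward condition symmetrically, and the case $\alpha=\Box\beta$ is dual.

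For $(1)\Rightarrow(2)$ I would invoke the Hintikka formula. By Proposition~\ref{prop:hintikka} there is some $\zeta\in\ML^\Phi_k$ with $(\calK',w')\vDash\zeta\Leftrightarrow(\calK_1,w_1)\bis^\Phi_k(\calK',w')$ for every pointed structure $(\calK',w')$. Instantiating at $(\calK',w')=(\calK_1,w_1)$ and using reflexivity of $\bis^\Phi_k$ shows $(\calK_1,w_1)\vDash\zeta$; by $(1)$ it follows that $(\calK_2,w_2)\vDash\zeta$, and the defining property of $\zeta$ then yields $(\calK_1,w_1)\bis^\Phi_k(\calK_2,w_2)$. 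The main point requiring care is the modal-depth bookkeeping in $(2)\Rightarrow(1)$: one must verify that stripping a modality drops the subformula into $\ML^\Phi_{k-1}$, so that the induction hypothesis on $k-1$ genuinely applies. Beyond that, I anticipate no real obstacle, since the two definitional equivalences and the Hintikka-formula argument are all routine.
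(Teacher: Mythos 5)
Your proposal is correct and follows the same decomposition as the paper: $(2)\Leftrightarrow(3)$ and $(2)\Leftrightarrow(4)$ by unfolding Definitions~\ref{def:point-bisim} and~\ref{def:team-bisim}, with $(1)\Leftrightarrow(2)$ as the only substantive step. The sole differences are that you prove the standard equivalence $(1)\Leftrightarrow(2)$ in full (back-and-forth induction for $(2)\Rightarrow(1)$, the Hintikka formula of Proposition~\ref{prop:hintikka} for $(1)\Rightarrow(2)$) where the paper simply cites it, and you observe that $(2)\Leftrightarrow(4)$ is purely definitional where the paper routes one direction through a cited lemma of Hella et al.; both of your more self-contained arguments are sound.
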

\begin{proof}
 (1) $\Leftrightarrow$ (2) is a standard result (\cite[Theorem 32]{handbook}).
 (2) $\Leftrightarrow$ (3) follows from Definition~\ref{def:team-bisim}.
 For $k > 0$, we show that (2) + (3) implies (4).
 Clearly, $(\calK_1,w_1) \bis^\Phi_0 (\calK_2,w_2)$ follows from (2).
 Due to Hella et al.~\cite[Lemma 3.3]{HellaLSV14}, (3) implies $(\calK_1,R_1w_1) \bis^\Phi_{k-1} (\calK_2,R_2w_2)$.

 Finally, we show (4) $\Rightarrow$ (2).
 Suppose $(\calK_1,w_1) \bis^\Phi_0 (\calK_2,w_2)$ and $(\calK_1,R_1w_1) \bis^\Phi_{k-1} (\calK_2,R_2w_2)$.
 Then to show $(\calK_1,w_1) \bis^\Phi_k (\calK_2,w_2)$, it is sufficient to prove the \emph{forward} and \emph{backward} conditions of Definition~\ref{def:point-bisim}.
 Suppose $v_1 \in R_1w_1$.
 Since $(\calK_1,R_1w_1) \bis^\Phi_{k-1} (\calK_2,R_2w_2)$, by Definition~\ref{def:team-bisim} there exists $v_2\in R_2w_2$ such that $(\calK_1,v_1) \bis^\Phi_{k-1} (\calK_2,v_2)$, proving the \emph{forward} condition.
 The \emph{backward} condition is symmetric.
\end{proof}

As a consequence, the \emph{forward} and \emph{backward} condition from Definition~\ref{def:point-bisim} can be equivalently stated in terms of team-bisimilarity of the respective image teams.
A similar characterization exists for team-bisimilarity:

\begin{prop}\label{prop:mtl-bisim-types}
 Let $\Phi \subseteq \PS$ be finite, and $k \geq 0$.
 Let $(\calK_i,T_i)$ be a structure with team for $i \in \{1,2\}$.
 Then the following statements are equivalent:
 \begin{enumerate}
  \item $\forall \alpha \in \ML^\Phi_k \colon (\calK_1,T_1) \vDash \alpha \Leftrightarrow (\calK_2,T_2) \vDash \alpha$,
  \item $\forall \varphi \in \MTL^\Phi_k \colon (\calK_1,T_1) \vDash \varphi \Leftrightarrow (\calK_2,T_2) \vDash \varphi$,
  \item $(\calK_1,T_1) \bis^\Phi_k (\calK_2,T_2)$.
 \end{enumerate}
\end{prop}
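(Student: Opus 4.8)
The plan is to establish the three statements equivalent by the cycle $(2) \Rightarrow (1) \Rightarrow (3) \Rightarrow (2)$. The implication $(2) \Rightarrow (1)$ is immediate, since every $\ML^\Phi_k$-formula is also an $\MTL^\Phi_k$-formula (the $\ML$-formulas form the atomic case of the $\MTL$-grammar) and the two semantics agree on such formulas by definition; thus agreement on all of $\MTL^\Phi_k$ entails agreement on $\ML^\Phi_k$.

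For $(1) \Rightarrow (3)$ I would exploit that, since $\Phi$ is finite and $k$ fixed, there are only finitely many $\bis^\Phi_k$-classes of pointed structures, each described by a characteristic formula. To verify the forward condition of Definition~\ref{def:team-bisim}, fix $w_1 \in T_1$ and let $\zeta \in \ML^\Phi_k$ be its Hintikka formula from Proposition~\ref{prop:hintikka}, so that a pointed structure satisfies $\zeta$ exactly when it is $\bis^\Phi_k$-bisimilar to $(\calK_1,w_1)$. Suppose, for contradiction, that no $w_2 \in T_2$ is bisimilar to $w_1$. Then no world of $T_2$ satisfies $\zeta$, hence $(\calK_2,T_2) \vDash \neg\zeta$ under the flat semantics; as $\neg\zeta \in \ML^\Phi_k$, statement~(1) yields $(\calK_1,T_1) \vDash \neg\zeta$, contradicting $w_1 \vDash \zeta$. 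The backward condition is symmetric. (In essence, two teams satisfy the same $\ML^\Phi_k$-formulas precisely when they realize the same set of $\bis^\Phi_k$-types, which is exactly team-bisimilarity.)

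The substantial direction is $(3) \Rightarrow (2)$, the invariance of $\MTL$ under team-bisimulation, which I would prove by induction on $\varphi$, strengthening the statement to: for every $k \geq \md(\varphi)$, $(\Phi,k)$-bisimilar teams agree on $\varphi$. The base case $\varphi = \alpha \in \ML$ follows from team-bisimilarity together with the pointed equivalence $(1)\Leftrightarrow(2)$ of Proposition~\ref{prop:ml-bisim-types}: if every world of $T_1$ satisfies $\alpha$, then so does every world of $T_2$, as each is pointed-bisimilar to a world of $T_1$, and symmetrically. The cases $\negg$ and $\land$ are routine. For the splitting disjunction $\psi \lor \theta$, given a split $T_1 = S_1 \cup U_1$ I would transport it by setting $S_2 \dfn \{w_2 \in T_2 \mid \exists w_1 \in S_1 \colon w_1 \bis^\Phi_k w_2\}$ and analogously $U_2$; team-bisimilarity of $T_1,T_2$ yields $T_2 = S_2 \cup U_2$ as well as $S_1 \bis^\Phi_k S_2$ and $U_1 \bis^\Phi_k U_2$, so the induction hypothesis applies since the modal depth does not increase.

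The genuine obstacle lies in the modal cases, where the bisimulation level must drop from $k$ to $k-1$ (note $k \geq 1$ here, as $\md(\Diamond\psi),\md(\Box\psi) \geq 1$, so the forward and backward conditions of Definition~\ref{def:point-bisim} are active). For $\Box\psi$ it suffices to check that $T_1 \bis^\Phi_k T_2$ implies $R_1T_1 \bis^\Phi_{k-1} R_2T_2$, which follows by chaining the forward/backward conditions of the team- and point-bisimulations (the image-team reformulation noted after Proposition~\ref{prop:ml-bisim-types}). For $\Diamond\psi$ I would, given a successor team $S_1$ of $T_1$ witnessing $\psi$, define $S_2 \dfn \{v_2 \in R_2T_2 \mid \exists v_1 \in S_1 \colon v_1 \bis^\Phi_{k-1} v_2\}$ and verify three things: that every $w_2 \in T_2$ has a successor in $S_2$ (so that $S_2$ is a successor team of $T_2$), and that $S_1 \bis^\Phi_{k-1} S_2$ in both directions. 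Each is obtained by picking a bisimilar partner $w_1 \in T_1$ of the relevant world and pushing a chosen successor through the forward condition of $w_1 \bis^\Phi_k w_2$, using that a successor team meets every world of its base team. With $S_1 \bis^\Phi_{k-1} S_2$ and $\md(\psi) \leq k-1$, the induction hypothesis gives $(\calK_2,S_2) \vDash \psi$, hence $(\calK_2,T_2) \vDash \Diamond\psi$; the converse is symmetric. This successor-team construction and its verification is where I expect the bulk of the bookkeeping to concentrate.
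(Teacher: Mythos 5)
Your proof is correct, but it follows a genuinely different route from the paper's. The paper keeps the proof short by delegating the two substantial directions to cited results: $(2)\Rightarrow(3)$ comes from a theorem of Kontinen et al.\ providing an $\MTL^\Phi_k$-formula that characterizes a team up to $(\Phi,k)$-team-bisimilarity, $(3)\Rightarrow(2)$ is their bisimulation-invariance result, and $(1)\Leftrightarrow(2)$ is obtained from a normal form theorem stating that every $\MTL^\Phi_k$-formula is equivalent to a Boolean combination (via $\ovee$, $\land$, $\E$) of $\ML^\Phi_k$-formulas; the paper also first disposes of the empty-team cases explicitly. You instead close the cycle $(2)\Rightarrow(1)\Rightarrow(3)\Rightarrow(2)$: the first implication is free, $(1)\Rightarrow(3)$ uses only the $\ML$-level Hintikka formulas of Proposition~\ref{prop:hintikka} (and your contradiction argument handles the empty-team asymmetry automatically, since $\emptyset \vDash \neg\zeta$ vacuously), and $(3)\Rightarrow(2)$ is a self-contained structural induction that in effect reproves the cited invariance result. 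Your transported splits for $\lor$ and the successor team $S_2$ for $\Diamond$ are constructed and verified correctly, and the strengthened induction hypothesis ``for every $k \geq \md(\varphi)$'' is exactly what is needed to drop from $k$ to $k-1$ in the modal cases. What your approach buys is self-containedness and, notably, avoidance of the normal form theorem, which is a nontrivial result (the paper remarks elsewhere that no elementary translation into that normal form is known); what the paper's approach buys is brevity and reuse of established machinery. Both are sound.
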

\begin{proof}
 The above statements are all true if $T_1 = T_2 = \emptyset$, and they are all false if exactly one of the teams is empty, since a team $T$ satisfies the $\ML$-formula $\bot$ precisely if $T = \emptyset$.
 For this reason, we can assume that both $T_1$ and $T_2$ are non-empty.

 By Kontinen et al.~\cite[Proposition 3.10]{mtl}, for non-empty $T_1,T_2$ there exists an $\MTL^\Phi_k$-formula $\varphi$ that is true in $(\calK_1,T_1)$, but holds in $(\calK_2,T_2)$ if and only if $(\calK_1,T_1) \bis^\Phi_k (\calK_2,T_2)$.
 This immediately proves (2) $\Rightarrow$ (3).
 The direction (3) $\Rightarrow$ (2) is due to Kontinen et al.~\cite[Proposition 2.8]{mtl} as well.

 Finally, (1) $\Leftrightarrow$ (2) follows from the fact that $\ML^\Phi_k \subseteq \MTL^\Phi_k$, and that conversely every $\MTL^\Phi_k$-formula is equivalent to a formula of the form
 \[
  \bigovee_{i=1}^n \Big(\alpha_{i} \land \bigwedge_{j=1}^{m_i} \E \beta_{i,j} \Big)\text{,}
 \]
 where $\{\alpha_1, \ldots, \alpha_n, \beta_{1,1}, \ldots, \beta_{n,m_n}\} \subseteq \ML^\Phi_k$ (see~\cite[Theorem 5.2]{axiom} or~\cite[p.\,11]{mtl}).
\end{proof}

Note that the analog of condition~\ref{enum:image-bisim} in Proposition~\ref{prop:ml-bisim-types} for team bisimulation is not equivalent: It is possible that $(\calK_1,T_1) \bis^\Phi_0 (\calK_2,T_2)$ and
$(\calK_1,R_1T_1) \bis^\Phi_{k-1} (\calK_2,R_2T_2)$, but $(\calK_1,T_1) \not\bis^\Phi_k (\calK_2,T_2)$.

\section{Types and canonical models}%
\label{sec:canon}

Many modal logics admit a "universal" model, also called \emph{canonical model}.
The defining property of a canonical model is that it simultaneously witnesses all satisfiable (sets of) formulas in some of its points.
These models are a popular tool for proving the completeness of manifold systems of modal logics; for the explicit construction of such a model for $\ML$, consult, \eg, Blackburn et al.~\cite[Section 4.2]{blackburn_modal_2001}.

Unfortunately, any canonical model for $\ML$ is necessarily infinite, and consequently impractical for complexity theoretic considerations.
Instead, we use so-called \emph{$(\Phi,k)$-canonical models} for finite $\Phi \subseteq \PS$ and $k \in \N$; as the name suggests they are canonical for the fragment $\ML^\Phi_k$.
While these models are finite, by Proposition~\ref{prop:ml-bisim-types} their size is at least the number of equivalence classes of $\bis^\Phi_k$.
We call the equivalence classes of $\bis^\Phi_k$ \emph{types}.

A first issue arises since types are then proper classes, and in team semantics, we need to speak about \emph{sets} of types.
For this reason, we begin this section by defining types on proper set-theoretic grounds, by indentifying the type of a point with the set of formulas that are true in it, which is a standard approach in first-order model theory.

\subsection{Types}%
\label{sec:types}

\begin{defi}
 A set $\tau \subseteq \ML^\Phi_k$ is a \emph{$(\Phi,k)$-type} if it is satisfiable and for all $\alpha \in \ML^\Phi_k$ contains either $\alpha$ or $\neg\alpha$.
 The $(\Phi,k)$-type of a pointed structure $(\calK,w)$ is
 \[
    \type{\calK,w}^\Phi_k \dfn \big\{ \alpha \in \ML^\Phi_k \mid (\calK,w) \vDash \alpha \big\}.
 \]
\end{defi}

The set of all $(\Phi,k)$-types is $\Delta^\Phi_k$.
Given a team $T$ in $\calK$, the types in $T$ are
\[
    \type{\calK,T}^\Phi_k \dfn \big\{ \type{\calK,w}^\Phi_k \mid w \in T\big\}.
\]
The following assertions ascertain that the above definition of types properly reflects the bisimulation relation.

\begin{prop}\label{prop:types}
 Let $\Phi \subseteq \PS$ and $k \geq 0$.
 Then
 \begin{enumerate}
  \item The unique $(\Phi,k)$-type satisfied by $(\calK,w)$ is $\type{\calK,w}^\Phi_k$.
  \item $(\calK,w) \bis^\Phi_k (\calK',w')$ if and only if $\type{\calK,w}^\Phi_k = \type{\calK',w'}^\Phi_k$.
  \item $(\calK,T) \bis^\Phi_k (\calK',T')$ if and only if $\type{\calK,T}^\Phi_k = \type{\calK',T'}^\Phi_k$.
 \end{enumerate}
\end{prop}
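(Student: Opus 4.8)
The plan is to treat all three parts as essentially bookkeeping built on Proposition~\ref{prop:ml-bisim-types}, with the only genuinely set-theoretic content residing in part~(1). Throughout, I read "$(\calK,w)$ satisfies the type $\tau$" as "$(\calK,w)\vDash\alpha$ for every $\alpha\in\tau$".

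For part~(1), I would first verify that $\type{\calK,w}^\Phi_k$ really is a $(\Phi,k)$-type. Satisfiability is immediate, since $(\calK,w)$ itself witnesses it. For the maximality clause the key observation is that $\ML^\Phi_k$ is closed under negation, because $\md(\neg\alpha)=\md(\alpha)$; hence for every $\alpha\in\ML^\Phi_k$ both $\alpha$ and $\neg\alpha$ lie in $\ML^\Phi_k$, and since $(\calK,w)$ is evaluated in classical Kripke semantics, exactly one of $(\calK,w)\vDash\alpha$ and $(\calK,w)\vDash\neg\alpha$ holds. Thus $\type{\calK,w}^\Phi_k$ contains precisely one of $\alpha,\neg\alpha$, as required. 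For uniqueness I would take any type $\tau$ satisfied by $(\calK,w)$ and prove both inclusions: $\tau\subseteq\type{\calK,w}^\Phi_k$ is immediate from the definition, while for the converse, if $\alpha\in\type{\calK,w}^\Phi_k$ then $\neg\alpha\in\tau$ would force $(\calK,w)\vDash\neg\alpha$, contradicting $(\calK,w)\vDash\alpha$, so $\alpha\in\tau$.

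Part~(2) I would obtain directly: the equation $\type{\calK,w}^\Phi_k=\type{\calK',w'}^\Phi_k$ unfolds to "$(\calK,w)\vDash\alpha\Leftrightarrow(\calK',w')\vDash\alpha$ for all $\alpha\in\ML^\Phi_k$", which is exactly statement~(1) of Proposition~\ref{prop:ml-bisim-types}, equivalent there to the bisimilarity of statement~(2). Part~(3) I would then reduce to part~(2) by unwinding Definition~\ref{def:team-bisim} together with the definition of $\type{\calK,T}^\Phi_k$ as the set of point-types realised in $T$. For the forward direction, each $w\in T$ is matched by the forward clause with some $w'\in T'$ that is point-bisimilar, hence carries the same type by part~(2), giving $\type{\calK,T}^\Phi_k\subseteq\type{\calK',T'}^\Phi_k$; the reverse inclusion uses the backward clause symmetrically. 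Conversely, if the two type-sets agree, then for each $w\in T$ its type is realised by some $w'\in T'$, and part~(2) converts this into the required point-bisimilarity, establishing the forward clause, with the backward clause symmetric.

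As for difficulty, I expect no real obstacle here; the statement should fall out once the definitions are unwound. The one place deserving care is part~(1), namely confirming that $\type{\calK,w}^\Phi_k$ is maximal in the sense demanded by the definition of a type, which hinges on $\ML^\Phi_k$ being negation-closed. I would also note, pleasingly, that the empty-team case needs no separate treatment in part~(3): if one of $T,T'$ is empty the relevant type-set is empty, and both type-equality and the two bisimulation clauses degenerate consistently, in contrast to Proposition~\ref{prop:mtl-bisim-types} where the empty team had to be handled apart.
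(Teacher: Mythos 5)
Your proposal is correct and follows essentially the same route as the paper: part (1) by the negation-closure of $\ML^\Phi_k$ and a contradiction argument for uniqueness, part (2) as an immediate consequence of Proposition~\ref{prop:ml-bisim-types}, and part (3) by matching worlds via the two clauses of Definition~\ref{def:team-bisim} and invoking part (2). Your additional check that $\type{\calK,w}^\Phi_k$ is itself a $(\Phi,k)$-type, and the remark about the empty team, are fine (if implicit in the paper) but do not change the argument.
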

\begin{proof}
 Property (1) is straightforward: two distinct types $\tau,\tau'$ satisfied by $(\calK,w)$ differ in some $\alpha \in \ML^\Phi_k$.
 But then $(\calK,w) \vDash \alpha,\neg\alpha$, contradiction.
 Property (2) immediately follows from Proposition~\ref{prop:ml-bisim-types}.
 For (3), first consider "$\Rightarrow$".
 Due to symmetry, we only show that $(\calK,T) \bis^\Phi_k (\calK',T')$ implies $\type{\calK,T}^\Phi_k \subseteq \type{\calK',T'}^\Phi_k$.
 Hence suppose $\tau \in \type{\calK,T}^\Phi_k$.
 Then there exists $w \in T$ of type $\type{\calK,w}^\Phi_k = \tau$.
 By Definition~\ref{def:team-bisim}, there is $w' \in T'$ with $(\calK,w)\bis^\Phi_k (\calK',w')$.
 Then $\type{\calK',w'}^\Phi_k = \tau \in \type{\calK',T'}^\Phi_k$ by property (2).
 The direction "$\Leftarrow$" of (3) is shown analogously.
\end{proof}

It is unsurprising that the type of a point $w$ is determined solely by the propositions in $w$ and the types in the image $Rw$.
In other words, all pointed structures of type $\tau$ satisfy the same propositions in their roots, viz.\ $\tau \cap \Phi$, and have the same types contained in their image teams.
Regarding the latter, we define
$\calR\tau \dfn \big\{ \tau' \in \Delta^\Phi_k \mid \{ \alpha \mid \Box\alpha \in \tau \} \subseteq \tau'\big\}$, given a $(\Phi,k+1)$-type $\tau$.
Intuitively, $\calR\tau$ is the set of $(\Phi,k)$-types that occur in the image team of a world of type $\tau$.

The following proposition shows that types are indeed uniquely determined by the above constituents:

\begin{prop}\label{prop:det-types}
 Let $\Phi \subseteq \PS$ be finite and $k \geq 0$.
 \begin{enumerate}
  \item $\type{w}^\Phi_k \cap \Phi = V^{-1}(w) \cap \Phi$ and $\type{Rw}^\Phi_k = \calR\type{w}^\Phi_{k+1}$, for all pointed structures $(W,R,V,w)$.\vspace{1mm}
  \item The mapping $h \colon \tau \mapsto \tau \cap \Phi$ is a bijection from $\Delta^\Phi_0$ to $\pow{\Phi}$.
  \item The mapping $h \colon \tau \mapsto (\tau \cap \Phi, \calR\tau)$ is a bijection from $\Delta^\Phi_{k+1}$ to $\pow{\Phi} \times \pow{\Delta^\Phi_k}$.
 \end{enumerate}
\end{prop}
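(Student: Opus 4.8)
The plan is to treat the three parts in order, establishing Part (1) first since it underlies both bijection claims. The first identity $\type{w}^\Phi_k \cap \Phi = V^{-1}(w)\cap\Phi$ is immediate from the definitions: a proposition $p \in \Phi$ lies in $\type{w}^\Phi_k$ exactly when $(\calK,w) \vDash p$, that is, when $p \in V^{-1}(w)$. For the second identity $\type{Rw}^\Phi_k = \calR\type{w}^\Phi_{k+1}$, the inclusion $\subseteq$ is routine: if $v \in Rw$ and $\Box\alpha \in \type{w}^\Phi_{k+1}$, then $(\calK,w)\vDash\Box\alpha$ forces $(\calK,v)\vDash\alpha$, and since $\md(\alpha)\le k$ we get $\alpha \in \type{v}^\Phi_k$, whence $\type{v}^\Phi_k \in \calR\type{w}^\Phi_{k+1}$. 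The reverse inclusion is the crux. Given $\tau' \in \calR\type{w}^\Phi_{k+1}$, I would invoke Proposition~\ref{prop:hintikka} on a realization of $\tau'$ to obtain a characteristic formula $\chi \in \ML^\Phi_k$ with $(\calK',v')\vDash\chi \Leftrightarrow \type{v'}^\Phi_k = \tau'$ (using Proposition~\ref{prop:types}(2)); in particular $\chi \in \tau'$. If no successor of $w$ had type $\tau'$, then every $v \in Rw$ would satisfy $\neg\chi$, so $(\calK,w)\vDash\Box\neg\chi$ and hence $\Box\neg\chi \in \type{w}^\Phi_{k+1}$. By definition of $\calR$ this yields $\neg\chi \in \tau'$, contradicting $\chi \in \tau'$ and the satisfiability of $\tau'$. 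Thus some $v \in Rw$ realizes $\tau'$, giving $\supseteq$.

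For Part (2), I note that $\ML^\Phi_0$-truth at a world depends only on $V^{-1}(w)\cap\Phi$. Injectivity: if $\tau_1 \cap \Phi = \tau_2 \cap \Phi$, then worlds realizing $\tau_1$ and $\tau_2$ agree (by Part (1)) on all atoms of $\Phi$, hence on every $\ML^\Phi_0$-formula, so $\tau_1 = \tau_2$. Surjectivity: for $X \subseteq \Phi$, a single world $w$ with $V^{-1}(w)\cap\Phi = X$ realizes a type $\tau = \type{w}^\Phi_0$ with $\tau \cap \Phi = X$ by Part (1).

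For Part (3), well-definedness is clear since $\tau \cap \Phi \in \pow{\Phi}$ and $\calR\tau \subseteq \Delta^\Phi_k$. For injectivity, suppose $\tau_i = \type{w_i}^\Phi_{k+1}$ (for $i \in \{1,2\}$) satisfy $\tau_1 \cap \Phi = \tau_2 \cap \Phi$ and $\calR\tau_1 = \calR\tau_2$. The first equality gives $(\calK_1,w_1)\bis^\Phi_0(\calK_2,w_2)$; by Part (1), $\type{R_1w_1}^\Phi_k = \calR\tau_1 = \calR\tau_2 = \type{R_2w_2}^\Phi_k$, so Proposition~\ref{prop:types}(3) makes the image teams $(\Phi,k)$-bisimilar. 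Then Proposition~\ref{prop:ml-bisim-types}(\ref{enum:image-bisim}) yields $(\calK_1,w_1)\bis^\Phi_{k+1}(\calK_2,w_2)$, and Proposition~\ref{prop:types}(2) gives $\tau_1 = \tau_2$. For surjectivity, given $(X,D) \in \pow{\Phi}\times\pow{\Delta^\Phi_k}$, I would build a structure as the disjoint union of realizations $(\calK_{\tau'},v_{\tau'})$ of each $\tau' \in D$ (which exist since types are satisfiable), adjoined with a fresh root $w$ where $V^{-1}(w)\cap\Phi = X$ and an edge from $w$ to each $v_{\tau'}$. Since no $v_{\tau'}$ can reach $w$, the $(\Phi,k)$-types of the $v_{\tau'}$ are preserved, so $\type{Rw}^\Phi_k = D$; with Part (1) this gives $\calR\type{w}^\Phi_{k+1} = D$ and $\type{w}^\Phi_{k+1}\cap\Phi = X$, i.e. $h(\type{w}^\Phi_{k+1}) = (X,D)$. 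The construction uniformly covers $D = \emptyset$, where $w$ is a dead end and Part (1) already gives $\calR\type{w}^\Phi_{k+1} = \type{Rw}^\Phi_k = \emptyset$.

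I expect the reverse inclusion in Part (1) to be the main obstacle, as it is the only place where one must \emph{produce} a successor witnessing a prescribed type rather than merely unfold definitions, and it is precisely there that the existence of characteristic formulas (Proposition~\ref{prop:hintikka}) is indispensable; the surjectivity construction in Part (3) is then a straightforward packaging of this witnessing property.
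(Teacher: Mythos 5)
Your proposal is correct and follows essentially the same route as the paper's proof: the first identity of (1) by unfolding definitions, the crucial inclusion $\calR\type{w}^\Phi_{k+1}\subseteq\type{Rw}^\Phi_k$ via a Hintikka formula for a realization of $\tau'$ (the paper phrases this as ruling out $\neg\Diamond\zeta\in\tau$, you as a direct contradiction from $\Box\neg\chi\in\tau$ — the same argument), injectivity via Propositions~\ref{prop:types} and~\ref{prop:ml-bisim-types}, and surjectivity by the disjoint-union construction with a fresh root. No substantive differences.
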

\begin{proof}
  See the appendix.
\end{proof}

\begin{lem}\label{lem:types-equal-const}
 Let $(W,R,V,w)$ be a pointed structure.
 \begin{enumerate}
  \item If $\tau \in \Delta^\Phi_0$, then $\type{w}^\Phi_0 = \tau$ if and only if $V^{-1}(w) = \tau \cap \Phi$.
  \item If $\tau \in \Delta^\Phi_{k+1}$, then $\type{w}^\Phi_{k+1} = \tau$ if and only if $V^{-1}(w) = \tau \cap \Phi$ and $\type{Rw}^\Phi_k = \calR\tau$.
 \end{enumerate}
\end{lem}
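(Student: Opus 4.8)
The plan is to derive the lemma directly from Proposition~\ref{prop:det-types}, which already isolates the two local invariants of a world — the propositions true there and the types occurring in its image — and shows, via explicit bijections, that these invariants pin down the type. Both parts have the same shape: the \enquote{only if} direction is an immediate read-off from Proposition~\ref{prop:det-types}(1), while the \enquote{if} direction is settled by invoking injectivity of the relevant bijection from Proposition~\ref{prop:det-types}(2) or (3).

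For part (1), I would first record the small bookkeeping fact that, since $V$ is a valuation over $\Phi$, we have $V^{-1}(w) \subseteq \Phi$ and hence $V^{-1}(w) = V^{-1}(w) \cap \Phi$; this is what makes the stated condition $V^{-1}(w) = \tau \cap \Phi$ line up with the $\Phi$-relativized equalities supplied earlier. The forward direction is then a one-line chain: if $\type{w}^\Phi_0 = \tau$, Proposition~\ref{prop:det-types}(1) gives $\tau \cap \Phi = \type{w}^\Phi_0 \cap \Phi = V^{-1}(w) \cap \Phi = V^{-1}(w)$. For the converse, assuming $V^{-1}(w) = \tau \cap \Phi$, the same proposition shows that $\type{w}^\Phi_0$ and $\tau$ agree under the map $h \colon \sigma \mapsto \sigma \cap \Phi$; since both lie in $\Delta^\Phi_0$ — the former by Proposition~\ref{prop:types}(1) — and $h$ is a bijection by Proposition~\ref{prop:det-types}(2), injectivity forces $\type{w}^\Phi_0 = \tau$.

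Part (2) follows the identical pattern one modal level up, now additionally carrying the image invariant. The forward direction reads off both equalities from Proposition~\ref{prop:det-types}(1), namely $\tau \cap \Phi = V^{-1}(w)$ and $\calR\tau = \calR\type{w}^\Phi_{k+1} = \type{Rw}^\Phi_k$. For the converse, assuming $V^{-1}(w) = \tau \cap \Phi$ and $\type{Rw}^\Phi_k = \calR\tau$, I would compute the image of $\type{w}^\Phi_{k+1}$ under the map $h \colon \sigma \mapsto (\sigma \cap \Phi, \calR\sigma)$, using Proposition~\ref{prop:det-types}(1) in each coordinate, and verify that it equals $(\tau \cap \Phi, \calR\tau) = h(\tau)$; injectivity of this bijection (Proposition~\ref{prop:det-types}(3)) then yields $\type{w}^\Phi_{k+1} = \tau$.

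I do not expect a genuine obstacle here: the lemma is essentially a convenient repackaging of Proposition~\ref{prop:det-types} as an \enquote{if and only if} membership test, trading the abstract bijections for directly checkable local conditions, which is precisely the form needed when these tests are later encoded by $\MTL$-formulas. The only point deserving a moment's care is the trivial-but-easy-to-overlook identity $V^{-1}(w) = V^{-1}(w) \cap \Phi$, which is what makes the unrestricted equality $V^{-1}(w) = \tau \cap \Phi$ in the statement equivalent to the $\Phi$-relativized equality delivered by the earlier proposition.
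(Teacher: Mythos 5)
Your proposal is correct and follows essentially the same route as the paper: the forward direction is read off from Proposition~\ref{prop:det-types}(1), and the converse is obtained by showing that $\tau$ and $\type{w}^\Phi_{k+1}$ (resp.\ $\type{w}^\Phi_0$) have the same image under the bijection $h$ and invoking its injectivity. The paper phrases the converse as a short proof by contradiction with an auxiliary $\tau'$, but this is only a cosmetic difference from your direct computation of $h(\type{w}^\Phi_{k+1})$.
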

\begin{proof}
 The direction "$\Rightarrow$" of 1.\ and 2.\ follows directly from Proposition~\ref{prop:det-types}.
 Moreover, we prove "$\Leftarrow$" only for statement 2., as the proof is analogous for 1.

 Suppose that there are $\tau,\tau' \in \Delta^\Phi_{k+1}$ such that $V^{-1}(w) = \tau \cap \Phi$ and $\type{Rw}^\Phi_k = \calR\tau$, but $\type{w}^\Phi_{k+1} = \tau'$.
 Then, by "$\Rightarrow$", we have $V^{-1}(w) = \tau' \cap \Phi$ and $\type{Rw}^\Phi_k = \calR\tau'$ as well.
 In other words, $\tau \cap \Phi = \tau' \cap \Phi$ and $\calR\tau = \calR\tau'$.
 However, since the mapping $h\colon \tau \mapsto (\tau \cap \Phi, \calR\tau)$ is bijective according to Proposition~\ref{prop:det-types}, we have $\tau = \tau' = \type{w}^\Phi_{k+1}$.
\end{proof}

We are now ready to state the formal definition of canonicity by the notion of types:

\begin{defi}
 A structure with team $(\calK,T)$ is \emph{$(\Phi,k)$-canonical} if $\type{\calK,T}^\Phi_k = \Delta^\Phi_k$.
\end{defi}

In the following, we often omit $\Phi$ and $\calK$ and instead write $\type{w}_k$ and $\type{T}_k$, respectively, and simply say that $T$ is $(\Phi,k)$-canonical if $\calK$ is clear.

\subsection{Canonical models in team semantics}

It is a standard result that for every $\Phi$ and $k \geq 0$ there exists a $(\Phi,k)$-canonical model~\cite{blackburn_modal_2001}, or in other words, that the logic $\ML^\Phi_k$ admits canonical models.

We will show that, given a $(\Phi,k)$-canonical model $\calK$, every satisfiable $\MTL^\Phi_k$-formula can be satisfied in some team of $\calK$ as well, despite $\MTL$ being significantly more expressive than $\ML$~\cite{mtl}.
In other words, the canonical models for $\MTL^\Phi_k$ and $\ML^\Phi_k$ coincide:

\begin{thm}\label{thm:canonical}
 Let $(\calK,T)$ be $(\Phi,k)$-canonical and $\varphi \in \MTL^\Phi_k$.
 Then $\varphi$ is satisfiable if and only if $(\calK,T') \vDash \varphi$ for some $T' \subseteq T$.
\end{thm}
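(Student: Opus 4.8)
The plan is to prove the two directions separately, the backward implication being essentially trivial and the forward one carrying the content. For "$\Leftarrow$", I would observe that if $(\calK,T')\vDash\varphi$ for some $T'\subseteq T$, then restricting the valuation of $\calK$ to $\Prop(\varphi)$ produces a structure with team over $\Prop(\varphi)$ satisfying $\varphi$ — satisfaction is unchanged because $\varphi$ mentions only propositions in $\Prop(\varphi)\subseteq\Phi$ — so $\varphi$ is satisfiable by definition.

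For "$\Rightarrow$", suppose $\varphi$ is satisfiable, say $(\calM,S)\vDash\varphi$. Since $\varphi$ uses only the propositions in $\Prop(\varphi)\subseteq\Phi$, I may extend the valuation of $\calM$ to all of $\Phi$ arbitrarily without affecting $(\calM,S)\vDash\varphi$; thus I assume $\calM$ is a Kripke structure over $\Phi$. The idea is now to select inside the canonical team $T$ a subteam $T'$ that is $(\Phi,k)$-team-bisimilar to $(\calM,S)$. Concretely, let $\Sigma\dfn\type{\calM,S}^\Phi_k\subseteq\Delta^\Phi_k$ be the set of types realized in $S$. Canonicity gives $\type{\calK,T}^\Phi_k=\Delta^\Phi_k$, so each $\tau\in\Sigma$ is realized by at least one world of $T$; I fix for every $\tau\in\Sigma$ a witness $w_\tau\in T$ with $\type{\calK,w_\tau}^\Phi_k=\tau$ and set $T'\dfn\{\,w_\tau\mid\tau\in\Sigma\,\}\subseteq T$ (with $T'=\emptyset$ if $S=\emptyset$). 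By construction $\type{\calK,T'}^\Phi_k=\Sigma=\type{\calM,S}^\Phi_k$.

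Finally I would transfer satisfaction along the bisimulation. Equality of the type sets yields $(\calK,T')\bis^\Phi_k(\calM,S)$ by Proposition~\ref{prop:types}(3), and since $\varphi\in\MTL^\Phi_k$, Proposition~\ref{prop:mtl-bisim-types} gives $(\calK,T')\vDash\varphi\Leftrightarrow(\calM,S)\vDash\varphi$; as the right-hand side holds, $(\calK,T')\vDash\varphi$. The only place where genuine work is invoked — rather than bookkeeping — is this last appeal to Proposition~\ref{prop:mtl-bisim-types}: despite the Boolean negation making $\MTL$ far more expressive than $\ML$, $(\Phi,k)$-team-bisimilarity still preserves all of $\MTL^\Phi_k$, so the finitely many types captured by the canonical model already suffice to witness $\MTL$-satisfaction. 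I expect no further obstacle beyond correctly marshalling these two earlier propositions together with the selection of type witnesses out of the canonical team.
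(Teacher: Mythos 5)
Your proposal is correct and follows essentially the same route as the paper: select a subteam $T'\subseteq T$ realizing exactly the types of the given model (the paper takes all worlds of $T$ with a matching type rather than one witness per type, which makes no difference), then transfer satisfaction via Proposition~\ref{prop:types} and Proposition~\ref{prop:mtl-bisim-types}. The extra care you take about extending the valuation to all of $\Phi$ is harmless bookkeeping that the paper's proof leaves implicit.
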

\begin{proof}
 Assume $(\calK,T)$ and $\varphi$ are as above.
 As the direction from right to left is trivial, suppose that $\varphi$
 is satisfiable, \ie, has a model $(\hat{\calK},\hat{T})$.
 As a team in $\calK$ that satisfies $\varphi$, we define
 \[
  T' \dfn \Set{ w \in T | \type{\calK,w}^\Phi_k \in \type{\hat{\calK},\hat{T}}^\Phi_k }\text{.}
 \]
 By Proposition~\ref{prop:mtl-bisim-types} and~\ref{prop:types}, it suffices to prove $\type{\hat{\calK},\hat{T}}^\Phi_k = \type{\calK,T'}^\Phi_k$.
 Moreover, the direction "$\supseteq$" is clear by definition.
 As $T$ is $(\Phi,k)$-canonical, for every $\tau \in \type{\hat{\calK},\hat{T}}^\Phi_k$ there exists a world $w \in T$ of type $\tau$.
 Consequently, $\type{\hat{\calK},\hat{T}}^\Phi_k \subseteq \type{\calK,T'}^\Phi_k$.
\end{proof}

\medskip

How large is a $(\Phi,k)$-canonical model at least?
The number of types is captured by the function $\exp^*_k$, defined by
\[
 \exp^*_0(n) \dfn n\qquad \qquad \exp^*_{k+1}(n) \dfn n \cdot 2^{\exp^*_k(n)}\text{.}
\]

\begin{prop}\label{prop:number-of-types}
 $\size{\Delta^\Phi_k} = \exp^*_k\big(2^\size{\Phi}\big)$ for all $k \geq 0$ and finite $\Phi \subseteq \PS$.
\end{prop}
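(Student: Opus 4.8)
The plan is to prove the identity by induction on $k$, using the two bijections established in Proposition~\ref{prop:det-types} to reduce the counting of types to a count of power sets. The guiding observation is that the recursive shape of $\exp^*$ mirrors exactly the recursive structure of $(\Phi,k)$-types exposed by that proposition: a $(\Phi,k{+}1)$-type is determined by a choice of local propositions (a subset of $\Phi$) together with a set of $(\Phi,k)$-types occurring in its image.

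For the base case $k = 0$, I would invoke Proposition~\ref{prop:det-types}(2), which supplies a bijection $\tau \mapsto \tau \cap \Phi$ from $\Delta^\Phi_0$ onto $\pow{\Phi}$. Hence $\size{\Delta^\Phi_0} = \size{\pow{\Phi}} = 2^{\size{\Phi}}$, and since by definition $\exp^*_0(2^{\size{\Phi}}) = 2^{\size{\Phi}}$, the two sides agree.

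For the inductive step I would assume $\size{\Delta^\Phi_k} = \exp^*_k\big(2^{\size{\Phi}}\big)$ and apply the bijection $\tau \mapsto (\tau \cap \Phi, \calR\tau)$ from $\Delta^\Phi_{k+1}$ onto $\pow{\Phi} \times \pow{\Delta^\Phi_k}$ given by Proposition~\ref{prop:det-types}(3). As bijections between finite sets preserve cardinality, and the cardinality of a product is the product of the cardinalities, this yields
\[
 \size{\Delta^\Phi_{k+1}} = \size{\pow{\Phi}} \cdot \size{\pow{\Delta^\Phi_k}} = 2^{\size{\Phi}} \cdot 2^{\size{\Delta^\Phi_k}} = 2^{\size{\Phi}} \cdot 2^{\exp^*_k\left(2^{\size{\Phi}}\right)},
\]
where the final equality uses the induction hypothesis. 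Writing $n \dfn 2^{\size{\Phi}}$, the right-hand side is precisely $n \cdot 2^{\exp^*_k(n)} = \exp^*_{k+1}(n) = \exp^*_{k+1}\big(2^{\size{\Phi}}\big)$, which closes the induction.

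Since every step is an immediate application of a previously established bijection together with the defining recurrence of $\exp^*$, I do not anticipate a genuine obstacle. The only point needing a moment of care is aligning the recurrence $\exp^*_{k+1}(n) = n \cdot 2^{\exp^*_k(n)}$ with the product $\size{\pow{\Phi}} \cdot \size{\pow{\Delta^\Phi_k}}$; this is exactly the reason why $\exp^*$ carries the extra factor $n$ in its definition, rather than being a plain tower of exponentials, and keeping this bookkeeping straight is what makes the closed form come out correctly.
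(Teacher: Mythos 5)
Your proof is correct and follows exactly the paper's own argument: induction on $k$, with the base case and inductive step both reduced to the bijections of Proposition~\ref{prop:det-types} and the recurrence $\exp^*_{k+1}(n) = n \cdot 2^{\exp^*_k(n)}$. No gaps.
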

\begin{proof}
 By induction on $k$.
 For the base case $k = 0$, this follows from Proposition~\ref{prop:det-types}, as there is a bijection between $\Delta^\Phi_0$ and $\pow{\Phi}$ and $\exp^*_0\big(2^\size{\Phi}\big) = 2^\size{\Phi} = \size{\Delta^\Phi_0}$.

 We proceed with the inductive step, \ie, $k + 1$.
 First note that by induction hypothesis
 \[
  \exp^*_{k+1}\big(2^\size{\Phi}\big) = 2^\size{\Phi} \cdot 2^{\exp^*_k(2^\size{\Phi})} = \size{\pow{\Phi} \times \pow{\Delta^\Phi_k}}\text{.}
 \]
 Again, there exists a bijection from $\Delta^\Phi_{k+1}$ to $\pow{\Phi} \times \pow{\Delta^\Phi_k}$ by Proposition~\ref{prop:det-types}.
\end{proof}

Next, we present an algorithm that solves the satisfiability and validity problems of $\MTL_k$ by computing a canonical model.
Let us first explicate this construction in a lemma.

\begin{lem}\label{lem:construct-canon}
 There is an algorithm that, given $\Phi \subseteq \PS$ and $k \geq 0$, computes a $(\Phi,k)$-canonical model in time polynomial in $\size{\Delta^\Phi_k}$.
\end{lem}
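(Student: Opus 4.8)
The plan is to build the model directly from the combinatorial description of types supplied by Proposition~\ref{prop:det-types}, processing modal depth layer by layer. Concretely, I would represent a $(\Phi,0)$-type by a subset $P \subseteq \Phi$ (its propositional part), and a $(\Phi,j{+}1)$-type by a pair $(P,S)$ with $P \subseteq \Phi$ and $S \subseteq \Delta^\Phi_j$, exactly as prescribed by the bijections $h$ of Proposition~\ref{prop:det-types}; here $P = \tau \cap \Phi$ and $S = \calR\tau$. The crucial point, which is what keeps the algorithm within the time bound, is that it never materializes a type as a set of $\ML$-formulas — it manipulates only these succinct representatives.

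First I would enumerate the type sets bottom-up. The set $\Delta^\Phi_0$ is obtained by listing all subsets of $\Phi$. Given a list of $\Delta^\Phi_j$ with its elements indexed $1,\ldots,\size{\Delta^\Phi_j}$, the set $\Delta^\Phi_{j+1}$ is obtained by listing all pairs $(P,S)$ with $P \subseteq \Phi$ and $S \subseteq \{1,\ldots,\size{\Delta^\Phi_j}\}$; by Proposition~\ref{prop:det-types} each such pair is the representative of exactly one $(\Phi,j{+}1)$-type. Iterating for $j = 0,\ldots,k-1$ yields $\Delta^\Phi_0,\ldots,\Delta^\Phi_k$. I then assemble the layered Kripke structure $\calK = (W,R,V)$ with
\[
 W \dfn \{ (j,\tau) \mid 0 \le j \le k,\ \tau \in \Delta^\Phi_j \}\text{,}
\]
valuation $V^{-1}(j,\tau) \dfn \tau \cap \Phi$, and an edge from $(j{+}1,\tau)$ to $(j,\tau')$ for every $\tau' \in \calR\tau$; both $\tau \cap \Phi$ and $\calR\tau$ are read directly off the representative. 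The canonical model is $(\calK,T)$ with $T \dfn \{ (k,\tau) \mid \tau \in \Delta^\Phi_k \}$.

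Correctness amounts to $\type{\calK,(j,\tau)}^\Phi_j = \tau$ for all $j$ and all $\tau \in \Delta^\Phi_j$, which follows by induction on $j$ from Lemma~\ref{lem:types-equal-const}. At layer $0$ the valuation $\tau \cap \Phi$ already fixes the type. At layer $j{+}1$, the induction hypothesis gives that the image of $(j{+}1,\tau)$ realizes exactly the types in $\calR\tau$, that is $\type{\calK,R(j{+}1,\tau)}^\Phi_j = \calR\tau$; then the valuation $\tau\cap\Phi$ together with these image types pin down $\type{\calK,(j{+}1,\tau)}^\Phi_{j+1} = \tau$ by Lemma~\ref{lem:types-equal-const}(2). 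Hence $\type{\calK,T}^\Phi_k = \Delta^\Phi_k$, so $(\calK,T)$ is $(\Phi,k)$-canonical.

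The step needing the most care is the running-time bound, where I set $N_j \dfn \size{\Delta^\Phi_j} = \exp^*_j(2^{\size{\Phi}})$ (Proposition~\ref{prop:number-of-types}). The recurrence $N_{j+1} = 2^{\size{\Phi}} \cdot 2^{N_j}$ gives $N_{j+1} \ge 2^{N_j} \ge N_j + 1$ for $N_j \ge 1$, so the sequence is strictly increasing; in particular $N_j \le N_k$ for all $j \le k$ and $k \le N_k$. Each layer-$(j{+}1)$ representative has size $\bigO{\size{\Phi} + N_j} = \bigO{N_k}$, so producing all of $\Delta^\Phi_{j+1}$ costs $\bigO{N_{j+1}\cdot N_k} = \bigO{N_k^2}$, and listing worlds, valuations, and the (at most $N_{j+1}\cdot N_j$) edges per layer stays within $\bigO{N_k^2}$ as well. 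Summing over the $k+1$ layers yields total time $\bigO{k \cdot N_k^2} = \bigO{N_k^3}$, polynomial in $\size{\Delta^\Phi_k}$. The only real subtlety is precisely this accounting: one must exploit the monotonicity and super-linear growth of the $\exp^*_j$ sequence to absorb both the number of layers and the per-type representation size into a polynomial in the top-level count $\size{\Delta^\Phi_k}$.
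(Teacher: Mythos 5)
Your construction is correct and is essentially the paper's own: the paper likewise builds the model layer by layer, adding one world per pair $(\Phi',L')$ with $\Phi'\subseteq\Phi$ and $L'\subseteq L_{i-1}$ (which is exactly your pair $(P,S)$ under the bijection of Proposition~\ref{prop:det-types}), and argues canonicity inductively. You merely spell out the correctness step via Lemma~\ref{lem:types-equal-const} and the running-time accounting, both of which the paper leaves implicit.
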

\begin{proof}
 The idea is to construct sets $L_0 \cup L_1 \cup \cdots \cup L_k$ of worlds in stage-wise manner such that $L_i$ is $(\Phi,i)$-canonical.
 For $L_0$, we simply add a world $w$ for each $\Phi' \in \pow{\Phi}$ such that $V^{-1}(w) = \Phi'$.
 For $i > 0$, we iterate over all $L' \in \pow{L_{i-1}}$ and $\Phi' \in \pow{\Phi}$ and insert a new world $w$ into $L_i$ such that $L'$ is the image of $w$ and such that again $V^{-1}(w) = \Phi'$.
 An inductive argument based on Proposition~\ref{prop:mtl-bisim-types} and~\ref{prop:det-types} shows that $L_i$ is $(\Phi,i)$-canonical for all $i \in \{0,\ldots,k\}$.
 As $k \leq \size{\Delta^\Phi_k}$, and each $L_i$ is constructed in time polynomial in $\size{\Delta^\Phi_i} \leq \size{\Delta^\Phi_k}$, the overall runtime is polynomial in $\size{\Delta^\Phi_k}$.
\end{proof}

With the help of a small lemma, we conclude the upper bound for the satisfiability and validity problem of $\MTL$ and its fragments.

\begin{lem}\label{lem:poly-in-tower}
 For every polynomial $p$ there is a polynomial $q$ such that
 \[
    p(\exp^*_k(n)) \leq \exp_k( q((k+1)\cdot n))
 \]
 for all $k \geq 0$ and $n \geq 1$.
\end{lem}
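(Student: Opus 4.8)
The plan is to factor the estimate into two independent bounds: first replace the "starred" tower $\exp^*_k$ by the ordinary tower $\exp_k$ at the cost of a linear blow-up of the argument, and then absorb the outer polynomial $p$ into the tower. Throughout I may assume \wloss that $p$ is monotone and has non-negative coefficients (replacing $p(x)$ by $\sum_i \size{a_i} x^i$), and I fix a constant $d \geq 1$ with $p(x) \leq x^d$ for all $x \geq 2$; such a $d$ exists since $p$ is a polynomial.

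First I would prove the auxiliary estimate $\exp^*_k(n) \leq \exp_k((k+1)n)$ for all $k \geq 0$ and $n \geq 1$, by induction on $k$. The base case $k=0$ is $n \leq n$. In the step, unfolding the definition gives $\exp^*_{k+1}(n) = n \cdot 2^{\exp^*_k(n)} \leq n \cdot 2^{\exp_k((k+1)n)}$ by the induction hypothesis, and the goal is to bound this by $\exp_{k+1}((k+2)n) = 2^{\exp_k((k+2)n)}$. Taking binary logarithms, it suffices to show $\log_2 n + \exp_k((k+1)n) \leq \exp_k((k+2)n)$. Here the extra factor of $n$ carried by $\exp^*$ is paid for by the increment from $(k+1)n$ to $(k+2)n$ in the argument: for $k \geq 1$ one uses that increasing the argument of $\exp_k$ by $1$ at least doubles its value (because $\exp_{k}(x+1) \geq 2\exp_{k}(x)$, which itself follows by a short induction on $k$), so that $\exp_k((k+2)n) \geq 2\exp_k((k+1)n) \geq \exp_k((k+1)n) + \log_2 n$, using $\exp_k((k+1)n) \geq (k+1)n \geq \log_2 n$; the case $k=0$ reduces to the elementary inequality $\log_2 n + n \leq 2n$.

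Next I would establish the "workhorse" $(\exp_k(m))^d \leq \exp_k(dm)$ for all $k \geq 1$, $m \geq 1$, and $d \geq 1$, again by induction on $k$. For $k=1$ this is the identity $(2^m)^d = 2^{dm}$. For $k \geq 2$, writing $(\exp_k(m))^d = 2^{d\cdot \exp_{k-1}(m)}$, I first use the elementary inequality $dX \leq X^d$ valid for $X \geq 2$ (here $X = \exp_{k-1}(m) \geq 2$) to get $d\cdot\exp_{k-1}(m) \leq (\exp_{k-1}(m))^d$, then apply the induction hypothesis $(\exp_{k-1}(m))^d \leq \exp_{k-1}(dm)$; exponentiating yields $(\exp_k(m))^d \leq 2^{\exp_{k-1}(dm)} = \exp_k(dm)$.

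Finally I would combine the two bounds with the choice $q(x) \dfn p(x) + d\cdot x$. For $k = 0$ the claim is $p(n) \leq q(n)$, which is immediate. For $k \geq 1$, since $\exp_k((k+1)n) \geq 2$, monotonicity of $p$ together with the two lemmas gives the chain $p(\exp^*_k(n)) \leq p(\exp_k((k+1)n)) \leq (\exp_k((k+1)n))^d \leq \exp_k(d(k+1)n) \leq \exp_k(q((k+1)n))$, where the last step uses $d(k+1)n \leq q((k+1)n)$ and monotonicity of $\exp_k$. I expect the main obstacle to be the bookkeeping in the auxiliary estimate: one must recognize that the spurious factor of $n$ accumulated by the starred tower is precisely what forces the argument scaling to grow linearly in $k$ (hence the $(k+1)$ appearing inside $q$), and verify the doubling property $\exp_k(x+1) \geq 2\exp_k(x)$ cleanly so that the stray term $\log_2 n$ is harmlessly swallowed at every level.
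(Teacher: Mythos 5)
Your proof is correct and follows essentially the same route as the paper's: both factor the bound through the auxiliary estimate $\exp^*_k(n) \leq \exp_k((k+1)\cdot n)$ and then absorb the outer polynomial into the tower. The only differences are cosmetic --- you pay for the stray factor $n$ via $2^{\log_2 n}$ and a doubling property of $\exp_k$ where the paper uses $n \leq 2^n$ together with the additive law $c + \exp_k(n) \leq \exp_k(c+n)$, and you absorb $p$ through the power law ${(\exp_k(m))}^d \leq \exp_k(dm)$ where the paper establishes $p(\exp_k(n)) \leq \exp_k(q(n))$ directly.
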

\begin{proof}
 See the appendix.
\end{proof}

\begin{thm}\label{thm:k-membership}
 $\mathsf{SAT}(\MTL_k)$ and $\mathsf{VAL}(\MTL_k)$ are in $\ATIMEALT{\exp_{k+1}}$.
\end{thm}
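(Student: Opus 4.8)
The plan is to decide satisfiability by reducing it, via a canonical model, to an alternating model check whose time and alternation budgets fit $\ATIMEALT{\exp_{k+1}}$. Given $\varphi \in \MTL_k$, put $\Phi \dfn \Prop(\varphi)$, so that $\varphi \in \MTL^\Phi_k$ and $\size{\Phi} \leq \size{\varphi}$. First I would deterministically construct a $(\Phi,k)$-canonical model $(\calK,T)$ using Lemma~\ref{lem:construct-canon}, in time polynomial in $\size{\Delta^\Phi_k}$; in particular $\size{\calK}$ is polynomial in $\size{\Delta^\Phi_k}$. By Theorem~\ref{thm:canonical}, $\varphi$ is satisfiable if and only if $(\calK,T')\vDash\varphi$ for some $T'\subseteq T$, so the machine existentially guesses such a subteam (one bit per world of $\calK$) and then verifies $(\calK,T')\vDash\varphi$.

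For the verification I would evaluate $\varphi$ recursively according to the team semantics, treating the current team (a subset of the worlds of $\calK$, hence storable in space polynomial in $\size{\calK}$) as the machine configuration. An $\ML$-atom $\alpha$ spawns a universal branch over the worlds of the current team followed by ordinary, deterministic, polynomial-time Kripke model checking; $\land$ branches universally; $\lor$ existentially guesses a split $T=S\cup U$; $\Diamond$ existentially guesses a successor team; $\Box$ deterministically passes to the image team $RT$; and each $\negg$ simply toggles the existential and universal roles, dualizing the remaining connectives. Every computation branch performs at most $\size{\varphi}$ such steps, each in time polynomial in $\size{\calK}$, and switches between existential and universal mode at most $\size{\varphi}$ times. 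Together with the deterministic construction phase, the overall running time is polynomial in $\size{\Delta^\Phi_k}$ and the number of alternations is at most $\size{\varphi}$.

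It remains to fit this running time under $\exp_{k+1}$. By Proposition~\ref{prop:number-of-types}, $\size{\Delta^\Phi_k}=\exp^*_k(2^{\size{\Phi}})$, so the running time is bounded by $p(\exp^*_k(2^{\size{\Phi}}))$ for some polynomial $p$. Applying Lemma~\ref{lem:poly-in-tower} with $n\dfn 2^{\size{\Phi}}$ gives the bound $\exp_k(q((k+1)\cdot 2^{\size{\Phi}}))$ for a polynomial $q$. Since $k$ is fixed, the argument $q((k+1)\cdot 2^{\size{\Phi}})$ is at most $2^{r(\size{\varphi})}$ for a polynomial $r$, and an easy induction on $k$ shows $\exp_k(2^m)=\exp_{k+1}(m)$; hence the running time is at most $\exp_{k+1}(r(\size{\varphi}))$ with at most $\size{\varphi}$ alternations, so $\mathsf{SAT}(\MTL_k)\in\ATIMEALT{\exp_{k+1}}$. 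For $\mathsf{VAL}(\MTL_k)$ I would note that $\varphi$ is valid iff $\negg\varphi$ is unsatisfiable, equivalently, by Theorem~\ref{thm:canonical} applied to $\negg\varphi$, iff $(\calK,T')\vDash\varphi$ for all $T'\subseteq T$. Running the same procedure with the initial existential guess of $T'$ replaced by a universal one, and since $\ATIMEALT{\exp_{k+1}}$ is closed under complementation, yields $\mathsf{VAL}(\MTL_k)\in\ATIMEALT{\exp_{k+1}}$.

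The step I expect to be most delicate is the quantitative runtime analysis: one must check that substituting $n=2^{\size{\Phi}}$ into Lemma~\ref{lem:poly-in-tower} raises the tower height by exactly one (via $\exp_k(2^m)=\exp_{k+1}(m)$), and that the deterministic construction of the exponentially large canonical model, which costs time but no alternations, does not inflate the polynomial alternation bound of the model-checking phase. Spelling out the dualization of the connectives under $\negg$ so that the alternation count provably stays linear in $\size{\varphi}$ is the other point requiring care, though it is conceptually routine.
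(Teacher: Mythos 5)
Your proposal is correct and follows essentially the same route as the paper: construct a $(\Phi,k)$-canonical model via Lemma~\ref{lem:construct-canon}, reduce satisfiability (resp.\ validity) to an existential (resp.\ universal) subteam quantification over that model using Theorem~\ref{thm:canonical}, and bound the runtime via Proposition~\ref{prop:number-of-types} and Lemma~\ref{lem:poly-in-tower}. The only difference is cosmetic: where you sketch the alternating model-checking procedure by hand, the paper simply invokes M\"uller's result that $\MTL$ model checking runs in time polynomial in $\size{\varphi}+\size{\calK}$ with polynomially many alternations in $\size{\varphi}$, and your arithmetic (substituting $n=2^{\size{\Phi}}$ into Lemma~\ref{lem:poly-in-tower} and using $\exp_k(2^m)=\exp_{k+1}(m)$) is a valid variant of the paper's bound $\exp^*_k(2^{\size{\Phi}})\leq\exp^*_{k+1}(\size{\varphi})$.
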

\begin{proof}
 Consider the following algorithm.
 Let $\varphi\in \MTL_k$ be the input, $n \dfn \size{\varphi}$, and $\Phi \dfn \Prop(\varphi)$.
 Construct deterministically, as in Lemma~\ref{lem:construct-canon}, a $(\Phi,k)$-canonical structure $\calK = (W,R,V)$ in time $p(\size{\Delta^\Phi_k})$ for a polynomial $p$.

 By a result of Müller~\cite{mueller14}, the model checking problem of $\MTL$ is solvable by an alternating Turing machine that has runtime polynomial in $\size{\varphi} + \size{\calK}$, and alternations polynomial in $\size{\varphi}$.
 We call this algorithm as a subroutine:
 by Theorem~\ref{thm:canonical}, $\varphi$ is satisfiable (resp.\ valid) if and only if
 for at least one subteam (resp.\ all subteams) $T \subseteq W$ we have $(\calK,T) \vDash \varphi$.
 Equivalently, this is the case if and only if $(\calK,W)$ satisfies $\top \lor \varphi$ (resp.\ $\negg(\top \lor \negg \varphi)$).

 Let us turn to the overall runtime.
 $\calK$ is constructed in time polynomial in $\size{\Delta^\Phi_k} = \exp^*_k(2^\size{\Phi}) \leq \exp^*_{k+1}(\size{\Phi}) \leq \exp^*_{k+1}(n)$.
 The subsequent model checking runs in time polynomial in $\size{\calK} + n$, and hence polynomial in $\exp^*_{k+1}(n)$ as well.
 By Lemma~\ref{lem:poly-in-tower}, we obtain a total runtime of $\exp_{k+1}(q((k+2)\cdot n))$ for a polynomial $q$.
\end{proof}

The upper bound for $\MTL$ is proved identically, since $k \dfn \md(\varphi)$ is polynomial in $\size{\varphi}$.

\begin{cor}\label{cor:membership}
 $\SAT(\MTL)$ and $\VAL(\MTL)$ are in $\TOWERPOLY$.
\end{cor}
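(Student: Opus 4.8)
The plan is to derive this corollary directly from Theorem~\ref{thm:k-membership} by observing that for a formula $\varphi \in \MTL$, the modal depth $k \dfn \md(\varphi)$ is bounded by $\size{\varphi}$, hence polynomial in $n \dfn \size{\varphi}$. First I would note that any $\varphi \in \MTL$ lies in $\MTL_k$ for $k = \md(\varphi)$, and that by the recursive definition of $\md$ we have $\md(\varphi) \leq \size{\varphi}$, since each nesting of $\Box$ or $\Diamond$ consumes at least one symbol of the formula.

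The core of the argument reuses the algorithm from the proof of Theorem~\ref{thm:k-membership} verbatim, but now with $k$ no longer fixed. I would run the same procedure: construct a $(\Phi,k)$-canonical model via Lemma~\ref{lem:construct-canon} in time polynomial in $\size{\Delta^\Phi_k}$, and then invoke Müller's model-checking routine on $\top \lor \varphi$ (for satisfiability) or $\negg(\top\lor\negg\varphi)$ (for validity) over the team $(\calK,W)$. The correctness is exactly as before, via Theorem~\ref{thm:canonical}.

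The only point requiring care is the runtime bound, since we must now land in $\TOWERPOLY$ rather than a fixed level $\ATIMEALT{\exp_{k+1}}$. By Proposition~\ref{prop:number-of-types} and the chain of inequalities in the proof of Theorem~\ref{thm:k-membership}, the construction and model checking run in time polynomial in $\exp^*_{k+1}(n)$. Applying Lemma~\ref{lem:poly-in-tower}, this is bounded by $\exp_{k+1}(q((k+2)\cdot n))$ for a polynomial $q$. Since $k \leq n$, we have $k+1 \leq n+1$, so the number of nested exponentials is itself polynomial in $n$, and the argument $q((k+2)\cdot n) \leq q((n+2)\cdot n)$ is polynomial in $n$ as well. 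Thus the total runtime is $\exp_{p(n)}(1)$ for a suitable polynomial $p$, placing both problems in $\TOWERPOLY$ by Definition~\ref{def:towerpoly}.

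I expect the main (minor) obstacle to be bookkeeping the shift from alternating time to deterministic time: in Definition~\ref{def:towerpoly} the class $\TOWERPOLY$ is defined via \emph{deterministic} time $\exp_{p(n)}(1)$, whereas the subroutine uses alternation. This is harmless because any alternating computation of runtime $t$ and polynomially many alternations can be simulated deterministically within the tower, so the polynomially many alternations are absorbed into the polynomial height of the tower; all of this is swallowed by the generous bound $\exp_{p(n)}(1)$.
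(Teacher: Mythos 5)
Your proposal is correct and matches the paper's argument exactly: the paper proves this corollary in one sentence by observing that the algorithm of Theorem~\ref{thm:k-membership} applies verbatim with $k \dfn \md(\varphi)$ polynomial in $\size{\varphi}$, so the tower height $k+1$ and the argument $q((k+2)\cdot n)$ are both polynomially bounded. Your extra remark about absorbing the alternations into the deterministic tower bound is a valid (and implicitly assumed) detail that the paper does not spell out.
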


The usual definition of a canonical model is a structure that has all (infinite) maximal consistent subsets of a certain class of modal formulas as worlds (see virtually any textbook on modal logic, \eg,~\cite{blackburn_modal_2001}).
This indeed results in a finite number of worlds in the case of, say, $\ML^\Phi_k$ (cf.~\cite{Cresswell83,Cresswell1996}).
Truly finitary constructions of canonical models can be traced back to Fine~\cite{Fine75}, whose work has been extended towards various other modal systems (\eg, by Moss~\cite{Moss07}).
Furthermore, Cresswell and Hughes~\cite{Cresswell1996} used \emph{mini canonical models}, models that are "canonical" only with respect to all subformulas of a fixed $\ML$-formula, which allows them to be finite models with finite sets of formulas as worlds.

All these approaches have in common that they still are non-constructive and intended for completeness proofs.
Even computing a "mini canonical model" would not be guaranteed to be feasible enough for $\MTL$: This would require an explicit translation of a given input $\MTL^\Phi_k$-formula to a Boolean combination of $\ML^\Phi_k$-formulas first (see the proof of Proposition~\ref{prop:mtl-bisim-types}), and it is open whether there is an elementary translation for every fixed $k$ (cf.~\cite{axiom}).

In this light, our approach yields a purely constructive definition of a canonical model (in Lemma~\ref{lem:construct-canon}), which can easily be plugged into the algorithms used for the above results, and has optimal runtime up to a polynomial.

\section{Scopes and Subteam Quantifiers}%
\label{sec:encoding}

Kontinen et al.~\cite{mtl} proved that $\MTL$ is expressively complete up to bisimulation: it can define every property of teams that is $(\Phi,k)$-bisimulation invariant, that is closed under $\bis^\Phi_k$, for some finite $\Phi$ and $k$.
Two team properties that fall into this category are in fact $(\Phi,k)$-bisimilarity itself---in the sense that all worlds in a team have the same $(\Phi,k)$-type---as well as $(\Phi,k)$-canonicity.
Consequently, these properties are definable by $\MTL^\Phi_k$-formulas.
However, by a simple counting argument, formulas defining arbitrary team properties require non-elementary size \wrt $\Phi$ and $k$.

\medskip

In this section, we consider a special class of structures, and on these, define $k$-bisimilarity by a formula $\chi_k$ of polynomial size in $\Phi$ and $k$.
(From now on, we always assume some finite $\Phi \subseteq \PS$ and omit it in the notation, \ie, we write \emph{$k$-canonicity}, \emph{$k$-bisimilarity}, $\bis_k$, and so on.)
Afterwards, in Section~\ref{sec:staircase} we devise a formula $\canon_k$ of polynomial size that expresses $k$-canonicity.

\subsection{Scopes}%
\label{p:scopes}

It is natural to implement $k$-bisimilarity by mutual recursion in the spirit of Proposition~\ref{prop:ml-bisim-types}: the $(k+1)$-bisimilarity of two points $w,v$ is expressed in terms of $k$-team-bisimilarity of $Rw$ and $Rv$, and
conversely, to verify $k$-team-bisimilarity of $Rw$ and $Rv$, we proceed analogously to the \emph{forward} and \emph{backward} conditions of Definition~\ref{def:point-bisim} and reduce the problem to checking $k$-bisimilarity of pairs of points in $Rw$ and $Rv$.

$\MTL$-formulas define team properties, but we want to express a \emph{relation} between teams such as $Rw$ and $Rv$.
For this reason, we consider the "marked union" of $Rw$ and $Rv$ as a single team using the following tool.
Formally, if $\alpha \in \ML$, then the "conditioned" subteam $T_\alpha \subseteq T$ is defined as
\[
 T_\alpha \dfn \Set{ w \in T | w \vDash \alpha}\text{.}
\]
In the literature, $T_\alpha$ is also written $T \upharpoonright \alpha$~\cite{galliani_upwards_2015,Galliani16,abs-1808-00710}.
The corresponding "decoding" operator\label{p:hook}
\[
 \alpha \hook \varphi \dfn \neg \alpha \lor (\alpha \land \varphi)
\]
was introduced by Galliani~\cite{galliani_upwards_2015,Galliani16,abs-1808-00710} as well: $\alpha \hook \varphi$ is true in $T$ if and only if $T_\alpha \vDash \varphi$.

Now, instead of defining an $n$-ary relation on teams, a formula $\varphi$ can define a unary relation---a team property---parameterized by formulas $\alpha_1,\ldots,\alpha_n \in \ML$.
We emphasize this by writing $\varphi(\alpha_1,\ldots,\alpha_n)$.

It will be useful if the "markers" of the constituent teams are invariant under traversing edges in the structure.
In that case,
we call these formulas \emph{scopes}:

\begin{defi}
 Let $\calK = (W,R,V)$ be a Kripke structure.
 A formula $\alpha \in \ML$  is called a \emph{scope (in $\calK$)} if $(w,v)\in R$ implies $w \vDash \alpha \Leftrightarrow v \vDash \alpha$.
 Two scopes $\alpha,\beta$ are called \emph{disjoint (in $\calK$)} if $W_\alpha$ and $W_{\beta}$ are disjoint.
\end{defi}

To avoid interference, we always assume that scopes are formulas in $\ML^{\PS\setminus \Phi}_0$, \ie, they are always purely propositional and do not contain propositions from $\Phi$.

\begin{figure}\centering
 \begin{tikzpicture}
  \tikzset{world/.style={draw,circle,inner sep=.3mm,black,fill}}
  \tikzset{team/.style={draw,rounded corners,thick,inner sep=1.3mm}}
  \foreach \i in {1,...,8} {
    \node[world] (w\i) at ({\i*.5},1) {};
    \node[fit=(w\i)] (fitw\i) {};
   }
  \node[fit=(w4)(w5),team,fill=red,fill opacity=.2,draw=red,dotted] {};
  \node[fit=(w1)(w8),team] (team) {};
  \node[left= 2mm of fitw1] {\Large$T$};
  \node[below right = 2mm and 0 of w5] {\Large\red{$S$}};
  \draw[dashed] (1.25,.5) -- (1.25,2);
  \draw[dashed] (3.25,.5) -- (3.25,2);
  \node at (.5,1.8) {$\alpha_1$};
  \node at (4,1.8) {$\alpha_3$};
  \node at (2.25,1.8) {$\alpha_2$};
  \node at (5.85,1) {\LARGE$\Rightarrow$};

  \begin{scope}[xshift=8cm]
   \foreach \i in {1,...,8} {
     \node[world] (w\i) at ({\i*.5},1) {};
     \node[fit=(w\i)] (fitw\i) {};
    }
   \node[fit=(w4)(w5),team,fill=red,fill opacity=.2,draw=red,dotted] {};
   \draw[thick, rounded corners] (fitw1.north west)
   -- (fitw2.north east) -- ([yshift=.3mm]fitw2.south east) -- ([yshift=.3mm]fitw4.south west) -- (fitw4.north west)
   -- (fitw5.north east) -- ([yshift=.3mm]fitw5.south east) -- ([yshift=.3mm]fitw7.south west) -- (fitw7.north west)
   -- (fitw8.north east) -- ([yshift=-.2mm]fitw8.south east) -- ([yshift=-.2mm]fitw1.south west) -- cycle
   ;
   \draw[dashed] (1.25,.5) -- (1.25,2);
   \draw[dashed] (3.25,.5) -- (3.25,2);
   \node at (.5,1.8) {$\alpha_1$};
   \node at (4,1.8) {$\alpha_3$};
   \node at (2.25,1.8) {$\alpha_2$};
   \node[below right = 2mm and 0 of w5] {\Large\red{$S$}};
  \end{scope}

  \node[left= 2mm of fitw1] {\Large$T^{\alpha_2}_{\red{S}}$};
 \end{tikzpicture}
 \caption{Example of subteam selection in the scope $\alpha_2$\label{fig:shrinking}}
\end{figure}

It is desirable to be able to speak about subteams in a specific scope.
If $S$ is a team, let $T^\alpha_S \dfn T_{\neg\alpha} \cup (T_\alpha \cap S)$.
For singletons $\{w\}$, we simply write $T^\alpha_w$ instead of $T^\alpha_{\{w\}}$.
Intuitively, $T^\alpha_S$ is obtained from $T$ by "shrinking" the subteam $T_\alpha$ down to $S$ without impairing $T \setminus T_\alpha$ (see Figure~\ref{fig:shrinking} for an example).
Scopes have several desirable properties:

\begin{prop}\label{prop:of-scopes}
 Let $\alpha,\beta$ be disjoint scopes and $S,U,T$ teams in a Kripke structure $\calK = (W,R,V)$.
 Then the following laws hold:
 \begin{enumerate}
  \item Distributive laws: ${(T \cap S)}_\alpha = T_\alpha \cap S = T \cap S_\alpha = T_\alpha \cap S_\alpha$ and ${(T \cup S)}_\alpha = T_\alpha \cup S_\alpha$.
  \item Disjoint selection commutes: ${\big(T^\alpha_S\big)}^{\beta}_{U} = {\big(T^{\beta}_{U}\big)}^{\alpha}_S$.
  \item Disjoint selection is independent: ${\big({(T^\alpha_S)}^{\beta}_{U}\big)}_\alpha = T_\alpha \cap S$.
  \item Image and selection commute: ${(RT)}_\alpha = {\big(R(T_\alpha)\big)}_\alpha = R(T_\alpha)$
  \item Selection propagates: If $S \subseteq T$, then $R{\big(T^\alpha_{S}\big)} = {(RT)}^\alpha_{RS}$.
 \end{enumerate}
\end{prop}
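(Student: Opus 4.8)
The plan is to unfold each claim directly from the definitions $T_\alpha = \{w \in T \mid w \vDash \alpha\}$ and $T^\alpha_S = T_{\neg\alpha} \cup (T_\alpha \cap S)$, using two structural facts about scopes. First, since $\alpha \in \ML^{\PS \setminus \Phi}_0$ is a purely propositional formula, whether $w \vDash \alpha$ is a property of the single world $w$ (flatness of $\ML$), so $T_\alpha$ is an ordinary set-theoretic selection and commutes with $\cap,\cup$ on teams. Second, disjointness of $\alpha,\beta$ means $W_\alpha \cap W_\beta = \emptyset$, hence no world satisfies both; and the scope property $(w,v) \in R \Rightarrow (w \vDash \alpha \Leftrightarrow v \vDash \alpha)$ means that edges never cross the $\alpha$/$\neg\alpha$ boundary.

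For claim (1), the distributive laws are immediate from the pointwise definition of $T_\alpha$: each identity just says $w \vDash \alpha$ filters $T \cap S$, $T$, and $S$ consistently, and $\cup$ distributes over the filter. Claims (2) and (3) then follow by substituting the definition of $T^\alpha_S$ twice and applying (1) together with disjointness. Concretely, for (2) I would expand both sides into a union of four pieces indexed by the truth of $\alpha$ and $\beta$, namely $T_{\neg\alpha \wedge \neg\beta}$, $T_{\alpha \wedge \neg\beta} \cap S$, $T_{\neg\alpha \wedge \beta} \cap U$, and a fourth piece; disjointness kills the simultaneous-$\alpha$-and-$\beta$ case, so the remaining three pieces coincide regardless of the order of selection. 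For (3), after the double expansion I would intersect with $\alpha$: the $T_{\neg\alpha}$ parts vanish under ${(\cdot)}_\alpha$, the $\beta$-selection by $U$ only touches the $\beta$-region which is disjoint from the $\alpha$-region, and what survives is exactly $T_\alpha \cap S$.

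For claims (4) and (5), which involve the image $RT$, the key is the scope property controlling edges. For (4), I would argue $(RT)_\alpha = R(T_\alpha)$ by a two-way inclusion: a world $v \in (RT)_\alpha$ has some predecessor $w \in T$ with $(w,v) \in R$, and since $v \vDash \alpha$ the scope property forces $w \vDash \alpha$, so $w \in T_\alpha$ and $v \in R(T_\alpha)$; conversely any $v \in R(T_\alpha)$ has a predecessor satisfying $\alpha$, hence $v \vDash \alpha$ by the scope property, giving $v \in (RT)_\alpha$. The middle expression $(R(T_\alpha))_\alpha$ equals $R(T_\alpha)$ for the same reason. Claim (5) is the main obstacle, as it combines selection with the image and requires $S \subseteq T$. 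I expect to prove $R(T^\alpha_S) = (RT)^\alpha_{RS}$ by showing both sides equal $R(T_{\neg\alpha}) \cup R(T_\alpha \cap S)$ after expanding the selection and pushing $R$ through the union (which is always valid), then identifying $R(T_{\neg\alpha}) = (RT)_{\neg\alpha}$ and $R(T_\alpha \cap S) = (RT)_\alpha \cap RS$ via claim (4) and the scope property; the delicate point is that the intersection $(RT)_\alpha \cap RS$ correctly equals $R(T_\alpha \cap S)$, which uses both $S \subseteq T$ and the fact that an $\alpha$-successor of a world in $S$ is witnessed by an $\alpha$-predecessor already in $T_\alpha \cap S$.
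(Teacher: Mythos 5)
Your proposal is correct and follows essentially the same route as the paper: claim (1) by pointwise unfolding of $T_\alpha$, claims (2)–(3) by expanding the double selection into the four pieces indexed by the truth values of $\alpha$ and $\beta$, and claims (4)–(5) by element-chasing with the scope property, with (5) reduced to $(RT)_{\neg\alpha}=R(T_{\neg\alpha})$ and $(RT)_\alpha\cap RS=R(T_\alpha\cap S)$ exactly where the paper uses (4) and $S\subseteq T$. The only cosmetic difference is that in (2) you discard the $\alpha\wedge\beta$ piece via disjointness, whereas the paper keeps it and observes the expansion is already symmetric; both are fine under the stated hypotheses.
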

\begin{proof}
  Straightforward; see the appendix.
\end{proof}

Accordingly, we write $R^{i}T_\alpha$ instead of ${(R^{i}T)}_\alpha$ or $R^i(T_\alpha)$ and $T^{\alpha_1,\alpha_2}_{S_1,S_2}$ for ${(T^{\alpha_1}_{S_1})}^{\alpha_2}_{S_2}$.

\subsection{Subteam quantifiers}

We refer to the following abbreviations as \emph{subteam quantifiers}, where $\alpha \in \ML$:\label{p:quantifiers}
\begin{align*}
 \qee{\alpha} \; \varphi & \dfn \alpha \lor \varphi                                                            & \qaa{\alpha} \;\varphi & \dfn \negg\qee{\alpha} \negg\varphi \\
 \qe{\alpha} \; \varphi  & \dfn \qee{\alpha}\left[ \E\alpha \land \qaa{\alpha}(\E\alpha \timp \varphi) \right] & \qa{\alpha} \; \varphi & \dfn \negg\qe{\alpha} \negg \varphi
\end{align*}

Intuitively, they quantify over subteams $S \subseteq T_\alpha$  or worlds $w \in T_\alpha$ such that $T^\alpha_S$ resp.\ $T^\alpha_w$ satisfies $\varphi$.

\begin{prop}\label{prop:quantifiers}
 The subteam quantifiers have the following semantics:
 \begin{alignat*}{8}
   & \qquad T\; \vDash \;\; &  & \qee{\alpha} \varphi &  & \; \Leftrightarrow \; \exists S \subseteq T_\alpha &  & \colon T^\alpha_S \vDash \varphi &  & \qquad\qquad T\; \vDash  \;\; &  & \qe{\alpha} \varphi &  & \;\Leftrightarrow \;\exists w \in T_\alpha &  & \colon T^\alpha_w \vDash \varphi \\
   & \qquad T\; \vDash \;\; &  & \qaa{\alpha} \varphi &  & \;\Leftrightarrow \; \forall S \subseteq T_\alpha  &  & \colon T^\alpha_S \vDash \varphi &  & \qquad\qquad T\; \vDash  \;\; &  & \qa{\alpha} \varphi &  & \;\Leftrightarrow \;\forall w \in T_\alpha &  & \colon T^\alpha_w \vDash \varphi
 \end{alignat*}
\end{prop}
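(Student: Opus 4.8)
The plan is to establish the four equivalences by unfolding the definitions against the team semantics of $\lor$, $\negg$, $\E$, $\timp$ and the selection operator $T^\alpha_S$. I would treat $\qee{\alpha}$ and $\qe{\alpha}$ directly, and then obtain $\qaa{\alpha}$ and $\qa{\alpha}$ for free by Boolean duality. Throughout I use only the elementary fact that, since $\alpha \in \ML$ is flat, $T$ is partitioned as $T = T_\alpha \cup T_{\neg\alpha}$ at every world.

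First I would handle $\qee{\alpha}\varphi \dfn \alpha \lor \varphi$ by unfolding the splitting disjunction: $T \vDash \alpha \lor \varphi$ iff $T = S \cup U$ with $S \vDash \alpha$ and $U \vDash \varphi$. Here $S \vDash \alpha$ forces $S \subseteq T_\alpha$, and every world of $T_{\neg\alpha}$ must then lie in $U$; setting $\hat S \dfn U \cap T_\alpha \subseteq T_\alpha$ gives $U = T_{\neg\alpha} \cup \hat S = T^\alpha_{\hat S}$, so $T^\alpha_{\hat S} \vDash \varphi$. Conversely, given $\hat S \subseteq T_\alpha$ with $T^\alpha_{\hat S} \vDash \varphi$, the split $T = T_\alpha \cup T^\alpha_{\hat S}$ witnesses $T \vDash \alpha \lor \varphi$. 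This yields exactly $\exists S \subseteq T_\alpha \colon T^\alpha_S \vDash \varphi$, and then $\qaa{\alpha}\varphi = \negg\qee{\alpha}\negg\varphi$ follows immediately from the Boolean-negation semantics: $T \vDash \qaa{\alpha}\varphi$ iff it is not the case that $\exists S \subseteq T_\alpha \colon T^\alpha_S \nvDash \varphi$, i.e.\ $\forall S \subseteq T_\alpha \colon T^\alpha_S \vDash \varphi$.

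The crux is $\qe{\alpha}\varphi = \qee{\alpha}[\E\alpha \land \qaa{\alpha}(\E\alpha \timp \varphi)]$, which I would unfold from the outside in using the two semantics just obtained. The key auxiliary identities, all immediate from the partition $T = T_\alpha \cup T_{\neg\alpha}$, are: (i) $(T^\alpha_S)_\alpha = S$ for $S \subseteq T_\alpha$, so that $T^\alpha_S \vDash \E\alpha$ iff $S \neq \emptyset$ (recall $T \vDash \E\alpha$ iff $T_\alpha \neq \emptyset$); and (ii) $(T^\alpha_S)^\alpha_{S'} = T^\alpha_{S'}$ for $S' \subseteq S$, so the inner selection stays inside $S$. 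Unfolding $\timp$ via its material-implication semantics, $T^\alpha_{S'} \vDash \E\alpha \timp \varphi$ means "$S' \neq \emptyset$ implies $T^\alpha_{S'} \vDash \varphi$". Assembling these, $T \vDash \qe{\alpha}\varphi$ becomes: there is a nonempty $S \subseteq T_\alpha$ with $T^\alpha_{S'} \vDash \varphi$ for every nonempty $S' \subseteq S$.

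Finally I would show this last condition collapses to $\exists w \in T_\alpha \colon T^\alpha_w \vDash \varphi$. For the backward direction a singleton $S \dfn \{w\}$ works, since its only nonempty subteam is $\{w\}$ itself; for the forward direction, any $w$ in the witnessing nonempty $S$ gives $\{w\} \subseteq S$ nonempty, hence $T^\alpha_w \vDash \varphi$ with $w \in S \subseteq T_\alpha$. The dual cases follow mechanically: $T \vDash \qa{\alpha}\varphi = \negg\qe{\alpha}\negg\varphi$ iff it is not the case that $\exists w \in T_\alpha \colon T^\alpha_w \nvDash \varphi$, i.e.\ $\forall w \in T_\alpha \colon T^\alpha_w \vDash \varphi$. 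I expect the main obstacle to be the bookkeeping in the $\qe{\alpha}$ case, specifically verifying that enforcing nonemptiness through $\E\alpha$ at \emph{both} the outer $\qee{\alpha}$ level and the inner $\qaa{\alpha}$ level is exactly what forces the witness down to a single world rather than merely a nonempty subteam, which is what distinguishes $\qe{\alpha}$ from $\qee{\alpha}$.
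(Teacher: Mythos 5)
Your proposal is correct and follows essentially the same route as the paper's proof: unfold $\qee{\alpha}$ against the lax splitting semantics using $T_{\neg\alpha}\subseteq U$ and $U = T^\alpha_{U\cap T_\alpha}$, get the universal quantifiers by duality of $\negg$, and for $\qe{\alpha}$ use the identities $(T^\alpha_S)_\alpha = S$ and $(T^\alpha_S)^\alpha_{S'} = T^\alpha_{S'}$ together with the $\E\alpha$/nonemptiness bookkeeping to collapse the witness to a singleton. The only cosmetic difference is that you first rewrite $\qe{\alpha}\varphi$ into the set-theoretic condition "there is a nonempty $S\subseteq T_\alpha$ all of whose nonempty subteams $S'$ satisfy $T^\alpha_{S'}\vDash\varphi$" before collapsing, whereas the paper argues directly on the formula, but the underlying steps coincide.
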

\begin{proof}
 We prove the existential cases, as the other ones work dually.

 Let us first consider the "$\Rightarrow$" direction for $\qee{\alpha}$.
 Accordingly, suppose $T \vDash \qee{\alpha}\, \varphi$, \ie, $T \vDash \alpha \lor \varphi$.
 Then there exist $S \subseteq T$ and $U \subseteq T_\alpha$ such that $S \vDash \varphi$ and $T = S \cup U$.
 Since $U \cap T_{\neg\alpha} = \emptyset$, it holds $T_{\neg\alpha} \subseteq S$.
 Moreover, $
  S = (S \cap T_\alpha) \cup (S \cap T_{\neg \alpha}) = ((S \cap T_\alpha) \cap T_{\alpha}) \cup T_{\neg\alpha} = T^\alpha_{S \cap T_\alpha}$.
 Consequently, $T^\alpha_{S \cap T_\alpha} \vDash \varphi$ for some set $S \cap T_\alpha \subseteq T_\alpha$.

 For "$\Leftarrow$", suppose $T^\alpha_S \vDash \varphi$ for some $S \subseteq T_\alpha$.
 Then $T^\alpha_S$ and $T \setminus T^\alpha_{S}$ form a division of $T$.
 Since $T \setminus T^\alpha_S = T \setminus \left( T_{\neg\alpha} \cup (T_\alpha \cap S) \right) \subseteq T\setminus T_{\neg\alpha} = T_\alpha$, it holds  $T \setminus T^\alpha_{S} \vDash \alpha$.
 As a consequence, $T \vDash \alpha \lor \varphi$.

 \smallskip

 We proceed with $\qe{\alpha}$.
 For "$\Rightarrow$", suppose that $T \vDash \qe{\alpha} \varphi$.
 Then there exists $S \subseteq T_\alpha$ such that $T^\alpha_S \vDash \E \alpha \land \qaa{\alpha} (\E\alpha \timp \varphi)$.
 Since $T^\alpha_S \vDash \E\alpha$, there exists $w \in {(T^\alpha_S)}_\alpha$.
 As $\qaa{\alpha}$ now applies to ${(T^\alpha_S)}^\alpha_{\{w\}} = T^\alpha_w$ as well, it follows $T^\alpha_w \vDash \E\alpha \timp\varphi$, and consequently $T^\alpha_w \vDash \varphi$.

 Suppose for "$\Leftarrow$" that $T^\alpha_w \vDash \varphi$ for some $w \in T_\alpha$.
 Let $S \subseteq T_\alpha$ be arbitrary.
 If $w \notin S$, then ${(T^\alpha_w)}^\alpha_S = T^\alpha_\emptyset \nvDash \E \alpha$, and if $w \in S$, then ${(T^\alpha_w)}^\alpha_S = T^\alpha_w \vDash \varphi$.
 Therefore, for any $S \subseteq T_\alpha$ it holds ${(T^\alpha_w)}^\alpha_S \vDash (\E \alpha \timp \varphi)$, so $T^\alpha_w \vDash \qaa{\alpha} (\E\alpha \timp \varphi)$.
 Since also $T^\alpha_w \vDash \E\alpha$, it follows $T \vDash \qee{\alpha} \left[ \E\alpha \land \qaa{\alpha}(\E\alpha \timp \varphi) \right]$.
\end{proof}

\subsection{Implementing bisimulation}

With scopes and subteam quantifiers at our hands, we have all ingredients to implement $k$-bisimulation.\label{p:xi}
\begin{align*}
 \chi_0(\alpha,\beta)     & \dfn (\alpha\lor\beta) \hook \bigwedge_{p\in\Phi}\dep{p}                                                                                                                                            \\[2mm]
 \chi_{k+1}(\alpha,\beta) & \dfn \chi_0(\alpha,\beta) \land \Box\chi^*_{k}(\alpha,\beta)                                                                                                                                        \\[2.5mm]
 \chi^*_k(\alpha,\beta)   & \dfn (\neg\alpha\land\neg\beta) \ovee \Big(\E \alpha\land\E\beta \land \negg \big[(\alpha \ovee \beta)  \lor (\E\alpha\land\E\beta \land \negg \qe{\alpha}\qe{\beta}\chi_k(\alpha,\beta))\big]\Big)
\end{align*}

Note that a literal translation of the forward and backward condition would rather result in the formula $\chi^*_k(\alpha,\beta) \dfn \qa{\alpha}\qe{\beta}\chi_k(\alpha,\beta) \land \qa{\beta}\qe{\alpha}\chi_k(\alpha,\beta)$.
The more complicated formula shown above however avoids the exponential size that would come with two recursive calls.

\begin{thm}\label{thm:main-bisim}
 Let $k \geq 0$.
 For all Kripke structures $\calK$, teams $T$ and disjoint scopes $\alpha,\beta$ in $\calK$, and points $w \in T_\alpha$ and $v \in T_\beta$ it holds:
 \begin{alignat*}{4}
   &  & T^{\alpha,\beta}_{w,v} & \;\vDash\; \chi_k(\alpha,\beta) \;     &  & \text{ if and only if } \; & w \bis_k       \,   & v\text{,}       \\
   &  & T                      & \;\vDash\; \chi^*_{k}(\alpha,\beta) \; &  & \text{ if and only if } \; & T_\alpha \bis_k  \, & T_\beta\text{.}
 \end{alignat*}

 Moreover, both $\chi_k(\alpha,\beta)$ and $\chi^*_k(\alpha,\beta)$ are $\MTL_k$-formulas that are constructible in
 space $\bigO{\log(k + \size{\Phi} + \size{\alpha} + \size{\beta})}$.
\end{thm}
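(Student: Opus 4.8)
The plan is to prove both equivalences simultaneously by induction on $k$, following the mutual recursion built into the definitions: the base formula $\chi_0$, then $\chi^*_0$, then $\chi_1$, then $\chi^*_1$, and so on. Concretely I would establish three implications: (i) the $\chi_0$-equivalence as the base case; (ii) that the $\chi_k$-equivalence implies the $\chi^*_k$-equivalence; and (iii) that the $\chi^*_k$-equivalence implies the $\chi_{k+1}$-equivalence. For the base case, I first note that by disjoint-selection independence (Proposition~\ref{prop:of-scopes}(3)) the only $\alpha$-world of $T^{\alpha,\beta}_{w,v}$ is $w$ and its only $\beta$-world is $v$, so that $\big(T^{\alpha,\beta}_{w,v}\big)_{\alpha\lor\beta}=\{w,v\}$. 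Since $\hook$ restricts evaluation to this scope, $T^{\alpha,\beta}_{w,v}\vDash\chi_0(\alpha,\beta)$ reduces to $\{w,v\}\vDash\bigwedge_{p\in\Phi}\dep{p}$, i.e.\ $w$ and $v$ agree on every $p\in\Phi$, which is exactly $w\bis_0 v$ by Definition~\ref{def:point-bisim}.

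For the $\chi^*_k$ step I would first translate the inner quantified subformula using Proposition~\ref{prop:quantifiers} and the induction hypothesis: on any team $U$, since $\alpha$-selection leaves the $\beta$-part untouched, $U\vDash\qe{\alpha}\qe{\beta}\chi_k(\alpha,\beta)$ holds iff there are $w\in U_\alpha$, $v\in U_\beta$ with $U^{\alpha,\beta}_{w,v}\vDash\chi_k$, hence (by the $\chi_k$-equivalence) iff some $\alpha$-world of $U$ is $k$-bisimilar to some $\beta$-world of $U$; negating, $U\vDash\negg\qe{\alpha}\qe{\beta}\chi_k$ says that no $\alpha$-world of $U$ is $k$-bisimilar to any $\beta$-world of $U$. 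Treating the empty cases separately (handled by the first disjunct $\neg\alpha\land\neg\beta$ and the outer $\E\alpha\land\E\beta$), it then remains to show, for nonempty $T_\alpha,T_\beta$, that the bracketed formula is satisfied by $T$ exactly when $T_\alpha\not\bis_k T_\beta$. For one direction I read a satisfying split $T=S\cup U$ with $S\vDash\alpha\ovee\beta$: if $S\vDash\alpha$ then every $\beta$-world of $T$ lies in $U$, so an $\alpha$-witness in $U$ is $k$-bisimilar to no world of $T_\beta=U_\beta$, falsifying the forward condition, and symmetrically if $S\vDash\beta$. For the converse, from a lonely $w_0\in T_\alpha$ (failure of the forward condition) I exhibit the split $U\dfn\{w_0\}\cup T_\beta\cup T_{\neg\alpha\land\neg\beta}$ and $S\dfn T_\alpha\setminus\{w_0\}\subseteq W_\alpha$, which satisfy the bracket. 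I expect this step to be the main obstacle, precisely because the single-recursion encoding conflates the forward and backward conditions through nested Boolean negations: the care lies in drawing the lonely witness from the \emph{saturated} side of the split, and in checking that the neutral worlds $T_{\neg\alpha\land\neg\beta}$ can be parked in $U$ without disturbing the scope-restricted quantifiers.

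The $\chi_{k+1}$ step is then routine. Writing $\chi_{k+1}=\chi_0(\alpha,\beta)\land\Box\chi^*_k(\alpha,\beta)$, the base case supplies the $\bis_0$ conjunct. For the box I compute $R\big(T^{\alpha,\beta}_{w,v}\big)=(RT)^{\alpha,\beta}_{Rw,Rv}$ from the selection-propagation law (Proposition~\ref{prop:of-scopes}(5)), and use that a scope forces $Rw\subseteq W_\alpha$ and $Rv\subseteq W_\beta$, so that the scope laws identify the $\alpha$- and $\beta$-parts of this image as $Rw$ and $Rv$. The $\chi^*_k$-equivalence then gives $Rw\bis_k Rv$, and combining with the image characterization in Proposition~\ref{prop:ml-bisim-types} yields precisely $w\bis_{k+1}v$.

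Finally, for the resource bounds I would check by induction that no connective other than the single $\Box$ in $\chi_{k+1}$ raises modal depth: the subteam quantifiers, $\E$, $\ovee$, $\timp$, and $\hook$ all preserve it because $\alpha,\beta\in\ML^{\PS\setminus\Phi}_0$. This gives $\md(\chi_k)=\md(\chi^*_k)=k$ and hence membership in $\MTL_k$. For the space bound I observe that the recursion is \emph{linear}: each $\chi_k$ contains exactly one copy of $\chi^*_{k-1}$ and each $\chi^*_k$ exactly one copy of $\chi_k$, so the unfolded formula is an unbranched chain of depth $k$. A logspace transducer can therefore emit it by writing the level-$d$ prefixes for $d=k,\ldots,1$, then the base $\chi_0$, then the level-$d$ suffixes for $d=1,\ldots,k$, maintaining only a depth counter ($O(\log k)$ bits), an index for the conjunction over $\Phi$ ($O(\log\size{\Phi})$ bits), and an input pointer for copying $\alpha,\beta$ ($O(\log(\size{\alpha}+\size{\beta}))$ bits), which is within $\bigO{\log(k+\size{\Phi}+\size{\alpha}+\size{\beta})}$.
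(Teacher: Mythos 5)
Your proposal is correct and follows essentially the same route as the paper's proof: the same mutual induction ($\chi_0$, then $\chi_k \Rightarrow \chi^*_k$, then $\chi^*_k \Rightarrow \chi_{k+1}$), the same reading of the splitting disjunction as isolating a lonely counter-example world $w_0$ with $S = T_\alpha \setminus \{w_0\}$ and $U = T^\alpha_{w_0}$, and the same use of Proposition~\ref{prop:of-scopes} and Proposition~\ref{prop:ml-bisim-types} for the $\Box$ step. The only cosmetic difference is that you argue the contrapositive where the paper argues by contradiction, and you spell out the logspace constructibility (via the linear, unbranched recursion) that the paper leaves as routine.
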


\begin{figure}\centering
 \begin{tikzpicture}[>=stealth,scale=0.6,node distance=0]
  \tikzset{world/.style={draw,circle,inner sep=.3mm,black,fill}}
  \tikzset{team/.style={draw,rounded corners,thick,inner sep=1mm}}
  \node[world] (r1) at (2.5,-2) {};
  \node[world] (r2) at (2.5,2) {};

  \node[color=red] at (-.5,2) {$\alpha$};
  \node[color=blue] at (-.5,-2) {$\beta$};
  \node[above=1mm of r2] {$T$};

  \foreach \i in {1,2,3,4} {
    \node[world] (w1\i) at (\i,-1) {};
    \draw[->] (r1) -- (w1\i);
   }
  \foreach \i in {1,2,3} {
    \node[world] (v1\i) at (\i,1) {};
    \draw[->] (r2) -- (v1\i);
   }
  \node[right=0mm of w14] {$z$};

  \foreach \i in {1,2} {
    \path (w1\i) edge[draw,-,dotted,thick,color=green!50!black] node[left,pos=.5] {\color{green!50!black}{\tiny$\bis_0\!$}} (v1\i);
   }
  \foreach \i in {3} {
    \path (w1\i) edge[draw,-,dotted,thick,color=green!50!black] node[right,pos=.5] {\color{green!50!black}{\tiny\!$\bis_0$}} (v1\i);
   }

  \begin{scope}
   \clip (0,3) rectangle (4,0);
   \node[team,fit=(r1)(r2),draw=red,draw opacity=.5,fill=red,fill opacity=.1] {};
  \end{scope}
  \begin{scope}
   \clip (0,-3) rectangle (4,0);
   \node[team,fit=(r1)(r2),draw=blue,draw opacity=.5,fill=blue,fill opacity=.1] {};
  \end{scope}

  \node (bis) at (-1.5cm,0) {\color{green!50!black}$\bis_1$?};

  \node[fit=(r1)] (fitr1){};
  \node[fit=(r2)] (fitr2){};

  \path[draw,-,dashed,thick,color=green!50!black] (fitr2.west) to [out=180,in=90] (bis.east) to [out=270,in=180] (fitr1.west);
  \node at (5.5,0) {\Large$\Rightarrow$};

  \begin{scope}[xshift=6cm]
   \node[world,opacity=.3] (r1) at (2.5,-2) {};
   \node[world,opacity=.3] (r2) at (2.5,2) {};

   \foreach \i in {1,2,3,4} {
     \node[world] (w1\i) at (\i,-1) {};
     \draw[->,opacity=.3] (r1) edge (w1\i);
    }
   \foreach \i in {1,2,3} {
     \node[world] (v1\i) at (\i,1) {};
     \draw[->,opacity=.3] (r2) edge (v1\i);
    }

   \begin{scope}
    \clip (0,3) rectangle (5,0);
    \node[team,fit=(w14)(v11),draw=red,draw opacity=.5,fill=red,fill opacity=.1] {};
   \end{scope}
   \begin{scope}
    \clip (0,-3) rectangle (5,0);
    \node[team,fit=(w14)(v11),draw=blue,draw opacity=.5,fill=blue,fill opacity=.1] (team) {};
   \end{scope}
   \foreach \i in {1,2,3} {
     \path (w1\i) edge[draw,-,dotted,thick,color=green!50!black] node[right,pos=.5] {\color{green!50!black}{\tiny\!$\bis_0$}} (v1\i);
    }
   \node at (5.5,0) {\Large$\Rightarrow$};
   \node[above right=0mm and -4mm of team] {$RT$};
   \node[right=1mm of w14] {$z$};
  \end{scope}

  \begin{scope}[xshift=12cm]
   \node[world,opacity=.3] (r1) at (2.5,-2) {};
   \node[world,opacity=.3] (r2) at (2.5,2) {};

   \foreach \i in {1,2,3,4} {
     \node[world] (w1\i) at (\i,-1) {};
     \draw[->,opacity=.3] (r1) edge (w1\i);
     \node[inner sep=1mm,fit=(w1\i)] (fitw1\i) {};
    }
   \foreach \i in {1,2,3} {
     \node[world] (v1\i) at (\i,1) {};
     \draw[->,opacity=.3] (r2) edge (v1\i);
     \node[inner sep=1mm,fit=(v1\i)] (fitv1\i) {};
    }
   \node[inner sep=.3mm] (v14) at (4,1){};
   \node[inner sep=1mm,fit=(v14)] (fitv14) {};

   \begin{scope}
    \clip (0,3) rectangle (5,0);
    \draw[draw=red,draw opacity=.5,fill=red,fill opacity=.1,thick,rounded corners]
    (fitv11.north west)
    -- ([yshift=-1cm]fitv11.north west)
    -- ([yshift=1cm]fitw14.north west)
    -- (fitw14.south west) -- (fitw14.south east) -- (fitv14.north east) -- cycle;
   \end{scope}
   \begin{scope}
    \clip (0,-3) rectangle (5,0);
    \draw[draw=blue,draw opacity=.5,fill=blue,fill opacity=.1,thick,rounded corners]
    (fitv11.north west)
    -- ([yshift=-1cm]fitv11.north west)
    -- ([yshift=1cm]fitw14.north west)
    -- (fitw14.south west) -- (fitw14.south east) -- (fitv14.north east) -- cycle;
   \end{scope}

   \draw[draw=blue,draw opacity=.25,fill=blue,fill opacity=.05,thick,rounded corners]
   (fitw11.south west) -- (fitw13.south east) -- ([yshift=1cm]fitw13.south east)
   -- ([yshift=1cm]fitw11.south west) -- cycle;

   \node[right=1mm of w14] {$z$};
   \node[above right=0mm and -2mm of v14] {$RT^\beta_z$};
  \end{scope}

 \end{tikzpicture}
 \caption{As $z$ violates the \emph{backward} condition, shrinking $RT_\beta$ leads to a {\color{green!50!black}$\bis_0$}-free subteam, falsifying $\qe{\red{\alpha}}\qe{\blue{\beta}}\color{green!50!black}\chi_0(\alpha,\beta)$.}\label{fig:bisimulation}
\end{figure}

\begin{proof}
 The idea is to isolate a single point in $z \in T_\alpha \cup T_\beta$ that serves as a \emph{counter-example} against $\type{T_\alpha}_k = \type{T_\beta}_k$ by, say, $\type{z}_k \in \type{T_\beta}_k \setminus \type{T_\alpha}_k$.
 We erase $T_\beta \setminus \{z\}$ from $T$ using the disjunction $\lor$, as $T_\beta \setminus\{z\} \vDash \alpha \ovee \beta$.
 The remaining team is exactly $T^\beta_z$, in which $\qe{\alpha}\qe{\beta}\chi_k(\alpha,\beta)$ fails (see Figure~\ref{fig:bisimulation}).
 The case $\type{z}_k \in \type{T_\alpha}_k \setminus \type{T_\beta}_k$ is detected analogously.

 We proceed with a formal correctness proof by induction on $k$.
 Let $\calK = (W,R,V)$ as in the theorem.
 The base case $k = 0$ is straightforward,
as no proposition $p \in \Phi$ occurs in $\alpha$ or $\beta$.
 The induction step is split into two parts.

\smallskip

"$\chi_k \Rightarrow \chi^*_k$":
  Let $T$ be a team and $\alpha,\beta$ disjoint scopes.
 Observe that $\chi^*_k$ is always true if $T_\alpha$ and $T_\beta$ are both empty (then $\type{T_\alpha}_k = \type{T_\beta}_k$), and that it is always false if exactly one of them is empty (then $\type{T_\alpha}_k \neq \type{T_\beta}_k$).
 Therefore, let $T_\alpha \neq \emptyset$ and $T_\beta \neq \emptyset$.
 Then $\chi^*_k(\alpha,\beta)$ boils down to $\negg((\alpha \ovee \beta) \lor\E\alpha\land\E\beta\land\negg\qe{\alpha}\qe{\beta}\chi_k(\alpha,\beta))$,
 which we prove equivalent to $\type{T_\alpha}_k = \type{T_\beta}_k$.

 The first direction is proved by contradiction.
 Suppose $\type{T_\alpha}_k = \type{T_\beta}_k$ but $T \vDash (\alpha \ovee \beta) \lor \E\alpha\land\E\beta\land\negg\qe{\alpha}\qe{\beta} \chi_k(\alpha,\beta)$.
 The disjunction is witnessed by some division $T = S \cup U$, where \wloss $S \subseteq T_\alpha$ satisfies $\alpha \ovee \beta$,
 (if $S \subseteq T_\beta$, the proof is symmetric), and $U \vDash \E\alpha\land\E\beta\land\negg\qe{\alpha}\qe{\beta}\chi_k(\alpha,\beta)$.
 Since $T_\alpha \cap T_\beta = \emptyset$, then $T_\beta \subseteq U$, and clearly $T_\beta \subseteq U_\beta$.
 By the formula, some $w \in U_\alpha$ exists.
 By assumption that $\type{T_\alpha}_k = \type{T_\beta}_k$, $U_\beta$ must contain a world $v$ of type $\type{w}_k$ as well.
 But then $U^{\alpha,\beta}_{w,v} \vDash \chi_k(\alpha,\beta)$ by induction hypothesis, contradiction to $U \vDash \negg \qe{\alpha}\qe{\beta} \chi_k(\alpha,\beta)$.

 For the other direction, suppose $\type{T_\alpha}_k \neq \type{T_\beta}_k$.
 \Wloss there exists $w \in T_\alpha$ such that $\type{w}_k \notin \type{T_\beta}_k$.
 (For $w \in T_\beta$, the proof is again symmetric.)
 Consider $S \dfn T_\alpha \setminus \{w\}$ and $U \dfn T^\alpha_w$ as a division of $T$.
 Then $S \vDash \alpha \ovee \beta$ and $U \vDash \E\alpha\land\E\beta$.
It remains to show $U \vDash \negg\qe{\alpha}\qe{\beta} \chi_k(\alpha,\beta)$.
 However, this is easy to see:
 $U \vDash \qe{\alpha}\qe{\beta}\chi_k(\alpha,\beta)$ if and only if $U\vDash \qe{\beta}\chi_k(\alpha,\beta)$, but $T_\beta$ and hence $U_\beta$ contains no world of type $\type{w}_k$, so by induction hypothesis $U$ cannot satisfy $\qe{\beta}\chi_k(\alpha,\beta)$.

 \smallskip

"$\chi^*_k \Rightarrow \chi_{k+1}$":
 We follow Definition~\ref{def:point-bisim} and Proposition~\ref{prop:ml-bisim-types}.
 \begin{align*}
                    & T^{\alpha,\beta}_{w,v} \vDash \chi_{k+1}(\alpha,\beta)                                                               \\
  \Leftrightarrow\; & T^{\alpha,\beta}_{w,v} \vDash \chi_0(\alpha,\beta)\land
  \Box\chi^*_k(\alpha,\beta)\tag{Definition of $\chi_{k+1}$}                                                                               \\
  \Leftrightarrow\; & w \bis_0 v \text{ and }T^{\alpha,\beta}_{w,v} \vDash \Box\chi^*_k(\alpha,\beta)\tag{Induction hypothesis}            \\
  \Leftrightarrow\; & w \bis_0 v \text{ and }RT^{\alpha,\beta}_{Rw,Rv} \vDash \chi^*_k(\alpha,\beta)\tag{Proposition~\ref{prop:of-scopes}} \\
  \Leftrightarrow\; & w \bis_0 v \text{ and }Rw \bis_k Rv\tag{Induction hypothesis}                                                        \\
  \Leftrightarrow\; & w \bis_{k+1} v\text{.}\tag*{({Proposition~\ref{prop:ml-bisim-types}})}
 \end{align*}
 It is routine to check that the formulas are constructible in logarithmic space from $\alpha$, $\beta$, $\Phi$ and $k$, and that $\md(\chi_k) = \md(\chi^*_k) = k$.
\end{proof}

Let us stress that $\chi_k$ relies on disjoint scopes to be present in the structure, and it is open whether the property $\size{\type{T}_k} \leq 1$ is polynomially definable without these.
Incidentally, the related property $\size{\type{Rw}_k} \leq 1$ of points $w$ was recently studied by Hella and Vilander~\cite{HellaV16}, and was proven to be expressible in $\ML$, but only by formulas of non-elementary size.
However, they proved that it is definable in exponential size in \emph{2-dimensional modal logic} $\ML^2$ (for an introduction to $\ML^2$, see Marx and Venema~\cite{marx1997multi}).
Roughly speaking, $\ML^2$ is evaluated by traversing over \emph{pairs} of points independently.
The relationship between $\ML^2$ and $\MTL$ is unclear: Arguably, pairs of points are a special case of teams.
But on the other hand, the modalities in $\MTL$ do not act on the points in a team independently, as in $\ML^2$, but instead always proceed to a successor team "synchronously".
As a consequence, it is also open whether $\MTL$ can define one of the above properties by a formula of elementary size.

\section{Enforcing a canonical model}%
\label{sec:staircase}

In this section, we approach the canonical models of $\MTL$ from a lower bound perspective.
Here, we devise an $\MTL_k$-formula that is satisfiable but permits \emph{only} $k$-canonical models.

For $k = 0$, that is propositional team logic, Hannula et al.~\cite{Hannula2015} defined the $\PTL$-formula
\begin{align*}
 \max(X) \dfn \negg \bigvee_{\mathclap{x \in X}}\dep{x}
\end{align*}
and proved that $T \vDash \max(\Phi)$ if and only if $T$ is $0$-canonical, \ie, contains all Boolean assignment over $\Phi$.
We generalize this for all $k$, \ie, construct a satisfiable formula $\canon_k$ that has only $k$-canonical models.

\subsection{Staircase models}

Our approach is to express $k$-canonicity by inductively enforcing $i$-canonical sets of worlds for $i = 0,\ldots,k$ located in different "height" inside the model.
For this purpose, we employ distinct scopes $\fraka_0,\ldots,\fraka_k$ ("\emph{stairs}"), and introduce a specific class of models:

\begin{defi}\label{def:staircase}
 Let $k,i \geq 0$ and let $(\calK,T)$ be a Kripke structure with team, $\calK = (W,R,V)$.
 Then $T$ is \emph{$k$-canonical with offset $i$} if for every $\tau \in \Delta_k$ there exists $w \in T$ with $\type{R^{i}w}_k = \{\tau\}$.
 $(\calK,T)$ is called \emph{$k$-staircase} if for all $i \in \{0,\ldots,k\}$ we have that $T_{\fraka_i}$ is $i$-canonical with offset $k-i$.
\end{defi}

\begin{figure*}[b!]\centering
\begin{tikzpicture}[scale=.9]
\tikzset{world/.style={draw,circle,inner sep=.3mm,black,fill}}
\tikzset{dummy/.style={circle,inner sep=.3mm}}
\tikzset{team/.style={draw,rounded corners,thick,inner sep=1.6mm}}

\node at (0.1,-1) {$\fraka_0$};
\node at (1.35,-1) {$\fraka_1$};
\node at (4.1,-1) {$\fraka_2$};
\node at (10,-1) {$\fraka_3$, $2^{2^{2^{2^\size{\Phi}}}} = 16 = \size{\Delta_3}$ elements};

\foreach \j in {0,1,2,3} {
\node[world] (w0\j) at (0,\j) {};
}
\path
(w00) edge[-] (w01)
(w01) edge[-] (w02)
(w02) edge[-] (w03);

\draw[dashed] (.5,3.5) -- (.5,-0.5);

\node[team,fill=red,fill opacity=.4,fit=(w03),draw=red,draw opacity=.6] (canon0) {};

\foreach \i in {2,3} {
\foreach \j in {0,1,2} {
\node[world] (w\i\j) at ({\i*0.5},\j) {};
}
\path
(w\i0) edge[-] (w\i1)
(w\i1) edge[-] (w\i2);
}
\node[world] (w33) at (1.5,3) {};
\path (w32) edge (w33);
\node[dummy] (d1) at (1,3) {};

\draw[dashed] (2,3.5) -- (2,-0.5);

\node[team,fill=red,fill opacity=.4,fit=(w33),draw=red,draw opacity=.6] {};
\node[team,fill=red,fill opacity=.4,fit=(d1),draw=red,draw opacity=.6] {};
\node[team,fill=red,fill opacity=.25,fit=(w22)(w32),draw=red,draw opacity=.4] (canon1) {};

\begin{scope}[xshift=1mm]
\foreach \i in {5,7,9,11} {
\foreach \j in {0,1} {
\node[world] (w\i\j) at ({\i*0.5+0.1},\j) {};
\node[dummy] (ldw\i\j) at ({\i*0.5-0.1},\j) {};
\node[dummy] (rdw\i\j) at ({\i*0.5+0.3},\j) {};
}
\path
(w\i0) edge[-] (w\i1);
}
\node[world] (v7a) at (3.35,2) {};
\node[dummy] (d7x) at (3.8,2) {};
\path (w71) edge (v7a);
\node[world] (v9b) at (4.8,2) {};
\node[world] (v9c) at (4.8,3) {};
\path (w91) edge (v9b);
\path (v9b) edge (v9c);
\node[world] (v11a) at (5.35,2) {};
\node[world] (v11b) at (5.8,2) {};
\node[world] (v11c) at (5.8,3) {};
\path (w111) edge (v11a);
\path (w111) edge (v11b);
\path (v11b) edge (v11c);

\node[dummy] (d2) at (4.35,2) {};

\node[dummy] (d4) at (2.4,2) {};
\node[dummy] (d4x) at (2.8,2) {};
\node[dummy] (d5) at (2.5,3) {};

\draw[dashed] (6.4,3.5) -- (6.4,-0.5);

\node[team,fill=red,fill opacity=.25,fit=(d4)(d4x),draw=red,draw opacity=.4] {};
\node[team,fill=red,fill opacity=.25,fit=(v7a)(d7x),draw=red,draw opacity=.4] {};
\node[team,fill=red,fill opacity=.25,fit=(d2)(v9b),draw=red,draw opacity=.4] {};
\node[team,fill=red,fill opacity=.25,fit=(v11a)(v11b),draw=red,draw opacity=.4] {};
\node[team,fill=red,fill opacity=.1,fit=(ldw51)(rdw111),draw=red,draw opacity=.3] (canon2) {};
\end{scope}

\node[world] (u0) at (7,0) {};
\node[dummy] (ue) at (12.8,0) {};
\node[world] (u1) at (8,0) {};
\node[world] (u2) at (8,1) {};
\path (u1) edge (u2);
\node at (8.8,0) {$\cdots$};
\node at (8.8,1) {$\cdots$};

\begin{scope}[xshift=7cm]
\node[world] (u3) at (4,0) {};
\foreach \i in {5,7,9,11} {
\node[world] (x\i1) at ({\i*0.5},1) {};
\path
(u3) edge[-] (x\i1);
}
\node[world] (v7a) at (3.2,2) {};
\path (x71) edge (v7a);
\node[world] (v9b) at (4.8,2) {};
\node[world] (v9c) at (4.8,3) {};
\path (x91) edge (v9b);
\path (v9b) edge (v9c);
\node[world] (v11a) at (5.3,2) {};
\node[dummy] (dv11a) at (5.8,1) {};
\node[world] (v11b) at (5.8,2) {};
\node[world] (v11c) at (5.8,3) {};
\path (x111) edge (v11a);
\path (x111) edge (v11b);
\path (v11b) edge (v11c);
\end{scope}

\node[dummy] (d6) at (7,1) {};

\node[team,fill=red,fill opacity=.1,fit=(d6),draw=red,draw opacity=.3] {};
\node[team,fill=red,fill opacity=.1,fit=(u2),draw=red,draw opacity=.3] {};
\node[team,fill=red,fill opacity=.1,fit=(x51)(dv11a),draw=red,draw opacity=.3] {};
\node[team,fill=red,fill opacity=.1,fit=(u0)(ue),draw=red,draw opacity=.2] (canon3) {};

\draw[thick, red, rounded corners] (-.3,2.65) -- (.65,2.65) -- (.65,1.65) -- (2.15,1.65) -- (2.15,.65) -- (6.65,.65) -- (6.65,-.35) -- (13,-.35);

\node[above=3.2cm of canon3] {3-canonical};
\node[above=2.3cm of canon2] {2-canonical};
\node[above=1.4cm of canon1,xshift=1mm] {1-canonical};
\node[above=.5cm of canon0,xshift=-2mm] {0-c.};

\draw [decorate,decoration={brace,amplitude=10pt,mirror},xshift=0pt,yshift=0pt]
(-.4,2.6) -- (-.4,.3) node [black,midway,xshift=-1cm,yshift=.5mm]
{Offset};

\node at (-1,-1) {Scope:};

\node[dummy] (teamend) at (12.8,0) {};

\node[team,fit=(w00)(teamend),fill=black,fill opacity=.1,draw opacity=.3] {};
\node at (-.5,0) {$T$};
\end{tikzpicture}
\caption{Visualization of the $3$-staircase for $\Phi = \emptyset$, where the subteam $T_{\fraka_i}$ is $i$-canonical with offset $3-i$.}\label{fig:staircase}
\end{figure*}

As an example, a $3$-staircase for $\Phi = \emptyset$ is depicted in Figure~\ref{fig:staircase}.
Observe that it is a \emph{directed forest}, \ie, it is acyclic and all worlds are either \emph{roots} (\ie, without predecessor) or have exactly one predecessor.
Moreover, it has bounded \emph{height}, where the height of a directed forest is the greatest number $h$ such that every path traverses at most $h$ edges.

\begin{prop}\label{prop:canon-sat}
 For each $k\geq 0$, there is a finite $k$-staircase $(\calK,T)$ such that $\fraka_0,\ldots,\fraka_k$ are disjoint scopes in $\calK$, and $\calK$ is a directed forest with height at most $k$ and its set of roots being exactly $T$.
\end{prop}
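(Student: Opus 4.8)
The plan is to build $\calK$ as a disjoint union of small \emph{gadget trees}, one for each pair $(i,\tau)$ with $0 \le i \le k$ and $\tau \in \Delta_i$, and to take $T$ to be the set of roots of these gadgets. The gadget $\calT_{i,\tau}$ responsible for realizing the $i$-type $\tau$ under offset $k-i$ consists of a \emph{spine}, namely a directed path $w_0 \to w_1 \to \cdots \to w_{k-i}$ of length $k-i$, whose bottom endpoint $w_{k-i}$ is the root of a finite tree $\calU_\tau$ of height at most $i$ that realizes $\tau$, i.e.\ $\type{w_{k-i}}_i = \tau$. Such a realizing tree exists and can be built by recursion on $i$ using Lemma~\ref{lem:types-equal-const} and Proposition~\ref{prop:det-types}: place at the root the valuation $\tau \cap \Phi$ and, for each $(i{-}1)$-type in $\calR\tau$, attach one child realizing it (the base case $i=0$ being a single world with valuation $\tau \cap \Phi$); the height bound $\le i$ follows inductively. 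The root of the whole gadget is $w_0$, and its total height is $(k-i)+i = k$.

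To mark the scopes, fix fresh propositions $r_0,\dots,r_k \in \PS \setminus \Phi$ and set $\fraka_i \dfn r_i$; in every world of every gadget $\calT_{i,\tau}$ we let $r_i$ be true and all other $r_j$ false. Since each world belongs to exactly one gadget, the sets $W_{\fraka_i}$ are pairwise disjoint, so $\fraka_0,\dots,\fraka_k$ are disjoint; and since all worlds of a single tree agree on the $r_j$ while edges never cross between gadgets, each $\fraka_i$ is invariant along $R$ and hence a scope. Being a disjoint union of trees, $\calK$ is a directed forest; its roots are exactly the spine roots $w_0$, which by definition is $T$; and since every gadget has height at most $k$, so does $\calK$.

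It remains to verify the staircase property. Fix $i$. The roots in scope $\fraka_i$ are precisely the roots $w_0$ of the gadgets $\calT_{i,\tau}$ for $\tau \in \Delta_i$, so $T_{\fraka_i}$ contains one such $w_0$ for each $\tau \in \Delta_i$. Because $\calK$ is a disjoint union, every world reachable from $w_0$ stays inside $\calT_{i,\tau}$, so all types computed in $\calK$ agree with those computed in the isolated gadget (types depend only on the bounded-depth reachable submodel, by Propositions~\ref{prop:ml-bisim-types} and~\ref{prop:types}). Along the spine we have $R^{k-i}w_0 = \{w_{k-i}\}$, a single world, whence $\type{R^{k-i}w_0}_i = \{\type{w_{k-i}}_i\} = \{\tau\}$; thus $w_0$ witnesses $i$-canonicity with offset $k-i$ for $\tau$, and $(\calK,T)$ is $k$-staircase. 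The only genuinely technical ingredient is realizing each type as a bounded-height tree (handled by Proposition~\ref{prop:det-types}); the remaining delicacy lies in designing the gadget so that the three constraints --- forest shape, height $\le k$, and the correct offset --- hold simultaneously, which the spine-plus-realization split resolves, with the boundary cases $i=k$ (empty spine, $w_0 = w_{k-i}$) and $i=0$ (a bare path whose endpoint carries the valuation $\tau \cap \Phi$) checked directly.
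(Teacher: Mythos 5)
Your construction is correct and is essentially the paper's own: the paper's proof consists solely of a pointer to Figure~\ref{fig:staircase}, which depicts exactly your gadgets --- for each $i$ and each $\tau \in \Delta_i$ a spine of length $k-i$ ending in a tree of height at most $i$ realizing $\tau$, with the scopes marked gadget-wise by fresh propositions. Your write-up merely makes explicit, via Proposition~\ref{prop:det-types} and Lemma~\ref{lem:types-equal-const}, the details that the figure leaves implicit.
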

\begin{proof}
 See Figure~\ref{fig:staircase}.
\end{proof}

Observe that in such a model, $T_{\fraka_k}$ is $k$-canonical with offset $0$, which is simply $k$-canonical:

\begin{cor}[Finite tree model property of $\MTL$]\label{cor:height-model}
 Every satisfiable $\MTL_k$-formula has a finite model $(\calK,T)$ such that $\calK$ is a directed forest with height at most $k$ and its set of roots being exactly $T$.
\end{cor}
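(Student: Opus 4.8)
The plan is to combine the existence of finite staircase models (Proposition~\ref{prop:canon-sat}) with the canonical-model reduction (Theorem~\ref{thm:canonical}), and then to prune the resulting model to a subforest so that its roots coincide with the satisfying team.

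First, given a satisfiable $\varphi \in \MTL_k$, I would set $\Phi \dfn \Prop(\varphi)$ so that $\varphi \in \MTL^\Phi_k$, and invoke Proposition~\ref{prop:canon-sat} to obtain a finite $k$-staircase $(\calK,U)$ in which $\fraka_0,\ldots,\fraka_k$ are disjoint scopes, $\calK$ is a directed forest of height at most $k$, and the set of roots is exactly $U$. By Definition~\ref{def:staircase}, the subteam $U_{\fraka_k}$ is $k$-canonical with offset $0$; unwinding the definition, offset $0$ means that for every $\tau \in \Delta_k$ some $w \in U_{\fraka_k}$ satisfies $\type{w}_k = \tau$, i.e.\ $\type{U_{\fraka_k}}_k = \Delta_k$. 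Thus $(\calK,U_{\fraka_k})$ is $(\Phi,k)$-canonical.

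Next I would apply Theorem~\ref{thm:canonical} to $(\calK,U_{\fraka_k})$ and the satisfiable formula $\varphi$, yielding a subteam $T \subseteq U_{\fraka_k}$ with $(\calK,T) \vDash \varphi$. Since $U_{\fraka_k} \subseteq U$ and $U$ is the set of roots of $\calK$, the team $T$ consists entirely of roots. To repair the discrepancy between $T$ and the full root set $U$, I would pass to the submodel $\calK'$ generated by $T$, that is, restrict $\calK$ to the worlds $\bigcup_{i \geq 0} R^i T$ reachable from $T$. Because $\calK$ is a forest, this restriction is again a directed forest of height at most $k$: every world of $\calK'$ other than those in $T$ keeps its unique predecessor, which lies on a path from $T$ and is therefore retained, so no new roots appear and each element of $T$ stays a root; hence the roots of $\calK'$ are exactly $T$.

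Finally I must verify $(\calK',T) \vDash \varphi$. The generated submodel is bisimilar to the original at every retained world, so $(\calK,w) \bis^\Phi_k (\calK',w)$ for each $w \in T$, and therefore $(\calK,T) \bis^\Phi_k (\calK',T)$ by Definition~\ref{def:team-bisim}; since $\varphi \in \MTL^\Phi_k$, Proposition~\ref{prop:mtl-bisim-types} transfers $(\calK,T) \vDash \varphi$ to $(\calK',T) \vDash \varphi$, so that $(\calK',T)$ is the desired finite tree model. I expect the only genuinely delicate point to be this last transfer: one must ensure that deleting the sibling trees rooted at $U \setminus T$ leaves every team reachable from $T$ untouched, which is exactly what bisimilarity of the $T$-generated subforest guarantees (alternatively, it can be shown by a direct induction on $\varphi$, using that every modal operator moves $T$ only to subteams of its images $R^i T$, all of which lie inside $\calK'$).
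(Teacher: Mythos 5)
Your proof is correct and follows exactly the route the paper intends: Proposition~\ref{prop:canon-sat} supplies a finite $k$-staircase whose subteam $T_{\fraka_k}$ is $k$-canonical, Theorem~\ref{thm:canonical} places a satisfying team $T$ inside it, and one then restricts to the subforest generated by $T$. The paper states the corollary without an explicit proof, and your pruning step together with the transfer via $(\Phi,k)$-team-bisimilarity and Proposition~\ref{prop:mtl-bisim-types} supplies precisely the details it glosses over (in particular, the need to discard the sibling trees rooted at $U \setminus T$ so that the roots are exactly the team).
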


\subsection{Enforcing canonicity}

In the rest of the section, we illustrate how a $k$-staircase can be enforced in $\MTL$ inductively.

For $\Phi = \emptyset$, the inductive step---obtaining $(k+1)$-canonicity from $k$-canonicity---is done by the formula $\qaa{\alpha} \, \qe{\beta} \, \Box \chi^*_{k}(\alpha,\beta)$.
The idea is that this formula states that for every \emph{subteam} $T' \subseteq T_\alpha$ there exists a \emph{point} $w \in T_\beta$ such that $\type{RT'}_k = \type{Rw}_k$.
Intuitively, every possible set of types is captured as the image of some point in $T_\beta$.
As a consequence, if $T_\alpha$ is $k$-canonical with offset $1$, then $T_\beta$ will be $(k+1)$-canonical.

Note that the simpler formula $\Box^k\max(\Phi)$ expresses $0$-canonicity of $R^{k}T$, but not $0$-canonicity of $T$ with offset $k$ (consider, \eg, a singleton $T$).
Instead, we use the formula
\begin{align*}
 \max_i \dfn \; & \top \lor (\Diamond^i\top \land \negg \bigvee_{\mathclap{p \in \Phi}} (\Diamond^i p \ovee \Diamond^i \neg p))\text{.}
\end{align*}
It states not only that $R^{i}T$ is $0$-canonical, but also that $R^{i}w$ contains exactly one propositional assignment for each $w \in T$, which together yields $0$-canonicity with offset $i$.
\begin{lem}\label{lem:canon0}
 $T \vDash \max_i$ iff $T$ is $0$-canonical with offset $i$.
\end{lem}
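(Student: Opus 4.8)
The plan is to unfold the semantics of the building blocks of $\max_i$ and thereby reduce the claim to a purely combinatorial statement about coverings of teams, generalizing the argument of Hannula et al.\ for the degenerate case $i = 0$ (where $\max_i$ collapses to $\top \lor \max(\Phi)$). First I would record the following readings, all immediate from the semantics of $\Diamond$ and of $\ovee$ on whole teams: $T \vDash \Diamond^i\top$ iff $R^iw \neq \emptyset$ for every $w \in T$; $T \vDash \Diamond^i p$ iff every $w \in T$ reaches a $p$-world in $i$ steps (some $v \in R^iw$ with $v \vDash p$), and symmetrically for $\Diamond^i\neg p$; hence $T \vDash \Diamond^i p \ovee \Diamond^i \neg p$ iff \emph{all} worlds of $T$ uniformly reach a $p$-world, or all uniformly reach a $\neg p$-world. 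Since $\top \lor \psi$ says "$\psi$ holds in some subteam" and the splitting disjunction $\bigvee_{p \in \Phi}$ says "the team can be covered by subteams $S_p$", the statement $T \vDash \max_i$ unwinds to: \emph{there is a subteam $U \subseteq T$ with $R^iw \neq \emptyset$ for all $w \in U$, such that $U$ cannot be written as $\bigcup_{p \in \Phi} S_p$ with each $S_p \vDash \Diamond^i p \ovee \Diamond^i \neg p$.} The target property, $0$-canonicity with offset $i$ (Definition~\ref{def:staircase}), reads: for every type $\tau \in \Delta_0$ there is $w \in T$ whose image is nonempty and \emph{uniform}, i.e.\ $\type{R^iw}_0 = \{\tau\}$ (viewing $\tau \in \Delta_0$ as an assignment $\Phi \to \{0,1\}$ via Proposition~\ref{prop:det-types}).

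For the direction "$\Leftarrow$", assume $T$ is $0$-canonical with offset $i$, pick one witness $w_\tau$ per $\tau \in \Delta_0$, and set $U \dfn \{ w_\tau \mid \tau \in \Delta_0\}$. Then $U \vDash \Diamond^i\top$ is clear, so it remains to show $U$ is not coverable. Given a putative covering $U = \bigcup_p S_p$ with $S_p \vDash \Diamond^i p \ovee \Diamond^i \neg p$, I would choose for each $p$ a bit $b_p$ with $S_p \vDash \Diamond^i p$ if $b_p = 1$ and $S_p \vDash \Diamond^i \neg p$ if $b_p = 0$, and then diagonalize: the witness $w_{\tau^*}$ for $\tau^*(p) \dfn 1 - b_p$ can lie in no $S_p$, because membership in $S_p$ together with uniformity of $R^iw_{\tau^*}$ would force $\tau^*(p) = b_p$. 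This contradicts $w_{\tau^*} \in U = \bigcup_p S_p$. This is precisely the diagonal argument behind $\max(\Phi)$, with "reaching a $p$-world in $i$ steps" playing the role of "satisfying $p$".

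For the direction "$\Rightarrow$" I would argue by contraposition, and this is the step I expect to be the main obstacle, since here I must \emph{manufacture} an explicit covering out of the mere failure of canonicity while respecting that the disjunction offers only one part $S_p$ per proposition. Suppose $T$ is not $0$-canonical with offset $i$, witnessed by a type $\tau^*$ that is the uniform image of no world of $T$; let $U \subseteq T$ be arbitrary with $U \vDash \Diamond^i\top$. For each $w \in U$ the image $R^iw$ is nonempty but not uniformly $\tau^*$, so there are $v \in R^iw$ and a proposition $p$ on which $v$ and $\tau^*$ disagree; fix such a $p(w)$ and set $S_p \dfn \{ w \in U \mid p(w) = p\}$. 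By construction every $w \in S_p$ reaches a world whose $p$-value is the \emph{fixed} bit $\neg\tau^*(p)$, so $S_p \vDash \Diamond^i p$ or $S_p \vDash \Diamond^i \neg p$ according to $\tau^*(p)$, whence $S_p \vDash \Diamond^i p \ovee \Diamond^i \neg p$; and $\bigcup_p S_p = U$. Thus every admissible $U$ is coverable, no witness for $\max_i$ exists, and $T \nvDash \max_i$.

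Throughout I would point out where the two "wrappers" of $\max_i$ earn their keep: the conjunct $\Diamond^i\top$ excludes worlds with empty $R^i$-image (which satisfy no $\Diamond^i p$ and would otherwise be uncoverable, breaking the "$\Rightarrow$" construction), while the outer $\top \lor$ lets me \emph{select} the clean subteam $U$ of uniform witnesses in "$\Leftarrow$", discarding worlds with non-uniform images that would admit a spurious covering. The case $\Phi = \emptyset$ (where $\Delta_0$ is a singleton and the covering disjunction is empty) I would dispatch separately, observing that there $\max_i$ merely asserts the existence of a world with a length-$i$ path, which is exactly $0$-canonicity with offset $i$.
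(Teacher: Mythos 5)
Your proof is correct, and it reaches the same combinatorial core as the paper's, but by a noticeably different route. The paper's proof is a short equivalence chain at the level of formulas: it applies the distributive law $\varphi \lor (\psi_1 \ovee \psi_2) \equiv (\varphi\lor\psi_1)\ovee(\varphi\lor\psi_2)$ to rewrite $\bigvee_{p\in\Phi}(\Diamond^i p \ovee \Diamond^i\neg p)$ as a Boolean disjunction $\bigovee_{P\subseteq\Phi}$ of splitting disjunctions, pushes $\negg$ through $\ovee$ by De Morgan duality, and reads off the resulting normal form $\bigwedge_{P\subseteq\Phi}\E\bigl(\bigwedge_{p\in P}\Box^i\neg p \land \bigwedge_{p\notin P}\Box^i p\bigr)$ as "for every $\tau\in\Delta_0$ there is $w\in T$ with $\type{R^iw}_0\subseteq\{\tau\}$", combined with $\Diamond^i\top$ forcing nonemptiness of the images. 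Your argument instead works directly on models: you unfold the covering semantics of $\bigvee$ and prove both implications by hand, with a diagonalization for the "$\Leftarrow$" direction (choosing the type $\tau^*(p)=1-b_p$ complementary to the bits of a putative covering) and an explicit covering construction for the contrapositive of "$\Rightarrow$" (sorting each $w$ into $S_{p(w)}$ by a proposition on which its image deviates from the missing type). These two steps are precisely the semantic content of the distributive law and the De Morgan step that the paper invokes syntactically, so in effect you reprove those equivalences from scratch. Your version is more elementary and self-contained, and your explicit remarks on the roles of the $\top\lor$ wrapper and the $\Diamond^i\top$ conjunct (as well as the separate treatment of $\Phi=\emptyset$, where the empty splitting disjunction degenerates) are points the paper's terser proof leaves implicit; the paper's version is shorter because it delegates the combinatorics to known algebraic laws of team semantics.
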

\begin{proof}
 By the distributive law $\varphi \lor (\psi_1 \ovee \psi_2) \equiv (\varphi \lor \psi_1) \ovee (\varphi \lor \psi_2)$, the duality $\negg(\psi_1 \ovee\psi_2)\equiv \negg \psi_1 \land \negg \psi_2$, and the definition
 $\E \psi = \negg \neg \psi$,
 \[
  \negg \bigvee_{\mathclap{p \in \Phi}} (\Diamond^i p \ovee \Diamond^i \neg p) \equiv \negg \bigovee_{P \subseteq \Phi} \Big( \bigvee_{p \in P} \Diamond^i p \lor \bigvee_{\mathclap{p \in \Phi \setminus P}}
  \Diamond^i \neg p \Big) \equiv \bigwedge_{P \subseteq \Phi}\!\! \E
  \Big(  \bigwedge_{p \in P} \Box^i \neg p \land \bigwedge_{\mathclap{p \in \Phi \setminus P}} \Box^i  p\Big)
  \text{.}
 \]
 The rightmost formula now states that for all types $\tau \in \Delta_0$ (each represented by a subset of $\Phi$, cf.\ Proposition~\ref{prop:det-types}), there exists a world $w \in T$ such that $\type{R^{i}w}^{\Phi}_0 \subseteq \{ \tau\}$.
 Likewise, $T \vDash \Diamond^i\top$ iff $R^{i}w \neq \emptyset$ for every $w \in T$.
\end{proof}

Based on this, $k$-canonicity with offset $i$ is now recursively defined as $\rho^i_k$:
\begin{align*}
 \rho^i_0(\beta) \, & \dfn \; \beta \hook \max_i                                                                                                     \\
 \rho^i_{k+1}(\alpha,\beta)
                    & \,\dfn \; \qaa{\alpha} \, \qee{\beta} \left( \rho^i_0(\beta) \land \Box^{i} \qa{\beta} \; \Box \chi^*_{k}(\alpha,\beta)\right) \\
 \canon_{k} \,      & \dfn \;  \rho^k_0(\fraka_0) \land \bigwedge_{m = 1}^k \rho^{k-m}_{m}(\fraka_{m-1},\fraka_m)
\end{align*}

\begin{thm}\label{thm:canon-formula}
 Let $k \geq 0$ and $\calK$ be a structure with disjoint scopes $\fraka_0,\ldots,\fraka_k$.
 Then $(\calK,T) \vDash \canon_k$ if and only if $(\calK,T)$ is a $k$-staircase.
 Moreover, $\canon_{k}$ is an $\MTL_k$-formula constructible in space $\bigO{\log(\size{\Phi} + k)}$.
\end{thm}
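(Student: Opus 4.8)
The plan is to reduce the whole theorem to one characterization of the recursive building block $\rho^i_{\ell+1}$ and then chain the $k+1$ conjuncts of $\canon_k$ by induction on the offset. The base conjunct $\rho^k_0(\fraka_0)=\fraka_0\hook\max_k$ is immediate from Lemma~\ref{lem:canon0}: it holds iff $T_{\fraka_0}$ is $0$-canonical with offset $k$, which is exactly the $i=0$ clause of Definition~\ref{def:staircase}. The work lies in the inductive clause, for which I would prove the following \emph{Key Lemma}, for disjoint scopes $\alpha,\beta$: (a) if $T_\beta$ is $(\ell+1)$-canonical with offset $i$, then $T\vDash\rho^i_{\ell+1}(\alpha,\beta)$; and (b) if \emph{additionally} $T_\alpha$ is $\ell$-canonical with offset $i+1$, then the converse of (a) holds as well. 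The asymmetry is essential: the precondition on $T_\alpha$ is needed only for the "only if" direction (b), and that is precisely what makes the chaining go through.

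First I would compute a semantic normal form for $\rho^i_{\ell+1}$. Unfolding the subteam quantifiers via Proposition~\ref{prop:quantifiers} and using that $\alpha,\beta$ are disjoint scopes (so selecting inside $\alpha$ leaves $T_\beta$ untouched), one obtains: $T\vDash\rho^i_{\ell+1}(\alpha,\beta)$ iff for every $S\subseteq T_\alpha$ there is $S'\subseteq T_\beta$ such that (i) $S'$ is $0$-canonical with offset $i$, and (ii) $\type{R^{i+1}S}_\ell=\type{Rw}_\ell$ for all $w\in R^iS'$. Clause (i) is the conjunct $\rho^i_0(\beta)$ read off by Lemma~\ref{lem:canon0}. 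Clause (ii) comes from the conjunct $\Box^i\qa{\beta}\Box\chi^*_\ell(\alpha,\beta)$: pushing $\Box^i$ and $\qa{\beta}$ inward and repeatedly applying the scope laws of Proposition~\ref{prop:of-scopes} (image and selection commute, selection propagates, disjoint selection is independent) reduces the $\chi^*_\ell$-subformula to the assertion $R^{i+1}S\bis_\ell Rw$, which by Theorem~\ref{thm:main-bisim} and Proposition~\ref{prop:types} is exactly the type equality in (ii). This bookkeeping is the main obstacle: one must track how the markers $\alpha,\beta$ behave across the $i$ box-steps and the selection of a single $\beta$-world $w$, and in particular verify that the $\beta$-part of $(R^{i+1}U)^\beta_{Rw}$ collapses to exactly $Rw$ because $w\in R^iS'$ forces $Rw\subseteq R^{i+1}S'$.

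From this normal form the Key Lemma follows by the type combinatorics of Proposition~\ref{prop:det-types} and Lemma~\ref{lem:types-equal-const}. I would write each target $\sigma\in\Delta_{\ell+1}$ as $h(\sigma)=(P,M)$ with $P\subseteq\Phi$ and $M\subseteq\Delta_\ell$; by Lemma~\ref{lem:types-equal-const} a world $u$ satisfies $\type{R^iu}_{\ell+1}=\{\sigma\}$ iff every $v\in R^iu$ has $V^{-1}(v)=P$ and $\type{Rv}_\ell=M$. For (b), given $\sigma$ the precondition lets me choose $S\subseteq T_\alpha$ with $\type{R^{i+1}S}_\ell=M$ (one witness per $\tau\in M$); the formula then supplies $S'$, whose $0$-canonicity produces $u\in S'\subseteq T_\beta$ whose distance-$i$ image has propositional type $P$, while clause (ii) forces $\type{Rv}_\ell=M$ there, so $\type{R^iu}_{\ell+1}=\{\sigma\}$. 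For (a), given an arbitrary $S$ I set $M\dfn\type{R^{i+1}S}_\ell$ and build $S'$ from witnesses $u_P\in T_\beta$ for the types $h^{-1}(P,M)$, $P\subseteq\Phi$, furnished by the $(\ell+1)$-canonicity of $T_\beta$; this $S'$ is $0$-canonical with offset $i$ and satisfies (ii) by Proposition~\ref{prop:det-types}, giving $T\vDash\rho^i_{\ell+1}$ (note no assumption on $T_\alpha$ is used here).

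Finally I would assemble the theorem. Identifying $\rho^{k-m}_m(\fraka_{m-1},\fraka_m)$ with $\rho^i_{\ell+1}$ for $i=k-m$ and $\ell=m-1$, the precondition "$T_{\fraka_{m-1}}$ is $(m-1)$-canonical with offset $k-m+1$" of part (b) is exactly the $i=m-1$ staircase clause; so an induction on $m$ using (b) yields $\canon_k\Rightarrow$ staircase, while part (a) gives staircase $\Rightarrow\canon_k$ conjunct by conjunct. For the complexity claim, $\md(\max_i)=i$ and $\md(\chi^*_\ell)=\ell$ (Theorem~\ref{thm:main-bisim}) give $\md(\rho^i_{\ell+1})=i+\ell+1$, and the subteam quantifiers add no modal depth since the scopes are propositional; hence every conjunct of $\canon_k$ has modal depth $k$ and $\canon_k\in\MTL_k$. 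Logspace constructibility is routine: $\canon_k$ is a conjunction of $k+1$ formulas, each built in logarithmic space from $\max_i$ and the logspace-constructible $\chi^*_\ell$, with the scopes $\fraka_0,\ldots,\fraka_k$ encoded over fresh propositions using $\bigO{\log k}$ bits each.
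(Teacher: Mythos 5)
Your proposal is correct and follows essentially the same route as the paper: the paper likewise proves, by induction on the stair index, that under the precondition "$T_\alpha$ is $\ell$-canonical with offset $i+1$" the formula $\rho^i_{\ell+1}(\alpha,\beta)$ holds iff $T_\beta$ is $(\ell+1)$-canonical with offset $i$, via exactly the unfolding you describe (Proposition~\ref{prop:quantifiers}, Proposition~\ref{prop:of-scopes}(5), Theorem~\ref{thm:main-bisim}, Lemma~\ref{lem:types-equal-const} and the bijection of Proposition~\ref{prop:det-types}), and then chains the conjuncts of $\canon_k$. Your only deviation is presentational: you split the biconditional into (a)/(b) and observe that the direction "staircase $\Rightarrow$ formula" needs no precondition on $T_\alpha$, whereas the paper keeps the precondition on both directions of a single equivalence chain — a harmless refinement, since in the assembly the precondition is available in both directions anyway.
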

\begin{proof}
 Similar to Theorem~\ref{thm:main-bisim}, the construction of the above formula in logspace is straightforward.
 We proceed with the correctness of the formula.
 Suppose that $\fraka_0,\ldots,\fraka_k$ are disjoint scopes in $\calK$.
 We show the following by induction on $0 \leq i \leq k$:
 Assuming that $T_\alpha$ is $k$-canonical with offset $i+1$, it holds that $T_\beta$ is $(k+1)$-canonical with offset $i$ if and only if $T \vDash \rho^i_{k+1}(\alpha,\beta)$.
 With the induction basis done in Lemma~\ref{lem:canon0}, the inductive step is proved by the following equivalence:
 \begin{align*}
                     & T_\beta\text{ is }(k+1)\text{-canonical with offset }i                                                                                                                                                       \\
  \Leftrightarrow \; &
  \forall \tau \in \Delta_{k+1} \colon \exists w \in T_\beta \colon \type{R^{i}w}_{k+1} = \{ \tau \}                                                                                                                                  \\
  \intertext{Using the inverse of the bijection $h\colon \tau \mapsto (\tau \cap \Phi,\calR\tau)$ from Proposition~\ref{prop:det-types}, we can equivalently quantify over $\pow{\Delta_k}$ and $\pow{\Phi}$:
  }
  \Leftrightarrow \; &
  \forall \Delta' \subseteq \Delta_k \colon \forall \Phi' \subseteq \Phi  \colon \exists w \in T_\beta \colon  \type{R^{i}w}_{k+1} = \{h^{-1}(\Phi',\Delta')\}                                                                        \\
  \Leftrightarrow \; &
  \forall \Delta' \subseteq \Delta_k \colon \forall \Phi' \subseteq \Phi  \colon \exists w \in T_\beta \colon R^{i}w \neq \emptyset \text{ and } \forall v \in R^{i}w \colon \type{v}_{k+1} = h^{-1}(\Phi',\Delta')                     \\
  \intertext{By Lemma~\ref{lem:types-equal-const}, $V^{-1}(v) = \Phi' \text{ and }\type{Rv}_k = \Delta'$ is equivalent to $\type{v}_{k+1} = h^{-1}(\Phi',\Delta')$:}
  \Leftrightarrow \; &
  \forall \Delta' \subseteq \Delta_k \colon \forall \Phi' \subseteq \Phi  \colon \exists w \in T_\beta \colon R^{i}w \neq \emptyset                                                                                                   \\
                     & \qquad \text{and } \forall v \in R^{i}w \colon V^{-1}(v) = \Phi' \text{ and }\type{Rv}_k = \Delta'                                                                                                             \\
  \intertext{Again by Proposition~\ref{prop:det-types}, $h \colon \tau \mapsto \tau \cap \Phi$ is a bijection from $\Delta_0$ to $\pow{\Phi}$:}
  \Leftrightarrow \; &
  \forall \Delta' \subseteq \Delta_k \colon \forall \tau_0 \in \Delta_0  \colon \exists w \in T_\beta \colon R^{i}w \neq \emptyset                                                                                                    \\
                     & \qquad \text{and } \forall v \in R^{i}w \colon V^{-1}(v) = \tau_0 \cap \Phi \text{ and }\type{Rv}_k = \Delta'                                                                                                  \\
  \intertext{Once more by Lemma~\ref{lem:types-equal-const}:}
  \Leftrightarrow \; &
  \forall \Delta' \subseteq \Delta_k \colon \forall \tau_0 \in \Delta_0  \colon \exists w \in T_\beta \colon R^{i}w \neq \emptyset                                                                                                    \\
                     & \qquad \text{and }\forall v \in R^{i}w \colon \type{v}_{0} = \tau_0 \text{ and }\type{Rv}_k = \Delta'                                                                                                          \\
  \Leftrightarrow \; &
  \forall \Delta' \subseteq \Delta_k \colon \forall \tau_0 \in \Delta_0  \colon \exists w \in T_\beta \colon \type{R^{i}w}_0 = \{ \tau_0\}  \text{ and }\forall v \in R^{i}w \colon \type{Rv}_k = \Delta'                               \\
  \intertext{Since $T_\alpha$ is assumed $k$-canonical with offset $i+1$, for every $\tau' \in \Delta_k$ there exists $u \in T_\alpha$ such that $\type{R^{i+1}u}_k = \{ \tau'\}$.
   Accordingly, for every \emph{set} $\Delta' \subseteq \Delta_k$ there exists $S \subseteq T_\alpha$ such that $\type{R^{i+1}S}_k = \Delta'$:}
  \Leftrightarrow \; &
  \forall S \subseteq T_\alpha \colon \forall \tau_0 \in \Delta_0 \colon \exists w \in T_\beta \colon \type{R^{i}w}_0 = \{\tau_0\} \text{ and } \forall v \in R^{i}w \colon \type{Rv}_k = \type{R^{i+1}S}_k                             \\
  \intertext{For each $S$, gather the respective $w$ in a team $U \subseteq T_\beta$:}
  \Leftrightarrow \; &
  \forall S \subseteq T_\alpha \colon \exists U \subseteq T_\beta \colon \left( \forall \tau_0 \in \Delta_0 \colon \exists w \in U \colon \type{R^{i}w}_0 = \{\tau_0\}\right)                                                         \\
                     & \qquad\qquad  \text { and } \forall v \in R^{i}U \colon \type{Rv}_k = \type{R^{i+1}S}_k                                                                                                                        \\
  \Leftrightarrow \; &
  \forall S \subseteq T_\alpha : \exists U \subseteq T_\beta \colon U\text{ is $0$-canonical with offset }i                                                                                                                         \\
                     & \qquad\qquad  \text { and } \forall v \in R^{i}U \colon \type{Rv}_k = \type{R^{i+1}S}_k                                                                                                                        \\
  \intertext{By the base case $k = 0$, and since $U = {(T^{\alpha,\beta}_{S,U})}_\beta$:}
  \Leftrightarrow \; &
  \forall S \subseteq T_\alpha \colon \exists U \subseteq T_\beta \colon T^{\alpha,\beta}_{S,U} \vDash \rho^i_0(\beta) \text{ and }\forall v \in R^{i}U \colon \type{Rv}_k = \type{R^{i+1}S}_k                                        \\
  \intertext{By Theorem~\ref{thm:main-bisim}:}
  \Leftrightarrow \; &
  \forall S \subseteq T_\alpha \colon \exists U \subseteq T_\beta \colon T^{\alpha,\beta}_{S,U} \vDash \rho^i_0(\beta) \text{ and } \forall v \in R^{i}U \colon {(R^{i+1}T)}^{\alpha,\beta}_{R^{i+1}S,Rv} \vDash \chi^*_k(\alpha,\beta) \\
  \intertext{By Proposition~\ref{prop:of-scopes} (5.):}
  \Leftrightarrow \; &
  \forall S \subseteq T_\alpha \colon \exists U \subseteq T_\beta \colon T^{\alpha,\beta}_{S,U} \vDash \rho^i_0(\beta) \text{ and } \forall v \in R^{i}U \colon {(R^{i}T)}^{\alpha,\beta}_{R^{i}S,v} \vDash \Box \chi^*_k(\alpha,\beta) \\
  \intertext{By Proposition~\ref{prop:quantifiers} applied to ${(R^{i}T)}^{\alpha,\beta}_{R^{i}S,R^{i}U}$:}
  \Leftrightarrow \; &
  \forall S \subseteq T_\alpha \colon \exists U \subseteq T_\beta \colon T^{\alpha,\beta}_{S,U} \vDash \rho^i_0(\beta) \text{ and } {(R^{i}T)}^{\alpha,\beta}_{R^{i}S,R^{i}U}  \vDash \qa{\beta} \Box \chi^*_k(\alpha,\beta)                \\
  \intertext{Again by Proposition~\ref{prop:of-scopes} (5.) and Proposition~\ref{prop:quantifiers}:}
  \Leftrightarrow \; &
  \forall S \subseteq T_\alpha \colon \exists U \subseteq T_\beta \colon T^{\alpha,\beta}_{S,U}\vDash \rho^i_0(\beta) \text{ and } R^i\left(T^{\alpha,\beta}_{S,U}\right)  \vDash \qa{\beta} \Box \chi^*_k(\alpha,\beta)            \\
  \Leftrightarrow \; &
  \forall S \subseteq T_\alpha \colon \exists U \subseteq T_\beta \colon  T^{\alpha,\beta}_{S,U} \vDash \rho^i_0(\beta) \land \Box^i \qa{\beta} \Box \chi^*_k(\alpha,\beta)                                                         \\
  \Leftrightarrow \; &
  T \vDash \qaa{\alpha} \, \qee{\beta} \, (\rho^i_0(\beta) \land \Box^i \qa{\beta} \Box \chi^*_k(\alpha,\beta))                                                                                                                     \\
  \Leftrightarrow \; &
  T \vDash \rho^i_{k+1}(\alpha,\beta)\text{.}\tag*{\qedhere}
 \end{align*}
\end{proof}

\subsection{Enforcing scopes}

As the next step, we lift the restriction of the $\fraka_i$ being scopes \emph{a priori}.
In a sense, this condition is definable in $\MTL$ as well.
For this, let $\Psi \subseteq \PS$ be disjoint from $\Phi$.
Then the formula below ensures that $\Psi$ is a set of disjoint scopes "up to height $k$".
\begin{align*}
 \scopes_{k}(\Psi) & \dfn \bigwedge_{\substack{x,y \in \Psi \\x\neq y }}\neg (x \land y) \land \bigwedge_{i=1}^{k} \Big((x \land \Box^i x) \lor (\neg x \land \Box^i \neg x )\Big)\text{.}
\end{align*}

The definition up to height $k$ is sufficient for our purposes, which follows from the next lemma.
\begin{lem}\label{lem:model-restrict-height}
 If $\varphi \in \MTL_k$, then $\varphi$ is satisfiable if and only if $\varphi \land \Box^{k+1}\bot$ is satisfiable.
\end{lem}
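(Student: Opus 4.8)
The plan is to treat the two directions of the biconditional separately. The implication from right to left is immediate: any model $(\calK,T)$ of $\varphi \land \Box^{k+1}\bot$ satisfies $\varphi$ as well by the semantics of $\land$, so $\varphi$ is satisfiable. The content lies in the converse, where I would exploit that $\varphi \in \MTL_k$ has modal depth at most $k$, and hence can be witnessed in a model of bounded height in which $\Box^{k+1}\bot$ holds automatically.

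Concretely, suppose $\varphi$ is satisfiable. By the finite tree model property already established in Corollary~\ref{cor:height-model}, $\varphi$ has a finite model $(\calK,T)$ such that $\calK$ is a directed forest of height at most $k$ whose set of roots is exactly $T$. I would then show $(\calK,T) \vDash \Box^{k+1}\bot$ directly from this structure. Unwinding the semantics of the modality gives $(\calK,T) \vDash \Box^{k+1}\bot$ if and only if $(\calK,R^{k+1}T) \vDash \bot$, and since a team satisfies the $\ML$-formula $\bot$ precisely when it is empty, this holds if and only if $R^{k+1}T = \emptyset$. As every path in a forest of height at most $k$ traverses at most $k$ edges, no world is reachable from a root in $k+1$ steps, so indeed $R^{k+1}T = \emptyset$. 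Combining $(\calK,T) \vDash \varphi$ with $(\calK,T) \vDash \Box^{k+1}\bot$ yields $(\calK,T) \vDash \varphi \land \Box^{k+1}\bot$, so the conjunction is satisfiable.

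The only step demanding any care is verifying that $\Box^{k+1}\bot$ faithfully expresses the emptiness of the $(k+1)$-st image team $R^{k+1}T$; this is a routine induction on the definitions of $\Box$ and of satisfaction of an $\ML$-atom by a team, together with the observation that the degenerate case $T = \emptyset$ is immediate. I do not expect a genuine obstacle here, since the substantive work---that a satisfiable $\MTL_k$-formula always admits a forest model of height at most $k$---has already been discharged in Corollary~\ref{cor:height-model} via the staircase construction, and the present lemma merely reads off the consequence $R^{k+1}T = \emptyset$ for such models.
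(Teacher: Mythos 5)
Your proof is correct and follows essentially the same route as the paper's: both directions are handled identically, invoking Corollary~\ref{cor:height-model} to obtain a forest model of height at most $k$ and then observing that $R^{k+1}T=\emptyset$, so the empty team satisfies the $\ML$-formula $\bot$ and hence $\Box^{k+1}\bot$ holds. The extra care you devote to unwinding the semantics of $\Box^{k+1}\bot$ is fine but not needed beyond what the paper states.
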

\begin{proof}
 As the direction from right to left is trivial, suppose $\varphi$ is satisfiable.
 By Corollary~\ref{cor:height-model}, it then has a model $(\calK,T)$ that is a directed forest of height at most $k$.
 But then $(\calK,T) \vDash \Box^{k+1}\bot$, since $R^{k+1}T = \emptyset$ and $(\calK,\emptyset)$ satisfies all $\ML$-formulas, including $\bot$.
\end{proof}

\begin{thm}\label{thm:bounded-canon}
 $\canon_k \land \scopes_{k}(\{\fraka_0,\ldots,\fraka_k\}) \land \Box^{k+1}\bot$ is satisfiable, but has only $k$-staircases as models.
\end{thm}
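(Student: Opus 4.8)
The plan is to prove the two assertions separately: satisfiability follows by exhibiting the concrete $k$-staircase supplied by Proposition~\ref{prop:canon-sat}, while the "only $k$-staircases" direction reduces to Theorem~\ref{thm:canon-formula} once the a-priori scope hypothesis of that theorem has been recovered from the conjuncts $\scopes_k(\{\fraka_0,\ldots,\fraka_k\})$ and $\Box^{k+1}\bot$. The whole difficulty is concentrated in this recovery step, so I would first dispatch satisfiability and then invest the work into establishing the scope property on all reachable edges.

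For satisfiability I would take the finite $k$-staircase $(\calK,T)$ from Proposition~\ref{prop:canon-sat}, in which $\fraka_0,\ldots,\fraka_k$ are genuine disjoint scopes, $\calK$ is a directed forest of height at most $k$, and the set of roots is exactly $T$. By Theorem~\ref{thm:canon-formula} this model satisfies $\canon_k$. It satisfies $\scopes_k(\{\fraka_0,\ldots,\fraka_k\})$ because genuine disjoint scopes verify every conjunct: disjointness gives $\neg(\fraka_a\land\fraka_b)$ at every world of $T$, and the scope property propagates each marker along edges, so the split $T=T_{\fraka_a}\cup T_{\neg\fraka_a}$ witnesses each disjunct $(\fraka_a\land\Box^i\fraka_a)\lor(\neg\fraka_a\land\Box^i\neg\fraka_a)$. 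Finally, height at most $k$ with roots $T$ forces $R^{k+1}T=\emptyset$, hence $(\calK,T)\vDash\Box^{k+1}\bot$ exactly as in the proof of Lemma~\ref{lem:model-restrict-height}.

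For the converse, let $(\calK,T)$ be an arbitrary model of the conjunction. Since $\MTL$ satisfaction and the staircase property of Definition~\ref{def:staircase} are invariant under passing to the submodel generated by $T$, I may assume every world of $\calK$ is reachable from $T$; then $\Box^{k+1}\bot$ gives $R^{k+1}T=\emptyset$, so every world lies within $k$ steps of $T$. The key is a propagation lemma: the conjunct $(\fraka_a\land\Box^i\fraka_a)\lor(\neg\fraka_a\land\Box^i\neg\fraka_a)$ splits $T$ into $T_{\fraka_a}\cup T_{\neg\fraka_a}$ and forces $R^iT_{\fraka_a}\vDash\fraka_a$ and $R^iT_{\neg\fraka_a}\vDash\neg\fraka_a$; consequently the truth value of $\fraka_a$ at any $v\in R^i\{w\}$ equals its value at the root $w\in T$. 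Now take an arbitrary edge $(u,v)$; choosing a shortest path from a root $w\in T$ to $u$ of length $j$, one has $j\leq k-1$ (else $v\in R^{k+1}\{w\}\subseteq R^{k+1}T=\emptyset$), and applying the lemma for $i=j$ and $i=j+1$ yields $u\vDash\fraka_a\Leftrightarrow w\vDash\fraka_a\Leftrightarrow v\vDash\fraka_a$, which is the scope property; disjointness of the $\fraka_a$ throughout the reachable part follows likewise from $\neg(\fraka_a\land\fraka_b)$ at $T$ together with propagation. With $\fraka_0,\ldots,\fraka_k$ now genuine disjoint scopes and $(\calK,T)\vDash\canon_k$, Theorem~\ref{thm:canon-formula} gives that $(\calK,T)$ is a $k$-staircase.

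The main obstacle is exactly this last recovery of the scope hypothesis: Theorem~\ref{thm:canon-formula} is stated under the standing assumption that the $\fraka_i$ are disjoint scopes in $\calK$, so one must argue that $\scopes_k$ and $\Box^{k+1}\bot$ enforce this property on every reachable edge and not merely on the root team. The subtle point inside the propagation argument is that a world may be reachable from several roots; the formula must force all such roots to agree on each marker, since otherwise the same world would be required to satisfy both $\fraka_a$ and $\neg\fraka_a$. It is precisely the conjunction over all heights $i=1,\ldots,k$, combined with the height bound, that makes each marker value well defined and the scope property hold.
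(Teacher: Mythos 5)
Your proof is correct and follows essentially the same route as the paper's: satisfiability via Proposition~\ref{prop:canon-sat} together with Theorem~\ref{thm:canon-formula} and Lemma~\ref{lem:model-restrict-height}, and the converse by first recovering the disjoint-scope property from $\scopes_k$ and $\Box^{k+1}\bot$ and then applying Theorem~\ref{thm:canon-formula}. The only difference is that the paper states the recovery step in a single sentence, whereas you spell out the propagation argument (that each conjunct $(\fraka_a\land\Box^i\fraka_a)\lor(\neg\fraka_a\land\Box^i\neg\fraka_a)$ forces the split $T=T_{\fraka_a}\cup T_{\neg\fraka_a}$ and hence constancy of each marker along every path of length at most $k$) --- a detail the paper leaves implicit, and which you handle correctly.
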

\begin{proof}
 By combining Proposition~\ref{prop:canon-sat}, Theorem~\ref{thm:canon-formula} and Lemma~\ref{lem:model-restrict-height}, the formula is satisfiable.
 Since in every model $(\calK,T)$ the propositions $\fraka_0,\ldots,\fraka_k$ must be disjoint scopes due to $\Box^{k+1} \bot$ and $\scopes_k$, we can apply Theorem~\ref{thm:canon-formula}.
\end{proof}

As for bisimilarity, it is open whether $(\Phi,k)$-canonicity can be defined in $\MTL^\Phi_k$ \emph{efficiently} without restricting the models to those with scopes.
Note that the results of this section alone do not imply that the brute force algorithm given in Theorem~\ref{thm:k-membership} is optimal, as there could possibly be a satisfiability algorithm that does not need to construct a model.
To show proper complexity theoretic hardness, we need to encode non-elementary computations in such models, to which we will proceed in the next sections.

\section{Defining an order on types}%
\label{sec:order}

In the previous section, we enforced $k$-canonicity with a formula, \ie, such that $\size{\Delta_k}$ different types are contained in the team.
In order to encode computations of length $\size{\Delta_k}$, we additionally need to be able to talk about an ordering of $\Delta_k$.

Let us call any finite strict linear ordering simply an \emph{order}.
We specify an order $\prec_k$ on $\Delta_k$,
and
 analogously to team bisimilarity, an order $\prec^*_k$ on $\pow{\Delta_k}$.
To begin with, let us first agree on some arbitrary order $<$ on $\Phi$, say, $p_1 < p_2 < \cdots < p_{\size{\Phi}}$.
Furthermore, if $\sqsubset$ is some order on $X$, then the \emph{lexicographic order} $\sqsubset^*$ on $\pow{X}$ is defined by
\begin{align*}
 X_1 \sqsubset^* X_2 \text{ iff }\exists x \in X_2 \setminus X_1 \text{ such that }\forall x' \in X \colon (x \sqsubset x') \Rightarrow (x' \in X_1 \Leftrightarrow x' \in X_2)\text{.}
\end{align*}
For example, let $X = \{0,1\}$ and $0 \sqsubset 1$.
Then $\emptyset \sqsubset^* \{ 0\} \sqsubset^* \{1\} \sqsubset^* \{0,1\}$.
The order $\prec_k$ depends on the propositions true in a world, and otherwise recursively on the lexicographic order of the image team:
\begin{align*}
 \tau \prec_0 \tau' \;     & \Leftrightarrow\; \tau \cap \Phi <^* \tau' \cap \Phi\text{,}                                                                                                \\
 \tau \prec_{k+1} \tau' \; & \Leftrightarrow\; \tau \cap \Phi <^* \tau' \cap \Phi \text{ or }(\tau \cap \Phi = \tau' \cap \Phi\text{ and }\calR \tau \;\prec_{k}^*\; \calR\tau')\text{.}
\end{align*}
It is easy to verify by induction that $\prec_k$ and $\prec^*_k$ are orders on $\Delta_k$ and $\pow{\Delta_k}$, respectively.

The next step is to prove that $\prec_k$ and $\prec^*_k$ are (efficiently) definable in $\MTL_k$.
For this, we pursue the same approach as for $\chi_k$ and $\chi^*_k$ in Section~\ref{sec:encoding}, and show that $\prec_k$ and $\prec^*_k$ are definable in formulas $\zeta_k$ and $\zeta^*_k$ in a mutually recursive fashion.
Since order is a binary relation, the formulas below are once more parameterized by two scopes.
\begin{align*}
 \zeta_0(\alpha,\beta)     & \dfn \bigvee_{p \in \Phi} \Big[ (\alpha \hook \neg p) \land (\beta \hook p) \land \bigwedge_{\substack{q \in \Phi                                                      \\q < p}} (\alpha \lor \beta) \hook \dep{q}\Big]\\[1.5mm]
 \zeta_{k+1}(\alpha,\beta) & \dfn \zeta_0(\alpha,\beta) \,\ovee\, \chi_0(\alpha,\beta) \land \, \Box \zeta^*_k(\alpha,\beta)                                                                        \\[2mm]
 \zeta^*_{k}(\alpha,\beta) & \dfn \qe{\fraka_k} \left(\qe{\beta} \chi_k(\fraka_k,\beta) \right) \land \left(\negg \qe{\alpha} \chi_k(\fraka_k,\alpha)\right)                                        \\
                           & \qquad\land \Big( \big (\chi^*_k(\alpha,\beta) \land (\alpha \lor \beta) \big) \,\lor \, \big(\qa{\alpha\lor\beta} \negg \zeta_k(\fraka_k,\alpha\lor\beta) \big) \Big)
\end{align*}

Note that we make use of the scopes $\fraka_0,\ldots,\fraka_k$ in the formula, and in the following we restrict ourselves to $k$-staircase models.
Moreover, in the subformula $\zeta_k(\fraka_k,\alpha\lor\beta)$, we use the fact that $\alpha\lor\beta$ is a scope whenever $\alpha,\beta$ are scopes.

We require the next lemma for the correctness of $\zeta_k$ and $\zeta^*_k$.
Intuitively, it states that $\MTL_k$ is invariant under substitution of "locally equivalent" $\ML$-formulas.
\begin{lem}\label{lem:substitution}
 Let $\alpha,\beta \in \ML$ and $\varphi \in \MTL_k$.
 Let $T$ be a team such that $R^{i}T \vDash \alpha \leftrightarrow \beta$ for all $i\in\{0,\ldots,k\}$.
 Then $T \vDash \varphi$ if and only if $T \vDash \mathrm{Sub}(\varphi,\alpha,\beta)$, where $\mathrm{Sub}(\varphi,\alpha,\beta)$ is the formula obtained from $\varphi$ by substituting every occurrence of $\alpha$ with $\beta$.
\end{lem}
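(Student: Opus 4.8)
The plan is to prove the statement by structural induction on $\varphi$, with the key reduction being a pointwise \emph{replacement lemma} for plain $\ML$ under Kripke semantics. Since the Boolean negation $\negg$ occurs only at the $\MTL$ level while $\land,\lor,\Box,\Diamond$ are shared with $\ML$, every occurrence of the $\ML$-formula $\alpha$ inside $\varphi$ lies within a maximal $\ML$-subformula (an \emph{atom}) of $\varphi$; hence $\mathrm{Sub}(\cdot,\alpha,\beta)$ either acts entirely inside such an atom or commutes with the genuine $\MTL$-connectives. It is therefore natural to carry the modal-depth bound explicitly: I would prove, for every $k'\ge 0$, every $\varphi\in\MTL_{k'}$, and every team $T$ with $R^iT\vDash\alpha\leftrightarrow\beta$ for all $i\in\{0,\ldots,k'\}$, that $T\vDash\varphi$ iff $T\vDash\mathrm{Sub}(\varphi,\alpha,\beta)$; the assertion of the lemma is the case $k'=k$.

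First I would establish the auxiliary claim for $\ML$-formulas under Kripke semantics: if $\gamma\in\ML_m$ and $w$ is a point such that every world in $R^i\{w\}$ satisfies $\alpha\leftrightarrow\beta$ for all $i\in\{0,\ldots,m\}$, then $w\vDash\gamma$ iff $w\vDash\mathrm{Sub}(\gamma,\alpha,\beta)$. This goes by induction on $\gamma$. If $\gamma=\alpha$, the case $i=0$ gives $w\vDash\alpha\leftrightarrow\beta$ directly, so no recursion into $\beta$ is needed (in particular the possibly larger modal depth of $\beta$ is irrelevant). Otherwise $\mathrm{Sub}$ commutes with the outermost connective; the Boolean cases are immediate, and for $\gamma=\Box\gamma_1$ or $\gamma=\Diamond\gamma_1$ I would pass to an arbitrary $v\in Rw$, observe $R^i\{v\}\subseteq R^{i+1}\{w\}$, so the hypothesis for $w$ up to depth $m$ yields the hypothesis for $v$ up to depth $m-1$, and invoke the induction hypothesis on $\gamma_1\in\ML_{m-1}$.

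For the main induction on $\varphi$, the base case is $\varphi\in\ML$ with $\md(\varphi)\le k'$: by flatness, $T\vDash\varphi$ iff every $w\in T$ satisfies $\varphi$ in Kripke semantics, and for each such $w$ the inclusion $R^i\{w\}\subseteq R^iT$ transfers the team hypothesis to the pointwise hypothesis needed by the auxiliary claim. In the inductive step the top connective is a genuine $\MTL$-connective, so substitution commutes with it, and it remains to check that the depth-shifted hypothesis is preserved under the team transformations in the semantics. For $\negg$ and $\land$ the team is unchanged; for $\psi_1\lor\psi_2$ the witnessing subteams $S,U\subseteq T$ satisfy $R^iS,R^iU\subseteq R^iT$, so the hypothesis survives and $\psi_1,\psi_2\in\MTL_{k'}$; for $\Box\psi$ I would use $R^i(RT)=R^{i+1}T$ together with $\psi\in\MTL_{k'-1}$, which exactly matches the index range $\{0,\ldots,k'-1\}$; and for $\Diamond\psi$ every successor team $S$ satisfies $S\subseteq RT$, hence $R^iS\subseteq R^{i+1}T$, again matching the shifted range. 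Applying the induction hypothesis to the immediate subformulas then closes each case.

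The routine part is the commutation of $\mathrm{Sub}$ with the connectives and the flatness reduction; the one point demanding care---and the crux of the argument---is the \emph{depth bookkeeping}: verifying that each descent through $\Box$ or $\Diamond$ shifts the guaranteed range of indices $i$ for which $R^iT\vDash\alpha\leftrightarrow\beta$ by exactly one, in lockstep with the drop in modal depth, and that subteam selection (in $\lor$ and $\Diamond$) never leaves the range on which the equivalence is assumed. This is precisely what forces the explicit depth parameter $k'$ and the two-level ($\ML$-atom versus $\MTL$-connective) induction.
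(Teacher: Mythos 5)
Your proof is correct and follows essentially the same route as the paper's: a structural induction on $\varphi$ whose crux is exactly the depth bookkeeping you identify ($R^iS\subseteq R^iT$ for subteams, $R^i(RT)=R^{i+1}T$ and $R^iS\subseteq R^{i+1}T$ for the modalities, combined with downward closure of the $\ML$-formula $\alpha\leftrightarrow\beta$). The only difference is organizational: you factor the classical connectives into a separate pointwise replacement lemma for $\ML$ invoked via flatness, whereas the paper handles them inside the same induction by passing to singleton teams; this changes nothing of substance.
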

\begin{proof}
By straightforward induction; see the appendix.
\end{proof}

The following theorem states that in the class of $k$-staircase models (see the previous section) $\zeta_k$ and $\zeta^*_k$ define the required orders.

\begin{thm}\label{thm:order-theorem}
 Let $k \geq 0$, and let $(\calK,T)$ be a $k$-staircase with disjoint scopes
 $\alpha,\beta,\fraka_0,\ldots,\fraka_{k}$.
 If $w \in T_\alpha$ and $v \in T_\beta$, then
 \begin{alignat*}{4}
   &  & T^{\alpha,\beta}_{w,v} & \;\vDash\; \zeta_k(\alpha,\beta) \;   &  & \text{ if and only if } \; & \type{w}_k        & \prec_k \type{v}_k\text{,}         \\
   &  & T                      & \;\vDash\; \zeta^*_k(\alpha,\beta) \; &  & \text{ if and only if } \; & \type{T_\alpha}_k & \prec^*_k \type{T_\beta}_k\text{.}
 \end{alignat*}
 Furthermore, both $\zeta_k(\alpha,\beta)$ and $\zeta^*_k(\alpha,\beta)$ are $\MTL_k$-formulas that are constructible in space $\bigO{\log(k + \size{\Phi} + \size{\alpha}+\size{\beta})}$.
\end{thm}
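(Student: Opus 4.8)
The plan is to prove both equivalences simultaneously by a mutual induction on $k$, in exact parallel to the structure of Theorem~\ref{thm:main-bisim}: let $P(k)$ abbreviate the first equivalence (for $\zeta_k$) and $Q(k)$ the second (for $\zeta^*_k$). The dependency pattern of the defining equations dictates the order $P(0)$, then $P(k)\Rightarrow Q(k)$, then $Q(k)\Rightarrow P(k+1)$, where throughout I also invoke $\chi_k,\chi^*_k$ from Theorem~\ref{thm:main-bisim}. Before starting I would record two bookkeeping facts about staircases. First, the image of a $(k+1)$-staircase is a $k$-staircase: since each $\fraka_i$ is a scope, $(RT)_{\fraka_i}=R(T_{\fraka_i})$, and as $R^{(k+1)-i}w=R^{k-i}(Rw)$ is nonempty and homogeneous of type $\tau$ whenever $\type{R^{(k+1)-i}w}_i=\{\tau\}$, some successor of $w$ realises $\{\tau\}$ at offset $k-i$; hence $i$-canonicity with offset $(k+1)-i$ descends to offset $k-i$. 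Second, deleting only $\alpha$- and $\beta$-worlds from $T$, or shrinking the $\alpha,\beta,\fraka_k$-parts, leaves the lower scopes $\fraka_0,\ldots,\fraka_{k-1}$ fully intact, so such subteams still satisfy the $k$-staircase conditions for those indices—which is all that $\zeta_k$ (using only $\fraka_0,\ldots,\fraka_{k-1}$) needs. These two facts keep the induction hypotheses applicable after a $\Box$-descent or a subteam selection.

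For the base case I would unfold $\zeta_0$ on $T^{\alpha,\beta}_{w,v}$ using the semantics of $\hook$ and of the constancy atom: $\alpha\hook\neg p$ reads off $w\not\vDash p$, $\beta\hook p$ reads off $v\vDash p$, and $(\alpha\lor\beta)\hook\dep{q}$ says $w,v$ agree on $q$; the disjunction over $p$ then matches the definition of $<^*$ applied to $V^{-1}(w)\cap\Phi$ and $V^{-1}(v)\cap\Phi$, that is $\type{w}_0\prec_0\type{v}_0$. The step $Q(k)\Rightarrow P(k+1)$ mirrors ``$\chi^*_k\Rightarrow\chi_{k+1}$'': on $T^{\alpha,\beta}_{w,v}$ the team-Boolean disjunction $\ovee$ splits $\zeta_{k+1}$ into the propositional comparison $\zeta_0$ and the conjunct $\chi_0\land\Box\zeta^*_k$. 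By the base cases $\zeta_0$ yields $V^{-1}(w)\cap\Phi<^*V^{-1}(v)\cap\Phi$ and $\chi_0$ yields equality of these sets; and $\Box\zeta^*_k$, evaluated on $R(T^{\alpha,\beta}_{w,v})=(RT)^{\alpha,\beta}_{Rw,Rv}$ (Proposition~\ref{prop:of-scopes}) with $RT$ a $k$-staircase, yields by the hypothesis $Q(k)$ that $\type{Rw}_k\prec^*_k\type{Rv}_k$, i.e. $\calR\type{w}_{k+1}\prec^*_k\calR\type{v}_{k+1}$ (Proposition~\ref{prop:det-types}). Reading the two disjuncts against the definition of $\prec_{k+1}$ closes this step.

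The crux is $P(k)\Rightarrow Q(k)$, the correctness of $\zeta^*_k$. I would read its outermost $\qe{\fraka_k}$ as choosing a pivot world $t\in T_{\fraka_k}$; because $T_{\fraka_k}$ is $k$-canonical, every candidate pivot type is available. The conjuncts $\qe{\beta}\chi_k(\fraka_k,\beta)$ and $\negg\qe{\alpha}\chi_k(\fraka_k,\alpha)$ then state, via Theorem~\ref{thm:main-bisim}, that $\tau\dfn\type{t}_k$ lies in $\type{T_\beta}_k\setminus\type{T_\alpha}_k$—the lexicographic witness. The decisive point is the splitting disjunction in the last conjunct: the clause $\chi^*_k(\alpha,\beta)\land(\alpha\lor\beta)$ forces its side $S$ to consist solely of $\alpha$- and $\beta$-worlds, which in particular pushes the unique $\fraka_k$-world $t$ (and every lower-scope world) into the other side $U$, so that $U_{\fraka_k}=\{t\}$. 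On $U$ the quantifier $\qa{\alpha\lor\beta}\,\negg\zeta_k(\fraka_k,\alpha\lor\beta)$, interpreted through $P(k)$ applied to each $U^{\alpha\lor\beta}_z$ (legitimate by the subteam-preservation fact, treating $\alpha\lor\beta$ as a scope and invoking Lemma~\ref{lem:substitution} to handle this compound scope), says exactly that every $\alpha$- or $\beta$-world in $U$ has type $\preceq_k\tau$; hence all worlds of type $\succ_k\tau$ lie in $S$, and $\chi^*_k(\alpha,\beta)$ on $S$ forces $\type{S_\alpha}_k=\type{S_\beta}_k$, i.e. the strictly-above-$\tau$ types of $T_\alpha$ and $T_\beta$ coincide. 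This is precisely the tail condition in the definition of $\prec^*_k$. For the converse I would, given the lexicographic witness $\tau$, choose $t\in T_{\fraka_k}$ of type $\tau$ and split $T^{\fraka_k}_t$ into the above-$\tau$ $\alpha\lor\beta$-worlds ($S$) and everything else ($U$), verifying each conjunct by the same equivalences read backwards.

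The main obstacle I anticipate is exactly this $\zeta^*_k$ analysis: keeping straight, in both directions, that the single $\fraka_k$-pivot is forced onto the $U$-side by the $(\alpha\lor\beta)$-conjunct, that the $U$-clause cuts the team at type $\tau$ while the $S$-clause enforces agreement above $\tau$, and that every recursive call ($\chi_k$, $\chi^*_k$, $\zeta_k$) is issued on a subteam still meeting its canonicity hypotheses—this last point being where the staircase offsets and the scope-preservation bookkeeping must be checked carefully, and where Lemma~\ref{lem:substitution} is needed to legitimise the compound scope $\alpha\lor\beta$. The remaining claims are routine: as for Theorem~\ref{thm:main-bisim}, an easy induction gives $\md(\zeta_k)=\md(\zeta^*_k)=k$, and the formulas are assembled by a bounded-depth recursion that reuses the logspace constructions of $\chi_k,\chi^*_k$, so they are constructible in space $\bigO{\log(k+\size{\Phi}+\size{\alpha}+\size{\beta})}$.
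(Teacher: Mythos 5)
Your proposal is correct and follows essentially the same route as the paper: a mutual induction in which $\zeta_0$ is verified directly, $\zeta^*_k$ is reduced to $\zeta_k$ via the $\fraka_k$-pivot argument (the paper's Lemma~\ref{lem:order1}, including the use of Lemma~\ref{lem:substitution} for the compound scope $\alpha\lor\beta$ and the splitting that isolates the types above the pivot), and $\zeta_{k+1}$ is reduced to $\zeta^*_k$ on the image team (the paper's Lemma~\ref{lem:order2}, with the same case distinction on $\prec_0$ and the observation that images and appropriate subteams of staircases remain staircases). The details you flag as the crux are exactly the ones the paper isolates in its Claims (a)--(c).
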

We first give a rough idea of the proof, and after a series of required lemmas fully prove the theorem.
The definition of $\zeta_{k+1}$ simply follows the definition of $\prec_{k+1}$.
Furthermore, the formula $\zeta^*_k$ implements the lexicographic order $\prec^*_k$ as follows.
As shown in Figure~\ref{fig:order}, we first choose some $z  \in T_{\fraka_k}$ that acts as an \emph{pivot} to determine if $\type{T_\alpha}_k \prec^*_k \type{T_\beta}$, in the sense that it is the $\prec_k$-maximal type in which $T_\alpha$ and $T_\beta$ differ.\footnote{Since the pivot is selected from $T_{\fraka_k}$, at this point it is crucial that the underlying structure is a $k$-staircase.}
The first line of $\zeta^*_k$ indeed expresses that $\type{z}_k \in \type{T_\beta}_k \setminus \type{T_\alpha}_k$.

The disjunction in the second line intuitively states that we then can "split off" the subteam of $T_\alpha \cup T_\beta$ consisting of the elements $\prec_k$-greater than $z$ (the solid green area in Figure~\ref{fig:order}), while $\chi^*_k$ ensures that they agree on the contained types (this reflects the part after the quantifier in the definition of $\sqsubset^*$).
To achieve this, the subformula $\qa{\alpha\lor\beta} \negg \zeta_k(\fraka_k,\alpha\lor\beta)$ stipulates that any "remaining" elements from $T_\alpha \cup T_\beta$ possess only types not $\prec_k$-greater than $\type{z}_k$ (the dashed green area in the figure).

Here, Lemma~\ref{lem:substitution} is applied, as it ensures that after processing $\qa{\alpha\lor\beta}$ the formula $\zeta_k(\fraka_k,\alpha\lor\beta)$ in fact behaves as either $\zeta_k(\fraka_k,\alpha)$ or $\zeta_k(\fraka_k,\beta)$; and hence behaves correctly by induction hypothesis.

\begin{figure}[t]\centering
\begin{tikzpicture}[node distance=2mm,scale=.85]
\tikzset{world/.style={draw,circle,inner sep=.5mm,black,fill}}
\tikzset{dummy/.style={inner sep=.6mm}}
\tikzset{team/.style={draw,rounded corners,thick,inner sep=2mm}}

\foreach \i in {1,3} {
    \node[world] (w\i) at (\i,0) {};
}
\foreach \i in {2,6,4,5} {
    \node[world,cross out] (w\i) at (\i,0) {};
}
\foreach \i in {1,...,5} {
    \node[opacity=.5] at ({\i+0.55},0) {$\scriptstyle\succ_k$};
}

\node[right= of w6] (tb) {$\blue{T_\beta} \;\hat=\; \mathbf{10\underline{1}}000$};

\begin{scope}[yshift=4cm]
\foreach \i in {1,4,5} {
    \node[world] (v\i) at (\i,0) {};
}
\foreach \i in {2,3,6} {
    \node[world,cross out] (v\i) at (\i,0) {};
}
\foreach \i in {1,...,5} {
    \node[opacity=.5] at ({\i+0.55},0) {$\scriptstyle\succ_k$};
}

\node[right= of v6] (ta) {$\red{T_\alpha} \;\hat=\; \mathbf{10\underline{0}}110$};
\end{scope}

\node[dummy] at (7,2.2) (lpos) {};
\node[dummy] at (8.5,2.2) (rpos) {};
\node[world] at (7.5,2.2) (pos) {};
\node[team,fit=(lpos)(rpos)] {};
\node[team,fit=(pos),fill=black,fill opacity=.05,draw=black,draw opacity=.4] {};

\node[team,fit=(w1)(w6),fill=blue,fill opacity=.1,draw=blue] (teamA){};
\node[team,fit=(v1)(v6),fill=red,fill opacity=.1,draw=red] (teamB){};
\node[below right=2mm and -2mm of rpos,color=black] (tc) {$T_{\fraka_k}$};
\node[right=2mm of pos,color=black] {$z$};

\draw[dashed,thick,draw=green!50!black]
(ta.east) to[out=0,in=0] node[color=green!50!black,right] {$\prec^*_k$\;?} (tb.east)
;

\node[team,fit=(w1)(v2),fill=green,fill opacity=.05,draw=green!40!black,draw opacity=.4,inner sep=1mm] (comp1){};

\node[team,fit=(w3)(v6),fill=green,fill opacity=.05,draw=green!40!black,draw opacity=.4,inner sep=1mm,dashed] (comp2){};

\draw[dotted,thick,draw=green!50!black,->]
(pos) edge[bend right=10] node[color=green!50!black,above,pos=.5] {$\succ_k$} (comp1)
(pos) edge[bend right=10] node[color=green!50!black,above,pos=.6] {$\bis_k\;$} (w3)
(pos) edge[bend right=15] node[color=green!50!black,above,pos=.6] {$\quad\preceq_k$} ([yshift=1cm]comp2.east)
;
\end{tikzpicture}
\caption{The pivot $z \in T_{\fraka_k}$ determines that $\type{T_\alpha}_k \prec^*_k \type{T_\beta}_k$.
The subteam of $T_{\alpha\lor\beta}$ of worlds $\prec_k$-greater than $z$ must satisfy $\chi^*_k(\alpha,\beta)$.}\label{fig:order}
\end{figure}

\begin{defi}
 Let $k \geq 0$.
 Let $\alpha,\beta$ be disjoint scopes and $T$ a team in a Kripke structure.
 Then $\alpha$ and $\beta$ are called \emph{$\prec_k$-comparable in $T$} if for all $w \in T_\alpha,v\in T_\beta$
 \begin{align*}
  T^{\alpha,\beta}_{w,v} & \vDash \zeta_k(\alpha,\beta) \text{ iff } \type{w}_k \prec_k \type{v}_k\text{ and} \\
  T^{\alpha,\beta}_{w,v} & \vDash \zeta_k(\beta,\alpha) \text{ iff } \type{v}_k \prec_k \type{w}_k\text{.}
 \end{align*}
 Likewise, $\alpha$ and $\beta$ are \emph{$\prec^*_k$-comparable in $T$} if
 \begin{align*}
  T & \vDash \zeta^*_k(\alpha,\beta) \text{ iff }\type{T_\alpha}_k \prec^*_k \type{T_\beta}_k\text{ and}             \\
  T & \vDash \zeta^*_k(\beta,\alpha) \text{ iff }\type{T_\beta}_k \prec^*_k \type{T_\alpha}_k\text{.}\tag*{\qedhere}
 \end{align*}
\end{defi}

\medskip

The next lemma shows that the correctness of $\prec^*_k$ follows from that of $\prec_k$.

\begin{lem}\label{lem:order1}
 Suppose that $(\calK,T)$ is a $k$-staircase with disjoint scopes $\alpha,\beta,\fraka_0,\ldots,\fraka_{k}$.
 If both $\alpha$ and $\beta$ are $\prec_k$-comparable to $\fraka_k$ in all subteams $S$ of the form $T_{\fraka_0}\cup \cdots \cup T_{\fraka_{k-1}} \subseteq S \subseteq T$, then $\alpha$ and $\beta$ are $\prec_k^*$-comparable in $T$.
\end{lem}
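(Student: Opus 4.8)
The plan is to evaluate $\zeta^*_k(\alpha,\beta)$ on $T$ and match it clause‑for‑clause against the definition of the lexicographic order $\prec^*_k$. Since $\zeta^*_k$ begins with $\qe{\fraka_k}$, Proposition~\ref{prop:quantifiers} tells me the formula holds iff there is a \emph{pivot} $z \in T_{\fraka_k}$ with $T^{\fraka_k}_z$ satisfying the body. First I would read off the two leading conjuncts. Applying Proposition~\ref{prop:quantifiers} and then Theorem~\ref{thm:main-bisim} to the disjoint scopes $\fraka_k,\beta$ (resp.\ $\fraka_k,\alpha$), the conjunct $\qe{\beta}\chi_k(\fraka_k,\beta)$ expresses $\type{z}_k \in \type{T_\beta}_k$, and $\negg\qe{\alpha}\chi_k(\fraka_k,\alpha)$ expresses $\type{z}_k \notin \type{T_\alpha}_k$; here I use that $(T^{\fraka_k}_z)_{\fraka_k}=\{z\}$ so the relevant singleton selections are exactly the two‑point teams of Theorem~\ref{thm:main-bisim}. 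Thus the pivot witnesses $\type{z}_k \in \type{T_\beta}_k \setminus \type{T_\alpha}_k$, matching the existential witness in the definition of $\prec^*_k$.

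The core is the third conjunct, the splitting disjunction on $X \dfn T^{\fraka_k}_z$. I would first note that in any division $X = S_1 \cup S_2$ witnessing it, the requirement $S_1 \vDash \alpha \lor \beta$ confines $S_1$ to $T_\alpha \cup T_\beta$ (scopes are evaluated world‑wise), so the pivot $z$ and all non‑scope worlds—in particular the lower stairs $T_{\fraka_0},\dots,T_{\fraka_{k-1}}$—are forced into $S_2$. On $S_1$ the subformula $\chi^*_k(\alpha,\beta)$ gives $\type{(S_1)_\alpha}_k = \type{(S_1)_\beta}_k$ by Theorem~\ref{thm:main-bisim}. On $S_2$ I would unfold $\qa{\alpha\lor\beta}\negg\zeta_k(\fraka_k,\alpha\lor\beta)$ by Proposition~\ref{prop:quantifiers}: for each scope‑world $u$ of $S_2$, the selection $(S_2)^{\alpha\lor\beta}_u$ must falsify $\zeta_k(\fraka_k,\alpha\lor\beta)$. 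This is where the hypothesis enters: $S_2$ is of the form $T_{\fraka_0}\cup\cdots\cup T_{\fraka_{k-1}} \subseteq S_2 \subseteq T$, and on $(S_2)^{\alpha\lor\beta}_u$ the only retained scope‑world $u$ lies in a single scope, so—since scopes are edge‑invariant in a staircase, no $\beta$‑world (resp.\ $\alpha$‑world) is reachable within height $k$—$\alpha\lor\beta$ agrees with $\alpha$ (resp.\ $\beta$) up to height $k$. Lemma~\ref{lem:substitution} then replaces $\zeta_k(\fraka_k,\alpha\lor\beta)$ by $\zeta_k(\fraka_k,\alpha)$ or $\zeta_k(\fraka_k,\beta)$, and the assumed $\prec_k$‑comparability of $\alpha,\beta$ to $\fraka_k$ in $S_2$ yields that the conjunct forces $\type{u}_k \preceq_k \type{z}_k$ for every such $u$.

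Combining the two conjuncts, I would conclude that the third conjunct is satisfiable for this pivot iff the set of types $\prec_k$‑greater than $\type{z}_k$ occurring in $T_\alpha$ equals that occurring in $T_\beta$. Indeed, every world of type $\prec_k$‑greater than $\type{z}_k$ is barred from $S_2$, hence lies in $S_1$, where $\chi^*_k$ equates the $\alpha$‑ and $\beta$‑types; conversely the explicit split $S_1 \dfn \{\, w \in T_\alpha \cup T_\beta \mid \type{w}_k \succ_k \type{z}_k \,\}$, $S_2 \dfn X \setminus S_1$ realizes the disjunction as soon as those type sets coincide. This is precisely the universal clause of $\prec^*_k$. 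For the converse direction I would, given $\type{T_\alpha}_k \prec^*_k \type{T_\beta}_k$ with witness type $\tau$, use that $T_{\fraka_k}$ is $k$‑canonical with offset $0$ (as $T$ is a $k$‑staircase) to choose $z \in T_{\fraka_k}$ with $\type{z}_k = \tau$, and verify the three conjuncts using this split. The statement for $\zeta^*_k(\beta,\alpha)$ follows by exchanging $\alpha$ and $\beta$.

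The main obstacle I anticipate is the bookkeeping in the third conjunct: showing the pivot and lower stairs are forced into $S_2$, checking that the selections $(S_2)^{\alpha\lor\beta}_u$ fall under the comparability hypothesis, and—most delicately—the application of Lemma~\ref{lem:substitution} to collapse $\alpha\lor\beta$ to a single scope, which relies on the edge‑invariance of scopes in a staircase so that no world of the other scope is reachable within height $k$ from $u$.
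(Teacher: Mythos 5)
Your proposal is correct and follows essentially the same route as the paper's proof: it selects a pivot $z \in T_{\fraka_k}$ via $k$-canonicity, reads the first two conjuncts as $\type{z}_k \in \type{T_\beta}_k \setminus \type{T_\alpha}_k$ through Theorem~\ref{thm:main-bisim}, interprets the splitting disjunction as "the types $\prec_k$-above $\type{z}_k$ in $T_\alpha$ and $T_\beta$ coincide" (the paper's Claim~(b)), and uses Lemma~\ref{lem:substitution} together with edge-invariance of scopes to collapse $\zeta_k(\fraka_k,\alpha\lor\beta)$ to a single-scope instance covered by the hypothesis (the paper's Claim~(a)). Your handling of the lax split—worlds of type $\succ_k \type{z}_k$ being barred from the second disjunct's subteam—also correctly disposes of the overlapping-division subtlety that the paper addresses via its superteam $U'$ step.
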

\begin{proof}
Assuming $\calK,T,\alpha,\beta,\fraka_0,\ldots,\fraka_k$ as above, the proof is split into the following claims.

 \begin{claim}[a]
  The disjoint scopes $\alpha \lor \beta$ and $\fraka_k$ are $\prec_k$-comparable in any team $S$ that satisfies $T_{\fraka_0}\cup \cdots \cup T_{\fraka_{k-1}} \subseteq S \subseteq T$.
 \end{claim}
 \begin{cproof}
  Let $w \in S_{\alpha\lor\beta}$ and $v \in S_{\fraka_k}$.
  \Wloss $w \in S_\alpha$ (the case $w \in S_\beta$ works analogously).
  Then
  \begin{align*}
                      & S^{\alpha\lor\beta,\fraka_k}_{w,v} \vDash \zeta_k(\alpha\lor\beta,\fraka_k)                                                                                                                                                       \\
   \Leftrightarrow \; & S^{\alpha,\beta,\fraka_k}_{w,\emptyset,v} \vDash \zeta_k(\alpha\lor\beta,\fraka_k) \tag{Since $S^{\alpha\lor\beta,\fraka_k}_{w,v} = S^{\alpha,\beta,\fraka_k}_{w,\emptyset,v}$}                                                   \\
   \Leftrightarrow \; & S^{\alpha,\beta,\fraka_k}_{w,\emptyset,v} \vDash \zeta_k(\alpha,\fraka_k)\tag{By Lemma~\ref{lem:substitution}, as $\bigcup_{i=0}^k R^{i}S^{\alpha,\beta,\fraka_k}_{w,\emptyset,v} \vDash \alpha \leftrightarrow (\alpha\lor\beta)$} \\ 
   \Leftrightarrow \; & \type{w}_k \prec_k \type{v}_k\text{.}\tag{By assumption of the lemma}
  \end{align*}
  The case $\zeta_k(\fraka_k,\alpha \lor \beta)$ is symmetric.
 \end{cproof}

 For the remaining proof, we omit the subscript $k$ when referring to types and $\prec$.
 Furthermore, for all $\tau \in \Delta_k$, let $\type{T}^{\tau}$ denote the restriction of $\type{T}$ to types $\tau'$ such that $\tau' \succ \tau$.
 Intuitively, these types are the "more significant positions" for the lexicographic ordering.
 In the next claim, we essentially show that the second line in the definition of $\zeta^*_k(\alpha,\beta)$ can be expressed as a statement of the form $\type{T_\alpha}^\tau = \type{T_\beta}^\tau$.
 \begin{claim}[b]
  Let $T$ be a team and $\tau \in \Delta_k$.
  Then $\type{T_\alpha}^\tau = \type{T_\beta}^\tau$ if and only if there exists $S \subseteq T_{\alpha\lor\beta}$ such that $\type{S_\alpha} = \type{S_\beta}$ and $\type{w} \nsucc \tau$ for all $w \in T_{\alpha\lor\beta} \setminus S$.
 \end{claim}
 \begin{cproof}
  "$\Rightarrow$":
  Let $S \dfn \{ v \in T_{\alpha\lor\beta} \mid \type{v} \succ \tau  \}$.
  Then $\type{S_\alpha} = \type{T_\alpha}^\tau = \type{T_\beta}^\tau = \type{S_\beta}$.
  Moreover, for every $w \in T_{\alpha\lor\beta} \setminus S$ clearly $\type{w} \nsucc \tau$ holds.

  "$\Leftarrow$":
  Assume that $S$ exists as stated in the claim.
  By symmetry, we only prove $\type{T_\alpha}^\tau \subseteq \type{T_\beta}^\tau$.
  Consequently, let $w \in T_\alpha$ such that $\type{w} \in \type{T_\alpha}^\tau$.
  Then $\type{w} \succ \tau$ by definition.
  But then $w \notin T_{\alpha\lor\beta}\setminus S$.
  However, we have $w \in T_\alpha$, hence $w \in T_{\alpha\lor\beta}$, which only leaves the possibility $w \in S$.
  Combining $w \in S$ and $w \in T_\alpha$ yields $w \in S_\alpha$, which by assumption also implies $\type{w} \in \type{S_\beta}$.
  As $\type{S_\beta} \subseteq \type{T_\beta}$ and $\type{w} \succ \tau$, the membership $\type{w} \in \type{T_\beta}^\tau$ follows.
 \end{cproof}

 \begin{claim}[c]
   $\alpha$ and $\beta$ are $\prec^*_k$-comparable in $T$.
 \end{claim}
 \begin{cproof}
 Due to symmetry, we prove only that $T \vDash \zeta^*_k(\alpha,\beta)$ iff $\type{T_\alpha}_k \prec^*_k \type{T_\beta}_k$.
 \begin{align*}
                     & \type{T_\alpha} \prec^* \type{T_\beta}                                                                                                                                                                              \\
  \Leftrightarrow \; &
  \exists \tau \in \type{T_\beta} \setminus \type{T_\alpha} \colon \forall \tau' \in \Delta, \tau \prec \tau' \colon\tau' \in \type{T_\alpha} \Leftrightarrow \tau' \in \type{T_\beta}                                                     \tag{Definition of $\prec^*_k$}\\
  \Leftrightarrow \; &
  \exists \tau \in \type{T_\beta} \setminus \type{T_\alpha} \colon \type{T_\alpha}^{\tau} = \type{T_\beta}^{\tau}                                                                                                                    \tag{Definition of $\type{\cdot}^\tau$}      \\
  \intertext{Since $T_{\fraka_k}$ is $k$-canonical, for every $\tau \in \Delta$ there exists $z \in T_{\fraka_k}$ of type $\tau$:}
  \Leftrightarrow \; &
  \exists z \in T_{\fraka_k} \colon \type{T_\alpha}^{\type{z}} = \type{T_\beta}^{\type{z}} \text{ and }  \type{z} \in \type{T_\beta} \setminus \type{T_\alpha}                                                                             \\
  \Leftrightarrow \; &
  \exists z \in T_{\fraka_k} \colon \type{T_\alpha}^{\type{z}} = \type{T_\beta}^{\type{z}} \text{ and } \exists x \in T_\beta \colon \type{z} = \type{x} \text{ and }\nexists y \in T_\alpha \colon \type{z} = \type{y}                    \\
  \intertext{As $\alpha,\beta$ and $\fraka_k$ are disjoint, we have $T_\alpha = O_\alpha$, where $O \dfn T^{\fraka_k}_z$, and likewise $T_\beta = O_\beta$:}
  \Leftrightarrow \; &
  \exists z \in T_{\fraka_k} \colon \type{O_\alpha}^{\type{z}} = \type{O_\beta}^{\type{z}} \text{ and } \exists x \in O_\beta \colon \type{z} = \type{x} \text{ and }\nexists y \in O_\alpha \colon \type{z} = \type{y}                    \\
  \Leftrightarrow \; &
  \exists z \in T_{\fraka_k} \colon \exists x \in O_\beta \colon \type{z} = \type{x} \text{ and }\nexists y \in O_\alpha \colon \type{z} = \type{y}                                                                                       \\
                     & \text{ and } \exists S \subseteq O_{\alpha\lor\beta} \colon \type{S_\alpha} = \type{S_\beta} \text{ and } \forall w \in O_{\alpha\lor\beta}\setminus S \colon \type{z} \nprec \type{w}                 \tag{by Claim (b)}              \\
  \intertext{Clearly $S$ is a subteam of $O_{\alpha\lor\beta}$ if and only if it is a subteam of $O$ and satisfies $\alpha \lor \beta$:}
  \Leftrightarrow \; &
  \exists z \in T_{\fraka_k} \colon \exists x \in O_\beta \colon \type{z} = \type{x} \text{ and }\nexists y \in O_\alpha \colon \type{z} = \type{y}                                                                                        \\
                     & \text{ and } \exists S \subseteq O \colon  \type{S_\alpha} = \type{S_\beta} \text{ and } S \vDash \alpha \lor \beta \text{ and } \forall w \in O_{\alpha\lor\beta}\setminus S \colon \type{z} \nprec \type{w}       \\
  \intertext{Letting $U = O \setminus S$, we have $O_{\alpha\lor\beta} \setminus S = U_{\alpha\lor\beta}$:}
  \Leftrightarrow \; &
  \exists z \in T_{\fraka_k} \colon \exists x \in O_\beta \colon \type{z} = \type{x} \text{ and }\nexists y \in O_\alpha \colon \type{z} = \type{y} \text{ and } \exists S\subseteq O \colon                                               \\
                     & \type{S_\alpha} = \type{S_\beta} \text{ and } S \vDash \alpha\lor\beta  \text{ and } \exists U \subseteq O \colon U = O \setminus S \text{ and }  \forall w \in U_{\alpha\lor\beta} \colon \type{z} \nprec \type{w} \\
  \intertext{Clearly, the property $\forall w \in U_{\alpha\lor\beta} : \type{z} \nprec \type{w}$ is preserved when taking subteams of $U$. Hence, $U = O \setminus S$ satisfies it if and only if some (not necessarily proper) superteam $U'$ of $O \setminus S$ does:}
  \Leftrightarrow \; &
  \exists z \in T_{\fraka_k} \colon \exists x \in O_\beta \colon \type{z} = \type{x} \text{ and }\nexists y \in O_\alpha \colon \type{z} = \type{y}                                                                                        \\
                     & \text{ and } \exists S \subseteq O \colon \type{S_\alpha} = \type{S_\beta} \text{ and } S \vDash \alpha\lor\beta                                                                                                    \\
                     & \qquad \text{ and } \exists U' \subseteq O \colon  U' \supseteq O \setminus S  \text{ and }  \forall w \in U'_{\alpha\lor\beta} \colon \type{z} \nprec \type{w}                                                     \\
  \intertext{By Theorem~\ref{thm:main-bisim}:}
  \Leftrightarrow \; &
  \exists z \in T_{\fraka_k} \colon O \vDash (\qe{\beta} \chi_k(\fraka,\beta)) \land (\negg \qe{\alpha} \chi_k(\fraka,\alpha))\text{ and } \exists S \subseteq O \colon                                                                    \\
                     & S \vDash (\alpha\lor\beta) \land \chi^*_k(\alpha,\beta) \text{ and } \exists U' \subseteq O \colon U' \supseteq O \setminus S  \text{ and }  \forall w \in U'_{\alpha\lor\beta} \colon \type{z} \nprec \type{w}     \\
  \intertext{Note that $T_{\fraka_0},\ldots, T_{\fraka_{k-1}}$ are retained in $O$.
   Moreover, $S \subseteq O_{\alpha\lor\beta}$, which implies that they are still subteams of $O \setminus S$ and hence of $U'$.
   But by Claim (a), $\alpha \lor \beta$ and $\fraka_k$ are then $\prec_k$-comparable scopes in $U'$ and we can replace $\type{z} \nprec \type{w}$:}
  \Leftrightarrow \; &
  \exists z \in T_{\fraka_k} \colon O \vDash (\qe{\beta} \chi_k(\fraka,\beta)) \land (\negg \qe{\alpha} \chi_k(\fraka,\alpha))                                                                                                             \\
                     & \text{ and } \exists S \subseteq O \colon S \vDash (\alpha\lor\beta) \land \chi^*_k(\alpha,\beta)                                                                                                                   \\
                     & \qquad  \text{ and } \exists U' \subseteq O \colon U' \supseteq  O \setminus S   \text{ and }  \forall w \in U'_{\alpha\lor\beta} \colon {(U')}^{\alpha\lor\beta}_{w} \vDash \negg \zeta_k(\fraka_k,\alpha\lor\beta)  \\
  \intertext{Recalling that $O = T^{\fraka_k}_z$, and by Proposition~\ref{prop:quantifiers}, we obtain:}
  \Leftrightarrow \; &
  \exists z \in T_{\fraka_k} \colon T^{\fraka_k}_z \vDash (\qe{\beta} \chi_k(\fraka,\beta)) \land (\negg \qe{\alpha} \chi_k(\fraka,\alpha))                                                                                                \\
                     & \text{ and } \exists S \subseteq T^{\fraka_k}_z \colon S \vDash (\alpha\lor\beta) \land \chi^*_k(\alpha,\beta)                                                                                                      \\
                     & \qquad  \text{ and } \exists U' \subseteq T^{\fraka_k}_z \colon U' \supseteq  T^{\fraka_k}_z \setminus S  \text{ and } U' \vDash \qa{\alpha\lor\beta} \negg \zeta_k(\fraka_k,\alpha\lor\beta)                       \\
  \Leftrightarrow \; &
  T \vDash \qe{\fraka_k} (\qe{\beta} \chi_k(\fraka,\beta)) \land (\negg \qe{\alpha} \chi_k(\fraka,\alpha))                                                                                                                                 \\
                     & \land \big((\alpha\lor\beta) \land \chi^*_k(\alpha,\beta)\big) \lor \big(\qa{\alpha \lor \beta} \negg \zeta_k(\fraka_k,\alpha\lor\beta)\big)                                                                        \\
  \Leftrightarrow \; & T \vDash \zeta^*(\alpha,\beta)\text{.}\tag*{\altqed\oldqed}
 \end{align*}
 \let\altqed\relax
\end{cproof}
\let\qed\relax
\end{proof}

In the next lemma, we prove the converse direction of Lemma~\ref{lem:order1}.

\begin{lem}\label{lem:order2}
 Let $k > 0$, and let $(\calK,T)$ be a $k$-staircase with disjoint scopes $\alpha,\beta,\fraka_0,\ldots,\fraka_{k-1}$.
 Then $\alpha$ and $\beta$ are $\prec_k$-comparable in every subteam $S$ of $T$ that contains $T_{\fraka_0} \cup \cdots \cup T_{\fraka_{k-1}}$.
\end{lem}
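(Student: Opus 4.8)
The plan is to prove this lemma and Lemma~\ref{lem:order1} by a single mutual induction on $k$; the present statement is the ``$\prec$-step''. The base level, that $\zeta_0(\alpha,\beta)$ defines $\prec_0$, is a direct unfolding of the two definitions, using that no proposition of $\Phi$ occurs in a scope. So fix $k = \ell+1 \ge 1$, assume both lemmas at all smaller levels, fix a subteam $S$ with $T_{\fraka_0}\cup\cdots\cup T_{\fraka_\ell} \subseteq S \subseteq T$ and points $w \in S_\alpha$, $v \in S_\beta$, and put $T' \dfn S^{\alpha,\beta}_{w,v}$, so that $T'_\alpha = \{w\}$ and $T'_\beta = \{v\}$. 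I will establish $T' \vDash \zeta_{\ell+1}(\alpha,\beta)$ iff $\type{w}_{\ell+1} \prec_{\ell+1} \type{v}_{\ell+1}$; the statement for $\zeta_{\ell+1}(\beta,\alpha)$ then follows symmetrically.

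Unfolding $\zeta_{\ell+1} = \zeta_0(\alpha,\beta) \ovee \big(\chi_0(\alpha,\beta) \land \Box\zeta^*_\ell(\alpha,\beta)\big)$ and recalling that $\ovee$ is Boolean disjunction, $T'$ satisfies $\zeta_{\ell+1}$ iff $T' \vDash \zeta_0(\alpha,\beta)$ or $T' \vDash \chi_0(\alpha,\beta)\land\Box\zeta^*_\ell(\alpha,\beta)$. The first disjunct depends only on the propositions at $w$ and $v$ and holds iff $\type{w}_{\ell+1}\cap\Phi <^* \type{v}_{\ell+1}\cap\Phi$, matching the first clause of $\prec_{\ell+1}$. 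For the conjunction, Theorem~\ref{thm:main-bisim} gives $T'\vDash\chi_0(\alpha,\beta)$ iff $w \bis_0 v$, \ie iff $\type{w}_{\ell+1}\cap\Phi = \type{v}_{\ell+1}\cap\Phi$. As these two propositional conditions are mutually exclusive, at most one disjunct can contribute, and comparing with the definition of $\prec_{\ell+1}$ it remains to show that, in the case of agreeing propositions, $T'\vDash\Box\zeta^*_\ell(\alpha,\beta)$ iff $\calR\type{w}_{\ell+1} \prec^*_\ell \calR\type{v}_{\ell+1}$.

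This equivalence is the crux, and it is where Lemma~\ref{lem:order1} is invoked. Passing to the image, $T'\vDash\Box\zeta^*_\ell(\alpha,\beta)$ iff $RT'\vDash\zeta^*_\ell(\alpha,\beta)$, and since $\alpha,\beta$ are scopes, Proposition~\ref{prop:of-scopes} gives $(RT')_\alpha = Rw$ and $(RT')_\beta = Rv$. The key structural observation is that $(\calK,RT')$ is an $\ell$-staircase with disjoint scopes $\alpha,\beta,\fraka_0,\ldots,\fraka_\ell$: because each $\fraka_i$ is disjoint from $\alpha$ and $\beta$ and satisfies $T_{\fraka_i}\subseteq S\subseteq T$, the shrinking in $T' = S^{\alpha,\beta}_{w,v}$ leaves the stairs intact, so $T'_{\fraka_i} = T_{\fraka_i}$ and hence $(RT')_{\fraka_i} = R(T_{\fraka_i})$ for $i\le\ell$; since $T_{\fraka_i}$ is $i$-canonical with offset $(\ell+1)-i$ in the given $(\ell+1)$-staircase, one application of $R$ makes $R(T_{\fraka_i})$ $i$-canonical with offset $\ell-i$, which is precisely the condition of Definition~\ref{def:staircase} at level $\ell$.

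With $(\calK,RT')$ identified as an $\ell$-staircase, the induction hypothesis (this lemma at level $\ell$, or the base case when $\ell=0$) shows that $\alpha$ and $\beta$ are each $\prec_\ell$-comparable to $\fraka_\ell$ in every subteam between $(RT')_{\fraka_0}\cup\cdots\cup(RT')_{\fraka_{\ell-1}}$ and $RT'$, so Lemma~\ref{lem:order1} applies and yields that $\alpha$ and $\beta$ are $\prec^*_\ell$-comparable in $RT'$. Consequently $RT'\vDash\zeta^*_\ell(\alpha,\beta)$ iff $\type{Rw}_\ell \prec^*_\ell \type{Rv}_\ell$, which by Proposition~\ref{prop:det-types} equals $\calR\type{w}_{\ell+1} \prec^*_\ell \calR\type{v}_{\ell+1}$, as needed; assembling the three pieces reproduces exactly the two clauses of $\prec_{\ell+1}$. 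I expect the main obstacle to be precisely this staircase bookkeeping under the image operator --- confirming that restricting to $w$ and $v$ does not disturb the lower stairs and that every offset drops by one --- since it is what legitimizes the appeal to Lemma~\ref{lem:order1} on $RT'$, whereas the propositional and $\bis_0$ parts are routine given Theorem~\ref{thm:main-bisim}.
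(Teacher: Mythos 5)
Your proof is correct and follows essentially the same route as the paper: induction on $k$, a case split on the $\bis_0$-relationship of $w$ and $v$ to handle $\zeta_0$ versus $\chi_0\land\Box\zeta^*$, and the key step of observing that $(\calK,RT')$ is again a staircase (stairs preserved, offsets dropping by one) so that the induction hypothesis feeds Lemma~\ref{lem:order1} on the image team, after which Proposition~\ref{prop:det-types} closes the loop. The paper packages the image-team step as its Claim (a), but the content is identical.
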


\begin{proof}
 The proof is by induction on $k$.
 Disjoint scopes $\alpha$ and $\beta$ are always $\prec_0$-comparable, which can be easily seen in $\zeta_0$.
 For the inductive step to $k + 1$, assume $(\calK,T)$ and $S$ as above, and let $\calK = (W, R, V)$.
Let $O \dfn S^{\alpha,\beta}_{w,v}$ with $w \in S_\alpha,v \in S_\beta$ arbitrary.

 \begin{claim}[a]
  $\alpha$ and $\beta$ are $\prec^*_k$-comparable in $RO$.
 \end{claim}
 \begin{cproof}
In the inductive step, now $\fraka_0,\ldots,\fraka_k,\alpha,\beta$ are disjoint scopes.
Additionally, $(\calK,RT)$ is a $k$-staircase.
  In particular, in the induction step $\alpha$ and $\beta$ are disjoint from $\fraka_k$.
  For this reason, $(\calK,RO)$ is a $k$-staircase as well, as ${(RO)}_{\fraka_0 \lor \cdots \lor \fraka_{k}} = {(RT)}_{\fraka_0 \lor \cdots \lor \fraka_{k}}$.

  Hence, by induction hypothesis, for every team $U$ such that $RO_{\fraka_0}\cup \cdots \cup RO_{\fraka_{k-1}} \subseteq U \subseteq RO$, we obtain that $\fraka_k$ and $\alpha$ are $\prec_k$-comparable in $U$, as well as $\fraka_k$ and $\beta$.
  Consequently, we can apply Lemma~\ref{lem:order1}, which proves the claim.
 \end{cproof}

We proceed with the induction step.
Again by symmetry, we only show that $O \vDash \zeta_{k+1}(\alpha,\beta)$ iff $\type{w}_{k+1} \prec_{k+1} \type{v}_{k+1}$.
 We distinguish three cases \wrt $\prec_0$:
 \begin{itemize}
  \item If $\type{w}_0 \prec_0 \type{v}_0$, then $O \vDash \zeta_0(\alpha,\beta)$ by the induction basis.
        As the former implies $\type{w}_{k+1} \prec_{k+1} \type{v}_{k+1}$ and the latter  $O \vDash \zeta_{k+1}(\alpha,\beta)$, the equivalence holds.\smallskip
  \item If $\type{w}_0 \succ_0 \type{v}_0$, then $\type{w}_{k+1} \nprec_{k+1} \type{v}_{k+1}$.
        Moreover, $O \nvDash \zeta_0(\alpha,\beta)$ by induction basis.
        Additionally, $O \nvDash \chi_0(\alpha,\beta)$ by Theorem~\ref{thm:main-bisim}.
        Consequently, both sides of the equivalence are false.

  \item If $\type{w}_0 = \type{v}_0$, then $O \vDash \chi_0(\alpha,\beta)$ by Theorem~\ref{thm:main-bisim}, but $O \nvDash \zeta_0(\alpha,\beta)$ by induction basis.
        Consequently, $O \vDash \zeta_{k+1}(\alpha,\beta)$ iff $O \vDash \Box\zeta^*_k(\alpha,\beta)$.
        Also, $\type{w}_{k+1} \prec_{k+1} \type{v}_{k+1}$ iff $\calR\type{w}_{k+1} \prec^*_k \calR\type{v}_{k+1}$.
The following equivalence concludes the proof:
        \begin{align*}
                            & \calR\type{w}_{k+1} \prec^*_k \calR\type{v}_{k+1}
         \\
         \Leftrightarrow \; &
         \type{Rw}_k \prec^*_k \type{Rv}_k\tag{By Proposition~\ref{prop:det-types}}
         \\
         \Leftrightarrow \; &
         RO \vDash \zeta^*_k(\alpha,\beta)\tag{By Claim (a)}             \\
         \Leftrightarrow \; &
         O \vDash \Box\zeta^*_k(\alpha,\beta)\text{.}\tag*{\qedhere}
        \end{align*}
 \end{itemize}
\end{proof}

\noindent
With the above lemmas we are now in the position to prove Theorem~\ref{thm:order-theorem}:

\begin{proof}[Proof of Theorem~\ref{thm:order-theorem}]
 First, it is straightforward to construct $\zeta_k$ and $\zeta^*_k$ in logarithmic space.
 For the correctness, let $(\calK,T)$ be a model with disjoint scopes $\alpha,\beta,\fraka_0,\ldots,\fraka_k$ as in the theorem.
 By Lemma~\ref{lem:order2} it immediately follows that $\alpha$ and $\beta$ are $\prec_k$-comparable in $T$.
 The second part, that $\alpha$ and $\beta$ are $\prec^*_k$-comparable in $T$, follows from the combination of Lemma~\ref{lem:order1} and~\ref{lem:order2}.
\end{proof}

\section{Encoding non-elementary computations}%
\label{sec:reduction}

We combine all the previous sections and extend Theorem~\ref{thm:k-membership} and Corollary~\ref{cor:membership} by their matching lower bounds:

\begin{thm}\label{thm:mtl-main}
\hfill
\begin{itemize}
  \item $\SAT(\MTL)$ and $\VAL(\MTL)$ are complete for $\TOWERPOLY$.
  \item If $k \geq 0$, then $\SAT(\MTL_k)$ and $\VAL(\MTL_k)$ are complete for $\ATIMEALT{\exp_{k+1}}$.
\end{itemize}
\end{thm}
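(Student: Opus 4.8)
The upper bounds are already available: Theorem~\ref{thm:k-membership} and Corollary~\ref{cor:membership} place the four problems in $\ATIMEALT{\exp_{k+1}}$ and $\TOWERPOLY$ respectively, so the plan is to establish the matching hardness. I would first reduce validity to satisfiability: since $\MTL$ carries the Boolean negation $\negg$, we have $\varphi \in \VAL$ iff $\negg\varphi \notin \SAT$, and as both target classes are closed under complement (the former as an alternating class with polynomially many alternations, the latter as a deterministic class), it suffices to prove $\SAT$-hardness, the $\VAL$ case following by composing the $\SAT$-reduction with $\negg$.

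For the lower bound I would encode an arbitrary alternating machine $M$ of the class. Fix a problem in $\ATIMEALT{\exp_{k+1}}$ decided by $M$ in time $\exp_{k+1}(p(n))$ with $p(n)$ alternations, and an input $x$, $n \dfn \size x$. First I would pick $\Phi$ of polynomial size so that, by Proposition~\ref{prop:number-of-types}, $\size{\Delta^\Phi_k} = \exp^*_k(2^{\size\Phi})$ exceeds $\exp_{k+1}(p(n))$ (using $\exp^*_k(2^m) \ge \exp_{k+1}(m)$). Conjoining $\canon_k \land \scopes_k(\{\fraka_0,\ldots,\fraka_k\}) \land \Box^{k+1}\bot$ (Theorem~\ref{thm:bounded-canon}) forces every model to be a $k$-staircase, so its team realises all types of $\Delta^\Phi_k$, and Theorem~\ref{thm:order-theorem} endows these types with the definable linear order $\prec_k$ through $\zeta_k,\zeta^*_k$. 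This yields an addressable, linearly ordered universe of the required non-elementary size inside every model.

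On this universe I would lay out the computation tableau of $M$: each type is used as a cell address (pairing time and cell into a single type is affordable, since squaring keeps the count within $\exp_{k+1}(\mathrm{poly})$), and fresh propositions disjoint from $\Phi$ and the scopes encode tape symbols, head position, and control state. The initial configuration on $x$ and the acceptance test are stated with $\ML$-formulas and the operator $\E$, while the transition relation becomes a local window-consistency condition relating each cell to its neighbours one step earlier, expressed through $\prec_k$ and the successor relation definable from it. The alternations of $M$ are simulated by $p(n)$ alternations of the subteam quantifiers $\qee{\cdot}/\qaa{\cdot}$ (Proposition~\ref{prop:quantifiers}), which behave as existential/universal quantifiers over subsets of the type universe: each alternation guesses, or respectively challenges, one single-mode computation phase, and the fixed-alternation consistency checks verify the phase is legal. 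I would then show the resulting formula $\varphi_{M,x}$ is satisfiable iff $M$ accepts $x$. Since $\chi_k$, $\canon_k$ and $\zeta_k$ all have modal depth $k$ (Theorems~\ref{thm:main-bisim},~\ref{thm:canon-formula},~\ref{thm:order-theorem}) and the tableau gadgets can be arranged to respect this bound, $\varphi_{M,x} \in \MTL_k$ and the construction is logspace. For full $\MTL$ I would instead let $k \dfn p(n)$ grow with the input: a $\TOWERPOLY$ problem runs in time $\exp_{p(n)}(1)$, a modal depth of $p(n)$ supplies a universe of size $\exp_{p(n)+1}(\mathrm{poly}) \ge \exp_{p(n)}(1)$, and the same reduction produces a genuine $\MTL$-formula of polynomial modal depth in logspace.

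The main obstacle I anticipate is the faithful, logspace-uniform encoding of the two-dimensional tableau into the one-dimensional ordered type universe---defining the time- and cell-successor predicates and the transition window purely from $\prec_k$---together with aligning $M$'s alternation structure with exactly $p(n)$ alternations of subteam quantifiers while pinning the total modal depth to $k$ rather than a constant multiple of it (the latter is essential for the level-by-level result, though harmless for the $\TOWERPOLY$ case). Verifying that the guessed phase objects really capture an accepting subtree of $M$---existential phases guessed, universal phases quantified over---is the delicate combinatorial core, whereas the bounds on $\size\Phi$, logspace-constructibility, and the estimate $\size{\Delta^\Phi_k} \ge \exp_{k+1}(\size\Phi)$ from Proposition~\ref{prop:number-of-types} are routine.
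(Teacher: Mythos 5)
Your overall architecture matches the paper's: upper bounds from Theorem~\ref{thm:k-membership} and Corollary~\ref{cor:membership}, reduction of validity to satisfiability via $\negg$ and complement-closure, hardness by encoding an alternating machine's tableau into a model whose canonicity is enforced by Theorem~\ref{thm:bounded-canon}, with the machine's alternations simulated by subteam-quantifier alternations over polynomially many tableaus. The $k=0$ base case (inherited from $\PTL$-hardness) and the choice $k \dfn p(n)$ for the $\TOWERPOLY$ case also agree with the paper.

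However, there is a genuine gap exactly where you flag "the main obstacle": addressing the two-dimensional grid. You propose to pair the time and cell coordinates into a \emph{single} type of $\Delta^\Phi_k$ and to recover adjacency "purely from $\prec_k$". With one linear order of non-elementary length on the paired addresses, defining "same row" and "column successor" amounts to performing division and modulus on positions in that order, for which no mechanism is given and none of the developed formulas ($\chi_k$, $\zeta_k$) supply one. The paper avoids this entirely by keeping the coordinates \emph{separate}: the location of a world $w$ is the pair $\big(\type{{(Rw)}_\frakt}_{k-1},\type{{(Rw)}_\frakp}_{k-1}\big)$, where $\frakt,\frakp$ are marker propositions partitioning the image $Rw$, so that each component is a set of $(k-1)$-types ordered by $\prec^*_{k-1}$ and can be tested independently for equality ($\chi^*_{k-1}$) and order ($\zeta^*_{k-1}$) one $\Box$-step down --- which is also precisely what keeps the total modal depth at $k$. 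A second, unaddressed issue is that the subteam quantifiers extract only one point per scope, so checking a transition window requires \emph{six} points simultaneously; the paper handles this by maintaining auxiliary pre-tableau scopes $\gamma_1,\ldots,\gamma_6$ and a definable copying relation $\approx$ that duplicates the quantified tableau into them. Without these two devices your reduction does not go through as described.
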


\noindent
The above complexity classes are complement-closed, and additionally $\MTL$ and $\MTL_k$ are syntactically closed under negation.
For this reason, it suffices to prove the hardness of $\SAT(\MTL)$ and $\SAT(\MTL_k)$, respectively.
Moreover, the case $k = 0$ is equivalent to $\SAT(\PTL)$ being $\AEXPPOLY$-hard, which was proven by Hannula et al.~\cite{ptl2017}.
Their reduction also works in logarithmic space.
Consequently, the result boils down to the following lemma:
\begin{lem}\label{lem:reduction}
  \hfill
  \begin{itemize}
\item If $L \in \TOWERPOLY$, then $L \leqlogm \SAT(\MTL)$.
\item If $k \geq 1$ and $L \in \ATIMEALT{\exp_{k+1}}$, then $L \leqlogm \SAT(\MTL_{k})$.
\end{itemize}
\end{lem}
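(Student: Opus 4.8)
The plan is to give a single master reduction, uniform in $k$, that encodes the computation of an alternating Turing machine into the satisfiability of an $\MTL_k$-formula, and to read off both bullets from it. Fix $L$ together with a machine $M$ deciding it: for the second bullet $M$ is alternating, runs in time $\exp_{k+1}(p(n))$, and makes at most $p(n)$ alternations for a polynomial $p$; for the first bullet $M$ is deterministic and runs in time $\exp_{q(n)}(1)$ for a polynomial $q$, which I treat as the special case $k \dfn q(n)$ with no alternations. Given an input $x$ with $n \dfn \size{x}$, I first pick a propositional set $\Phi$ of size polynomial in $n$ so that, by Proposition~\ref{prop:number-of-types}, $\size{\Delta^\Phi_k} = \exp^*_k(2^\size{\Phi})$ is at least the number of cells of the computation tableau of $M$ on $x$ (so that a single type can serve as the address of one $(\mathrm{time},\mathrm{cell})$ pair). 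The reduction outputs the conjunction of $\canon_k \land \scopes_k(\{\fraka_0,\ldots,\fraka_k\}) \land \Box^{k+1}\bot$ from Theorem~\ref{thm:bounded-canon}, which forces every model to be a $k$-staircase whose subteam $T_{\fraka_k}$ is $k$-canonical and hence realizes every type of $\Delta^\Phi_k$, together with a further formula $\varphi_x$ asserting the existence of an accepting computation of $M$ on $x$.

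For $\varphi_x$, I use the types in $T_{\fraka_k}$ as addresses and the order $\prec_k$ of Section~\ref{sec:order}, definable by the polynomial-size formulas $\zeta_k,\zeta^*_k$ of Theorem~\ref{thm:order-theorem}, to impose a linear structure on them. I introduce finitely many fresh \emph{content} propositions, disjoint from $\Phi$ and from the scope symbols $\fraka_i$ (so that they leave all types unchanged and $T_{\fraka_k}$ stays $k$-canonical), to label each addressed world with the tape symbol, head position, and state of its cell; using the dependence atom $\dep{\cdot}$ definable in $\MTL$ I force the content to be a function of the address, so a single assignment encodes one tableau. I then express with a polynomial-size formula that (i) the cells addressed by the $\prec_k$-least types carry the initial configuration determined by $x$, (ii) acceptance is reached, and (iii) the transition relation holds locally. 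The key tools are an immediate-successor relation on types, definable from $\zeta_k$ (with type-equality supplied by $\chi_k$ of Theorem~\ref{thm:main-bisim}), and the subteam quantifiers $\qe{\cdot},\qa{\cdot}$ of Section~\ref{sec:encoding}, which let me relate the labels of a type and of its tableau-neighbours while staying within modal depth $k$; the product structure $\Delta^\Phi_{k+1}\cong\pow\Phi\times\pow{\Delta^\Phi_k}$ of Proposition~\ref{prop:det-types} is what I would exploit to define "same cell, next time" alongside "next cell" on the address space.

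The branching of $M$ is simulated by the quantifier alternation of $\MTL$: existential moves by $\qee{\cdot}$, $\E$, and $\lor$, universal moves by $\qaa{\cdot}$ and Boolean negation $\negg$. Since $M$ makes at most $p(n)$ alternations, $\varphi_x$ consists of $p(n)$ alternating blocks, each verifying one deterministic segment of the computation by the tableau technique above; as every block uses only a constant number of the constant-modal-depth building blocks $\chi_k,\zeta_k,\canon_k$, the total formula has size polynomial in $n$, modal depth $k$, and is thus an $\MTL_k$-formula. Correctness follows because Theorem~\ref{thm:bounded-canon} makes every model contain all addresses, Theorem~\ref{thm:order-theorem} makes $\prec_k$ genuinely order them, and the content propositions are otherwise unconstrained and may encode exactly an accepting run. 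All building blocks are logspace-constructible by the cited theorems and the address arithmetic over $\prec_k$ is logspace, so $x \mapsto \varphi_x$ is a $\leqlogm$-reduction; the base case $k=0$ reuses the $\PTL$ reduction of Hannula et al.~\cite{ptl2017}. For the first bullet, taking $k \dfn q(n)$ lets the modal depth grow with $n$, so the formula lies in $\MTL$ but in no fixed $\MTL_k$, and the same argument with a deterministic, alternation-free $M$ gives $L \leqlogm \SAT(\MTL)$.

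I expect the main obstacle to be step (iii), the faithful and succinct encoding of the transition relation: the new content of a cell depends on a whole neighbourhood of the previous configuration, and "neighbour in the $(\mathrm{time},\mathrm{cell})$-tableau" must be phrased purely through $\prec_k$ and the selection and bisimulation apparatus, while simultaneously coordinating the scopes $\fraka_i$, the content propositions, and the subteam quantifiers, and never exceeding modal depth $k$ or polynomial length. The delicate points are defining the "vertical" successor (advance one full configuration) in addition to the $\prec_k$-immediate successor, and keeping each of the $p(n)$ alternation blocks of constant width so their conjunction stays polynomial.
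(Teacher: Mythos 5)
Your overall architecture matches the paper's: enforce a $k$-staircase via Theorem~\ref{thm:bounded-canon}, use the order of Theorem~\ref{thm:order-theorem} to address tableau cells by types, label cells with fresh content propositions, check Cook-style local windows, and simulate the $p(n)$ alternations of $M$ by nested subteam quantification. However, the step you yourself flag as the "main obstacle" --- defining the vertical successor, i.e.\ "same cell, next configuration" --- is precisely the crux of the reduction, and your proposal does not contain the idea that makes it work. With a one-dimensional address space ($\Delta^\Phi_k$ linearly ordered by $\prec_k$, one type per $(\mathrm{time},\mathrm{cell})$ pair), the vertical neighbour of a cell sits at distance $N$ (the row width, itself of non-elementary magnitude) in the linear order; this is not an immediate-successor relation, and there is no apparent polynomial-size formula expressing "the address of $v$ exceeds that of $w$ by exactly $N$". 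The decomposition $\Delta^\Phi_{k+1}\cong\pow{\Phi}\times\pow{\Delta^\Phi_k}$ that you gesture at does not rescue this, because the $\pow{\Phi}$ factor has only $2^{\size{\Phi}}$ elements --- far too few to serve as the time coordinate. The paper's solution is a genuinely two-dimensional addressing scheme: the location of a world $w$ is the pair $\bigl(\type{{(Rw)}_\frakt}_{k-1},\type{{(Rw)}_\frakp}_{k-1}\bigr)\in\pow{\Delta^\Phi_{k-1}}^2$, obtained by partitioning the image team $Rw$ with two fresh propositions $\frakt,\frakp$. Each coordinate then ranges over a set of size $2^{\size{\Delta^\Phi_{k-1}}}$, large enough for both time and space, and --- crucially --- the two coordinates are independently accessible, so that "same column, next row" decomposes into an equality test on the $\frakp$-component (via $\chi^*_{k-1}$) conjoined with an immediate-successor test on the $\frakt$-component (via $\zeta^*_{k-1}$ together with a "nothing in between" condition quantified from an auxiliary pre-tableau). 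Without this device, or an equivalent one, step (iii) of your plan cannot be completed.

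A secondary problem is your proposal to enforce that cell content is a function of the address by a dependence atom. The dependence atom of $\MTL$ ranges over $\ML$-formulas, but the address of a world is its $(\Phi,k)$-type (respectively, the types of its $\frakt$- and $\frakp$-images), which cannot be pinned down by polynomially many $\ML$-formulas of polynomial size. The paper instead enforces functionality by quantifying a second grid from an auxiliary pre-tableau scope and asserting, for every pair of points with equal locations (tested with the $\chi^*_{k-1}$-based equality formulas), that they agree on the content alphabet; this copy-and-compare mechanism is also what later allows six coordinated copies of the tableau to be quantified when checking windows. These are not cosmetic details: they are the technical content of the lemma, and the proposal as written leaves them unresolved.
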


\noindent
We devise for each $L$ a reduction $x \mapsto \varphi_x$ such that $\varphi_x$ is a formula that is satisfiable if and only if $x \in L$.
By assumption, there exists a single-tape alternating Turing machine $M$ that decides $L$ (for $L \in \TOWERPOLY$, \wloss $M$ is alternating as well).

Let $M$ have states $Q$, which is the disjoint union of $Q_\exists$ (\emph{existential states}), $Q_\forall$ (\emph{universal states}), $Q_\text{acc}$ (\emph{accepting states}) and $Q_\text{rej}$ (\emph{rejecting states}).
Also, $Q$ contains some initial state $q_0$.
Let $M$ have a finite tape alphabet $\Gamma$ with blank symbol $\flat \in \Gamma$, and
a transition relation $\delta$.

\smallskip

We design $\varphi_x$ in a fashion that forces its models $(\calK,T)$ to encode an accepting computation of $M$ on $x$.
Let us call any legal sequence of configurations of $M$ (not necessarily starting with the initial configuration) a \emph{run}.
Then, similarly as in Cook's theorem~\cite{cook}, we encode runs as square "grids" with a vertical "time" coordinate and a horizontal "space" coordinate in the model, \ie, each row of the grid represents a configuration of $M$.

\Wloss $M$ never leaves the input to the left, and there exists $N$ that is an upper bound on both the length of a configuration and the runtime of $M$.
Formally, a run of $M$ is then a function $C \colon {\{1,\ldots,N\}}^2 \to \Gamma \cup (Q \times \Gamma)$,
Here, $C(i,j) = c$ for $c \in \Gamma$ means that the $i$-th configuration (\ie, after $M$ performed $i-1$ transitions) contains the symbol $c$ in its $j$-th cell.
The same holds if $C(i,j) = (q,c)$ for $(q,c)\in Q \times \Gamma$, but then additionally the machine is in the state $q$ with its head visiting the $j$-th cell in the $i$-th configuration.
As an example, for a run $C$ from $M$'s initial configuration we have $C(1,1) = (q_0,x_1)$, $C(1,i) = x_i$ for $2 \leq i \leq n$, and $C(1,i) = \flat$ for $n < i \leq N$.

\smallskip

Due to the semantics of $\MTL$, such a run must be encoded in $(\calK,T)$ very carefully.
We let the team $T$ contain $N^2$ worlds $w_{i,j}$ in which the respective value of $C(i,j)$ is encoded as a propositional assignment.
However, we cannot simply pursue the standard approach of assembling a large $N\times N$-grid in the edge relation $R$ in order to compare successive configurations; by Corollary~\ref{cor:height-model}, we cannot force the model to contain $R$-paths longer than $\size{\varphi_x}$.
Instead, to define grid neighborship, we let $w_{i,j}$ encode $i$ and $j$ in its \emph{type}.
More precisely, we use the linear order $\prec_k$ on $\Delta_k$ we defined with the $\MTL_k$-formula $\zeta_k$ in the previous section.
Then, instead of using $\Box$ and $\Diamond$, we examine the grid by letting $\zeta_k$ judge whether a given pair of worlds is deemed (horizontally or vertically) adjacent.

\subsection{Encoding runs in a team}

Next, we discuss how runs $C \colon {\{1,\ldots,N\}}^2 \to \Gamma \cup (Q\times \Gamma)$ are encoded in $T$.
Given a world $w \in T$, we partition the image $Rw$ with two special propositions $\frakt \notin \Phi$ ("timestep") and $\frakp \notin \Phi$ ("position").
Then we assign to $w$ the pair $\ell(w) \dfn (i,j)$ such that $\type{{(Rw)}_\frakt}_{k-1}$ is the $i$-th element, and $\type{{(Rw)}_\frakp}_{k-1}$ is the $j$-th element in the order $\prec^*_{k-1}$.
We call the pair $\ell(w)$ the \emph{location} of $w$ (in the grid).

Accordingly, we fix $N \dfn \size{\pow{\Delta^\Phi_{k-1}}}$.
For the case of fixed $k$, $M$ has runtime bounded by $\exp_{k+1}(g(n))$ for a polynomial $g$.
Then taking $\Phi \dfn \{ p_1, \ldots, p_{g(n)} \}$ yields a sufficiently large coordinate space, as
\begin{align*}
 \exp_{k+1}(g(n)) = \exp_{k+1} & (\size{\Phi}) = 2^{\exp_{k-1}\left(2^\size{\Phi}\right)} \leq 2^{\exp^*_{k-1}\left(2^\size{\Phi}\right)} = 2^{\size{\Delta^\Phi_{k-1}}} = N
\end{align*}
by Proposition~\ref{prop:number-of-types}.
For runtime $\exp_{g(n)}(1)$ of $M$, we let $\Phi \dfn \emptyset$ and precompute $k \dfn g(\size{x}) + 1$, but otherwise proceed identically.

\medskip

Next, let $\Xi$ be a constant set of propositions disjoint from $\Phi$ that encodes the range of $C$ via some bijection $c \colon \Xi \to \Gamma \cup (Q \times \Gamma)$.
If a world $w$ satisfies exactly one proposition $p$ of those in $\Xi$, then by slight abuse of notation we write $c(w)$ instead of $c(p)$.
Intuitively, $c(w) \in \Gamma \cup (Q \times \Gamma)$ is the content of the grid cell represented by $w$.

\medskip

Using $\ell$ and $c$, the function $C$ can be encoded into a team $T$ as follows.
First, a team $T$ is called \emph{grid} if every point in $T$ satisfies exactly one proposition in $\Xi$, and if every location $(i,j) \in {\{1,\ldots,N\}}^2$ occurs as $\ell(w)$ for some point $w \in T$.
Moreover, a grid $T$ is called \emph{pre-tableau} if for every location $(i,j)$ \emph{and} every element $p \in \Xi$ there is some world $w \in T$ such that $\ell(w) = (i,j)$ and $w \vDash p$.
Finally, a grid $T$ is a \emph{tableau} if any two elements $w,w' \in T$ with $\ell(w) = \ell(w')$ also agree on $\Xi$, \ie, $c(w) = c(w')$.

Let us motivate the above definitions.
Clearly, the definition of a \emph{grid} $T$ means that $T$ captures the whole domain of $C$, and that $c$ is well-defined on the level of \emph{points}.
If $T$ is additionally a \emph{tableau}, then $c$ is also well-defined on the level of \emph{locations}.
In other words, a tableau $T$ induces a function $C_T\colon {\{1,\ldots,N\}}^2 \to \Gamma \cup (Q \times \Gamma)$ via $C_\alpha(i,j) \dfn c(w)$, where $w \in T$ is arbitrary such that $\ell(w) = (i,j)$.\label{p:tableau-function}

A \emph{pre-tableau} can be seen as the union of all possible $C$.
In particular, given any pre-tableau, the definition ensures that arbitrary tableaus can be obtained from it by the means of subteam quantification $\qee{}$ (cf.\ p.\ \pageref{p:quantifiers}).

A tableau $T$ is \emph{legal} if $C_T$ is a run of $M$, \ie, if every row is a configuration of $M$, and if every pair of two successive rows represents a valid $\delta$-transition.

\smallskip

The idea of the reduction is now to capture the alternating computation of $M$ by nesting polynomially many quantifications (via $\qee{}$ and $\qaa{}$) of legal tableaus, of which each one is the continuation of the computation of the previous one.

\subsection{Accessing two components of locations}

An discussed earlier, we choose to represent a location $(i,j)$ in a point $w$ as a pair $(\Delta',\Delta'')$ by stipulating that $\Delta' = \type{{(Rw)}_\frakt}_{k-1}$ and $\Delta'' = \type{{(Rw)}_\frakp}_{k-1}$.
To access the two components of a encoded location independently, we introduce the operator
\[
 |^\alpha_\frakq\,\psi \dfn (\alpha \land \neg \frakq) \lor ((\alpha \hook \frakq) \land \psi)\text{,}
\]
where $\frakq \in \{\frakt,\frakp\}$ and $\alpha \in \ML$.
It is easy to check that $T \vDash |^\alpha_\frakq\,\psi$ iff $T^\alpha_{T_\frakq} \vDash \psi$.

In order to \emph{compare} the locations of grid cells, for each component $\frakq \in \{\frakt,\frakp\}$ we define the following formulas:
$\psi^\frakq_\prec(\alpha,\beta)$ tests whether the location in $T_\alpha$ is less than the one in $T_\beta$ \wrt its $\frakq$-component (assuming singleton teams $T_\alpha$ and $T_\beta$).
Analogously, $\psi^\frakq_\equiv(\alpha,\beta)$ checks for equality of the respective component:
\begin{align*}
 \psi^\frakq_{\prec}(\alpha,\beta) \dfn \;  & \Box \,|^{\alpha}_\frakq |^{\beta}_\frakq \zeta_{k-1}^{*}(\alpha,\beta) \\
 \psi^\frakq_{\equiv}(\alpha,\beta) \dfn \; & \Box \,|^{\alpha}_\frakq |^{\beta}_\frakq \chi^*_{k-1}(\alpha,\beta)
\end{align*}
For this purpose, $\psi^\frakq_{\prec}$ is built upon the formula $\zeta^*_{k-1}$ from Theorem~\ref{thm:order-theorem}, while $\psi^\frakq_{\equiv}$ checks for equality with the help of $\chi^*_{k-1}$ from Theorem~\ref{thm:main-bisim}.

\medskip

\begin{claim}[a]
 Let $\calK$ be a structure with a team $T$ and disjoint scopes $\alpha$ and $\beta$.
 Suppose $w \in T_\alpha$ and $v \in T_\beta$, where $\ell(w) = (i_w,j_w)$ and $\ell(v) = (i_v,j_v)$. Then:
 \begin{align*}
  T^{\alpha,\beta}_{w,v} \vDash \psi^\frakt_\equiv(\alpha,\beta) & \;\Leftrightarrow\;i_w = i_v  \\
  T^{\alpha,\beta}_{w,v} \vDash \psi^\frakp_\equiv(\alpha,\beta) & \;\Leftrightarrow\; j_w = j_v
 \end{align*}
 Moreover, if $\alpha,\beta,\fraka_0,\ldots,\fraka_{k}$ are disjoint scopes in $\calK$ and $(\calK,T)$ is a $k$-staircase, then:
 \begin{align*}
  T^{\alpha,\beta}_{w,v} \vDash \psi^\frakt_\prec(\alpha,\beta) & \;\Leftrightarrow \;i_w < i_v \\
  T^{\alpha,\beta}_{w,v} \vDash \psi^\frakp_\prec(\alpha,\beta) & \;\Leftrightarrow\; j_w < j_v
 \end{align*}
\end{claim}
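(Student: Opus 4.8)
The plan is to strip the operators $\Box$, $|^\alpha_\frakq$, and $|^\beta_\frakq$ one after another off the front of $\psi^\frakq_\equiv$ and $\psi^\frakq_\prec$, tracking how the $\alpha$- and $\beta$-scoped parts of the team transform, until the inner formula ($\chi^*_{k-1}$ or $\zeta^*_{k-1}$) can be invoked directly. Starting from $T^{\alpha,\beta}_{w,v}$, the leading $\Box$ passes to the image team, and by Proposition~\ref{prop:of-scopes}~(5.) we have $R(T^{\alpha,\beta}_{w,v}) = (RT)^{\alpha,\beta}_{Rw,Rv}$; call this team $P$, so that $P_\alpha = Rw$ and $P_\beta = Rv$. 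Using the semantics $T \vDash |^\alpha_\frakq\psi$ iff $T^\alpha_{T_\frakq} \vDash \psi$ stated after the definition of the operator, the application of $|^\alpha_\frakq$ shrinks the $\alpha$-scope of $P$ to its $\frakq$-worlds $(Rw)_\frakq$ while leaving everything outside $\alpha$ untouched; as $\alpha$ and $\beta$ are disjoint scopes, the two selections do not interfere (Proposition~\ref{prop:of-scopes}~(2.) and (3.)), so $|^\beta_\frakq$ independently shrinks the $\beta$-scope to $(Rv)_\frakq$. Hence the inner formula is evaluated on a team $Q$ with $Q_\alpha = (Rw)_\frakq$ and $Q_\beta = (Rv)_\frakq$, while the $\fraka_i$-scoped parts are inherited unchanged from $RT$.

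For the equality formulas I would then apply Theorem~\ref{thm:main-bisim}: on $Q$ we have $Q \vDash \chi^*_{k-1}(\alpha,\beta)$ iff $Q_\alpha \bis_{k-1} Q_\beta$, that is $(Rw)_\frakq \bis_{k-1} (Rv)_\frakq$, which by Proposition~\ref{prop:types} is equivalent to $\type{(Rw)_\frakq}_{k-1} = \type{(Rv)_\frakq}_{k-1}$. Specializing $\frakq$ to $\frakt$ (resp.\ $\frakp$) and recalling that $\ell(w) = (i_w,j_w)$ is defined through the $\prec^*_{k-1}$-rank of $\type{(Rw)_\frakt}_{k-1}$ and $\type{(Rw)_\frakp}_{k-1}$, equal type sets force equal ranks, giving $i_w = i_v$ (resp.\ $j_w = j_v$). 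Note that this half uses only that $\alpha,\beta$ are disjoint scopes, which matches the weaker hypothesis of the first part of the claim.

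The order formulas are handled identically but with $\zeta^*_{k-1}$ in place of $\chi^*_{k-1}$, invoking Theorem~\ref{thm:order-theorem} to obtain $Q \vDash \zeta^*_{k-1}(\alpha,\beta)$ iff $\type{(Rw)_\frakq}_{k-1} \prec^*_{k-1} \type{(Rv)_\frakq}_{k-1}$, which by the definition of $\ell$ becomes $i_w < i_v$ (resp.\ $j_w < j_v$). The main obstacle is that Theorem~\ref{thm:order-theorem} may be applied only when $Q$ is a $(k-1)$-staircase whose disjoint scopes include $\fraka_0,\ldots,\fraka_{k-1}$; this is precisely where the stronger hypothesis of the second part is needed. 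I would argue, as in Claim~(a) of the proof of Lemma~\ref{lem:order2}, that $(\calK,RT)$ is a $(k-1)$-staircase because $(\calK,T)$ is a $k$-staircase, and that since $\fraka_0,\ldots,\fraka_{k-1}$ are disjoint from $\alpha$ and $\beta$, neither the passage to $Rw,Rv$ nor the operators $|^\alpha_\frakq,|^\beta_\frakq$ affect the $\fraka_i$-parts; formally $Q_{\fraka_0\lor\cdots\lor\fraka_{k-1}} = (RT)_{\fraka_0\lor\cdots\lor\fraka_{k-1}}$, so the staircase structure survives and Theorem~\ref{thm:order-theorem} legitimately applies to $Q$.
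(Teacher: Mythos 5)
Your proposal is correct and follows essentially the same route as the paper's proof: both reduce the formulas to $\chi^*_{k-1}$ resp.\ $\zeta^*_{k-1}$ evaluated on the team whose $\alpha$- and $\beta$-parts are $(Rw)_\frakq$ and $(Rv)_\frakq$, via Proposition~\ref{prop:of-scopes} and the semantics of $|^\alpha_\frakq$, and then invoke Theorem~\ref{thm:main-bisim} resp.\ Theorem~\ref{thm:order-theorem} together with the definition of $\ell$. You merely traverse the chain of equivalences in the opposite direction and are somewhat more explicit than the paper in checking that the $(k-1)$-staircase hypothesis of Theorem~\ref{thm:order-theorem} survives the passage to $RT$ and the scope selections, which is a welcome addition rather than a deviation.
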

\begin{cproof}
 Let us begin with $\psi^\frakt_\equiv$ ($\psi^\frakp_\equiv$ works identically):
 \begin{align*}
  i_w = i_v \;
  \Leftrightarrow\; &
  \type{{(Rw)}_\frakt}_{k-1} = \type{{(Rv)}_\frakt}_{k-1} \tag{By Definition}                                                     \\
  \Leftrightarrow\; &
  RT^{\alpha,\beta}_{{(Rw)}_{\,\frakt},{(Rv)}_{\,\frakt}} \vDash \chi^*_{k-1}(\alpha,\beta) \tag{By Theorem~\ref{thm:main-bisim}} \\
  \Leftrightarrow\; &
  {\Big(RT^{\alpha,\beta}_{Rw,Rv}\Big)}^{\alpha,\beta}_{RT_\frakt,RT_\frakt} \vDash \chi^*_{k-1}(\alpha,\beta)                  \\
  \Leftrightarrow\; &
  RT^{\alpha,\beta}_{Rw,Rv} \vDash |^\alpha_\frakt|^\beta_\frakt \chi^*_{k-1}(\alpha,\beta)                                   \\
  \Leftrightarrow\; &
  T^{\alpha,\beta}_{w,v} \vDash \Box \, |^\alpha_\frakt|^\beta_\frakt \chi^*_{k-1}(\alpha,\beta) \tag{Proposition~\ref{prop:of-scopes}}
 \end{align*}
 Similarly for $\psi^\frakt_{\prec}$ ($\psi^\frakp_\prec$ again works identically):
 \begin{align*}
  i_w < i_v \;
  \Leftrightarrow\; &
  \type{{(Rw)}_\frakt}_{k-1} \prec^*_{k-1} \type{{(Rv)}_\frakt}_{k-1} \tag{By Definition}                                             \\
  \Leftrightarrow\; &
  RT^{\alpha,\beta}_{{(Rw)}_{\,\frakt},{(Rv)}_{\,\frakt}} \vDash \zeta^*_{k-1}(\alpha,\beta) \tag{By Theorem~\ref{thm:order-theorem}} \\
  \Leftrightarrow\; &
  {\Big(RT^{\alpha,\beta}_{Rw,Rv}\Big)}^{\alpha,\beta}_{T_\frakt,T_\frakt} \vDash \zeta^*_{k-1}(\alpha,\beta)                       \\
  \Leftrightarrow\; &
  RT^{\alpha,\beta}_{Rw,Rv} \vDash |^\alpha_\frakt|^\beta_\frakt \zeta^*_{k-1}(\alpha,\beta)                                      \\
  \Leftrightarrow\; &
  T^{\alpha,\beta}_{w,v} \vDash \Box \, |^\alpha_\frakt|^\beta_\frakt \zeta^*_{k-1}(\alpha,\beta) \tag*{(Proposition~\ref{prop:of-scopes})\;\altqed}
 \end{align*}\let\altqed\relax
\end{cproof}

\subsection{Defining grids, pre-tableaus, and tableaus}

Next, we aim at constructing formulas that check whether a given team is a grid, pre-tableau, or a tableau, respectively.

First, to check that every location $(i,j) \in {\{1,\ldots,N\}}^2$ of the grid occurs as $\ell(w)$ of some $w \in T$, we quantify over all corresponding pairs $(\Delta',\Delta'') \in \pow{\Delta_{k-1}}^2$.
To cover all these sets of types we can quantify, for instance, over the images of all points of $T_{\fraka_k}$.
However, as subteam quantifiers $\qee{},\qe{},\qaa{},\qa{}$ cannot pick \emph{two} subteams from the same scope, we enforce a $k$-canonical copy $\fraka'_k$ of $\fraka_k$ in the spirit of Theorem~\ref{thm:canon-formula}:
\begin{align*}
 \canon' \dfn \; & \rho^k_0(\fraka_0) \land \bigwedge_{m = 1}^k \rho^{k-m}_{m}(\fraka_{m-1},\fraka_m) \land \rho^0_{k}(\fraka_{k-1},\fraka'_k)
\end{align*}

\begin{claim}[b]
 If $\fraka_0,\ldots,\fraka_k,\fraka'_k$ are disjoint scopes in $\calK$, then
 $(\calK,T) \vDash \canon'$ if and only if $(\calK,T)$ is a $k$-staircase and $T_{\fraka'_k}$ is $k$-canonical.
 Moreover, $\canon' \land \scopes_{k}(\{\fraka_0,\ldots,\fraka_k,\fraka'_k\}) \land \Box^{k+1}\bot$ is satisfiable, but is only satisfied by $k$-staircases $(\calK,T)$ in which both $T_{\fraka_k}$ and $T_{\fraka'_k}$ are $k$-canonical.
 Furthermore, both formulas are constructible in space $\bigO{\log(\size{\Phi} + k)}$.
\end{claim}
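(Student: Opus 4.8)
The plan is to observe that $\canon'$ is nothing but $\canon_k$ with one extra conjunct: unfolding the definitions gives
\[
  \canon' \;=\; \canon_k \;\land\; \rho^0_k(\fraka_{k-1},\fraka'_k)
\]
(which presupposes $k\geq 1$, the only case we use). So the whole claim reduces to analysing the final conjunct on top of Theorem~\ref{thm:canon-formula}. The key point is that the induction carried out in the proof of Theorem~\ref{thm:canon-formula} establishes, for disjoint scopes, that $\rho^0_k(\alpha,\beta)$ expresses $k$-canonicity of $T_\beta$ \emph{under the assumption} that $T_\alpha$ is $(k-1)$-canonical with offset $1$. First I would record that, by Definition~\ref{def:staircase}, in any $k$-staircase the subteam $T_{\fraka_{k-1}}$ is $(k-1)$-canonical with offset $k-(k-1)=1$; thus the staircase property that $\canon_k$ already enforces supplies exactly the precondition needed to invoke that characterization for the scope $\fraka_{k-1}$.

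For the first statement, fix disjoint scopes $\fraka_0,\ldots,\fraka_k,\fraka'_k$. By Theorem~\ref{thm:canon-formula}, $(\calK,T)\vDash\canon_k$ iff $(\calK,T)$ is a $k$-staircase. Assuming this, $T_{\fraka_{k-1}}$ is $(k-1)$-canonical with offset $1$, so the inductive characterization of $\rho^0_k$ from the proof of Theorem~\ref{thm:canon-formula} yields $(\calK,T)\vDash\rho^0_k(\fraka_{k-1},\fraka'_k)$ iff $T_{\fraka'_k}$ is $k$-canonical. Combining the two equivalences gives that $(\calK,T)\vDash\canon'$ iff $(\calK,T)$ is a $k$-staircase and $T_{\fraka'_k}$ is $k$-canonical, as claimed.

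For the second statement I would mirror Theorem~\ref{thm:bounded-canon}. Satisfiability: start from the finite $k$-staircase of Proposition~\ref{prop:canon-sat}, in which $T_{\fraka_k}$ is already $k$-canonical (offset $0$); adjoin a fresh, disjoint copy of the $\fraka_k$-subforest, relabelling it with a new proposition $\fraka'_k$ (and keeping $\fraka_0,\ldots,\fraka_k$ false on it), and add its roots to $T$. Since the copy is a separate component it does not change the reachable sets of the original worlds, so $(\calK,T)$ remains a $k$-staircase, $T_{\fraka'_k}$ is $k$-canonical, the propositions $\fraka_0,\ldots,\fraka_k,\fraka'_k$ are pairwise disjoint scopes, and the total height stays $\leq k$; hence $\scopes_k$ and $\Box^{k+1}\bot$ hold, and $\canon'$ holds by the first statement. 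For the converse, exactly as in Theorem~\ref{thm:bounded-canon}, $\Box^{k+1}\bot$ bounds the reachable part to height $\leq k$ and $\scopes_k(\{\fraka_0,\ldots,\fraka_k,\fraka'_k\})$ then forces these propositions to be disjoint scopes on that part; as $\canon'$ is an $\MTL_k$-formula, its truth depends only on this part, so the first statement applies and shows $(\calK,T)$ is a $k$-staircase with $T_{\fraka'_k}$ $k$-canonical. Finally, being a $k$-staircase, $T_{\fraka_k}$ is $k$-canonical with offset $0$, i.e.\ $k$-canonical, so both $T_{\fraka_k}$ and $T_{\fraka'_k}$ are $k$-canonical.

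The logspace bound is immediate: $\canon'$ is $\canon_k$ (logspace-constructible by Theorem~\ref{thm:canon-formula}) conjoined with the single formula $\rho^0_k(\fraka_{k-1},\fraka'_k)$, which is assembled from $\chi^*_{k-1}$ and $\max_0$ and is logspace-constructible by Theorem~\ref{thm:main-bisim}; likewise $\scopes_k$ and $\Box^{k+1}\bot$ are trivially constructible in space $\bigO{\log(\size{\Phi}+k)}$. The only step requiring genuine care is the offset arithmetic—verifying that the staircase property delivers precisely the precondition reused from the induction of Theorem~\ref{thm:canon-formula}—together with, on the satisfiability side, checking that the adjoined $\fraka'_k$-component can be attached without disturbing the staircase, the disjoint-scope condition, or the height bound; both are routine consequences of the earlier results.
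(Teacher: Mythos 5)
Your proof is correct and follows exactly the route the paper intends: the paper's own ``proof'' of Claim (b) is the single line ``Proven similarly to Theorem~\ref{thm:canon-formula} and~\ref{thm:bounded-canon}'', and your argument is precisely the fleshed-out version of that reference --- decomposing $\canon'$ as $\canon_k \land \rho^0_k(\fraka_{k-1},\fraka'_k)$, checking that the staircase property supplies the offset-$1$ precondition for the inductive characterization of $\rho^0_k$, and mirroring Theorem~\ref{thm:bounded-canon} for the satisfiability and scope-enforcement part. The offset arithmetic and the construction of the witnessing model with the adjoined $\fraka'_k$-component are both sound.
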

\begin{cproof}
 Proven similarly to Theorem~\ref{thm:canon-formula} and~\ref{thm:bounded-canon}.
\end{cproof}

The next formula checks whether a given team is a grid.
More precisely, the subformula $\psi_\text{pair}$ compares the $\frakt$-component of the selected location in $\alpha$ to the image of the world quantified in $\fraka_k$, and its $\frakp$-component to $\fraka'_k$, respectively.
That every world satisfies exactly one element of $\Xi$ is guaranteed by $\psi_\text{grid}$ as well.
\begin{align*}
 \psi_\text{grid}(\alpha) \dfn \; & \Big(\alpha \hook \bigvee_{e \in \Xi}e \land \bigwedge_{\substack{e' \in \Xi                                                                            \\ e'\neq e}} \neg e'\Big) \land \qa{\fraka_k} \+ \qa{\fraka'_k} \+ \qe{\alpha} \,\psi_\text{pair}(\alpha)\\
 \psi_\text{pair}(\alpha) \dfn \; & \Box \left[ \big(\,|^\alpha_\frakt\; \chi^*_{k-1}(\fraka_{k},\alpha)\big) \land \big(\,|^\alpha_\frakp \; \chi^*_{k-1}(\fraka'_{k},\alpha)\big) \right]
\end{align*}

\medskip

In the following and all subsequent claims, we always assume that $T$ is a team in a Kripke structure $\calK$ such that $(\calK,T)$ satisfies $\canon' \land \Box^{k+1}\bot$.
Moreover, all stated scopes are always assumed pairwise disjoint in $\calK$ (as we can enforce this later in the reduction with $\scopes_k(\cdots)$).

\begin{claim}[c]
 $T \vDash \psi_{\text{grid}}(\alpha)$ if and only if $T_\alpha$ is a grid.
\end{claim}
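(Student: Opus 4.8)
The plan is to split $\psi_\text{grid}(\alpha)$ into its two conjuncts and show that each one captures exactly one of the two clauses in the definition of a grid, so that their conjunction is equivalent to $T_\alpha$ being a grid.

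First I would dispatch the conjunct $\alpha \hook \bigvee_{e \in \Xi}\big(e \land \bigwedge_{e'\neq e}\neg e'\big)$. By the defining property of the decoding operator, namely $T \vDash \alpha \hook \psi \Leftrightarrow T_\alpha \vDash \psi$, this holds iff $T_\alpha$ satisfies the purely propositional formula asserting that exactly one proposition of $\Xi$ is true. Since that is an $\ML$-formula and such formulas hold in a team precisely when they hold at every world (flatness), the conjunct is equivalent to: every $w \in T_\alpha$ carries exactly one proposition of $\Xi$. This is the first grid condition, and it also makes $c(w)$ well-defined on points.

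The substance is the second conjunct $\qa{\fraka_k}\qa{\fraka'_k}\qe{\alpha}\,\psi_\text{pair}(\alpha)$. Unfolding the subteam quantifiers via Proposition~\ref{prop:quantifiers}, and using that $\fraka_k,\fraka'_k,\alpha$ are pairwise disjoint scopes (so shrinking one leaves the others untouched), I would rewrite it as: for all $u \in T_{\fraka_k}$ and all $u' \in T_{\fraka'_k}$ there is $w \in T_\alpha$ with $T^{\fraka_k,\fraka'_k,\alpha}_{u,u',w} \vDash \psi_\text{pair}(\alpha)$. I then analyze $\psi_\text{pair}(\alpha)$ exactly as in Claim~(a): passing $\Box$ to the image by Proposition~\ref{prop:of-scopes}~(5.) turns the team into $(RT)^{\fraka_k,\fraka'_k,\alpha}_{Ru,Ru',Rw}$; the operator $|^{\alpha}_{\frakt}$ shrinks the $\alpha$-part to $(Rw)_\frakt$ while leaving the $\fraka_k$-part equal to $Ru$ (and $|^{\alpha}_{\frakp}$ shrinks it to $(Rw)_\frakp$ against $Ru'$); and Theorem~\ref{thm:main-bisim} applied to $\chi^*_{k-1}$ shows that the two conjuncts of $\psi_\text{pair}$ hold iff $\type{Ru}_{k-1} = \type{(Rw)_\frakt}_{k-1}$ and $\type{Ru'}_{k-1} = \type{(Rw)_\frakp}_{k-1}$, respectively.

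Finally I would convert the quantification over $u,u'$ into one over locations. Since $(\calK,T)\vDash\canon'$, Claim~(b) gives that both $T_{\fraka_k}$ and $T_{\fraka'_k}$ are $k$-canonical; combined with $\type{Ru}_{k-1} = \calR\type{u}_k$ and the surjectivity of the bijection $h \colon \tau \mapsto (\tau\cap\Phi,\calR\tau)$ onto $\pow{\Phi}\times\pow{\Delta_{k-1}}$ from Proposition~\ref{prop:det-types}, the set $\{\type{Ru}_{k-1} \mid u \in T_{\fraka_k}\}$ equals all of $\pow{\Delta_{k-1}}$, and likewise for $\fraka'_k$. Recalling that a location $(i,j)$ is precisely the pair $\big(\type{(Rw)_\frakt}_{k-1},\type{(Rw)_\frakp}_{k-1}\big)$ read off by rank in $\prec^*_{k-1}$, the condition becomes: for every $(\Delta',\Delta'') \in \pow{\Delta_{k-1}}^2$ there is $w \in T_\alpha$ with $\type{(Rw)_\frakt}_{k-1}=\Delta'$ and $\type{(Rw)_\frakp}_{k-1}=\Delta''$, i.e., every location occurs as $\ell(w)$ for some $w\in T_\alpha$ --- the second grid condition. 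Conjoining the two parts yields the claim. The main obstacle is the middle step: carefully tracking which scope each single-world selection shrinks as $\Box$, the $|$-operators, and $\chi^*_{k-1}$ interleave, and verifying that the selections $u,u',w$ genuinely reduce $\chi^*_{k-1}$ to the intended type-equalities through Theorem~\ref{thm:main-bisim}; this is routine but delicate, and it is exactly where disjointness of the scopes and the $k$-canonicity of $T_{\fraka_k},T_{\fraka'_k}$ are used.
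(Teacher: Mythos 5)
Your proposal is correct and follows essentially the same route as the paper's proof: the first conjunct is dispatched by flatness, the second is unfolded via Proposition~\ref{prop:quantifiers}, $\psi_\text{pair}$ is reduced to type-equalities through Proposition~\ref{prop:of-scopes}~(5.) and Theorem~\ref{thm:main-bisim}, and the quantification over $T_{\fraka_k},T_{\fraka'_k}$ is converted to quantification over all of $\pow{\Delta_{k-1}}^2$ using Claim~(b) and the bijection of Proposition~\ref{prop:det-types}. The paper merely presents the same chain of equivalences in the opposite direction, starting from the semantic grid condition and ending at the formula.
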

\begin{cproof}
 Clearly $T \vDash \alpha \hook \bigvee_{e \in \Xi}e \land \bigwedge_{e' \in \Xi, e'\neq e} \neg e'$ if and only if every world $w \in T_\alpha$ satisfies exactly one element of $\Xi$.
 Consequently, for the proof it remains to show the following equivalence:
 \begin{align*}
                     & \forall (i,j) \in {\{1,\ldots,N\}}^2 \colon \exists w \in T_\alpha \colon \ell(w) = (i,j)                                                                                                                                          \\
  \Leftrightarrow \; &
  \forall \Delta',\Delta'' \subseteq \Delta_{k-1} \colon \exists w \in T_\alpha : \type{{(Rw)}_\frakt}_{k-1} = \Delta' \text{ and } \type{{(Rw)}_\frakp}_{k-1} = \Delta''                                                                                   \\
  \intertext{By $k$-canonicity of $\fraka_k,\fraka'_k$ due to Claim (b):}
  \Leftrightarrow \; &
  \forall v \in T_{\fraka_k},v' \in T_{\fraka'_k} \colon \exists w \in T_\alpha \colon  \type{{(Rw)}_\frakt}_{k-1} = \type{Rv}_{k-1} \text{ and } \type{{(Rw)}_\frakp}_{k-1} = \type{Rv'}_{k-1}                                                             \\
  \intertext{By Theorem~\ref{thm:main-bisim}:}
  \Leftrightarrow \; &
  \forall v \in T_{\fraka_k},v' \in T_{\fraka'_k} \colon \exists w \in T_\alpha \colon  RT^{\alpha,\fraka_k,\fraka'_k}_{{(Rw)}_\frakt,Rv,Rv'} \vDash \chi^*_{k-1}(\fraka_{k},\alpha)                                                                      \\
                     & \qquad \qquad \text{ and } RT^{\alpha,\fraka_k,\fraka'_k}_{{(Rw)}_\frakp,Rv,Rv'} \vDash \chi^*_{k-1}(\fraka'_{k},\alpha)                                                                                                           \\
  \Leftrightarrow \; &
  \forall v \in T_{\fraka_k},v' \in T_{\fraka'_k} \colon \exists w \in T_\alpha :  {\big(RT^{\alpha,\fraka_k,\fraka'_k}_{Rw,Rv,Rv'}\big)}^{\alpha}_{RT_\frakt} \vDash \chi^*_{k-1}(\fraka_{k},\alpha)                                                     \\
                     & \qquad \qquad \text{ and } {\big(RT^{\alpha,\fraka_k,\fraka'_k}_{Rw,Rv,Rv'}\big)}^\alpha_{RT_\frakp} \vDash \chi^*_{k-1}(\fraka'_{k},\alpha)                                                                                       \\
  \Leftrightarrow \; &
  \forall v \in T_{\fraka_k},v' \in T_{\fraka'_k}
  \colon \exists w \in T_\alpha \colon  RT^{\alpha,\fraka_k,\fraka'_k}_{Rw,Rv,Rv'} \vDash \,|^\alpha_{\frakt}\chi^*_{k-1}(\fraka_{k},\alpha) \land \,|^\alpha_{\frakp}\chi^*_{k-1}(\fraka'_{k},\alpha)                                                  \\
  \intertext{By Proposition~\ref{prop:of-scopes}:}
  \Leftrightarrow \; &
  \forall v \in T_{\fraka_k},v' \in T_{\fraka'_k} \colon \exists w \in T_\alpha \colon  T^{\alpha,\fraka_k,\fraka'_k}_{w,v,v'} \vDash \Box \,|^\alpha_{\frakt}\chi^*_{k-1}(\fraka_{k},\alpha) \land \,|^\alpha_{\frakp}\chi^*_{k-1}(\fraka'_{k},\alpha) \\
  \intertext{By Proposition~\ref{prop:quantifiers}:}
  \Leftrightarrow \; &
  T \vDash  \qa{\fraka_k}\qa{\fraka'_k}\qe{\alpha} \Box \,|^\alpha_{\frakt}\chi^*_{k-1}(\fraka_{k},\alpha) \land \,|^\alpha_{\frakp}\chi^*_{k-1}(\fraka'_{k},\alpha)                                                                                    \\
  \Leftrightarrow \; &
  T \vDash  \qa{\fraka_k}\qa{\fraka'_k}\qe{\alpha} \, \psi_{\text{pair}}(\alpha)\tag*{\altqed}
 \end{align*}
 \let\altqed\relax
\end{cproof}

\medskip

With slight modifications it is straightforward to define pre-tableaus:
\begin{align*}
 \psi_\text{pre-tableau}(\alpha) \dfn \; & \psi_\text{grid}(\alpha) \land \qa{\fraka_k} \+ \qa{\fraka'_k} \+ \bigwedge_{e \in \Xi} \qe{\alpha} \big(\psi_\text{pair}(\alpha) \land (\alpha \hook e)\big)
\end{align*}

\begin{claim}[d]
 $T \vDash \psi_{\text{pre-tableau}}(\alpha)$ if and only if $T_\alpha$ is a pre-tableau.
\end{claim}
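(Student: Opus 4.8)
The plan is to reduce everything to Claim~(c) and to the analysis of $\psi_{\text{pair}}$ already carried out there. Recall that $\psi_{\text{pre-tableau}}(\alpha)$ is the conjunction of $\psi_{\text{grid}}(\alpha)$ with the extra formula $\qa{\fraka_k}\+\qa{\fraka'_k}\+\bigwedge_{e\in\Xi}\qe{\alpha}\big(\psi_{\text{pair}}(\alpha)\land(\alpha\hook e)\big)$. Since every pre-tableau is in particular a grid, and Claim~(c) already gives that $T\vDash\psi_{\text{grid}}(\alpha)$ iff $T_\alpha$ is a grid, it suffices to assume $T_\alpha$ is a grid and to show that the second conjunct is then equivalent to the defining property of a pre-tableau, namely that for every location $(i,j)$ and every $e\in\Xi$ there exists $w\in T_\alpha$ with $\ell(w)=(i,j)$ and $w\vDash e$.

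First I would unfold the subteam quantifiers with Proposition~\ref{prop:quantifiers}: the second conjunct holds iff for all $v\in T_{\fraka_k}$, all $v'\in T_{\fraka'_k}$ and all $e\in\Xi$ there is $w\in T_\alpha$ with $T^{\alpha,\fraka_k,\fraka'_k}_{w,v,v'}\vDash\psi_{\text{pair}}(\alpha)\land(\alpha\hook e)$. The factor $\psi_{\text{pair}}(\alpha)$ is handled verbatim as in Claim~(c) and amounts to $\type{{(Rw)}_\frakt}_{k-1}=\type{Rv}_{k-1}$ and $\type{{(Rw)}_\frakp}_{k-1}=\type{Rv'}_{k-1}$, i.e.\ to $\ell(w)$ being exactly the location encoded by the pair $(v,v')$. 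The one genuinely new ingredient is the factor $(\alpha\hook e)$: because the $\alpha$-part of $T^{\alpha,\fraka_k,\fraka'_k}_{w,v,v'}$ is the singleton $\{w\}$, and $\alpha\hook e$ holds in a team precisely if its $\alpha$-part satisfies $e$, we obtain $T^{\alpha,\fraka_k,\fraka'_k}_{w,v,v'}\vDash\alpha\hook e$ iff $w\vDash e$.

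Next I would invoke the $k$-canonicity of $T_{\fraka_k}$ and $T_{\fraka'_k}$ supplied by Claim~(b): as $v$ and $v'$ range over these teams, the pairs $(\type{Rv}_{k-1},\type{Rv'}_{k-1})$ range over all of $\pow{\Delta_{k-1}}^2$, hence over all locations $(i,j)\in\{1,\ldots,N\}^2$. Substituting this into the unfolded formula replaces the quantification over $v,v'$ by a quantification over all locations, so that the second conjunct becomes exactly ``for every location $(i,j)$ and every $e\in\Xi$ there is $w\in T_\alpha$ with $\ell(w)=(i,j)$ and $w\vDash e$.'' Together with $T_\alpha$ being a grid, this is precisely the definition of a pre-tableau, which closes the equivalence.

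I do not expect a genuine obstacle, as the argument is a near-verbatim reprise of Claim~(c) and is essentially bookkeeping. The one point deserving care is the quantifier nesting $\qa{\fraka_k}\+\qa{\fraka'_k}\+\bigwedge_{e\in\Xi}\qe{\alpha}$: because the conjunction $\bigwedge_{e\in\Xi}$ sits \emph{outside} the existential $\qe{\alpha}$, each symbol $e$ is forced to receive its own witness world at the prescribed location, which is exactly what separates a pre-tableau (one witness per symbol) from a plain grid (one witness per location). I would take care to read the scoping this way rather than pulling the conjunction inside the existential quantifier.
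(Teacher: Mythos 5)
Your proposal is correct and takes exactly the route the paper intends: the paper's own proof of Claim~(d) consists solely of the remark ``Proven similarly to Claim~(c),'' and your argument is precisely the elaboration of that reduction, with the only new ingredients being the conjunct $(\alpha\hook e)$ (correctly read as $w\vDash e$ on the singleton $\alpha$-part) and the observation that $\bigwedge_{e\in\Xi}$ scoping outside $\qe{\alpha}$ forces a separate witness per symbol.
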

\begin{cproof}
 Proven similarly to Claim (c).
\end{cproof}

\medskip

The other special case of a grid, that is, a \emph{tableau}, requires a more elaborate approach to define in $\MTL$.
The difference to a grid or pre-tableau is that we have to quantify over all \emph{pairs} $(w,w')$ of points in $T$, and check that they agree on $\Xi$ if $\ell(w) = \ell(w')$.
However, as discussed before, while $\qa{}$ can quantify over all points in a team, it cannot quantify over pairs.

As a workaround, we consider not only a tableau $T_\alpha$, but also a \emph{second} tableau that acts as a copy of $T_\alpha$.
Formally, for grids $T_\alpha,T_\beta$, let $T_\alpha \approx T_{\beta}$ denote that for all pairs $(w,w') \in T_\alpha \times T_{\beta}$ it holds that $\ell(w) = \ell(w')$ implies $c(w) = c(w')$.

As $\approx$ is symmetric and transitive, $T_\alpha \approx T_\beta$ in fact implies both $T_\alpha \approx T_\alpha$ and $T_\beta \approx T_\beta$, and hence that both $T_\alpha$ and $T_\beta$ are tableaus such that $C_{T_\alpha} = C_{T_\beta}$, where $C_{T_\alpha},C_{T_\beta} \colon {\{1,\ldots,N\}}^2 \to \Gamma \cup (Q \times \Gamma)$ are the induced runs as discussed on p.\ \pageref{p:tableau-function}.
\begin{align*}
 \psi_\text{tableau}(\alpha) \dfn \; & \psi_\text{grid}(\alpha) \land \, \qee{\gamma_0} \,\psi_\text{grid}(\gamma_0) \land \psi_\approx(\alpha,\gamma_0)                                                                     \\[2mm]
 \psi_\approx(\alpha,\beta) \dfn     & \qa{\alpha}\qa{\beta} \; \Big( \big(\psi^\frakt_\equiv(\alpha,\beta) \land \psi^\frakp_\equiv(\alpha,\beta) \big) \timp \bigovee_{e \in \Xi} ((\alpha \lor \beta) \hook e) \Big)
\end{align*}

In the following claim (and in the subsequent ones), we use the scopes $\gamma_0,\gamma_1,\gamma_2,\ldots$ as "auxiliary pre-tableaus".
Later, we will also use them as domains to quantify extra locations or rows from.
(The index of $\gamma_i$ is incremented whenever necessary to avoid quantifying from the same scope twice.)
For this reason, from now on we always assume, for sufficiently large $i$, that $T_{\gamma_i}$ is a pre-tableau.
This can be later enforced in the reduction with $\psi_\text{pre-tableau}(\gamma_i)$.

\begin{claim}[e]
  \hfill
  \begin{enumerate}
    \item $T \vDash \psi_\text{tableau}(\alpha)$ if and only if $T_\alpha$ is a tableau.
    \item For grids $T_\alpha,T_\beta$, it holds $T \vDash \psi_\approx(\alpha,\beta)$ if and only if $T_\alpha \approx T_\beta$.
  \end{enumerate}
\end{claim}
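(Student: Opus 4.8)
The plan is to prove part~2 first and then reduce part~1 to it, since $\psi_\text{tableau}$ is assembled from $\psi_\approx$. Throughout I keep the standing assumptions: $(\calK,T)\vDash\canon'\land\Box^{k+1}\bot$ and all scopes in sight are pairwise disjoint, so that Claim~(a), Claim~(c), and the scope laws of Proposition~\ref{prop:of-scopes} apply, and $T_{\gamma_0}$ is a pre-tableau.

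\emph{Part 2.} I would unfold the two universal subteam quantifiers of $\psi_\approx(\alpha,\beta)$. Using Proposition~\ref{prop:quantifiers} together with the commutation of disjoint selection (Proposition~\ref{prop:of-scopes}) and the disjointness of $\alpha,\beta$, the statement $T\vDash\qa{\alpha}\qa{\beta}\,\vartheta$ reduces to: for all $w\in T_\alpha$ and $v\in T_\beta$, $T^{\alpha,\beta}_{w,v}\vDash\vartheta$, where on $T^{\alpha,\beta}_{w,v}$ the scope $\alpha$ selects exactly $\{w\}$, the scope $\beta$ selects exactly $\{v\}$, and hence $\alpha\lor\beta$ selects exactly $\{w,v\}$. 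Reading $\vartheta=(\psi^\frakt_\equiv(\alpha,\beta)\land\psi^\frakp_\equiv(\alpha,\beta))\timp\bigovee_{e\in\Xi}((\alpha\lor\beta)\hook e)$ and recalling that $\timp$ and $\bigovee$ are the Boolean implication and disjunction on teams, $T^{\alpha,\beta}_{w,v}\vDash\vartheta$ holds iff satisfaction of the antecedent forces some disjunct $(\alpha\lor\beta)\hook e$. By Claim~(a), the antecedent holds iff $\ell(w)=\ell(v)$; and $(\alpha\lor\beta)\hook e$ holds iff $\{w,v\}\vDash e$, i.e.\ $w$ and $v$ both carry $e$. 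As $T_\alpha,T_\beta$ are grids, each of $w,v$ satisfies exactly one proposition of $\Xi$, so a common $e$ exists iff $c(w)=c(v)$. Altogether $T\vDash\psi_\approx(\alpha,\beta)$ iff for all $(w,v)\in T_\alpha\times T_\beta$, $\ell(w)=\ell(v)$ implies $c(w)=c(v)$, which is exactly $T_\alpha\approx T_\beta$.

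\emph{Part 1.} Unfolding $\psi_\text{tableau}(\alpha)$ and applying the semantics of $\qee{\gamma_0}$ (binding the conjunction), I get that $T\vDash\psi_\text{tableau}(\alpha)$ iff $T_\alpha$ is a grid (Claim~(c)) and there is $S\subseteq T_{\gamma_0}$ such that $T^{\gamma_0}_S$ satisfies both $\psi_\text{grid}(\gamma_0)$ and $\psi_\approx(\alpha,\gamma_0)$; since $(T^{\gamma_0}_S)_{\gamma_0}=S$ and $(T^{\gamma_0}_S)_\alpha=T_\alpha$ by disjointness, Claim~(c) and part~2 turn this into: $S$ is a grid and $T_\alpha\approx S$. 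For ``$\Rightarrow$'' I invoke the fact recorded just before the claim, that $\approx$ is symmetric and transitive on grids: from $T_\alpha\approx S$ with $S$ a grid covering every location one obtains $T_\alpha\approx T_\alpha$, which unwinds precisely to the tableau condition $c(w)=c(w')$ whenever $\ell(w)=\ell(w')$. For ``$\Leftarrow$'', assuming $T_\alpha$ is a tableau, I use that $T_{\gamma_0}$ is a pre-tableau: for each location $(i,j)$ select from $T_{\gamma_0}$ the worlds of location $(i,j)$ carrying the symbol $C_{T_\alpha}(i,j)$ (they exist by the pre-tableau property), and let $S$ be their union. Then $S\subseteq T_{\gamma_0}$ is a grid with $C_S=C_{T_\alpha}$, so $T_\alpha\approx S$, witnessing the existential $\qee{\gamma_0}$.

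\emph{Main obstacle.} The genuinely conceptual step is the ``$\Leftarrow$'' direction of part~1: the pre-tableau $T_{\gamma_0}$ must furnish an exact copy $S$ of the run $C_{T_\alpha}$, and this is precisely where the three-way distinction between \emph{grid}, \emph{pre-tableau}, and \emph{tableau} and the detour through the auxiliary scope $\gamma_0$ (forced by the inability of subteam quantifiers to range over pairs) are used. The remaining difficulty is routine but delicate bookkeeping of conditioned subteams: verifying $(T^{\gamma_0}_S)_\alpha=T_\alpha$, $(T^{\gamma_0}_S)_{\gamma_0}=S$, and $(T^{\alpha,\beta}_{w,v})_{\alpha\lor\beta}=\{w,v\}$ via Proposition~\ref{prop:of-scopes}, and correctly tracking the binding of $\qee{\gamma_0}$ over the conjunction.
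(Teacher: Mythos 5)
Your proof is correct and follows essentially the same route as the paper's: part 2 is a direct unfolding of the quantifiers plus Claim (a), and part 1 reduces to part 2 by reading off the witness $S\subseteq T_{\gamma_0}$ for $\qee{\gamma_0}$, using symmetry and transitivity of $\approx$ for the forward direction and the pre-tableau property of $T_{\gamma_0}$ to construct the copy $S$ with $C_S=C_{T_\alpha}$ for the converse. You in fact supply more of the scope-bookkeeping detail (e.g.\ that $(T^{\alpha,\beta}_{w,v})_{\alpha\lor\beta}=\{w,v\}$) than the paper does, and your identification of the selected team as $T^{\gamma_0}_S$ is the correct reading.
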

\begin{cproof}
 (2) follows straightforwardly from Claim (a).
 Let us consider (1).
 As $\psi_\text{tableau}$ implies $\psi_\text{grid}$, and by Claim (c), we can assume that $T_\alpha$ is a grid.

 Suppose that the formula is true.
 Then there exists $S \subseteq T_{\gamma_0}$ such that $T^\alpha_S \vDash \psi_\text{grid}(\gamma_0)$.
 By Claim (c), then $S$ is a grid as well.
 Moreover, $T_\alpha \approx S$ by (2).
 As argued above, this implies that $T_\alpha$ (and $S$) is a tableau.

 For the other direction, suppose that $T_\alpha$ is a tableau.
 Then it defines a function $C_{T_\alpha}$.
 Since $T_{\gamma_0}$ is a pre-tableau, we can pick a subteam $S$ of it that contains for each $(i,j) \in {\{1,\ldots,N\}}^2$ exactly those worlds $w$ with $\ell(w) = (i,j)$ such that $c(w) = C_{T_\alpha}(i,j)$.
 Then $T_\alpha \approx S$, and $\psi_\text{tableau}$ is true, with the quantifier $\qee{\gamma_0}$ witnessed by $S$.
\end{cproof}

\subsection{From tableaus to runs}

To ascertain that a tableau contains a run of $M$, we have to check whether each row indeed is a configuration of $M$---in other words, exactly one cell of each row contains a pair $(q,a) \in Q \times \Gamma$---and whether consecutive configurations obey the transition relation $\delta$ of $M$.

For this, in the spirit of Cook's theorem~\cite{cook} it suffices to consider all \emph{legal windows} in the grid, \ie, cells that are adjacent as follows, where $e_1,\ldots,e_6 \in \Gamma \cup (Q \times \Gamma)$:
\[
 \begin{tabular}{|c|c|c|} 
  \hline 
  $e_1$ & $e_2$ & $e_3$ \\
  \hline 
  $e_4$ & $e_5$ & $e_6$ \\
  \hline 
 \end{tabular}
\]

\medskip

If, say, $(q,a,q',a',R) \in \delta$---$M$ switches to state $q'$ from $q$, replacing $a$ on the tape by $a'$, and moves to the right---then the windows obtained by setting
$e_1 = e_4 = b$, $e_2 = (q,a)$, $e_5 = a'$, $e_3 = b'$, $e_6 = (q',b')$ are legal for all $b,b' \in \Gamma$.
Using this scheme, $\delta$ is completely represented by a constant finite set $\mathsf{win} \subseteq \Xi^6$ of tuples $(e_1,\ldots,e_6)$ that represent the allowed windows in a run of $M$.

Let us next explain how adjacency of cells is expressed.
Suppose that two points $w \in T_\alpha$ and $v \in T_{\beta}$ are given.
That $v$ is the immediate ($\frakt$- or $\frakp$-)successor of $w$ then means that no element of the order exists between them. 
Simultaneously, $w$ and $v$ have to agree on the other component of their location, which is expressed by the first conjunct below.
If $\frakq \in \{\frakt,\frakp\}$ and $\overline{\frakq} \in \{\frakt,\frakp\} \setminus \{\frakq\}$,
 we define:
\begin{align*}
 \psi^\frakq_{\text{succ}}(\alpha,\beta) \dfn \; & \psi^{\overline{\frakq}}_{\equiv}(\alpha,\beta)\land \psi^\frakq_{\prec}(\alpha,\beta) \land \negg \qe{\gamma_0} \left(\psi^\frakq_{\prec}(\alpha,\gamma_0) \land \psi^\frakq_{\prec}(\gamma_0,\beta) \right)
\end{align*}

\begin{claim}[f]
 If $w \in T_\alpha$ and $v \in T_\beta$, then:
 \begin{align*}
  T^{\alpha,\beta}_{w,v} \vDash \psi^\frakt_\text{succ}(\alpha,\beta) & \Leftrightarrow \exists i,j \in \{1,\ldots,N\} \colon \ell(w) = (i,j) \text{ and }\ell(v) = (i+1,j) \\
  T^{\alpha,\beta}_{w,v} \vDash \psi^\frakp_\text{succ}(\alpha,\beta) & \Leftrightarrow \exists i,j \in \{1,\ldots,N\} \colon \ell(w) = (i,j) \text{ and }\ell(v) = (i,j+1)
 \end{align*}
\end{claim}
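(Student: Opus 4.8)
The plan is to decode $\psi^\frakt_{\text{succ}}(\alpha,\beta)$ conjunct by conjunct, reading each conjunct off the already-established Claim~(a) and then pinning down consecutiveness using the \emph{grid} property of $T_{\gamma_0}$. I would treat only $\frakq = \frakt$ (so $\overline{\frakq} = \frakp$); the $\frakp$-case is entirely symmetric, with the two coordinates exchanged. Throughout, write $\ell(w) = (i_w,j_w)$ and $\ell(v) = (i_v,j_v)$, and recall that $\alpha,\beta,\gamma_0,\fraka_0,\ldots,\fraka_k$ are pairwise disjoint scopes and $(\calK,T)$ is a $k$-staircase with $T_{\gamma_0}$ a pre-tableau, by the standing assumptions.

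For the first two conjuncts I would invoke Claim~(a) directly. The conjunct $\psi^{\frakp}_{\equiv}(\alpha,\beta)$ holds on $T^{\alpha,\beta}_{w,v}$ iff $j_w = j_v$, and $\psi^{\frakt}_{\prec}(\alpha,\beta)$ holds iff $i_w < i_v$; the latter equivalence is licensed precisely because $(\calK,T)$ is a $k$-staircase with the $\fraka_i$ disjoint, which is the hypothesis attached to the $\prec$-part of Claim~(a). Thus after these two conjuncts we already know $j_w = j_v$ and $i_w < i_v$.

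The third conjunct $\negg\,\qe{\gamma_0}\big(\psi^\frakt_\prec(\alpha,\gamma_0) \land \psi^\frakt_\prec(\gamma_0,\beta)\big)$ carries the real content. Evaluating $\qe{\gamma_0}$ on $T^{\alpha,\beta}_{w,v}$ selects a point $u \in (T^{\alpha,\beta}_{w,v})_{\gamma_0}$; since $\gamma_0$ is disjoint from $\alpha$ and $\beta$, this selection set is simply $T_{\gamma_0}$, and the body is then read on $T^{\alpha,\beta,\gamma_0}_{w,v,u}$. I would reapply Claim~(a) twice: to the scope pair $(\alpha,\gamma_0)$, obtaining $\psi^\frakt_\prec(\alpha,\gamma_0)$ iff $i_w < i_u$, and to $(\gamma_0,\beta)$, obtaining $\psi^\frakt_\prec(\gamma_0,\beta)$ iff $i_u < i_v$. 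The point needing care is that Claim~(a) is phrased for a base team carrying only the two relevant singleton selections; here I would take the base team to be $T$ with the $\beta$-part (resp.\ $\alpha$-part) already shrunk to its singleton, observing that shrinking a scope disjoint from $\fraka_0,\ldots,\fraka_k$ leaves both the $k$-staircase structure and all scope-disjointness intact, so the hypotheses of Claim~(a) still hold and $\ell(w),\ell(u),\ell(v)$ are unchanged. Hence $\qe{\gamma_0}(\cdots)$ holds iff some $u \in T_{\gamma_0}$ satisfies $i_w < i_u < i_v$.

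Finally I would use that $T_{\gamma_0}$ is a grid, so \emph{every} location in $\{1,\ldots,N\}^2$, and in particular every time index, is realized by some point of $T_{\gamma_0}$. Consequently the negation holds iff there is no index strictly between $i_w$ and $i_v$, which together with $i_w < i_v$ forces $i_v = i_w + 1$. Collecting the three conjuncts yields $j_w = j_v$ and $i_v = i_w+1$, i.e.\ $\ell(w) = (i,j)$ and $\ell(v) = (i+1,j)$ with $i \dfn i_w$ and $j \dfn j_w$; the converse follows by running these equivalences backwards. I expect the main obstacle to be the bookkeeping of the nested disjoint selections inside $\qe{\gamma_0}$---specifically, justifying that each $\psi^\frakt_\prec$ subformula "sees" only its two designated scopes so that Claim~(a) applies verbatim---rather than any genuinely new idea.
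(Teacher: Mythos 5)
Your proposal is correct and follows essentially the same route as the paper's proof: decode the first two conjuncts via Claim~(a), then use the fact that the pre-tableau $T_{\gamma_0}$ realizes every time index so that the negated $\qe{\gamma_0}$-conjunct rules out any index strictly between $i_w$ and $i_v$ (the paper phrases this as a contradiction with a witness $z$ of location $(i+1,j)$, you phrase it as a direct equivalence, but it is the same argument). Your extra care about why Claim~(a) still applies to the nested selections is a valid observation that the paper leaves implicit.
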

\begin{cproof}
 Let us consider only $\frakq = \frakt$, as the case $\frakq = \frakp$ is proven analogously.
 Assume that the formula $\psi^\frakt_\text{succ}(\alpha,\beta)$ is true in $T^{\alpha,\beta}_{w,v}$.
 By Claim (a), $\psi^{\frakp}_{\equiv}$ holds if and only if there is a unique $j$ such that $\ell(w) = (i,j)$ and $\ell(v) = (i',j)$, for some $i,i'$;
 in other words, if $w$ and $v$ agree on their $\frakp$-component.

 Next, consider the sets $\Delta_w \dfn \type{{(Rw)}_\frakt}_{k-1}$ and $\Delta_v \dfn \type{{(Rv)}_\frakt}_{k-1}$ which correspond to the $\frakt$-components of $\ell(w)$ and $\ell(v)$.
 Suppose that $\Delta_w$ is the $i$-th element of $\prec^*_{k-1}$.
 By $\psi^\frakt_\prec$ and Claim (a), then clearly $\Delta_v$ is the $i'$-th element for some $i' > i$.

 Suppose for the sake of contradiction that also $i' > i + 1$, and let then instead $\Delta' \subseteq \Delta_{k-1}$ be the $(i+1)$-th element of $\prec^*_{k-1}$.
 As $T_{\gamma_0}$ is a pre-tableau, it contains a world $z$ such that $\ell(z) = (i+1,j)$.
 But then $\psi^\frakq_{\prec}(\alpha,\gamma_0) \land \psi^\frakq_{\prec}(\gamma_0,\beta)$ is true in $T^{\alpha,\beta,\gamma_0}_{w,v,z}$, contradiction to $\psi^\frakt_\text{succ}$.
 Consequently, $i' = i + 1$.
 The direction from right to left is shown similarly.
\end{cproof}

To check all windows in the tableau $T_\alpha$, we need to simultaneously quantify elements from \emph{six} tableaus $T_{\gamma_1},\ldots,T_{\gamma_6}$ that are copies of $T_\alpha$.
For this purpose, we define
\begin{align*}
 \qeet{\gamma_i}{\alpha} \, \varphi \dfn \; & \qee{\gamma_i} \; \psi_\text{grid}(\gamma_i) \land \psi_\approx(\alpha,\gamma_i) \land \varphi\text{.}
\end{align*}
Intuitively, under the premise that $T_{\gamma_i}$ is a pre-tableau and $T_\alpha$ is a tableau, it "copies" the tableau $T_\alpha$ into $T_{\gamma_i}$ by shrinking $T_{\gamma_i}$ accordingly.
This is proven analogously to Claim (e).
The next formula states that the picked points are arranged as in the picture:
\begin{align*}
 \psi_\text{window}(\gamma_1,\ldots,\gamma_6) \dfn \; &
 \psi^\frakt_{\text{succ}}(\gamma_1,\gamma_4) \land
 \psi^\frakt_{\text{succ}}(\gamma_2,\gamma_5)  \land
 \psi^\frakt_{\text{succ}}(\gamma_3,\gamma_6) \; \land                                                                                                  \\
                                                      & \psi^\frakp_{\text{succ}}(\gamma_1,\gamma_2) \land \psi^\frakp_{\text{succ}}(\gamma_2,\gamma_3)
\end{align*}

\[
 \begin{tabular}{|c|c|c|} 
  \hline 
  $T_{\gamma_1}$ & $T_{\gamma_2}$ & $T_{\gamma_3}$ \\
  \hline 
  $T_{\gamma_4}$ & $T_{\gamma_5}$ & $T_{\gamma_6}$ \\
  \hline 
 \end{tabular}
\]

\medskip

\noindent{}The formula defining legal tableaus follows.
\begin{align*}
 \psi_{\text{legal}}(\alpha) & \dfn \psi_{\text{tableau}}(\alpha) \land \qeet{\gamma_1}{\alpha}\cdots \qeet{\gamma_6}{\alpha} \; \vartheta_1 \land \vartheta_2 \land \vartheta_3
 \intertext{We check that at most cell per row contains a state of $M$:}
 \vartheta_1                 & \dfn \qa{\gamma_1}\qa{\gamma_2} \Big(\psi^\frakt_{\equiv}(\gamma_1,\gamma_2) \land \psi^\frakp_{\prec}(\gamma_1,\gamma_2)\big) \timp                                                                                                                                                                                         \\
                             & \bigwedge_{\mathclap{(q_1,a_1),                                                                                                                                                                   (q_2,a_2) \in Q \times \Gamma}} \negg\big((\gamma_1 \hook c^{-1}(q_1,a_1)) \land (\gamma_2 \hook c^{-1}(q_2,a_2)\big)\Big) \\
 \intertext{We also check that every row contains some state. For this, $\qa{\gamma_1}$ fixes some row and $\qe{\gamma_2} \psi^\frakt_\equiv(\gamma_1,\gamma_2)$ searches that particular row for a state:}
 \vartheta_2                 & \dfn \qa{\gamma_1} \qe{\gamma_2} \; \psi^\frakt_\equiv(\gamma_1,\gamma_2) \land \bigovee_{\mathclap{(q,a) \in Q \times \Gamma}}\; (\gamma_2 \hook c^{-1}(q,a))                                                                                                                                                                            \\
 \intertext{Finally, every window must obey the transition relation:}
 \vartheta_3                 & \dfn \qa{\gamma_1} \cdots \qa{\gamma_6}  \; \Big( \psi_\text{window}(\gamma_1,\ldots,\gamma_6) \timp  \bigovee_{(e_1,\ldots,e_6) \in \mathrm{win}} \;\bigwedge_{i=1}^6 (\gamma_i \hook e_i) \Big)
\end{align*}

\begin{claim}[g]
 $T \vDash \psi_\text{legal}(\alpha)$ iff $T_\alpha$ is a legal tableau, \ie, iff $C_{T_\alpha}$ exists and is a run of $M$.
\end{claim}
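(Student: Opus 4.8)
The plan is to treat $\psi_\text{legal}(\alpha)$ conjunct by conjunct, reducing each part to the combinatorial conditions defining a run of $M$. Since $\psi_\text{tableau}(\alpha)$ is the leading conjunct, Claim~(e) immediately yields that whenever $\psi_\text{legal}(\alpha)$ holds, $T_\alpha$ is a tableau and hence induces a well-defined function $C_{T_\alpha}$; conversely, assuming $T_\alpha$ is a legal tableau supplies this conjunct for free, together with the existence of $C_{T_\alpha}$. It thus remains to show that, under the standing assumption that $T_\alpha$ is a tableau, the formula $\qeet{\gamma_1}{\alpha}\cdots\qeet{\gamma_6}{\alpha}\,(\vartheta_1 \land \vartheta_2 \land \vartheta_3)$ holds if and only if $C_{T_\alpha}$ is a run of $M$.

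The first step is to pin down the role of the six modified quantifiers. By the analog of Claim~(e) noted immediately after the definition of the quantifier $\qeet{\gamma_i}{\alpha}$, each such quantifier, under the premise that $T_{\gamma_i}$ is a pre-tableau, forces the scope $\gamma_i$ to be shrunk to a tableau that agrees with $T_\alpha$, so that $C_{T_{\gamma_i}} = C_{T_\alpha}$. Consequently, after processing all six quantifiers the scopes $\gamma_1,\ldots,\gamma_6$ each carry an identical copy of the run $C_{T_\alpha}$, which is exactly what lets the window check inspect six cells simultaneously despite the fact that $\qa{}$ cannot select two points from the same scope.

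Next I would translate $\vartheta_1$, $\vartheta_2$, and $\vartheta_3$ into their intended meaning via Claims~(a) and~(f). Applying $\qa{\gamma_1}\qa{\gamma_2}$ reduces $\psi^\frakt_\equiv$ and $\psi^\frakp_\prec$ to comparisons of the locations $\ell(w),\ell(v)$ of the individual selected worlds, so $\vartheta_1$ states that no row of $C_{T_\alpha}$ contains two cells that both carry a state, and $\vartheta_2$ that every row contains at least one state; together they guarantee that each row is a proper configuration of $M$. For $\vartheta_3$, the subformula $\psi_\text{window}(\gamma_1,\ldots,\gamma_6)$ combined with Claim~(f) characterizes precisely those six-tuples of cells arranged as a legal window, so $\vartheta_3$ asserts that every window of $C_{T_\alpha}$ matches some tuple in $\mathrm{win}$. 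I would then invoke the standard Cook-style local-consistency argument: a grid function whose every row is a configuration of $M$ and whose every window lies in $\mathrm{win}$ is exactly one in which every pair of successive rows is a valid $\delta$-transition, hence a run.

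The main obstacle will be the careful bookkeeping of the subteam selections so that Claims~(a) and~(f) genuinely apply at each quantified point---in particular, verifying that the $\qeet{\gamma_i}{\alpha}$ quantifiers reliably reproduce $C_{T_\alpha}$ in every auxiliary scope, and that the windows ranged over by $\vartheta_3$ exhaust exactly the adjacent cell-triples of two consecutive rows. Establishing both directions of the Cook-style equivalence---that local window legality is equivalent to global transition correctness---is the substantive content, since everything else follows mechanically from the previously proven claims once the copies in $\gamma_1,\ldots,\gamma_6$ are in place.
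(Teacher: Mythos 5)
Your proposal is correct and follows essentially the same route as the paper's own proof: decompose $\psi_\text{legal}$ into its conjuncts, use Claim~(e) (and its analog for $\qeet{\gamma_i}{\alpha}$) to obtain six tableau copies of $C_{T_\alpha}$, read off the meanings of $\vartheta_1,\vartheta_2,\vartheta_3$ via Claims~(a) and~(f), and conclude by the standard Cook-style local-to-global argument. No substantive difference.
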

\begin{cproof}
 Suppose that the formula holds.
 We show that $T_\alpha$ is a legal tableau; the other direction is proven similarly.

 Due to Claim (e), there are tableaus $S_1 \subseteq T_{\gamma_1}$, \ldots, $S_6 \subseteq T_{\gamma_6}$ that are copies of $T_\alpha$ such that $\vartheta_1 \land \vartheta_2 \land \vartheta_2$ holds in $T^{\gamma_1,\ldots,\gamma_6}_{S_1,\ldots,S_6}$.

 Due to Claim (a), the subformula $\vartheta_1$ ensures the following:
 For all $w \in S_1, w' \in S_2$, $\ell(w) = (i,j)$, $\ell(w') = (i',j')$, if $i = i'$ and $j < j'$ hold, then it is not the case that both $c(w) = (q,a)$ and $c(w') = (q',a')$ for any state symbols $q,q' \in Q$.
 Since $C_{S_1} = C_{S_2} = C_{T_\alpha}$,
 this is precisely the case if each row of $C_{T_\alpha}$ contains at most one state symbol.

 Conversely, again by Claim (a), the subformula $\vartheta_2$ states that for every cell $w \in S_1$ there is another cell $w' \in S_2$ in the same row that carries a state symbol: in other words, every row of $C_{T_\alpha}$ contains at least one state symbol.

 Finally, $\vartheta_3$ relies on Claim (f) and states for every choice of singletons $w_1,\ldots,w_6$ in $S_1,\ldots,S_6$, assuming that they are arranged as a window, that there exists a tuple $(e_1,\ldots,e_6)\in \mathrm{win}$ such that $w_i \in S_i$ satisfies $c(w_i) = e_i$.
 As we showed that $C_{T_\alpha}$ contains in each row a configuration of $M$, this implies that  $C_{T_\alpha}$ exists and is a run of $M$.
\end{cproof}

\subsection{From runs to a computation}

To encode the initial configuration on input $x = x_1 \cdots x_n$ in a tableau, we access the first $n$ cells of the first row and assign the respective letter of $x$, as well as the initial state, to the first cell.
Moreover, we assign $\flat$ to all other cells in that row.
For each $\frakq \in \{\frakt,\frakp\}$, we can check whether the location of a point in $T_\alpha$ is minimal in its $\frakq$-component:
\begin{align*}
 \psi^\frakq_\text{min}(\alpha) \dfn \; & \negg \qe{\gamma_0} \psi^\frakq_\prec(\gamma_0,\alpha)                                                                                                                                                                        \\
 \intertext{This enables us to fix the first row of the configuration:}
 \psi_{\text{input}}(\alpha) \dfn \;    & \qeet{\gamma_1}{\alpha}\cdots \qeet{\gamma_{n+1}}{\alpha} \;  \qe{\gamma_1} \cdots \qe{\gamma_n} \,\psi^\frakt_{\text{min}}(\gamma_1) \land \psi^\frakp_{\text{min}}(\gamma_1) \land \big(\gamma_1 \hook c^{-1}(q_0,x_1)\big) \\
                                        & \quad \bigwedge_{i = 2}^n \psi^\frakp_\text{succ}(\gamma_{i-1},\gamma_i)  \; \land \big(\gamma_i \hook c^{-1}(x_i)\big)                                                                                                       \\
                                        & \qquad \land \qa{\gamma_{n+1}} \Big(\big(\psi^\frakt_\equiv(\gamma_n,\gamma_{n+1}) \land \psi^\frakp_\prec(\gamma_n,\gamma_{n+1})\big)  \timp \big(\gamma_{n+1} \hook c^{-1}(\flat)\big)\Big)
\end{align*}

\begin{claim}[h]
 Let $T_\alpha$ be a tableau.
 Then $T \vDash \psi_\text{input}(\alpha)$ if and only if
 \begin{enumerate}
  \item $C_{T_\alpha}(1,1) = (q_0,x_1)$,
  \item $C_{T_\alpha}(1,i) = x_i$ for $2 \leq i \leq n$,
  \item $C_{T_\alpha}(1,i) = \flat$ for $n < i \leq N$.
 \end{enumerate}
\end{claim}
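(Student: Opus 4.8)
The plan is to peel off the quantifier prefix of $\psi_\text{input}(\alpha)$ from the outside inward, translating each layer into a statement about the run $C_{T_\alpha}$ with the help of the earlier claims. Since $T_\alpha$ is assumed to be a tableau, the induced run $C_{T_\alpha}$ is well-defined, so all three conditions are meaningful; and since every step below is an equivalence, chaining them establishes the claim in both directions at once.

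First I would dispatch the outermost block $\qeet{\gamma_1}{\alpha}\cdots\qeet{\gamma_{n+1}}{\alpha}$. Under the standing assumption that each $T_{\gamma_i}$ is a pre-tableau and $T_\alpha$ a tableau, the same argument as in Claim (e) shows that this copy quantifier shrinks each $T_{\gamma_i}$ to a tableau $S_i$ with $C_{S_i} = C_{T_\alpha}$, and conversely that any such copies can be selected. Hence we may henceforth work with copies $S_1,\ldots,S_{n+1}$ of $T_\alpha$. The following existential block $\qe{\gamma_1}\cdots\qe{\gamma_n}$ then picks single worlds $w_i \in S_i$, which by Proposition~\ref{prop:quantifiers} reduces satisfaction of the body to a statement about these chosen points.

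Next I would evaluate the propositional body on $w_1,\ldots,w_n$. Using Claim (a) together with the pre-tableau property of $T_{\gamma_0}$, the conjunct $\psi^\frakt_\text{min}(\gamma_1) \land \psi^\frakp_\text{min}(\gamma_1)$ forces $\ell(w_1) = (1,1)$, so that $\gamma_1 \hook c^{-1}(q_0,x_1)$ asserts exactly $C_{T_\alpha}(1,1) = (q_0,x_1)$, i.e.\ condition~(1). A short induction on $i$ via Claim (f) applied to $\psi^\frakp_\text{succ}(\gamma_{i-1},\gamma_i)$ then yields $\ell(w_i) = (1,i)$, whence the conjuncts $\gamma_i \hook c^{-1}(x_i)$ give condition~(2). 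Finally I would read the trailing $\qa{\gamma_{n+1}}$-block over all worlds $z \in S_{n+1}$: by Claim (a) the premise $\psi^\frakt_\equiv(\gamma_n,\gamma_{n+1}) \land \psi^\frakp_\prec(\gamma_n,\gamma_{n+1})$ selects precisely those $z$ with $\ell(z) = (1,j)$ for $j > n$, and the forced conclusion $\gamma_{n+1} \hook c^{-1}(\flat)$ gives $c(z) = \flat$, which is condition~(3).

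The proof is essentially bookkeeping, and the only delicate point is the first step: verifying that the nested copy quantifiers $\qeet{\gamma_i}{\alpha}$ genuinely realize independent copies $S_i$ of $T_\alpha$, and that the later single-world selections $\qe{\gamma_i}$ do not interfere across scopes. This is guaranteed because $\alpha,\gamma_0,\ldots,\gamma_{n+1}$ are pairwise disjoint scopes, so by the distributive and commutation laws of Proposition~\ref{prop:of-scopes} the successive subteam selections act independently on their respective scopes, and each $\psi^\frakq_\bullet$ test depends only on the two scopes it names.
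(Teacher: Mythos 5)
Your proposal is correct and follows essentially the same route as the paper's own proof: process the copy quantifiers $\qeet{\gamma_i}{\alpha}$ to obtain tableaus with $C_{S_i}=C_{T_\alpha}$, then use Claim (a) with the pre-tableau $T_{\gamma_0}$ to pin $\ell(w_1)=(1,1)$, Claim (f) to walk along the first row, and the final universal block for the blank cells. The only cosmetic difference is that you phrase the argument as a chain of equivalences (covering both directions at once) where the paper spells out one direction and declares the other "similar"; your closing remark on scope disjointness also makes explicit a point the paper only notes parenthetically.
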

\begin{cproof}
 Suppose that the formula holds.
 After processing the quantifiers $\qeet{\gamma_1}{\alpha}\cdots\qeet{\gamma_{n+1}}{\alpha}$, for all $m \in \{1,\ldots,n+1\}$ the team $T_{\gamma_m}$ is a tableau such that $C_{T_{\gamma_m}} = C_{T_\alpha}$. (Obviously this requires these teams to be pre-tableaus beforehand.)
 For this reason, we can freely replace $C_{T_\alpha}(i,j)$ with $C_{T_{\gamma_m}}(i,j)$ when proving the properties (1)--(3). 

 In the second line of the formula, we make sure that $c(w) = (q_0,x_1)$ holds for least one point $w \in C_{T_{\gamma_1}}$ of location $\ell(w) = (1,1)$.
 That $\ell(w) = (1,1)$ holds follows from Claim (a), $\psi^\frakq_\text{min}$, and the assumption that $T_{\gamma_0}$ is a pre-tableau (which it still is after processing $\qeet{\gamma_1}{\alpha}\cdots\qeet{\gamma_{n+1}}{\alpha}$).
 In particular, $C_{T_{\gamma_1}}(1,1) = (q_0,x_1)$.

 The third line works similarly: for $2 \leq i \leq n$, it assigns $x_i$ to $C_{T_{\gamma_i}}(1,i)$ and hence to $C_{T_\alpha}(1,i)$.
 Note that $\psi^\frakp_\text{succ}$ also preserves the position in "$\frakp$-direction", \ie, it is not necessary to repeat it for every cell of the first row.
 Finally, the last two lines state that every other location $(1,j')$ with $j' > n$ contains $\flat$.
 The other direction is again similar.
\end{cproof}

Until now, we ignored the fact that $M$ (polynomially often) alternates.
To simulate this, we alternatingly quantify polynomially many tableaus, each containing a part of the computation of $M$.
Each of these tableaus possesses a \emph{tail configuration},
which is the configuration where $M$ either accepts, rejects, or alternates.
Formally, a number $i \in \{1,\ldots,N\}$ is a \emph{tail index} of $C$ if there exists $j$ such that either
\begin{enumerate}
 \item $C(i,j)$ has an accepting or rejecting state,
 \item or $C(i,j)$ has an existential state and and there are $i' < i$ and $j'$ with a universal state in $C(i',j')$,
 \item or $C(i,j)$ has a universal state and there are $i' < i$ and $j'$ with an existential state in $C(i',j')$.
\end{enumerate}
The least such $i$ is called \emph{first tail index}, and the corresponding configuration is the \emph{first tail configuration}.
The idea is that we can split the computation of $M$ into multiple tableaus if any tableau (except the initial one) contains a run that continues from the previous tableau's first tail configuration.

We formalize the above as follows.
Assume that $T_\alpha$ is a tableau, and that $T_{\beta}$ marks a single row $i$ by being a singleton $\{w\}$ with $\ell(w) = (i,j)$ for some $j$.
Then the formula $\psi_\text{tail}(\alpha,\beta)$ below will be true if and only if the $i$-th row of $C_{T_\alpha}$ is a tail configuration.
With
\begin{align*}
 \xstate{Q'}(\beta) & \dfn \bigovee_{\mathclap{(q,a) \in Q' \times \Gamma}} (\beta \hook c^{-1}(q,a))\text{,}
\end{align*}
we check if a given singleton $T_{\beta} = \{w\}$ encodes an accepting, rejecting, existential, universal, or any state by setting $Q'$ to $Q_\mathrm{acc}$, $Q_\mathrm{rej}$, $Q_{\exists}$, $Q_\forall$ or $Q$, respectively.
We define $\psi_\text{tail}$:
\begin{align*}
 \psi_\text{tail}(\alpha,\beta) \;&\dfn
 \qeet{\gamma_0}{\alpha}\, \qe{\alpha}                        \psi^\frakt_\equiv(\alpha,\beta) \land \xstate{Q}(\alpha)  \land \Big[ \xstate{Q_\mathrm{acc}}(\alpha) \ovee \xstate{Q_\mathrm{rej}}(\alpha) \; \ovee \\
  \qe{\gamma_0}& \Big(\psi^\frakt_\prec(\gamma_0,\alpha) \land
  \big(\xstate{Q_\exists}(\alpha) \land \xstate{Q_\forall}(\gamma_0)) \ovee (\xstate{Q_\forall}(\alpha) \land \xstate{Q_\exists}(\gamma_0) \big) \Big) \Big]                                                            \\
 \psi_\text{first-tail}(\alpha,\beta) \;& \dfn  \psi_\text{tail}(\alpha,\beta) \land \negg \qe{\gamma_1} \Big(\psi^\frakt_\prec(\gamma_1,\beta) \land \psi_\text{tail}(\alpha,\gamma_1)\Big)
\end{align*}

\begin{claim}[i]
 Suppose that $T_\alpha$ is a tableau, $T_\beta =\{ w\}$, and $\ell(w) = (i,j)$.
 Then $T \vDash \psi_\text{tail}(\alpha,\beta)$ if and only if $i$ is a tail index of $C_{T_\alpha}$.
 Moreover, $T \vDash \psi_\text{first-tail}(\alpha,\beta)$ if and only if $i$ is the first tail index of $C_{T_\alpha}$.
\end{claim}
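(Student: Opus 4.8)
The plan is to prove both equivalences by unfolding the subteam quantifiers in $\psi_\text{tail}$ and $\psi_\text{first-tail}$ and matching each resulting condition against the defining clauses of a (first) tail index, relying throughout on Claim~(a) for the comparison formulas $\psi^\frakt_{\equiv}$ and $\psi^\frakt_{\prec}$, on the argument behind Claim~(e) for the copying quantifier $\qeet{\cdot}{\cdot}$, and on the immediate reading of $\xstate{Q'}$ on a singleton selection (namely, $\xstate{Q'}(\alpha)$ holds in a singleton $\{u\}$ iff $c(u) \in Q' \times \Gamma$).

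First I would treat $\psi_\text{tail}(\alpha,\beta)$. Since $\alpha$ and $\gamma_0$ are disjoint scopes, $T_{\gamma_0}$ is (by standing assumption) a pre-tableau, and $T_\alpha$ is a tableau, processing $\qeet{\gamma_0}{\alpha}$ shrinks $T_{\gamma_0}$ to a tableau copy of $T_\alpha$ with $C_{T_{\gamma_0}} = C_{T_\alpha}$, exactly as in Claim~(e); this copy is untouched by the subsequent $\qe{\alpha}$ because the scopes are disjoint. The quantifier $\qe{\alpha}$ selects a single $u \in T_\alpha$, the conjunct $\psi^\frakt_{\equiv}(\alpha,\beta)$ forces $u$ into the same row $i$ as $w$ by Claim~(a), and $\xstate{Q}(\alpha)$ forces $c(u) \in Q \times \Gamma$. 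The bracketed $\ovee$-disjunction then reads off the three clauses of the tail-index definition: the first two disjuncts are $c(u) \in Q_\mathrm{acc} \times \Gamma$ and $c(u) \in Q_\mathrm{rej} \times \Gamma$ (clause~1), while the third selects, via $\qe{\gamma_0}$, a world $z$ in the copy with strictly smaller row $i' < i$ (by $\psi^\frakt_{\prec}(\gamma_0,\alpha)$ and Claim~(a)) whose state is of the quantifier kind opposite to that of $u$ (clauses~2 and~3). Since $T_{\gamma_0}$ carries the run $C_{T_\alpha}$, the existence of such a $z$ is equivalent to the existence of an earlier cell of the opposite kind in $C_{T_\alpha}$; collecting the cases yields $T \vDash \psi_\text{tail}(\alpha,\beta)$ iff $i$ is a tail index.

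For $\psi_\text{first-tail}(\alpha,\beta)$ I would use that $T_{\gamma_1}$ is a pre-tableau, so that $\qe{\gamma_1}$ ranges over singletons marking every location and hence every row. The negated existential $\negg\,\qe{\gamma_1}\big(\psi^\frakt_{\prec}(\gamma_1,\beta) \land \psi_\text{tail}(\alpha,\gamma_1)\big)$ then states that no row $i' < i$ (again by Claim~(a)) is itself a tail index, where $\gamma_1$ now plays the marking role previously held by $\beta$. Conjoined with $\psi_\text{tail}(\alpha,\beta)$, which asserts that $i$ is a tail index, this is exactly the statement that $i$ is the least, i.e.\ the first, tail index of $C_{T_\alpha}$.

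The main obstacle I anticipate is bookkeeping rather than conceptual. One must verify that each auxiliary scope retains the property it needs under the nested quantifications: that $T_{\gamma_0}$ remains a faithful tableau copy after $T_\alpha$ has been cut down to $\{u\}$, and that $T_{\gamma_1}$ stays a pre-tableau so that the range over earlier rows is genuinely exhaustive; both follow from the disjointness of the scopes together with Proposition~\ref{prop:of-scopes}. A subtler point is the self-referential reuse of $\psi_\text{tail}$ with second argument $\gamma_1$, which is legitimate precisely because the first part of the claim was established for an \emph{arbitrary} singleton marking scope, and $\qe{\gamma_1}$ produces such a singleton.
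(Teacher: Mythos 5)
Your proposal is correct and follows essentially the same route as the paper's own argument: use the copying quantifier $\qeet{\gamma_0}{\alpha}$ to obtain a tableau copy with $C_{T_{\gamma_0}} = C_{T_\alpha}$, read the first line of $\psi_\text{tail}$ as isolating a state-carrying cell in row $i$, match the three $\ovee$-disjuncts to the three clauses of the tail-index definition via Claim~(a), and reduce $\psi_\text{first-tail}$ to $\psi_\text{tail}$ by exploiting that the pre-tableau $T_{\gamma_1}$ exhausts all earlier rows. Your explicit remarks on the scope bookkeeping and on why $\psi_\text{tail}$ may be reused with $\gamma_1$ as the marking scope are points the paper leaves implicit, but they do not change the argument.
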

\begin{cproof}
 Since $T_{\gamma_1}$ is a pre-tableau and hence contains all locations in rows $i' < i$, it is easy to see that the proof for $\psi_{\text{first-tail}}$ boils down to that of $\psi_\text{tail}$.
 Consequently, let us consider $\psi_\text{tail}$.

 First, due to $\qeet{\gamma_0}{\alpha}$, we can assume that $T_{\gamma_0}$ is a tableau that is a copy of $T_\alpha$, \ie, $C_{T_\alpha} = C_{T_{\gamma_0}}$.
 Here, it is required for the inner quantification in the definition of a tail index.

 The first line of the formula reduces $T_\alpha$ to a singleton that is (due to $\psi^\frakt_\equiv$) in row $i$.
 Furthermore, it carries a state $q$ of $M$ due to $\xstate{Q}(\alpha)$.
 The further examination of this state will determine if $i$ is a tail index.
 Now, $q$ is exactly one of accepting, rejecting, existential, or universal.
 If $q \in Q_\mathrm{acc} \cup Q_\mathrm{rej}$, then $i$ is a tail index by definition.

 Otherwise we quantify over the states $q'$ of all (copies of) earlier rows in $T_\alpha$, using $\qe{\gamma_0} \psi^\frakt_\prec(\gamma_0,\alpha)$, and search for a universal state if $q$ is existential and vice versa, which as well, if it exists, proves by definition that $i$ is a tail index.
\end{cproof}

\medskip

Formally, given a run $C$ of $M$ that has a tail configuration, $C$ \emph{accepts} if the state $q$ in its first tail configuration is in $Q_\mathrm{acc}$, $C$ \emph{rejects} if that $q$ is in $Q_\mathrm{rej}$, and $C$ \emph{alternates} otherwise.
That a run of the form $C_{T_\alpha}$ accepts or rejects is expressed by
\begin{align*}
 \psi_\text{acc}(\alpha) & \dfn \qeet{\gamma_2}{\alpha} \; \qe{\gamma_2} \; \xstate{Q_\mathrm{acc}}(\gamma_2) \land \psi_\text{first-tail}(\alpha,\gamma_2)\text{,} \\
 \psi_\text{rej}(\alpha) & \dfn \qeet{\gamma_2}{\alpha} \; \qe{\gamma_2} \; \xstate{Q_\mathrm{rej}}(\gamma_2) \land \psi_\text{first-tail}(\alpha,\gamma_2)\text{.}
\end{align*}

In this formula, first the tableau $T_\alpha$ is copied to $T_{\gamma_2}$ to extract with $\qe{\gamma_2}$ the world carrying an accepting/rejecting state, while $\psi_\text{first-tail}(\alpha,\gamma_2)$ ensures that no alternation or rejecting/accepting state occurs at some earlier point in $C_{T_\alpha}$.

If the first tail configuration of the run contains an alternation, and if the run was existentially quantified, then it should be continued in a universally quantified tableau, and vice versa.
The following formula expresses, given two tableaus $T_\alpha,T_\beta$, that $C_{T_\beta}$ is a \emph{continuation} of $C_{T_\alpha}$, \ie, that the first configuration of $C_{T_\beta}$ equals the first tail configuration of $C_{T_\alpha}$.
In other words, if $i$ is the first tail index of $C_{T_\alpha}$, then $C_{T_\alpha}(i,j) = C_{T_\beta}(1,j)$ for all $j \in \{1, \ldots,N \}$.
\begin{align*}
 \psi_\text{cont}(\alpha,\beta) \dfn \; & \qe{\gamma_2} \, \psi_\text{first-tail}(\alpha,\gamma_2) \land \qa{\alpha}\qa{\beta}                                                                                                                           \\
                                        & \quad \Big[ \Big(\psi^\frakt_\text{min}(\beta) \land \psi^\frakt_\equiv(\alpha,\gamma_2) \land \psi^\frakp_\equiv(\alpha,\beta)\Big)  \timp \Big( \bigovee_{e \in \Xi} (\alpha \lor \beta) \hook e \Big) \Big]
\end{align*}
The above formula first obtains the first tail index $i$ of $C_{T_\alpha}$ and stores it in a singleton $y \in T_{\gamma_2}$.
Then for all worlds $w \in T_\alpha$ and $v \in T_\beta$, where $v$ is $\frakt$-minimal (\ie, in the first row) and $w$ is in the same row as $y$, and which additionally agree on their $\frakp$-component, the third line states that $w$ and $v$ agree on $\Xi$.
Altogether, the $i$-th row of $C_{T_\alpha}$ and the first row of $C_{T_\beta}$ then have to coincide.

\medskip

$M$ performs at most $r(n) - 1$ alternations for some polynomial $r$.
Then we require $r = r(n)$ tableaus, which we call $\alpha_1,\ldots,\alpha_r$.
In the following, the formula $\psi_{\text{run},i}$ describes the behaviour of the $i$-th run, \ie, the part of the computation after $i-1$ alternations.
\Wloss $r$ is even and $q_0 \in Q_\exists$.
We may then define the final run by
\begin{align*}
 \psi_{\text{run},r} & \dfn \qaa{\alpha_r} \Big[\Big( \psi_\text{legal}(\alpha_r)  \land \psi_\text{cont}(\alpha_{r-1},\alpha_r)\Big)  \timp \Big(\negg\psi_\text{rej}(\alpha_r) \land \psi_\text{acc}(\alpha_r)\Big)\Big]\text{.}
\end{align*}
For $1 < i < r$ and even $i$, let
\begin{align*}
 \psi_{\text{run},i} \dfn \; & \qaa{\alpha_i} \;\Big[ \Big(\psi_\text{legal}(\alpha_i) \land  \psi_\text{cont}(\alpha_{i-1},\alpha_i)\Big)\timp \Big(\negg\psi_\text{rej}(\alpha_i) \land \big( \psi_\text{acc}(\alpha_i) \ovee \psi_{\text{run},{i+1}} \big)\Big)\Big]
 \intertext{and for $1 < i < r$ and odd $i$}
 \psi_{\text{run},i} \dfn\;  & \qee{\alpha_i} \Big[ \psi_\text{legal}(\alpha_i)\land \psi_\text{cont}(\alpha_{i-1},\alpha_i) \land \negg\psi_\text{rej}(\alpha_i) \land \Big(\psi_\text{acc}(\alpha_i) \ovee \psi_{\text{run},{i+1}} \Big)\Big]\text{.}
\end{align*}
Analogously, the initial run is described by
\begin{align*}
 \psi_{\text{run},1} \dfn & \; \qee{\alpha_1} \Big( \psi_\text{legal}(\alpha_1)\land \psi_\text{input}(\alpha_1) \land \negg\psi_\text{rej}(\alpha_1) \land  \Big(\psi_\text{acc}(\alpha_1) \ovee \psi_{\text{run},{2}} \Big)\Big)
\end{align*}

\medskip

We are now in the position to state the full reduction.
Let us gather all relevant scopes in the set $\Psi \subseteq \PS$:
\begin{align*}
 \Psi \dfn \; & \{ \fraka_i \mid 0 \leq i \leq k \} \cup \{ \fraka'_k\}\; \cup\;
 \{ \gamma_i \mid 0 \leq i \leq n+1 \} \; \cup \; \{ \alpha_i \mid 1 \leq i \leq r \}
\end{align*}
The scopes that accommodate pre-tableaus are
\begin{align*}
 \Psi' \dfn \; & \{ \gamma_i \mid 0 \leq i \leq n+1 \}  \cup \{ \alpha_i \mid 1 \leq i \leq r \}\text{.}
\end{align*}
\Wloss $n \geq 5$, as $\gamma_1,\ldots,\gamma_6$ are always required in the construction.
The reduction now maps $x$ to
\begin{align*}
 \varphi_x \dfn \; & \canon' \land \scopes_k(\Psi)\land \bigwedge_{p\in \Psi'} \psi_\text{pre-tableau}(p) \land \psi_\text{run,1}\text{.}
\end{align*}

It is easy to see that this formula is an $\MTL_k$-formula that is logspace-constructible from $x$ and $k$, where $k$ itself is either constant or a polynomial in $\size{x}$ and hence logspace-computable.
By Lemma~\ref{lem:model-restrict-height}, $\varphi_x$ is satisfiable if and only if $\varphi_x \land \Box^{k+1}\bot$ is satisfiable.
For this reason, we conclude the reduction with the following proof.

\begin{proof}[Proof of Lemma~\ref{lem:reduction}]
 It remains to argue that $\varphi_x \land \Box^{k+1}\bot$ is satisfiable if and only if $M$ accepts $x$.
 For the sake of simplicity, assume $r = 2$.
 The cases $r > 2$ are proven analogously.

 "$\Rightarrow$":
 Suppose $(\calK,T) \vDash \varphi_x \land \Box^{k+1}\bot$.
 Similarly as in Theorem~\ref{thm:bounded-canon}, the $p \in \Psi$ are disjoint scopes due to $\scopes_k(\Psi)$.
 Moreover, by $\canon'$ and Claim (b), $(\calK,T)$ is then a $k$-staircase in which $T_{\fraka_k}$ and $T_{\fraka'_k}$ both are $k$-canonical teams.
 Due to Claim (d) and the large conjunction in $\varphi_x$, $T_{\alpha_1},T_{\alpha_2},T_{\gamma_1},\ldots,T_{\gamma_{n+1}}$ are then pre-tableaus.

 As the formula $\psi_\text{run,1}$ holds, by Claim (g) and (h), $T_{\alpha_1}$ has a subteam $S_1$ that is a legal tableau and starts with $M$'s initial configuration on $x$.
 In particular, $C_{S_1}$ exists.
 Moreover, either $\psi_\text{acc}$ holds (\ie, $C_{S_1}$ and hence $M$ is accepting) or $\psi_\text{run,2}$ holds (\ie, if $C_{S_1}$ alternates).
 Consider the latter case.
 Then for all legal tableaus $S_2 \subseteq T_{\alpha_2}$ such that $C_{S_2}$ is a continuation of $C_{S_1}$ it holds that $C_{S_2}$ is accepting.
 However, as $T_{\alpha_2}$ is a pre-tableau, every run is of the form $C_{S_2}$ for some $S_2 \subseteq T_{\alpha_2}$.
 Consequently, $M$ accepts $x$.

 "$\Leftarrow$":
 Suppose $M$ accepts $x$.
 First of all, due to Claim (b), the formula $\canon' \land \scopes_k(\{\fraka_0,\ldots,\fraka_k,\fraka'_k\})\land \Box^{k+1}\bot$ has a model $(\calK,T)$.
 Moreover, we can freely add a pre-tableau $T_p$ for each $p \in \Psi$ to satisfy the large conjunction in $\varphi_x$.
 By labeling the propositions in $\Psi$ correctly (as disjoint scopes), we ensure that $\scopes_k(\Psi)$ holds as well.

 It remains to demonstrate $T \vDash \psi_\text{run,1}$.
 As $M$ accepts $x$, there exists a run $C_1$ starting from $M$'s initial configuration such that either $C_1$ accepts, or, for all runs $C_2$ continuing $C_1$, $C_2$ accepts.

 Since $T_{\alpha_1}$ is a pre-tableau, it also contains a subteam $S_1$ such that $S_1$ is a legal tableau and $C_{S_1} = C_1$.
 We choose $S_1$ as witness for $\qee{\alpha_1}$.
 If $C_1$ itself accepts, then $\psi_\text{acc}(\alpha_1)$ and hence $\psi_\text{run,1}$ is satisfied.
 Otherwise we consider $\psi_\text{run,2}$.
 Suppose that $S_2 \subseteq T_{\alpha_2}$ is picked as a subteam by $\qaa{\alpha_2}$.
 If it forms a legal tableau and $C_{S_2}$ is a continuation of $C_1$, then $C_2$ must be accepting since $M$ accepts $x$ by assumption.
 But this implies that $\psi_\text{acc}(\alpha_2)$ is true for any such $S_2$.
 Consequently, $\psi_\text{run,2}$ and hence $\psi_\text{run,1}$ is true.
\end{proof}

\section{Hardness under strict semantics and on restricted frame classes}%
\label{sec:variants}

\subsection{Lax and strict semantics}

In this section, we further generalize the hardness result of the previous section.

Team-semantical connectives can be evaluated either in so-called \emph{standard} or \emph{lax semantics}, or alternatively in \emph{strict semantics}.
In Section~\ref{sec:mtl-prelim}, we defined $\MTL$ with lax semantics.
In strict semantics, the connectives $\lor$ and $\Diamond$ are replaced by their counterparts $\lor_s$ and $\Diamond_s$:
\begin{alignat*}{3}
  & (\calK,T) \vDash \psi \lor_s \theta  &  & \Leftrightarrow\;\exists S, U \subseteq T \text{ such that }T = S \cup U\text{, }S \cap U = \emptyset\text{, }(\calK,S) \vDash \psi\text{, and }(\calK,U)\vDash \theta\text{,} \\
  & (\calK,T)\vDash \Diamond_s \psi      &  & \Leftrightarrow\;(\calK, S)\vDash \psi \text{ for some strict successor team }S\text{ of }T\text{,}
\end{alignat*}
where a \emph{strict successor team} of $T$ is a successor team $S \subseteq RT$ for which there exists a surjective $f \colon T \to S$ satisfying $f(w) \in Rw$ for all $w \in T$.
Intuitively, in the lax disjunction the teams of the splitting may overlap, while in the strict disjunction they are disjoint.
Likewise, a lax successor team may contain multiple successor of any $w \in T$, while in a strict successor team we pick exactly one successor for each $w \in T$.

\smallskip

An $\MTL$-formula $\varphi$ is \emph{downward closed} if $(\calK,T) \vDash \varphi$ implies $(\calK,S) \vDash \varphi$ for all $S \subseteq T$.
For example, every $\ML$-formula is downward closed, as is the constancy atom $\dep{\alpha} = \alpha \ovee \neg \alpha$ or generally any monotone Boolean combination of $\ML$-formulas.
On such formulas, strict and lax semantics are equivalent:

\begin{prop}\label{prop:strict-lax}
  Let $\varphi,\psi \in \MTL$ such that $\varphi$ is downward closed.
  Then $\varphi \lor \psi \equiv \varphi \lor_s \psi$ and $\Diamond\varphi \equiv \Diamond_s \varphi$.
\end{prop}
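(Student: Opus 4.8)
The plan is to prove both equivalences by observing that the strict reading is always subsumed by the lax one, and that the converse holds precisely because $\varphi$ is downward closed. Since a disjoint splitting of a team is in particular a splitting, and a strict successor team is by definition a successor team, the implications \enquote{$\varphi \lor_s \psi$ entails $\varphi \lor \psi$} and \enquote{$\Diamond_s \varphi$ entails $\Diamond \varphi$} are immediate and require no assumption on $\varphi$. All the work therefore lies in the two converse directions.

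For the disjunction, I would start from a model $(\calK,T) \vDash \varphi \lor \psi$ in lax semantics, so that there are $S,U \subseteq T$ with $T = S \cup U$, $(\calK,S)\vDash\varphi$ and $(\calK,U)\vDash\psi$. The idea is to make the splitting disjoint by trimming the $\varphi$-side: put $S' \dfn S \setminus U$ and keep $U$ unchanged. Then $S' \cap U = \emptyset$ and $S' \cup U = T$ still hold, the witness $(\calK,U)\vDash\psi$ is untouched, and $(\calK,S')\vDash\varphi$ follows from $S' \subseteq S$ together with the downward closure of $\varphi$. Hence $(\calK,T)\vDash\varphi\lor_s\psi$.

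For the diamond, I would take $(\calK,T)\vDash\Diamond\varphi$ in lax semantics, witnessed by a successor team $S$ with $(\calK,S)\vDash\varphi$. Because $S$ is a successor team, every $w \in T$ has at least one $R$-successor lying in $S$; selecting one such successor $f(w) \in Rw \cap S$ for each $w$ yields a map $f\colon T \to W$ with $f(w) \in Rw$ for all $w$. Its image $S' \dfn \{ f(w) \mid w \in T\}$ is then a strict successor team, since it is a successor team and $f$ is by construction a surjection of $T$ onto $S'$. As $S' \subseteq S$, downward closure again gives $(\calK,S')\vDash\varphi$, and therefore $(\calK,T)\vDash\Diamond_s\varphi$.

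The argument is essentially routine, and there is no real obstacle; the only point worth flagging is the asymmetry that makes downward closure of $\varphi$ \emph{alone} sufficient. In both cases we only ever shrink the subteam that must satisfy $\varphi$ (the left disjunct, respectively the successor team feeding the modality), while the witness for the arbitrary formula $\psi$ is left entirely unchanged, so no closure property of $\psi$ is needed. A minor technical remark is that the construction of $f$ picks one successor per world, which for possibly infinite Kripke structures invokes the axiom of choice; this is harmless.
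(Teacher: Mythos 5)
Your proof is correct and follows essentially the same route as the paper: the strict-to-lax direction is immediate, the disjoint splitting is obtained by removing $U$ from the $\varphi$-side (your $S \setminus U$ equals the paper's $T \setminus U$ since $T = S \cup U$), and the strict successor team is carved out of the lax one via a choice function, with downward closure of $\varphi$ absorbing the shrinkage in both cases.
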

\begin{proof}
  Clearly $\varphi \lor_s \psi$ entails $\varphi \lor \psi$ and $\Diamond_s \varphi$ entails $\Diamond \varphi$.
  If conversely $T \vDash \varphi \lor \psi$ via subteams $S,U \subseteq T$ such that $S \cup U = T$, $S \vDash \varphi$ and $U \vDash \psi$, then we instead split $T$ into the subteams $U$ and $T \setminus U$.
  Since $T \setminus U \subseteq S$ and $\varphi$ is downward closed, this proves $T \vDash \varphi \lor_s \psi$.

  Likewise, suppose $T \vDash \Diamond \varphi$ via some successor team $S$ of $T$.
  Assuming the axiom of choice, there is some function $f \colon T \to S$ such that $f(w) \in Rw$ for each $w \in T$.
  The team $\{f(w) \mid w \in T\}\subseteq S$ is now a strict successor team of $T$ and satisfies $\varphi$ due to downward closure.
\end{proof}

Due to Proposition~\ref{prop:strict-lax}, the distinction between strict and lax semantics was traditionally unnecessary for many team logics such as the original \emph{dependence logic}~\cite{vaananen_dependence_2007,vaananen_modal_2008}, as it has only downward closed formulas.
The distinction between strict and lax semantics was first made in the context of first-order team logic by Galliani~\cite{Galliani12}.
It has some interesting consequences, for instance first-order inclusion logic in strict semantics is as expressive as existential second-order logic~\cite{GallianiHK13} (see also Hannula and Kontinen~\cite{HannulaK15}).

With modal team logic, strict semantics was studied, \eg, by Hella et al.~\cite{minc,HellaKMV15,HellaKMV17}.
In the works that explicitly study strict semantics, the underlying (first-order or modal) team logic was enriched by not downward closed constructs such as the \emph{inclusion atom} $\subseteq$ or \emph{exclusion atom} $\mid$, or the \emph{independence atom} $\perp$.

In this article, where we consider team-wide negation $\negg$ as part of the logic, the distinction between strict and lax semantics becomes apparent already for simple formulas such as $\E \top \lor \E \top \not \equiv \E \top \lor_s \E \top$, where the former defines non-emptiness, but the latter means that the team contains at least two points.

We prove that our hardness results also hold in strict semantics.
Let the logics $\MTL(\lor_s,\Box)$ and $\MTL_k(\lor_s,\Box)$ be defined like $\MTL$ and $\MTL_k$, but with $\lor_s$ instead of $\lor$ and without $\Diamond$ and $\Diamond_s$ (\ie, only using the modality $\Box$).

\begin{thm}\label{thm:complexity-strict}$\SAT(\calL)$ and $\VAL(\calL)$ are hard for $\TOWERPOLY$ if $\calL = \MTL(\lor_s,\Box)$, and hard for
$\ATIMEALT{\exp_{k+1}}$ if $\calL = \MTL_k(\lor_s,\Box)$ and $k \geq 0$.
  \end{thm}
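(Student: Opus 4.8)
\noindent
The plan is to reuse the reduction $x \mapsto \varphi_x$ of Lemma~\ref{lem:reduction} verbatim, and merely rewrite $\varphi_x$ into an equivalent formula $\varphi^s_x$ of the restricted logic. Since both complexity classes are complement-closed and $\MTL(\lor_s,\Box)$ and $\MTL_k(\lor_s,\Box)$ stay syntactically closed under $\negg$, it will again suffice to argue hardness of the satisfiability problems, validity following via $\varphi \mapsto \negg\varphi$ exactly as in Theorem~\ref{thm:mtl-main}. Two adjustments are required: the modality must be confined to $\Box$, and every team-semantical $\lor$ must become $\lor_s$.

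The first adjustment is essentially free. Inspecting all building blocks ($\chi_k$, $\chi^*_k$, $\rho^i_k$, $\canon_k$, $\scopes_k$, $\zeta_k$, $\zeta^*_k$, and the whole $\psi$-family) shows that the team modality occurs only as $\Box$; every $\Diamond$ that appears (for instance in $\max_i$) sits inside an $\ML$-atom and is the \emph{classical} diamond, which is unaffected by the choice of semantics. As $\Box$ performs no splitting, its strict and lax readings coincide, so $\varphi_x$ is already a $\{\Box\}$-formula and needs no change here.

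The second adjustment is the heart of the matter. By Proposition~\ref{prop:strict-lax}, a replacement $\varphi \lor \psi \leadsto \varphi \lor_s \psi$ preserves meaning whenever $\varphi$ or $\psi$ is downward closed, and since equivalence is a congruence (in particular under $\negg$) such local replacements compose. I would therefore track downward closure along the construction, using that $\ML$-formulas and $\alpha \ovee \beta$ (for $\alpha,\beta \in \ML$) are downward closed, that $\gamma \hook \varphi$ is downward closed whenever $\varphi$ is, and that $\land$, $\ovee$, and the universal quantifiers $\qaa{\gamma}$, $\qa{\gamma}$ preserve downward closure; in particular $\dep{\cdot}$ and $\max_i$ are downward closed. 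With this bookkeeping, every $\lor$ hidden inside the abbreviations $\qee{\gamma}, \qe{\gamma}, \qaa{\gamma}, \qa{\gamma}, \hook$ and $|^{\gamma}_{\frakq}$ has an $\ML$-formula as its left disjunct, so these abbreviations are literally $\lor_s$-invariant; likewise the bare disjunctions in $\max_i$, $\scopes_k$, $\chi^*_k$ and $\zeta_0$ each carry a downward-closed disjunct and may be strictified directly.

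The one disjunction resisting this pattern, and the step I expect to be the main obstacle, is the one in $\zeta^*_k$, namely $\bigl(\chi^*_k(\alpha,\beta) \land (\alpha\lor\beta)\bigr) \lor \bigl(\qa{\alpha\lor\beta}\negg\zeta_k(\fraka_k,\alpha\lor\beta)\bigr)$, whose left disjunct contains the bisimilarity test $\chi^*_k$ and is \emph{not} downward closed. I would resolve it by exploiting that this disjunction occurs inside $\qe{\fraka_k}$, hence is only ever evaluated on teams $T'$ with $T'_{\fraka_k} \subseteq \{z\}$ at most a singleton. On such teams, Claim~(a) in the proof of Lemma~\ref{lem:order1} together with Theorem~\ref{thm:main-bisim} show that the right disjunct expresses the purely universal condition $\type{z}_k \nprec \type{w}_k$ for every $(\alpha\lor\beta)$-world $w$, which is manifestly downward closed. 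Thus, in precisely the regime where $\zeta^*_k$ is evaluated, the right disjunct is downward closed, so by commutativity of $\lor_s$ and Proposition~\ref{prop:strict-lax} the disjunction may be strictified without changing its truth value. (Alternatively, one may re-read the proof of Theorem~\ref{thm:order-theorem}: the only split it performs here is into $S$ and $O \setminus S$, which are disjoint, so its completeness direction already witnesses $\lor_s$, while soundness follows from $\lor_s \Rightarrow \lor$.)

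Carrying out all replacements yields a logspace-constructible $\varphi^s_x \in \MTL_k(\lor_s,\Box)$ that agrees with $\varphi_x$ on every structure the reduction forces, and is in particular satisfiable iff $\varphi_x$ is, i.e.\ iff $x \in L$. The base case $k = 0$ reduces as before to $\SAT(\PTL)$ under strict semantics, which is obtained identically because $\max(\Phi) = \negg\bigvee_{p\in\Phi}\dep{p}$ disjoins only the downward-closed constancy atoms. This gives the claimed hardness of $\SAT(\MTL(\lor_s,\Box))$ and $\SAT(\MTL_k(\lor_s,\Box))$, and with the opening remark also of the corresponding validity problems.
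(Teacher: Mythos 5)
Your proposal is correct and follows essentially the same route as the paper: eliminate the (classical) $\Diamond$'s in $\max_i$ via $\Diamond\alpha \equiv \neg\Box\neg\alpha$, strictify every $\lor$ that carries a downward-closed ($\ML$ or $\ovee$-of-$\ML$) disjunct via Proposition~\ref{prop:strict-lax}, and treat the one genuinely problematic disjunction in $\zeta^*_k$ separately. Your parenthetical resolution of that last case --- observing that the correctness proof can take the disjoint split $U' = O \setminus S$ in Claim~(c) of Lemma~\ref{lem:order1}, with soundness following from $\lor_s \Rightarrow \lor$ --- is exactly the argument the paper gives, and is the cleaner of your two justifications, since the right disjunct of that disjunction is not downward closed as a formula (subteams may destroy the staircase scopes that $\zeta_k$ relies on).
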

\begin{proof}
  An analysis of the proof of Lemma~\ref{lem:reduction} yields that the $\MTL$-formula $\varphi_x$ produced in the reduction can be easily adapted to strict semantics.
First, observe that $\Diamond$ occurs only in the subformula $\max_i$, which is by Proposition~\ref{prop:strict-lax} equivalent to
\[
\top \lor_s \Bigl(\neg\Box^i\bot \land \negg\mathop{{\bigvee}_{\!\!s}}\limits_{\!p\in\Phi} (\neg\Box^i p \ovee \neg \Box^i \neg p)\Bigr)\text{,}
\]
since $\Diamond\alpha \equiv \neg \Box \neg \alpha$ and $\neg \Box^i p \ovee \neg \Box^i \neg p$ is a downward closed formula.
A quick check reveals that all other instances of $\lor$ in $\varphi_x$ are subject to Proposition~\ref{prop:strict-lax} as well, except of the occurrence in the second line of $\zeta^*_k$.
Here, the critical part of the correctness proof is the choice of the subteam $U'$ in Claim (c) of Lemma~\ref{lem:order1}.
In strict semantics, the only possibility becomes $U' = U = O \setminus S$, for which the proof works identically.
Finally, for the case $k= 0$, a similar check of the proof for $\PTL$~\cite[Theorem 4.9]{ptl2017} reveals that there also every $\lor$ can be replaced by $\lor_s$ due to Proposition~\ref{prop:strict-lax}.
\end{proof}

Note that the corresponding upper bound via the construction of a canonical model (viz.\ Theorem~\ref{thm:canonical}) does not apply to strict semantics.
The reason for this is the failure of Proposition~\ref{prop:mtl-bisim-types}:
In strict semantics, $\MTL_k$-formulas are not invariant under $k$-team-bisimulation in general.

As an example, consider the formula $\varphi \dfn \E\top \lor_s \E\top$.
It states that the team contains at least two points.
However, for every finite $\Phi \subseteq \PS$ and $k \geq 0$ it is easy to find a team $T$ of two points and a singleton $S$ that is $(\Phi,k)$-bisimilar to it, while $T \vDash \varphi$ and $S \nvDash \varphi$.

A possible approach could be to define a bisimulation relation that respects the multiplicity of types in a team, and to define a corresponding canonical model, but this is beyond the scope of this paper.

\subsection{Restricted frame classes}

A natural restriction in the context of modal logic is to focus on a specific subclass of Kripke frames, which is useful for instance for modeling belief or temporal systems.
(For an introduction to frame classes, consider, \eg, Fitting~\cite{handbook2}.)
Let $F = (W,R)$ denote a frame.
Prominent frame classes include
\begin{description}
  \item[$\mathsf{K}$] all frames,
  \item[$\mathsf{D}$] serial frames ($w \in W \Rightarrow Rw \neq \emptyset$),
  \item[$\mathsf{T}$] reflexive frames ($w \in W \Rightarrow w \in Rw$),
  \item[$\mathsf{K4}$] transitive frames ($u \in Rv, v \in Rw, w \in W \Rightarrow u \in Rw$),
  \item[$\mathsf{D4}$] serial and transitive frames,
  \item[$\mathsf{S4}$] reflexive and transitive frames.
\end{description}

In this section, we consider these classes from a complexity theoretic perspective, and show that the lower bounds of $\MTL$ hold when restricted to these classes.
Given a frame class $\calF$ and a fragment $\calL$ of $\MTL$, let $\SAT(\calL,\calF)$ denote the set of all $\calL$-formulas that are satisfied in a model $(W,R,V,T)$ where $(W,R)$ is a frame in $\calF$.
Define $\VAL(\calL,\calF)$ analogously.

We prove the team-semantical analog of Ladner's theorem, which states that classical modal satisfiability and validity are $\PSPACE$-hard problem for any frame class between $\mathsf{S4}$ and $\mathsf{K}$~\cite[Theorem 3.1]{Ladner77}.
Note that this includes all the frame classes stated above.

\begin{thm}\label{thm:complexity-s4}
  Let $\calF$ be a frame class such that $\mathsf{S4} \subseteq \calF \subseteq \mathsf{K}$.
Then $\SAT(\MTL,\calF)$ and $\VAL(\MTL,\calF)$ are hard for $\TOWERPOLY$,
and $\SAT(\MTL_k,\calF)$ and $\VAL(\MTL_k,\calF)$ are hard for $\ATIMEALT{\exp_{k+1}}$, for $k \geq 0$.
\end{thm}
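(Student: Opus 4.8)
The plan is to give a team-semantical version of the classical sandwich argument of Ladner, reducing satisfiability over forests --- which is exactly what Lemma~\ref{lem:reduction} provides --- to satisfiability over $\mathsf{S4}$. Since the relevant classes are complement-closed and $\MTL$ resp.\ $\MTL_k$ are syntactically closed under $\negg$, it again suffices to treat $\SAT$. First I would eliminate $\Diamond$ from the reduction formula $\varphi_x$ of Lemma~\ref{lem:reduction}: exactly as in the proof of Theorem~\ref{thm:complexity-strict}, the only occurrence of $\Diamond$ is inside $\max_i$, where it sits in flat ($\ML$) subformulas, so it may be replaced by $\neg\Box\neg$. Thus we may assume $\varphi_x$ uses only the modality $\Box$.

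The core device is a level-relativized translation. I would fix fresh propositions $l_0,\dots,l_k \notin \Phi\cup\Psi$ and define, for $0\le i\le k$, a translation $\mathrm{rel}_i$ on $\Box$-only $\MTL$-formulas that is homomorphic on $\negg,\land,\lor$, acts as the classical Ladner relativization on flat atoms, and sets
\[
 \mathrm{rel}_i(\Box\psi) \dfn \Box\bigl(l_{i+1}\hook \mathrm{rel}_{i+1}(\psi)\bigr).
\]
Since $\Box(l_{i+1}\hook\chi)$ evaluates $\chi$ on exactly the subteam $(RT)_{l_{i+1}}$ of $l_{i+1}$-labelled successors, the translation preserves modal depth, so $\mathrm{rel}_0(\varphi_x)\in\MTL_k$ and is logspace-constructible. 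To it I would conjoin a short axiom $\Lambda$ asserting that the initial team lies on level $0$ (the $\ML$-formula $l_0$) and that the markers are pairwise exclusive along the first $k$ steps (a $\Box$-only formula of modal depth $\le k$, built as in $\scopes_k$). The reduction maps $x$ to $\mathrm{rel}_0(\varphi_x)\land\Lambda$; for the $\MTL$ case $k\dfn\md(\varphi_x)$ is polynomial, for $\MTL_k$ it is the fixed bound.

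The heart of the correctness is a single correspondence lemma. Call the \emph{layered reduct} of a level-marked structure $\calN$ the structure $\calN'$ on the same worlds and valuation (forgetting the $l_i$) whose edges are those $(w,v)\in R$ with $w\vDash l_i$ and $v\vDash l_{i+1}$ for some $i$. Using exclusivity, for any team $S$ all of whose worlds satisfy $l_i$ one has $R'S=(RS)_{l_{i+1}}$; by induction on the $\Box$-only formula $\psi$ --- the $\Box$-case being precisely this identity, together with the invariant that the restricted image again lies on level $i+1$ --- I would then prove, for all $i$ with $i+\md(\psi)\le k$,
\[
 (\calN,S)\vDash \mathrm{rel}_i(\psi)\iff(\calN',S)\vDash\psi .
\]
From this the theorem falls out by sandwiching. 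For soundness, if $x\in L$ then $\varphi_x$ has, by Corollary~\ref{cor:height-model}, a forest model $\calK$ of height $\le k$ whose roots form the team $T$; labelling each world by its depth and taking the reflexive--transitive closure $\calK^\ast$ yields an $\mathsf{S4}$-structure (hence a member of every $\calF$ with $\mathsf{S4}\subseteq\calF$) whose layered reduct is exactly $\calK$, so the lemma gives $(\calK^\ast,T)\vDash\mathrm{rel}_0(\varphi_x)\land\Lambda$. For completeness, any $\calF$-model with $\calF\subseteq\mathsf{K}$ is in particular an ordinary Kripke structure $\calM$; applying the lemma to $\calM$ and its layered reduct $\calM'$ shows $(\calM',T)\vDash\varphi_x$, whence $\varphi_x$ is satisfiable and $x\in L$ by Lemma~\ref{lem:reduction}.

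I expect the main obstacle to be the $\Box$-case of the correspondence lemma: one must verify that $\Box(l_{i+1}\hook\cdot)$ isolates exactly the level-$(i+1)$ successors, and, in the soundness direction, that in the reflexive--transitive closure of a forest the level-$(i+1)$ descendants of a level-$i$ node are precisely its immediate children (so that $(\calK^\ast)'=\calK$). The remaining points --- modal-depth preservation, logspace constructibility, and the transfer of $\SAT$-hardness to $\VAL$-hardness via closure under $\negg$ and complementation --- are routine.
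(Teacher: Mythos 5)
Your proposal is correct and follows essentially the same route as the paper: fresh level-marking propositions $\ell_0,\dots,\ell_k$, the relativized translation sending $\Box\psi$ to $\Box(\ell_{i+1}\hook\psi^{i+1})$, a correspondence lemma between the marked structure and its layered reduct, and soundness via a height-$k$ forest model (Corollary~\ref{cor:height-model}) whose reflexive--transitive closure is an $\mathsf{S4}$-frame with the original forest as layered reduct. The only (harmless) deviations are that you first eliminate $\Diamond$ via the flatness argument of Theorem~\ref{thm:complexity-strict} rather than relativizing it directly as $\Diamond(\ell_{i+1}\land\psi^{i+1})$, and that you make the pairwise-exclusivity axiom for the level markers explicit.
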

\begin{proof}
We give the proof for $\SAT(\MTL_k) \leqlogm \SAT(\MTL_k, \calF)$.
Let $\varphi \in \MTL_k$.
The idea is to introduce new propositions $\ell_0,\ldots,\ell_k \notin \Prop(\varphi)$ that mark the layers of different height in a structure, and to modify the formula such that all edges except between consecutive layers $i$ and $i+1$ are ignored.
(Here, we make the assumption that $\calK$ is a acyclic, which relies on Corollary~\ref{cor:height-model} and hence indirectly on Proposition~\ref{prop:mtl-bisim-types}).

Given a $\Phi\cup\{\ell_0,\ldots,\ell_k\}$-structure $\calK = (W, R, V)$, let $\calK^\circ \dfn (W, R^\circ, V)$ be the structure where only such edges are retained, \ie,
\[
R^\circ = R \cap \bigcup_{i = 0}^{k-1} (V(\ell_i) \times V(\ell_{i+1}))\text{.}
\]
On the side of formulas, the reduction is $\varphi \mapsto \ell_0 \land \varphi^0$, where $\varphi^i$ is inductively as follows.
The non-modal connectives are ignored, \ie, $p^i \dfn p$ for $p \in \Phi$, ${(\psi\land\theta)}^i \dfn \psi^i\land\theta^i$, ${(\negg\psi)}^i \dfn \negg\psi^i$, ${(\psi\lor\theta)}^i \dfn \psi^i \lor \theta^i$.
For the modalities, let ${(\Diamond\psi)}^i \dfn \Diamond(\ell_{i+1} \land \psi^{i+1})$ and $(\Box\psi^i) \dfn \Box(\ell_{i+1} \hook \psi^{i+1})$.
Intuitively, $\varphi^i$ is meant to be evaluated in layer $i$, and we make sure that successor teams always are contained in the next layer $i+1$.

For the correctness of the reduction, we will first show the following claim.

\begin{claim}
  For all $i \in \{ 0,\ldots,k \}$ and $T \subseteq V(\ell_i)$, it holds that $(\calK,T) \vDash \varphi^i$ iff $(\calK^\circ,T) \vDash \varphi$.
\end{claim}
\begin{cproof}
  This is proved by a straightforward induction on the formula size:
\begin{itemize}
  \item Atomic propositions are clear.
  The Boolean connectives and splitting follow straightforwardly from the induction hypothesis (as subteams of $T$ are again in $V(\ell_i)$).
\item Let $\varphi = \Diamond\psi$.
Suppose $(\calK,T) \vDash \varphi^i$, \ie, $(\calK,S) \vDash \ell_{i+1} \land \psi^{i+1}$ for some $R$-successor team $S$ of $T$.
Then by induction hypothesis $(\calK^\circ,S) \vDash \psi$, as $S \subseteq V(\ell_{i+1})$.
$S$ is an $R^\circ$-successor team of $T$ as well, since $(w,v) \in R \Leftrightarrow (w,v) \in R^\circ$ for every $(w,v) \in V(\ell_i) \times V(\ell_{i+1})$.
This proves $(\calK^\circ,T) \vDash \varphi$.

Conversely, if $(\calK^\circ,T) \vDash \varphi$, then $(\calK^\circ,S) \vDash \psi$ for some $R^\circ$-successor team $S$ of $T$.
However, any $R^\circ$-successor team of $T$ is a subset of $V(\ell_{i+1})$.
As a consequence, $(\calK,S) \vDash \ell_{i+1}$.
Moreover, by induction hypothesis, $(\calK,S)\vDash \psi^{i+1}$.
This yields $(\calK,T) \vDash \varphi^i$, since $S$ is trivially also a $R$-successor team of $T$.
\item Let $\varphi = \Box \psi$.
Then $(\calK,T) \vDash \varphi^i$ iff $(\calK,RT) \vDash (\ell_{i+1} \hook \psi^{i+1})$ iff $(\calK,RT \cap V(\ell_{i+1})) \vDash \psi^{i+1}$ iff $(\calK^\circ,RT \cap V(\ell_{i+1}) \vDash \psi$ by induction hypothesis.
It remains to show that $R^\circ T = RT \cap V(\ell_{i+1})$.
Clearly, $R^\circ T \subseteq RT$ and $R^\circ T \subseteq V(\ell_{i+1})$, since $R^\circ \subseteq R$, $R^\circ V(\ell_i) \subseteq V(\ell_{i+1})$, and $T \subseteq V(\ell_i)$.
Conversely, if $w \in RT \cap V(\ell_{i+1})$, then $(v,w) \in R$ for some $v \in T$.
As $(v,w) \in V(\ell_i) \times V(\ell_{i+1})$, then $(v,w) \in R^\circ$, hence $w \in R^\circ T$.\altqed%
\end{itemize}\let\altqed\relax
\end{cproof}

\noindent
Now, due to the above claim, if $\ell_0 \land \varphi^0$ is satisfiable, then clearly $\varphi$ is as well.
It remains to show that $\ell_0 \land \varphi^0$ has a reflexive and transitive model if $\varphi$ is satisfiable.
Suppose that the latter is satisfied in a $\Phi$-structure $(\calK,T)$.
By Corollary~\ref{cor:height-model}, we may assume that $(\calK,T)$ is a forest of height $k$ with the set of roots being $T$.
Then we label the new propositions $\ell_i$ such that $V(\ell_i) = R^{i}T$, \ie, $V(\ell_0) = T$, $V(\ell_1) = RT$ and so on.
As $\calK$ is a forest, note that the sets $T, RT, R^2T, \ldots$ are pairwise disjoint.
In other words, every world in $\calK$ has a unique distance $0 \leq i \leq k$ from $T$ and hence exactly one $\ell_i$ labeled.
This is required for the next part of the proof.

Let now $R^*$ be the reflexive transitive closure of $R$.
It remains to show ${(R^*)}^\circ = R$, since then we can again apply the previously proved claim and are done.
It is easy to see that $R \subseteq {(R^*)}^\circ$, since for every $(w,v) \in R$ there is some $i$ such that $w \in R^{i}T = V(\ell_i)$, consequently $(w,v) \in R^{i}T \times R^{i+1}T = V(\ell_i) \times V(\ell_{i+1})$.
For the other direction, suppose $(w,v) \in {(R^*)}^\circ$.
By definition of ${(R^*)}^\circ$, there is $i$ such that $w \in V(\ell_i)$, $v \in V(\ell_{i+1})$, and $v$ is reachable from $w$ by some $R$-path $(u_0,\ldots,u_n)$ where $w = u_0$ and $v = u_n$.
But since $u_0 \in R^{i}T$, for all $m$ it holds $u_m \in R^{i+m}T = V(\ell_{i+m})$.
As $V(\ell_{i+n}) \cap V(\ell_{i+1}) = \emptyset$ for $n \neq 1$, we conclude $n = 1$, so $(w,v) \in R$.
\end{proof}
 \section{Conclusion}

Theorem~\ref{thm:mtl-main} settles the complexity of $\MTL$ and proves that its satisfiability and validity problems are complete for the non-elementary complexity class $\TOWERPOLY$.
Moreover, the fragments $\MTL_k$ are proved complete for $\ATIMEALT{\exp_{k+1}}$, the levels of the elementary hierarchy with polynomially many alternations.

In our approach, we developed a notion of ($k$-)canonical models for modal logics with team semantics. 
We showed that such models exist for $\MTL$ and $\MTL_k$, and that logspace-computable $\MTL_k$-formulas exist that are satisfiable, but only have $k$-canonical models.

Our lower bounds carry over to two-variable first-order team logic $\FO^2(\negg)$ and its fragment $\FO^2_k(\negg)$ of bounded quantifier rank $k$ as well~\cite{fo2-mfcs}.
While the former is $\TOWERPOLY$-complete, the latter is $\ATIMEALT{\exp_{k+1}}$-hard.
However, no matching upper bound for the satisfiability problem of $\FO^2_k(\negg)$ exists.

In the final section, we considered variants of the satisfiability problem for $\MTL$.
We showed that it is as hard as the original problem when $\MTL$ is interpreted in strict semantics, and in fact for $\Diamond$-free formulas with $\lor$ being interpreted either lax or strict.
Also, any restriction of the satisfiability problem to a frame class that includes at least the reflexive-transitive frames is as hard as the original problem.

\medskip

In future research, it could be useful to further generalize the concept of canonical models to other logics with team semantics.
Do logics such as $\FO^2_k(\negg)$ permit a canonical model in the spirit of $k$-canonical models for $\MTL_k$, and does this yield a tight upper bound on the complexity of their satisfiability problem?
How do $\MTL_k$ and $\FO^2_k(\negg)$ differ in terms of succinctness?

\smallskip

Other obvious open questions are the upper bounds for Theorem~\ref{thm:complexity-strict} and~\ref{thm:complexity-s4}, and also the combination of the above aspects, \eg, does the lower bound still hold in strict semantics on reflexive-transitive frames?
To solve these issues, the model theory of modal team logic has to be refined.
For example, what is the analog of Proposition~\ref{prop:mtl-bisim-types} for strict semantics?

\subsection*{Acknowledgements}
The author wishes to thank Heribert Vollmer, Irena Schindler and Arne Meier, as well as the anonymous referee, for numerous helpful comments and suggestions.

\bibliographystyle{plain}
\bibliography{main}

\clearpage

\appendix

\section{Proof details}

In the appendix, we include several propositions that have straightforward but lengthy proofs.

\subsection*{Proofs of Section~\ref{sec:canon}}

\begin{repprop}{prop:det-types}
  Let $\Phi \subseteq \PS$ be finite and $k \geq 0$.
  \begin{enumerate}
   \item $\type{w}^\Phi_k \cap \Phi = V^{-1}(w) \cap \Phi$ and $\type{Rw}^\Phi_k = \calR\type{w}^\Phi_{k+1}$, for all pointed structures $(W,R,V,w)$.\vspace{1mm}
   \item The mapping $h \colon \tau \mapsto \tau \cap \Phi$ is a bijection from $\Delta^\Phi_0$ to $\pow{\Phi}$.
   \item The mapping $h \colon \tau \mapsto (\tau \cap \Phi, \calR\tau)$ is a bijection from $\Delta^\Phi_{k+1}$ to $\pow{\Phi} \times \pow{\Delta^\Phi_k}$.
  \end{enumerate}
\end{repprop}
\begin{proof}[\unskip\nopunct]
 \begin{itemize}
  \item \emph{Proof of (1).}
        Assume $(W,R,V,v), \Phi \subseteq \PS$ and $k \geq 0$ as above.
        For all $p \in \Phi$, clearly $p \in \type{w}^\Phi_k$ iff $w \vDash p $ iff $p \in V^{-1}(w)$.
        Next, we show that $\type{Rw}^\Phi_k = \calR\type{w}^\Phi_{k+1}$.
        Let $\tau = \type{w}^\Phi_{k+1}$, and recall that $\calR\tau = \{ \tau' \in \Delta^\Phi_k \mid \{\alpha \mid \Box \alpha \in \tau\}\subseteq \tau' \}$.
        To prove $\type{Rw}^\Phi_k \subseteq \calR\tau$,
        let $\tau' \in \type{Rw}^\Phi_k$ be arbitrary.
        Then $\type{v}^\Phi_k = \tau'$ for some $v \in Rw$.
        Now, for all $\alpha \in \ML^\Phi_k$, $\Box\alpha \in \tau$ implies $w \vDash \Box \alpha$.
        In particular, $v \vDash \alpha$, \ie, $\alpha \in \tau'$.
        Hence, $\{\alpha \mid \Box \alpha \in \tau\}\subseteq \tau'$, which implies $\tau' \in \calR\tau$.

        For the converse direction, $\calR\tau\subseteq \type{Rw}^\Phi_k$, let $\tau' \in \calR\tau$ be arbitrary.
        By definition, $\{\alpha \mid \Box \alpha \in \tau\}\subseteq \tau'$.
        Since $\tau'$ is a $k$-type, it has a model $(\calK',v')$, and due to Proposition~\ref{prop:types}, $\type{\calK',v'}^\Phi_k = \tau'$.
        By Proposition~\ref{prop:hintikka}, there is a formula $\zeta \in \ML^\Phi_k$ such that $(\calK'',v'')\vDash \zeta$ if and only if $(\calK',v') \bis^\Phi_k (\calK'',v'')$.
        As $\tau$ is a $(k+1)$-type, either $\Diamond \zeta \in \tau$ or $\neg \Diamond \zeta \in \tau$.

        First, suppose $\neg \Diamond \zeta \in \tau$.
        Then $\Box \neg \zeta \in \tau$, hence $\neg \zeta \in \tau'$ by definition of $\tau'$.
        But as $(\calK',v')\vDash \tau'$, then both $(\calK',v') \nvDash \zeta$ and $(\calK',v') \vDash \zeta$, as $(\calK',v') \bis^\Phi_k (\calK',v')$.
        Contradiction, therefore $\Diamond \zeta \in \tau$.
        Consequently, $w$ has an $R$-successor $v$ such that $v \vDash \zeta$, \ie, $\tau' = \type{v}^\Phi_k \in \type{Rw}^\Phi_k$.

  \item \emph{Proof that $h$ in (2) and (3) is injective.}
        Let $\tau,\tau' \in \Delta^\Phi_k$ be arbitrary.
        Let $(\calK,w) = (W,R,V,w)$ be of type $\tau$, and $(\calK',w') = (W',R',V',w')$ of type $\tau'$.
        We first consider (2) and demonstrate that $h\colon \tau \mapsto \tau \cap \Phi$ injective.
        This follows from (1), as $\tau \cap \Phi = \tau' \cap \Phi$ implies $V^{-1}(w) = \tau \cap \Phi = \tau' \cap \Phi = V'^{-1}(w')$, \ie,  $(\calK,w) \bis^\Phi_0 (\calK',w')$.
        By Proposition~\ref{prop:types}, then $\tau = \tau'$.

        \smallskip

        For (3), let $k > 0$, and additionally suppose $\calR\tau = \calR\tau'$.
        Again by (1), we have $\type{\calK,Rw}^\Phi_{k-1} = \calR\tau = \calR\tau' = \type{\calK',R'w'}^\Phi_{k-1}$.
        By Proposition~\ref{prop:types}, $(\calK,Rw) \bis^\Phi_{k-1} (\calK',R'w')$ follows.
        Since $(\calK,w) \bis^\Phi_0 (\calK',w')$ holds as before, $(\calK,w) \bis^\Phi_k (\calK',w')$ by Proposition~\ref{prop:ml-bisim-types}.
        By Proposition~\ref{prop:types}, $\tau = \type{\calK,w}^\Phi_k = \type{\calK',w'}^\Phi_k = \tau'$.

        \medskip

  \item \emph{Proof that $h$ in (2) and (3) is surjective.}
        First, consider (2).
        We have to show that, for all $\Phi' \subseteq \Phi$, there exists a type $\tau \in \Delta^\Phi_0$ such that $\tau \cap \Phi = \Phi'$.
        Likewise, for (3) we have to show that for all $k \geq 0$, $\Phi' \subseteq \Phi$ and $\Delta' \subseteq \Delta^\Phi_k$, there exists a type $\tau \in \Delta^\Phi_{k+1}$ such that $\tau \cap \Phi = \Phi'$ and $\calR \tau = \Delta'$.
        We show the second statement, as the first one is shown analogously.
        The following model $(\calK,w)$ witnesses that there exists $\tau \in \Delta_{k+1}$ such that $\tau \cap \Phi = \Phi'$ and $\calR \tau = \Delta'$.
        First, recall that each $\tau' \in \Delta'$ has a model $(\calN_{\tau'},v_{\tau'})$ such that, by Proposition~\ref{prop:types}, $\type{\calN_{\tau'},v_{\tau'}}^\Phi_k = \tau'$.
        Define $\calK$ as the disjoint union of all $\calN_\tau$ and of a distinct point $w$, and let $V^{-1}(w) = \Phi'$.
        By (1), then $\type{w}^\Phi_{k+1} \cap \Phi = \Phi'$.
        Moreover, let $Rw = \{ v_{\tau'} \mid \tau' \in \Delta' \}$.
        Again due to (1), $\calR\type{w}^\Phi_{k+1} =  \type{Rw}^\Phi_k$.
        By definition, $\type{Rw}^\Phi_k = \type{\{ v_{\tau'} \mid \tau' \in \Delta' \}}^\Phi_k = \{ \type{v_{\tau'}}^\Phi_k \mid \tau' \in \Delta' \} = \Delta'$.\qedhere
 \end{itemize}
 \end{proof}

\begin{replem}{lem:poly-in-tower}
 For every polynomial $p$ there is a polynomial $q$ such that
 \[
    p(\exp^*_k(n)) \leq \exp_k( q((k+1)\cdot n))
 \]
 for all $k \geq 0$ and $n \geq 1$.
\end{replem}

We require the following inequalities.

\begin{lem}\label{lem:into-exp}
 Let $n,k,c \geq 0$.
 Then $c + \exp_k(n) \leq \exp_k(c+n)$.
 If also $n \geq 1$, then $c\cdot  \exp_k(n) \leq \exp_k(cn)$.
\end{lem}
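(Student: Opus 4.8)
The plan is to prove both inequalities by induction on $k$, treating $c$ and $n$ as fixed parameters. The base case $k=0$ is immediate in both parts, since $\exp_0$ is the identity: $c+\exp_0(n)=c+n=\exp_0(c+n)$, and likewise $c\cdot\exp_0(n)=cn=\exp_0(cn)$, with equality throughout. So the content lives entirely in the two inductive steps, which are structurally parallel.

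For the additive bound $c+\exp_k(n)\le\exp_k(c+n)$, the step from $k$ to $k+1$ first applies the induction hypothesis together with monotonicity of $x\mapsto 2^x$: from $\exp_k(c+n)\ge c+\exp_k(n)$ I obtain $\exp_{k+1}(c+n)=2^{\exp_k(c+n)}\ge 2^{c+\exp_k(n)}=2^c\cdot X$, where I abbreviate $X\dfn 2^{\exp_k(n)}=\exp_{k+1}(n)$. It then suffices to show $c+X\le 2^c X$. Since $\exp_k(n)\ge 0$ we have $X\ge 1$, and using the elementary inequality $2^c\ge c+1$ I get $2^c X\ge (c+1)X=cX+X\ge c+X$, because $cX\ge c$ follows from $X\ge 1$. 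This closes the step.

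For the multiplicative bound $c\cdot\exp_k(n)\le\exp_k(cn)$ under the hypothesis $n\ge 1$, the step is analogous. The induction hypothesis and monotonicity give $\exp_{k+1}(cn)=2^{\exp_k(cn)}\ge 2^{\,c\,\exp_k(n)}=X^c$ with the same $X=\exp_{k+1}(n)$, so it remains to prove $cX\le X^c$. Here the assumption $n\ge 1$ is essential: it yields $\exp_k(n)\ge\exp_k(1)\ge 1$ (using that each $\exp_k$ is non-decreasing) and hence $X=2^{\exp_k(n)}\ge 2$. For $c\in\{0,1\}$ the inequality $cX\le X^c$ is direct, and for $c\ge 2$ it reduces to $c\le X^{c-1}$, which follows from $X\ge 2$ together with the elementary bound $2^{c-1}\ge c$.

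I expect the only real obstacle to be bookkeeping rather than anything conceptual: making sure the induction hypothesis is transported correctly through the exponential (which is exactly where monotonicity of $2^x$ is used) and verifying the two scalar facts $2^c\ge c+1$ and $2^{c-1}\ge c$, both provable by a trivial induction on $c$. The one place where the hypotheses genuinely bite is the multiplicative case, where $n\ge 1$ is precisely what guarantees the base quantity $X\ge 2$ needed to conclude $c\le X^{c-1}$; without it the bound can fail (e.g.\ when $X=1$).
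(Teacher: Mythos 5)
Your proof is correct, and the overall strategy coincides with the paper's: induction on $k$ with the identity base case, pushing constants through the exponential via the elementary scalar bounds $2^c \geq c+1$ and $2^{c-1} \geq c$. The one structural difference is in the multiplicative step. The paper does not use the multiplicative induction hypothesis at all: it writes $c \cdot \exp_{k+1}(n) \leq 2^{c-1} \cdot 2^{\exp_k(n)} = 2^{(c-1)+\exp_k(n)}$, then invokes the already-proved \emph{additive} inequality at level $k$ and the monotonicity fact $(c-1)+n \leq cn$, so the product bound is bootstrapped from the sum bound. You instead run a self-contained induction for the product: the multiplicative hypothesis at level $k$ gives $\exp_{k+1}(cn) \geq X^c$ with $X = \exp_{k+1}(n)$, and you finish with $cX \leq X^c$ from $X \geq 2$. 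Both arguments are sound; the paper's version has the small advantage that the second claim needs no induction of its own, while yours keeps the two statements logically independent and makes explicit exactly where the hypothesis $n \geq 1$ enters (namely in guaranteeing $X \geq 2$, which the paper uses implicitly via $(c-1)+n\le cn$). Either way the bookkeeping you flag (monotonicity of $x \mapsto 2^x$ and the two scalar inequalities) is all that is needed.
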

\begin{proof}
 Induction on $k$, where $k = 0$ is trivial.
 For $k \geq 1$,
 \begin{align*}
  c + \exp_{k+1}(n) & = c + 2^{\exp_k(n)} \leq 2^c \cdot 2^{\exp_k(n)}\tag{As $c + a \leq 2^c \cdot a$ for $c\geq 0,a \geq 1$} \\ 
                    & = 2^{c + \exp_k(n)}\leq 2^{\exp_k(c+n)}\tag{Induction hypothesis}                                        \\
                    & =\exp_{k+1}(c+n)\text{.}
 \end{align*}
 For the product, the cases $c = 0,1$ are trivial.
 For $c \geq 2$,
 \begin{align*}
  c \cdot \exp_{k+1}(n) & \leq 2^{c-1} \cdot 2^{\exp_k(n)}\tag{Since $c \geq 2$ implies $c \leq 2^{c-1}$}                   \\ 
                        & = 2^{c - 1 + \exp_k(n)} \leq 2^{\exp_k(c - 1 +n)}\tag{By $+$ case}                                \\
                        & \leq 2^{\exp_k(cn)}= \exp_{k+1}(cn)\text{.}\tag*{(As $(c-1)+n \leq cn$ for $c,n \geq 1$)\qedhere} 
 \end{align*}
\end{proof}

Recall that $\exp^*_0(n) \dfn n$ and $\exp^*_{k+1}(n) \dfn n \cdot 2^{\exp^*_k(n)}$.

\begin{lem}\label{lem:exp-star}
 Let $n,k \geq 0$.
 Then $\exp_{k}^*(n) \leq \exp_{k}((k+1)\cdot n)$.
\end{lem}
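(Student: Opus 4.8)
The plan is to prove the inequality $\exp^*_k(n) \leq \exp_k((k+1)n)$ by induction on $k$, relying on the preceding Lemma~\ref{lem:into-exp} to move stray summands into the argument of a tower. The base case $k = 0$ is immediate, since $\exp^*_0(n) = n = \exp_0(n) = \exp_0((0+1)\cdot n)$.

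For the inductive step, suppose $\exp^*_k(n) \leq \exp_k((k+1)n)$ holds for all $n \geq 0$. Unfolding the definition gives $\exp^*_{k+1}(n) = n \cdot 2^{\exp^*_k(n)}$, and applying the induction hypothesis inside the exponent yields $\exp^*_{k+1}(n) \leq n \cdot 2^{\exp_k((k+1)n)} = n \cdot \exp_{k+1}((k+1)n)$. The remaining task is to absorb the leading factor $n$ while only increasing the coefficient from $k+1$ to $k+2$. The key move is to route $n$ through the estimate $n \leq 2^n$ rather than through a multiplication: this turns the product into $n \cdot 2^{\exp_k((k+1)n)} \leq 2^{n + \exp_k((k+1)n)}$, after which the additive part of Lemma~\ref{lem:into-exp} (with $c = n$) pushes the summand $n$ into the argument, giving $2^{n + \exp_k((k+1)n)} \leq 2^{\exp_k(n + (k+1)n)} = \exp_{k+1}((k+2)n)$. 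Chaining these inequalities establishes $\exp^*_{k+1}(n) \leq \exp_{k+1}((k+2)n)$, which is exactly the claim at $k+1$.

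The only real subtlety, and the step I would flag as the place to be careful, is the absorption of the factor $n$. It is tempting to use the multiplicative part of Lemma~\ref{lem:into-exp} directly, but applying it to $n \cdot \exp_{k+1}((k+1)n)$ would produce $\exp_{k+1}((k+1)n^2)$, whose argument $(k+1)n^2$ exceeds the target $(k+2)n$ for large $n$. Converting the factor into additive form via $n \leq 2^n$ first is what keeps the argument linear in $n$ and lets the coefficient grow by exactly one per induction step. Both the base case and the inductive step remain valid for $n = 0$ as well, since then every quantity involved is either $0$ or bounded by a nonnegative tower.
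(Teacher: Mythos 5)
Your proof is correct and follows essentially the same route as the paper's: both establish the bound by induction on $k$, convert the leading factor $n$ into an exponent via $n \leq 2^n$, and then absorb the resulting summand using the additive part of Lemma~\ref{lem:into-exp}; the only (immaterial) difference is that you apply the induction hypothesis before the $n \leq 2^n$ step rather than after. Your remark about why the multiplicative part of Lemma~\ref{lem:into-exp} would overshoot is accurate and reflects exactly the choice the paper makes.
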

\begin{proof}
 Induction on $k$.
 For $k = 0$, $\exp_0^*(n) = n = \exp_0((0+1)\cdot n)$.
 For the inductive step,
 \begin{align*}
  \exp^*_{k+1}(n) & = n \cdot 2^{\exp^*_{k}(n)} \leq 2^{n} \cdot 2^{\exp^*_{k}(n)} = 2^{n + \exp^*_{k}(n)}   \\
                  & \leq2^{n + \exp_{k}((k+1)n)}\tag{Induction hypothesis}                                   \\
                  & \leq2^{\exp_{k}(n + (k+1)n)}=\exp_{k+1}((k+2)n)\tag*{(Lemma~\ref{lem:into-exp})\qedhere}
 \end{align*}
\end{proof}

The next inequality states that a polynomial can be "pulled inside" $\exp_k$:

\begin{lem}\label{lem:exp-poly}
 For every polynomial $p$ there is a polynomial $q$ such that $p(\exp_k(n)) \leq \exp_k(q(n)))$ for all $k \geq 0, n \geq 1$.
\end{lem}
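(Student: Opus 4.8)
The plan is to reduce the general polynomial $p$ to a single monomial and then apply Lemma~\ref{lem:into-exp} twice to absorb first the degree and then the leading coefficient into the argument of $\exp_k$. First I would observe that $\exp_k(n) \geq 1$ whenever $n \geq 1$ (immediate by induction on $k$), and that every polynomial is dominated by a monomial on $[1,\infty)$: writing $d \dfn \deg p$ and letting $c$ be the sum of the absolute values of the coefficients of $p$, one has $p(x) \leq c\,x^d$ for all $x \geq 1$, since $x^i \leq x^d$ for $0 \leq i \leq d$. Instantiating this at $x = \exp_k(n) \geq 1$ reduces the claim to finding a polynomial $q$ with $c\cdot\exp_k(n)^d \leq \exp_k(q(n))$ for all $k \geq 0$ and $n \geq 1$.

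The key step is to pull the exponent $d$ inside $\exp_k$. For $k \geq 1$ I would compute
\[
 \exp_k(n)^d = \bigl(2^{\exp_{k-1}(n)}\bigr)^d = 2^{d\cdot\exp_{k-1}(n)} \leq 2^{\exp_{k-1}(dn)} = \exp_k(dn),
\]
where the inequality is the product case of Lemma~\ref{lem:into-exp} applied to $\exp_{k-1}$, which is legitimate since $\exp_{k-1}(n) \geq 1$ for $n \geq 1$. A second application of the same inequality absorbs the constant factor: $c\cdot\exp_k(dn) \leq \exp_k(cdn)$. Hence $c\cdot\exp_k(n)^d \leq \exp_k(cdn)$ for every $k \geq 1$.

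The subtle point, and the main obstacle, is that a \emph{single} polynomial $q$ must work uniformly for all $k$, whereas the power-absorption trick above fails at $k = 0$: there $\exp_0$ is the identity, so $\exp_0(n)^d = n^d$ genuinely has degree $d$ and cannot be bounded by $\exp_0(dn) = dn$ for large $n$. Thus the degree must be paid for explicitly in the argument when $k = 0$, while for $k \geq 1$ the outer exponential swallows it. I would reconcile the two regimes by taking $q$ large enough to dominate both constraints, for instance $q(n) \dfn c\,n^d + c\,d\,n$, and exploit that $\exp_k$ is monotone nondecreasing. For $k = 0$ this gives $\exp_0(q(n)) = q(n) \geq c\,n^d = c\cdot\exp_0(n)^d$; for $k \geq 1$ it gives $\exp_k(q(n)) \geq \exp_k(cdn) \geq c\cdot\exp_k(n)^d$ by the previous paragraph. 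In either case $p(\exp_k(n)) \leq c\cdot\exp_k(n)^d \leq \exp_k(q(n))$, which is the assertion. The remaining work is purely routine: confirming that the monomial bound on $p$ holds throughout $[1,\infty)$ and that the invoked instances of Lemma~\ref{lem:into-exp} fall within its hypotheses.
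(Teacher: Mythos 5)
Your proof is correct and takes essentially the same route as the paper's: dominate $p$ by the monomial $c\,x^d$ on $[1,\infty)$, absorb the exponent $d$ and the factor $c$ into the argument of $\exp_k$ via Lemma~\ref{lem:into-exp}, and pick $q$ large enough to also cover $k=0$, where the identity function cannot swallow the degree. The differences are only cosmetic: the paper takes $q(n)=cdn^d+c$ and arranges the estimate as $c\cdot\exp_k(n)^d\le 2^{c+d\exp_{k-1}(n)}\le 2^{\exp_{k-1}(q(n))}$, whereas you apply the product case of Lemma~\ref{lem:into-exp} twice (once at level $k-1$, once at level $k$), which if anything is the cleaner bookkeeping.
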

\begin{proof}
 For every polynomial $p$ there are integers $c,d \geq 1$ such that $p(n) \leq c n^d$ for all $n\geq 1$.
 Let $q(n) \dfn cdn^d+c$.
 Then the case $k = 0$ is clear.
 For $k \geq 1$ and $n \geq 1$,
 \begin{align*}
  p(\exp_{k}(n)) & \leq c \cdot {\exp_k(n)}^d \leq 2^c \cdot {(2^{\exp_{k-1}(n)})}^d = 2^{c + d \cdot \exp_{k-1}(n)} \\
                 & \leq 2^{q(\exp_{k-1}(n))}\tag{As $q(n) \geq c+dn$}                                            \\ 
                 & \leq 2^{\exp_{k-1}(q(n))}=\exp_{k}(q(n))\text{.}\tag*{(Lemma~\ref{lem:into-exp})\qedhere}
 \end{align*}
\end{proof}

Finally, we combine both lemmas:

\begin{proof}[Proof of~\ref{lem:poly-in-tower}]
 Let $p$ be a polynomial as above.
 \Wloss $p$ is non-decreasing.
 Then by Lemma~\ref{lem:exp-star}, $p(\exp^*_k(n)) \leq p\big(\exp_k((k+1)\cdot n)\big)$.
 Moreover, due to Lemma~\ref{lem:exp-poly}, there is a polynomial $q$ such that
 $p\big(\exp_k((k+1)\cdot n)\big) \leq \exp_k\big(q((k+1)\cdot n)\big)$.
\end{proof}

\subsection*{Proofs of Section~\ref{sec:encoding}}

\begin{repprop}{prop:of-scopes}
 Let $\alpha,\beta$ be disjoint scopes and $S,U,T$ teams in a Kripke structure $\calK = (W,R,V)$.
 Then the following laws hold:
 \begin{enumerate}
  \item Distributive laws: ${(T \cap S)}_\alpha = T_\alpha \cap S = T \cap S_\alpha = T_\alpha \cap S_\alpha$ and ${(T \cup S)}_\alpha = T_\alpha \cup S_\alpha$.
  \item Disjoint selection commutes: ${\big(T^\alpha_S\big)}^{\beta}_{U} = {\big(T^{\beta}_{U}\big)}^{\alpha}_S$.
  \item Disjoint selection is independent: ${\big({(T^\alpha_S)}^{\beta}_{U}\big)}_\alpha = T_\alpha \cap S$.
  \item Image and selection commute: ${(RT)}_\alpha = {\big(R(T_\alpha)\big)}_\alpha = R(T_\alpha)$
  \item Selection propagates: If $S \subseteq T$, then $R\big(T^\alpha_{S}\big) = {(RT)}^\alpha_{RS}$.
 \end{enumerate}
\end{repprop}
\begin{proof}
 \begin{enumerate}
  \item
        Observe that $X_\alpha = X \cap W_\alpha$.
        Hence, for the union ${(T\cup S)}_\alpha  = (T\cup S) \cap W_\alpha = (T \cap W_\alpha) \cup (S \cap W_\alpha) = T_\alpha \cup S_\alpha$ holds.
        For the intersection, likewise $(T\cap S) \cap W_\alpha = (T \cap W_\alpha) \cap S = T \cap (W_\alpha \cap S) = (T \cap W_\alpha) \cap (S \cap W_\alpha)$.

  \item
        Proved in the following equation.
        We use the fact that $X_{\gamma\land\gamma'} = {(X_{\gamma})}_{\gamma'} = {(X_{\gamma'})}_\gamma = X_{\gamma'\land\gamma}$ for all teams $X$ and scopes $\gamma,\gamma$.
        \begin{align*}
              & {\big(T^\alpha_S\big)}^{\beta}_{U}                                                                                                                                                                                     \\
         = \; & {\big(T_{\neg\alpha} \cup (T_\alpha \cap S)\big)}_{\neg\beta} \cup \Big({\big(T_{\neg\alpha} \cup (T_\alpha \cap S) \big)}_\beta \cap U\Big)                                                                             \\
         \intertext{Distributing all scopes according to (1):}
         = \; & T_{\neg\alpha\land\neg\beta} \cup \big(T_{\alpha\land\neg\beta} \cap S_{\neg\beta}\big) \cup \big(T_{\neg\alpha\land\beta} \cap U\big) \cup \big(T_{\alpha\land\beta} \cap S_{\beta} \cap U\big)                     \\
         \intertext{Replace $U$ by $U_{\neg\alpha}/U_\alpha$ due to the intersection law of (1):}
         = \; & T_{\neg\alpha\land\neg\beta} \cup \big(T_{\alpha\land\neg\beta} \cap S_{\neg\beta}\big) \cup \big(T_{\neg\alpha\land\beta} \cap U_{\neg\alpha}\big) \cup \big(T_{\alpha\land\beta} \cap S_\beta \cap U_{\alpha}\big) \\
         \intertext{Likewise, replace $S_{\neg\beta}/S_\beta$ by $S$:}
         = \; & T_{\neg\alpha\land\neg\beta} \cup \big(T_{\alpha\land\neg\beta} \cap S\big) \cup \big(T_{\neg\alpha\land\beta} \cap U_{\neg\alpha}\big) \cup \big(T_{\alpha\land\beta} \cap S \cap U_{\alpha}\big)                   \\
         \intertext{Reverse distribution of scopes:}
         = \; & {\big(T_{\neg \beta} \cup (T_\beta \cap U)\big)}_{\neg\alpha} \cup \Big({\big(T_{\neg \beta} \cup (T_\beta \cap U)\big)}_\alpha \cap S \Big)                                                                             \\
         =\;  & {\big(T^\beta_U\big)}^\alpha_S\text{.}
        \end{align*}

  \item
        By definition and application of (2), ${\big({(T^\alpha_S)}^{\beta}_{U}\big)}_\alpha$ equals
        \begin{align*}
             & {\Big[{\big(T_{\neg \beta} \cup (T_\beta \cap U)\big)}_{\neg\alpha} \cup \Big({\big(T_{\neg \beta} \cup (T_\beta \cap U)\big)}_\alpha \cap S \Big)\Big]}_\alpha  \\
         =\; & {\big(T_{\neg \beta} \cup (T_\beta \cap U)\big)}_{\neg\alpha\land\alpha} \cup {\Big({\big(T_{\neg \beta} \cup (T_\beta \cap U)\big)}_\alpha \cap S \Big)}_\alpha \\
         =\; & \;\emptyset \cup \Big({\big(T_{\neg \beta} \cup (T_\beta \cap U)\big)}_\alpha \cap S_\alpha \Big)                                                            \\
         =\; & \big(T_{\neg \beta\land\alpha}\cap S_\alpha\big) \cup \big(T_{\beta\land\alpha} \cap U_\alpha\cap S_\alpha\big)                                            \\
         \intertext{Since $\alpha$ and $\beta$ are disjoint:}
         =\; & \big(T_{\alpha}\cap S_\alpha\big)\cup (\emptyset \cap U_\alpha \cap S_\alpha) = T_\alpha \cap S\text{.}
        \end{align*}

  \item ${(RT)}_\alpha \subseteq {\big(R(T_\alpha)\big)}_\alpha$:
        Suppose $v \in {(RT)}_\alpha$.
        Then $v \in Rw$ for some $w\in T$.
        Moreover, $w \in T_\alpha$, since $\alpha$ is a scope.
        Hence $v \in R(T_\alpha)$.
        As $v \vDash \alpha$, $v \in {\big(R(T_\alpha)\big)}_\alpha$ follows.

        ${\big(R(T_\alpha)\big)}_\alpha \subseteq R(T_\alpha)$: Obvious.

        $R(T_\alpha) \subseteq {(RT)}_\alpha$:
        Again, let $v \in R(T_\alpha)$ be arbitrary.
        Then $v \in Rw$ for some $w \in T_\alpha$.
        Hence $v \in RT$.
        Since $v \vDash \alpha$ follows from $w \vDash \alpha$, we conclude $v \in {(RT)}_\alpha$.
  \item For "$\subseteq$", suppose $v \in R(T^\alpha_{S})$, \ie, $v \in Rw$ for some $w \in T^\alpha_{S}$.
        In particular, $v \in RT$.
        If $w \nvDash \alpha$, then $v \in RT_{\neg\alpha}$ and trivially $v \in {(RT)}^\alpha_{RS}$.
        If $w \vDash \alpha$, then necessarily $w \in S$.
        Moreover, $v \vDash \alpha$.
        Consequently, $v \in {RS}_{\alpha} \cap {RT}_\alpha$, hence $v \in {(RT)}^\alpha_{RS}$.

        For "$\supseteq$", suppose $v \in {(RT)}^\alpha_{RS} = RT_{\neg\alpha} \cup (RT_\alpha \cap RS)$.

        If $v \in RT_{\neg\alpha}$, then by (4) $v \in Rw$ for some $w \in T_{\neg\alpha}$.
        In particular, $w \in T^\alpha_S$, hence $v \in R\big(T^\alpha_S\big)$.

        If $v \in RT_\alpha \cap RS$, then by (1) $v \in RS_\alpha$.
        By (4) $v \in R(S_\alpha)$, in other words, $v \in Rw$ for some $w \in S_\alpha$.
        As $S \subseteq T$, then $w \in S_\alpha \cap T$, and in fact $w \in T_\alpha \cap S$ due to (1)
        Consequently, $w \in T^\alpha_S$ and $v \in R(T^\alpha_{S})$.\qedhere
 \end{enumerate}
\end{proof}

\subsection*{Proofs of Section~\ref{sec:order}}

\begin{replem}{lem:substitution}
 Let $\alpha,\beta \in \ML$ and $\varphi \in \MTL_k$.
 Let $T$ be a team such that $R^{i}T \vDash \alpha \leftrightarrow \beta$ for all $i\in\{0,\ldots,k\}$.
 Then $T \vDash \varphi$ if and only if $T \vDash \mathrm{Sub}(\varphi,\alpha,\beta)$, where $\mathrm{Sub}(\varphi,\alpha,\beta)$ is the formula obtained from $\varphi$ by substituting every occurrence of $\alpha$ with $\beta$.
\end{replem}
\begin{proof}
 Proof by induction on $k$ and the syntax on $\varphi$.
 \Wloss $\alpha$ occurs in $\varphi$.
 If $\varphi = \alpha$, then $\mathrm{Sub}(\varphi,\alpha,\beta) = \beta$, in which case the proof boils down to showing $T \vDash \alpha \Leftrightarrow T \vDash \beta$.
 However, this easily follows from $T\vDash \alpha \leftrightarrow \beta$ by the semantics for classical $\ML$-formulas.

 Otherwise, $\alpha$ is a proper subformula of $\varphi$.
 We distinguish the following cases.
 \begin{itemize}
  \item $\varphi = \neg \gamma$: Then $\mathrm{Sub}(\neg\gamma,\alpha,\beta) = \neg\mathrm{Sub}(\gamma,\alpha,\beta)$, and
        \begin{align*}
                           & T \vDash \mathrm{Sub}(\varphi,\alpha,\beta)                                                                                            \\
         \Leftrightarrow\; & T \vDash \neg \mathrm{Sub}(\gamma,\alpha,\beta)                                                                                        \\
         \Leftrightarrow\; & \forall  w \in T \colon \{ w \} \vDash \neg \mathrm{Sub}(\gamma,\alpha,\beta)                                                          \\
         \Leftrightarrow\; & \forall  w \in T \colon \{ w \} \vDash \neg \gamma\tag{Induction hypothesis, as $\{w\},Rw,\ldots \vDash \alpha \leftrightarrow \beta$} \\ 
         \Leftrightarrow\; & T \vDash \neg \gamma                                                                                                                   \\
         \Leftrightarrow\; & T \vDash \varphi
        \end{align*}
  \item $\varphi = \negg \psi$: By induction hypothesis, $T \vDash \mathrm{Sub}(\varphi,\alpha,\beta)$ iff $T \vDash \negg \mathrm{Sub}(\psi,\alpha,\beta)$ iff $T \vDash \negg \psi$.
  \item $\varphi = \psi \land \theta$: Proved similarly to $\negg$.
  \item $\varphi = \psi \lor \theta$: First note that $\mathrm{Sub}(\psi \lor \theta,\alpha,\beta) = \mathrm{Sub}(\psi,\alpha,\beta) \lor \mathrm{Sub}(\theta,\alpha,\beta)$.
        Then:
        \begin{align*}
                           & T \vDash \mathrm{Sub}(\varphi,\alpha,\beta)                                                                           \\
         \Leftrightarrow\; & T \vDash \mathrm{Sub}(\psi,\alpha,\beta) \lor \mathrm{Sub}(\theta,\alpha,\beta)                                       \\
         \Leftrightarrow\; & \exists S,U \colon T = S \cup U, S \vDash \mathrm{Sub}(\psi,\alpha,\beta), U \vDash \mathrm{Sub}(\theta,\alpha,\beta)
         \intertext{By induction hypothesis, since $S,U,RS,RU,\ldots \vDash \alpha\leftrightarrow\beta$:}
         \Leftrightarrow\; & \exists S,U \colon T = S \cup U, S \vDash \psi, U \vDash \theta                                                       \\
         \Leftrightarrow\; & T \vDash \varphi
        \end{align*}
  \item $\varphi = \Box\psi$: We have $\mathrm{Sub}(\Box\psi,\alpha,\beta) = \Box\mathrm{Sub}(\psi,\alpha,\beta)$, hence
        \begin{align*}
                           & T \vDash \mathrm{Sub}(\varphi,\alpha,\beta)       \\
         \Leftrightarrow\; & T \vDash \Box \mathrm{Sub}(\psi,\alpha,\beta)     \\
         \Leftrightarrow\; & RT \vDash \mathrm{Sub}(\psi,\alpha,\beta)\text{.} \\
         \intertext{However, since $\psi \in \MTL_{k-1}$ and $RT,\ldots,R^{k-1}(RT) \vDash \alpha\leftrightarrow\beta$ holds by assumption, we obtain by induction hypothesis:}
         \Leftrightarrow\; & RT \vDash \psi                                    \\
         \Leftrightarrow\; & T \vDash \varphi
        \end{align*}
  \item $\varphi = \Diamond \psi$: As before, $\mathrm{Sub}(\Diamond\psi,\alpha,\beta) = \Diamond\mathrm{Sub}(\psi,\alpha,\beta)$.
        Then:
        \begin{align*}
                           & T \vDash \mathrm{Sub}(\varphi,\alpha,\beta)                                                 \\
         \Leftrightarrow\; & T \vDash \Diamond \mathrm{Sub}(\psi,\alpha,\beta)                                           \\
         \Leftrightarrow\; & \exists S \subseteq RT, T \subseteq R^{-1}S \colon S \vDash \mathrm{Sub}(\psi,\alpha,\beta) \\
         \intertext{Note that $S, RS, \ldots, R^{k-1}S$ are subteams of $RT, \ldots, R^{k}T$, respectively.
          For this reason, the teams $S, RS, \ldots, R^{k-1}S$ satisfy $\alpha \leftrightarrow \beta$ as well.
          As also $\psi \in \MTL_{k-1}$ holds, we obtain by induction hypothesis:}
         \Leftrightarrow\; & \exists S \subseteq RT, T \subseteq R^{-1}S \colon S \vDash \psi                            \\
         \Leftrightarrow\; & T \vDash \varphi\tag*{\qedhere}
        \end{align*}
 \end{itemize}
\end{proof}

\end{document}